\title{Weighted basic parallel processes\\ and combinatorial enumeration}
\author{Lorenzo Clemente}{University of Warsaw, Department of Mathematics, Mechanics, and Computer Science \and \url{http://www.myhomepage.edu} }{clementelorenzo@gmail.com}{https://orcid.org/1234-5678-9012}{}
\authorrunning{L.~Clemente} 
\keywords{weighted automata, combinatorial enumeration, shuffle, algebraic differential equations, process algebra, basic parallel processes, species of structures} 
\newcolumntype{C}{>{$}c<{$}} 
\newcolumntype{L}{>{$}l<{$}} 
\DeclareFontFamily{U}{shuffle}{}
\DeclareFontShape{U}{shuffle}{m}{n}{ <-8>shuffle7 <8->shuffle10}{}
\patchcmd{\@sect}{\uppercase}{\MakeTextUppercase}{}{}
\patchcmd{\@sect}{\uppercase}{\MakeTextUppercase}{}{}
\newcommand{\tuple}[1]{\left(#1\right)}
\newcommand{\tuplesmall}[1]{(#1)}
\newcommand{\goesto}[1]{\xrightarrow{#1}}
\newcommand{\e}{\varepsilon}
\newcommand{\ignore}[1]{}
\newcommand{\wlg}{w.l.o.g.}
\newcommand{\wrt}{w.r.t.}
\newcommand{\cf}{cf.}
\newcommand{\eg}{e.g.}
\newcommand{\aka}{a.k.a.}
\newcommand{\ie}{i.e.}
\newcommand{\rhs}{r.h.s.}
\newcommand{\lhs}{l.h.s.}
\newcommand{\egs}{EGS}
\newif\ifstartedinmathmode
\renewcommand*{\st}{
  \relax\ifmmode\startedinmathmodetrue\else\startedinmathmodefalse\fi
  \ifstartedinmathmode{\;\cdot\;}\else{s.t.}\fi%
}
\newcommand{\C}{\mathbb C}
\newcommand{\N}{\mathbb N}
\newcommand{\Z}{\mathbb Z}
\newcommand{\Q}{\mathbb Q}
\newcommand{\F}{\mathbb F}
\newcommand{\X}{\mathcal X}
\newcommand{\Y}{\mathcal Y}
\newcommand{\ZZ}{\mathcal Z}
\newcommand{\ZERO}{\mathbf 0}
\newcommand{\ONE}{\mathbf 1}
\newcommand{\FF}{\mathcal F}
\newcommand{\GG}{\mathcal G}
\let\oldcirc\circ
\let\circ\relax
\DeclareMathOperator{\circ}{\oldcirc\,}
\newcommand{\compose}[1]{\,\oldcirc_{#1}\,}
\let\oldshuffle\shuffle
\let\shuffle\relax
\DeclareMathOperator{\shuffle}{\oldshuffle}
\newcommand{\shufflepower}[2]{{#1}^{\oldshuffle#2}}
\let\merge\relax
\DeclareMathOperator{\merge}{\parallel}
\let\colon\relax
\DeclareMathOperator{\colon}{\,:\,}
\DeclareMathOperator{\cauchy}{\ast}
\DeclareMathOperator{\hadamard}{\odot}
\DeclareMathOperator{\binconv}{\ast}
\newcommand{\abs}[1]{\left|#1\right|}
\newcommand{\card}[1]{\left|#1\right|}
\newcommand{\length}[1]{\left|#1\right|}
\newcommand{\norm}[2]{\left| #2 \right|_{#1}}
\newcommand{\onenorm}[1]{\norm 1 {#1}}
\newcommand{\inftynorm}[1]{\norm \infty {#1}}
\newcommand{\height}[1]{\inftynorm{#1}}
\newcommand{\set}[1]{\left\{#1\right\}}
\newcommand{\setof}[2]{\set{#1 \;\middle|\; #2}}
\newcommand{\multiset}[1]{\left\{\!\left\{#1\right\}\!\right\}}
\newcommand{\rad}[1]{\sqrt{#1}}
\newcommand{\poly}[2]{#1[#2]}
\newcommand{\polyof}[3]{\poly {#1} {#2 \mid #3}}
\newcommand{\ncpoly}[2]{#1\langle#2\rangle}
\newcommand{\rational}[2]{#1(#2)}
\newcommand{\powerseries}[2]{#1[\![#2]\!]}
\newcommand{\series}[2]{#1\langle\!\langle#2\rangle\!\rangle}
\newcommand{\ncpowerseries}[2]{\series {#1} {#2}}
\newcommand{\sequences}[2]{#1[\![\N^{#2}]\!]}
\newcommand{\matrices}[3]{#3^{#1 \times #2}}
\newcommand{\ideal}[1]{\langle#1\rangle}
\newcommand{\idealof}[2]{\ideal{#1 \mid #2}}
\renewcommand{\dim}[1]{\mathsf{dim}\,{#1}}
\newcommand{\leadingterm}[1]{\mathsf{lt}\,{#1}}
\newcommand{\sem}[1]{\left\llbracket#1\right\rrbracket}
\newcommand{\support}[1]{\mathsf{supp}(#1)}
\newcommand{\lang}[1]{L\left(#1\right)}
\newcommand{\bigO}[1]{O\left(#1\right)}
\newcommand{\Parikh}[1]{\#({#1})}
\newcommand{\coefficient}[2]{[#1]#2}
\newcommand{\EGS}[1]{\mathsf{EGS}[#1]}
\newcommand{\SETSP}{\mathsf{SET}}
\newcommand{\SET}[1]{\SETSP[#1]}
\newcommand{\SEQSP}{\mathsf{SEQ}}
\newcommand{\SEQ}[1]{\SEQSP[#1]}
\newcommand{\CYCSP}{\mathsf{CYC}}
\newcommand{\CAYLEY}[1]{\mathsf{C}[#1]}
\newcommand{\restrict}[2]{\left.#1\right|_{#2}}
\newcommand{\restrictsmall}[2]{#1|_{#2}}
\newcommand{\equivmod}[3]{{#1} \equiv {#2}\; (\mathsf{mod}\ #3)}
\newcommand{\lneg}{\neg}
\newcommand{\ord}[1]{\mathsf{ord}(#1)}
\newcommand{\stops}[1]{\mathsf{s2p}\left(#1\right)}
\newcommand{\pstos}[1]{\mathsf{p2s}\left(#1\right)}
\newcommand{\erf}{\mathrm{erf}}
\newcommand{\arsinh}{\mathrm{arsinh}}
\newcommand{\artanh}{\mathrm{artanh}}
\newcommand{\sech}{\mathrm{sech}}
\newcommand{\derive}[1]{\delta_{#1}}
\let\oldpartial\partial
\renewcommand{\partial}[1]{\oldpartial_{#1}}
\newcommand{\shift}[1]{\sigma_{#1}}
\newcommand{\deductionrule}[2]{\begin{array}{c}%
{#1} \\ \hline {#2}%
\end{array}}
\newcommand{\CDF}{\textsf{CDF}}
\newcommand{\BPP}{\textsf{BPP}}
\newcommand{\BPA}{\textsf{BPA}}
\newcommand{\WBPP}{\textsf{WBPP}}
\newcommand{\WFA}{\textsf{WFA}}
\newcommand{\NC}{\textsf{NC}}
\newcommand{\PTIME}{\textsf{PTIME}}
\newcommand{\PSPACE}{\textsf{PSPACE}}
\newcommand{\EXPTIME}{\textsf{EXPTIME}}
\newcommand{\EXPSPACE}{\textsf{EXPSPACE}}
\newcommand{\TWOEXPTIME}{2-\EXPTIME}
\newcommand{\TWOEXPSPACE}{2-\EXPSPACE}
  \newtheorem{fact}[theorem]{Fact}
\crefname{equation}{}{}
\crefname{definition}{Definition}{Definitions}
\crefname{claim}{Claim}{Claims}
\crefname{section}{Section}{Sections}
\crefname{lemma}{Lemma}{Lemmas}
\crefname{corollary}{Corollary}{Corollaries}
\crefname{theorem}{Theorem}{Theorems}
\crefname{figure}{Figure}{Figures}
\crefname{fact}{Fact}{Facts}
\crefname{example}{Example}{Examples}
\crefname{remark}{Remark}{Remarks}
\crefname{section}{§}{§§}
\begin{document}

\maketitle

\begin{abstract}
    We study \emph{weighted basic parallel processes} (\WBPP),
    a nonlinear recursive generalisation of weighted finite automata
    inspired from process algebra and Petri net theory.
    %
    Our main result is an algorithm of \TWOEXPSPACE~complexity for the \WBPP~equivalence problem.
    While (unweighted) \BPP~language equivalence is undecidable,
    we can use this algorithm to decide multiplicity equivalence of \BPP\-
    and language equivalence of unambiguous \BPP, with the same complexity.
    These are long-standing open problems for the related model of weighted context-free grammars.

	Our second contribution is a connection between \WBPP,
	power series solutions of systems of polynomial differential equations, and combinatorial enumeration.
	To this end we consider \emph{constructible differentially finite} power series (\CDF),
	a class of multivariate differentially algebraic series
	introduced by Bergeron and Reutenauer in order to provide a combinatorial interpretation to differential equations.
	\CDF~series generalise rational, algebraic, and a large class of D-finite (holonomic) series,
	for which decidability of equivalence was an open problem.
    We show that \CDF~series correspond to \emph{commutative} \WBPP~series.
    As a consequence of our result on \WBPP~and commutativity,
    we show that equivalence of \CDF~power series can be decided with \TWOEXPTIME~complexity.
    
    In order to showcase the \CDF~equivalence algorithm,
    we show that \CDF~power series naturally arise from combinatorial enumeration,
    namely as the exponential generating series of \emph{constructible species of structures}.
    Examples of such species include sequences, binary trees, ordered trees, Cayley trees, set partitions, series-parallel graphs, and many others.
    As a consequence of this connection, we obtain an algorithm to decide multiplicity equivalence of constructible species,
    decidability of which was not known before.

    The complexity analysis is based on effective bounds from algebraic geometry,
    namely on the length of chains of polynomial ideals constructed by repeated application of finitely many,
    not necessarily commuting derivations of a multivariate polynomial ring.
    This is obtained by generalising a result of Novikov and Yakovenko in the case of a single derivation,
    which is noteworthy since generic bounds on ideal chains are non-primitive recursive in general.
    On the way, we develop the theory of \WBPP~series and \CDF~power series,
    exposing several of their appealing properties.   
\end{abstract}

\section{Introduction}

We study the equivalence problem for a class of finitely presented series
originating in weighted automata, process algebra, and combinatorics.
We begin with some background.

\subsection{Motivation and context}

\subparagraph{Weighted automata.}

Classical models of computation arising in the seminal work of Turing from the 1930's~\cite{Turing_1937}
have a Boolean-valued semantics (``is an input accepted?'')
and naturally recognise languages of finite words $L \subseteq \Sigma^*$.
In the 1950's a finite-memory restriction was imposed on Turing machines,
leading to an elegant and robust theory of \emph{finite automata}~\cite{RabinScott:1959},
with fruitful connections with logic~\cite{Buchi:WMSO:1960,Elgot:WMSO:1961,Trakhtenbrot:WMSO:1962} and regular expressions~\cite{Kleene:1956}. 
\emph{Weighted finite automata} over a field $\F$ (\WFA)~\cite{Schutzenberger:IC:1961}
were introduced in the 1960's by Schützenberger
as a generalisation of finite automata
to a quantitative \emph{series} semantics $\Sigma^* \to \F$ (``in how many ways can an input be accepted?'').
%
This has been followed by the development of a rich theory of weighted automata and logics~\cite{HandbookWA}.
While the general theory can be developed over arbitrary semirings,
the methods that we develop in this work are specific to fields,
and in particular for effectiveness we assume the field of rational numbers $\F = \Q$.

The central algorithmic question that we study is the \emph{equivalence problem}:
Given two (finitely presented) series $f, g : \Sigma^* \to \Q$, is it the case that $f = g$?
(In algorithmic group theory this is known as the \emph{word problem}.)
%
%
A mathematical characterisation of equivalence yields a deeper understanding of the interplay between syntax and semantics,
and a decidability result means that this understanding is even encoded as an algorithm.
Equivalence of weighted models generalises \emph{multiplicity equivalence} of their unweighted counterparts
(``do two models accept each input in the same number of ways?''),
in turn generalising language equivalence of \emph{unambiguous models}
(each input is accepted with multiplicity $0$ or $1$).
Since equivalence $f = g$ reduces to \emph{zeroness} $f - g = 0$,
from now on we will focus on the latter problem.

While nonemptiness of \WFA~is undecidable~\cite[Theorem 21]{NasuHonda:IC:1969}
(later reported in Paz' book~\cite[Theorem 6.17]{Paz:1971}),
zeroness is decidable,
even in polynomial time~\cite{Schutzenberger:IC:1961}---%
a fact often rediscovered, \eg,~\cite{StearnsHunt:JoC:1985,Tzeng:SIAMJC:1992}.
This has motivated the search for generalisations of \WFA~with decidable zeroness.
However, many of them are either known to be undecidable (\eg, \emph{weighted Petri nets}~\cite[Theorem 3]{Jancar:TCS:2001}),
or beyond the reach of current techniques (\eg, \emph{weighted one counter automata}, \emph{weighted context-free grammars}, and \emph{weighted Parikh automata}).
One notable exception is \emph{polynomial weighted automata},
although zeroness has very high complexity (Ackermann-complete)~\cite{BenediktDuffSharadWorrell:PolyAut:2017}.
%
%
%
In the restricted case of a unary input alphabet, 
decidability and complexity results can be obtained with algebraic~\cite{BalajiClementeNosanShirmohammadiWorrell:LICS:2023}
and D-finite techniques~\cite{BostanCarayolKoechlinNicaud:ICALP:2020}.

\subparagraph{Process algebra.}

On a parallel line of research, the process algebra community
has developed a variety of formalisms modelling different aspects of concurrency and nondeterminism.
We focus on \emph{basic parallel processes} (\BPP)~\cite{Christensen:PhD:1993},
a subset of the \emph{calculus of communicating systems} without sequential composition~\cite{Milner:CCS:1980}.
\BPP~are also known as \emph{communication-free Petri nets}
(every transition consumes exactly one token)
and \emph{commutative context-free grammars}
(nonterminals in sentential forms are allowed to commute with each other).
While language equality for \BPP~is undecidable~\cite{Huttel:TACS:1994,HuttelKobayashiSuto:IC:2009},
bisimulation equivalence is decidable~\cite{ChristensenHirshfeldMoller:CONCUR:1993}
(even \PSPACE-complete~\cite{Srba:STACS:2002,Jancar:LICS:2003}).
\emph{Multiplicity equivalence},
finer than language equality and incomparable with bisimulation,
does not seem to have been studied for \BPP.

\subparagraph{Combinatorial enumeration and power series.}

We shall make a connection between \BPP, power series, and combinatorial enumeration.
For this purpose, let us recall that the study of multivariate power series in commuting variables has a long tradition
at the border of combinatorics, algebra, and analysis of algorithms~\cite{Stanley:EC:CUP:2011,FlajoletSedgewick:AC:2009}.
We focus on \emph{constructible differentially finite} power series (\CDF)\-
\cite{BergeronReutenauer:EJC:1990,BergeronSattler:TCS:1995},
a class of differentially algebraic power series arising in combinatorial enumeration\-
\cite{LerouxViennot:CE:1986,BergeronFlajoletSalvy:CAAP:1992}.
Their study was initiated in the \emph{univariate} context in~\cite{BergeronReutenauer:EJC:1990},
later extended to multivariate~\cite{BergeronSattler:TCS:1995}.
They generalise rational and algebraic power series,
and are incomparable with D-finite power series~\cite{Stanley:EJC:1980,Lipshitz:D-finite:JA:1989}.
For instance, the exponential generating series $\sum_{n \in \N} n^{n-1} \cdot x^n / n!$
of Cayley trees is \CDF, but it is neither algebraic/D-finite~\cite[Theorem 1]{BostanJimenezPastor:2020},
nor polynomial recursive~\cite[Theorem 5.3]{Cadilhac:Mazowiecki:Paperman:Pilipczuk:Senizergues:ToCS:2021}.

The theory of \emph{combinatorial species}~\cite{Joyal:AM:1981,CombinatorialSpecies:CUP:1998}
is a formalism describing families of finite structures.
It arises as a categorification of power series,
by noticing how primitives used to build structures---%
sum, combinatorial product, composition, differentiation, resolution of implicit equations,---%
are in a one-to-one correspondence with corresponding primitives on series.
Using these primitives, a rich class of \emph{constructible species} can be defined~\cite{PivoteauSalvySoria:JCT:2012}.
For instance the species $\CAYLEY \X$ of Cayley trees (rooted unordered trees)
is constructible since it satisfies $\CAYLEY \X = \X \cdot \SET {\CAYLEY \X}$.
%
%
Two species are \emph{multiplicity equivalent} (\emph{equipotent}~\cite{PivoteauSalvySoria:JCT:2012})
if for every $n \in \N$ they have the same number of structures of size $n$.
Multiplicity equivalence of species has not been studied from an algorithmic point of view.

\subsection{Contributions}


We study \emph{weighted basic parallel processes} over the field of rational numbers (\WBPP),
a weighted extension of \BPP~generalising \WFA.
The following is our main contribution.
\begin{restatable}{theorem}{WBPPmain}
    \label{thm:main}
    The zeroness problem for \WBPP~is in \TWOEXPSPACE.
\end{restatable}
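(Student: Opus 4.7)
The plan is to reduce \WBPP~zeroness to a stabilisation question for an ascending chain of polynomial ideals closed under a finite family of (possibly noncommuting) derivations. I would proceed as follows.

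First, given a \WBPP~with nonterminals $X_1, \ldots, X_n$ and alphabet $\Sigma$, associate to each $X_i$ the noncommutative series $f_i$ it generates. The \BPP~production rules naturally induce, for each $a \in \Sigma$, polynomial identities of the form $\partial{a} f_i = P_{i,a}(f_1, \ldots, f_n)$, where $\partial{a}$ is the ``remove leading $a$'' operator. Because shuffle product satisfies the Leibniz rule with respect to $\partial{a}$, each $\partial{a}$ acts as a \emph{derivation} of the finitely generated $\Q$-algebra $\Q[f_1, \ldots, f_n]$; the $\partial{a}$'s for different letters need not commute. The target series $f$ vanishes if and only if $(\partial{a_1} \cdots \partial{a_k} f)(\e) = 0$ for every finite word $a_1 \cdots a_k \in \Sigma^*$.

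Second, run the natural saturation algorithm: construct the ascending chain of ideals $I_0 \subseteq I_1 \subseteq \cdots$ in $\Q[y_1, \ldots, y_n]$, where $I_0$ is generated by the polynomial representing $f$ and $I_{k+1} := I_k + \sum_{a \in \Sigma} \partial{a}(I_k)$, while checking that constant terms of representatives vanish at each stage. By Noetherianity the chain stabilises, and once it does, zeroness of $f$ reduces to a bounded number of ideal-membership and constant-term checks.

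The main obstacle, and the technical heart of the argument, is bounding the length of this chain in order to obtain the \TWOEXPSPACE~complexity. Generic Noetherian chain bounds on ascending ideals under derivations are non-primitive recursive, so a tailored analysis is needed. The strategy, as flagged in the abstract, is to \emph{generalise} Novikov and Yakovenko's theorem---originally proved for a single derivation of a polynomial ring---to an arbitrary finite family of possibly noncommuting derivations. A careful tracking of degree growth and codimension drops across successive links of the chain should yield a doubly exponential bound on its length. Combined with a streaming implementation that avoids materialising the entire chain simultaneously, and with standard doubly exponential bounds for ideal-membership via Gröbner bases, this delivers the desired \TWOEXPSPACE~algorithm.
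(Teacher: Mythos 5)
Your reduction to an ascending chain of polynomial ideals, the observation that the operators $\derive a$ are noncommuting derivations of a finitely generated shuffle subalgebra, and the plan to bound the chain length by generalising Novikov--Yakovenko from one derivation to a finite family are precisely the route the paper takes, so the core of the argument is sound. The gap is in the final algorithmic step. You propose to materialise the ideals $I_0 \subseteq I_1 \subseteq \cdots$, detect stabilisation via ideal membership, and invoke ``standard doubly exponential bounds for ideal-membership via Gr\"obner bases.'' Those Gr\"obner bounds are doubly exponential \emph{in the degree of the input generators}, but the generators $\Delta_w\alpha$ of $I_n$ (for $|w|\le n$) already have degree $\Theta(n\cdot D)$, and pushed out to the chain bound $M = D^{k^{\bigO{k^2}}}$ this input degree is itself doubly exponential --- so a Gr\"obner-based stabilisation test would not obviously fit in \TWOEXPSPACE, and you have not argued that it does.

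The paper sidesteps ideal computations entirely. Once the chain-length bound $M$ is in hand, no stabilisation detection is needed: if $\sem\alpha\neq 0$, some word $w$ of length $\le M$ already witnesses this. Indeed for any $w$, $\Delta_w\alpha$ lies in $I_M=\idealof{\Delta_u\alpha}{|u|\le M}$, and applying the multiplicative output functional $F$ to an ideal-combination representation expresses $\sem\alpha_w$ as a $\Q$-linear combination of values $\sem\alpha_u$ with $|u|\le M$; if all those vanish then $\sem\alpha=0$. It therefore suffices to guess a word $w$ of doubly exponential length and verify $\sem\alpha_w\neq 0$. This ``word-zeroness'' check runs in space polynomial in $|w|$: build an implicit algebraic circuit of size $4^{|w|}\cdot n$ computing $\Delta_w\alpha$, observe (crucially, and because $\Delta_a$ is a derivation rather than a homomorphism) that the output polynomial has degree only $\bigO{|w|\cdot D}$, and evaluate the low-degree circuit. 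With $|w|$ doubly exponential this gives a nondeterministic \TWOEXPSPACE~algorithm, which Savitch's theorem derandomises. Your ``constant-term checks'' are exactly these evaluations; what your plan is missing is the circuit construction plus the linear degree bound that make them cheap, in place of the Gr\"obner-basis machinery that threatens to overshoot the space budget.
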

\noindent
This elementary complexity should be contrasted with Ackermann-hardness
of zeroness of polynomial automata~\cite{BenediktDuffSharadWorrell:PolyAut:2017},
another incomparable extension of \WFA.
Since \WBPP~can model the multiplicity semantics of \BPP,
as an application we get the following corollary.
\begin{restatable}{corollary}{BPPmultiplicityEquivalenceAndUnambiguous}
    \label{thm:BPP multiplicity equivalence}
    \label{thm:UBPP language equivalence}
    Multiplicity equivalence of \BPP~and language equivalence of unambiguous \BPP~are decidable in \TWOEXPSPACE.
\end{restatable}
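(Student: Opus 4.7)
The plan is to reduce both decision problems to the \WBPP~zeroness problem and then invoke \cref{thm:main}.

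First, I would exhibit a polynomial translation from a \BPP~$P$ over alphabet $\Sigma$ to a \WBPP~$W_P$ whose series $\sem{W_P} \colon \Sigma^* \to \Q$ sends a word $w$ to the number of accepting derivations of $w$ in $P$. Since \WBPP~is meant to be a weighted generalisation of \BPP~(just as \WFA~generalises finite automata), this should be essentially definitional: assign weight~$1$ to every transition/rule of $P$ and interpret the \BPP~under the weighted semantics, so that each accepting derivation contributes exactly $1$ and the sum over all derivations yields the multiplicity of $w$. The size of $W_P$ is polynomial in $\card P$.

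Second, I would handle multiplicity equivalence of two \BPPs $P_1, P_2$ by observing that $P_1$ and $P_2$ are multiplicity equivalent iff $\sem{W_{P_1}} = \sem{W_{P_2}}$, which in turn is equivalent to $\sem{W_{P_1}} - \sem{W_{P_2}} = 0$. Here I rely on the fact that \WBPP~series are closed under subtraction (a standard closure property of weighted models over a field, implemented by taking a disjoint union and negating the weights of one component). The result is a \WBPP~of size polynomial in $\card{P_1} + \card{P_2}$, and its zeroness is in \TWOEXPSPACE~by \cref{thm:main}.

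Third, for language equivalence of \emph{unambiguous} \BPPs $P_1, P_2$, I would exploit the defining property that each accepted word has a unique accepting derivation. Then $\sem{W_{P_i}}(w) \in \set{0, 1}$ equals the characteristic function of $\lang{P_i}$, so language equivalence of $P_1$ and $P_2$ coincides with multiplicity equivalence, handled by the previous step.

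The only potentially delicate point is the first step: it must be verified that the weighted semantics of \WBPP~faithfully counts derivations of \BPP~(so no unintended cancellations or double-counting arise) and that the translation is genuinely polynomial, so that the \TWOEXPSPACE~bound from \cref{thm:main} transfers without any exponential blow-up. Given that \WBPP~is introduced precisely as the weighted analogue of \BPP, this is expected to be routine from the definitions rather than a genuine obstacle.
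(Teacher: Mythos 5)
Your proof is correct and takes essentially the paper's route: define the multiplicity semantics of a \BPP~as the series of its natural $\N$-\WBPP~translation (illustrated in \cref{ex:from BPP to WBPP}, going through standard form), reduce equivalence to zeroness by taking a difference of two \WBPP~series (\cref{lem:WBPP basic closure properties}), and invoke \cref{thm:main}. The ``delicate point'' you flag is real but routine and is handled in the paper: the translation is spelled out for the standard form, and the fact that the support of the resulting $\N$-\WBPP~series equals the \BPP~language---needed to conclude that for an unambiguous \BPP~the multiplicity semantics is the characteristic series of the language---is proved as \cref{lem:BPP languages characterisation} in the appendix.
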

\noindent
On a technical level, \cref{thm:main} is obtained by extending an ideal construction and complexity analysis from~\cite{NovikovYakovenko:1999}
from the case of a single polynomial derivation
to the case of a finite set of not necessarily commuting polynomial derivations.
It is remarkable that such ideal chains have elementary length,
since generic bounds without further structural restrictions are only general recursive~\cite{Seidenberg:TAMS:1974}.
This shows that the \BPP~semantics is adequately captured by differential algebra.
These results are presented in~\cref{sec:WBPP}.
%
%
In \cref{sec:CDF} we observe that \emph{commutative} \WBPP~series coincide with \CDF~power series,
thus establishing a novel connection between automata theory, polynomial differential equations, and combinatorics.
%
This allows us to obtain a zeroness algorithm for \CDF,
which is our second main contribution.

\begin{restatable}{theorem}{CDFzeronessTWOEXPTIME}
    \label{thm:CDF zeroness in TWOEXPTIME}
    The zeroness problem for multivariate \CDF~power series is in \TWOEXPTIME.
\end{restatable}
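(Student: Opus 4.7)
The plan is to reduce multivariate \CDF~zeroness to the commutative case of \WBPP~zeroness, using the correspondence between \CDF~power series and commutative \WBPP~series announced in \cref{sec:CDF}. This reduction produces a commutative \WBPP~of polynomial size whose zeroness is equivalent to that of the original \CDF~instance. It then remains to sharpen the general \TWOEXPSPACE~bound of \cref{thm:main} under the additional assumption that the \WBPP~is commutative, i.e., that the associated polynomial derivations on $\Q[y_1, \ldots, y_m]$ pairwise commute.

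The \TWOEXPSPACE~upper bound of \cref{thm:main} stems from two ingredients: a doubly exponential bound on the length of chains of polynomial ideals built by iterated application of not-necessarily-commuting derivations, and per-step ideal-membership tests which, in the absence of commutativity, rely on \PSPACE-style reasoning over non-commuting operators. When the derivations commute, the ideals in the chain are ordinary ideals in the commutative polynomial ring $\Q[y_1, \ldots, y_m]$, and the iterated derivations coincide with standard partial derivatives. Ideal membership then reduces to Gröbner basis computations in a commutative polynomial ring, for which singly exponential space suffices. Multiplying the doubly exponential chain length by the per-step cost still fits within \TWOEXPTIME, giving the claimed bound once the reduction is applied.

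The main obstacle is to verify that the doubly exponential chain-length bound transfers cleanly to the commutative setting, and that the degrees of the polynomials produced along the chain are controlled well enough that the Gröbner basis machinery achieves the claimed total \TWOEXPTIME~complexity rather than merely the generic \TWOEXPSPACE~bound. This requires a careful bookkeeping of the polynomials obtained by iteration of the commuting derivations on the initial \WBPP~data, together with invoking effective bounds on ideal membership in commutative polynomial rings. Once this is established, the combinatorial consequences for constructible species follow by instantiating the reduction on the exponential generating series attached to each such species.
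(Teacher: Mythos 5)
The reduction to commutative \WBPP~series is correct, but you misdiagnose where the improvement from \TWOEXPSPACE~to \TWOEXPTIME~actually comes from, and your proposed algorithm is not the one the paper uses. In both the general \WBPP~and the \CDF~setting, the ideal chain~\cref{eq:WBPP ideal chain} already lives in a \emph{commutative} polynomial ring $\poly \Q N$; what fails to commute in the general case are the derivations $\Delta_a$, not the ring in which ideal membership is tested. So the claim that Gröbner bases ``become available'' in the commutative case does not explain anything: the paper could have used Gröbner bases in the general \WBPP~proof too (and indeed invokes Hilbert's basis theorem and decidability of ideal inclusion there to establish decidability). The real source of the gain is combinatorial. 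In the non-commutative \WBPP~case, the chain length bound $M = D^{k^{\bigO{k^2}}}$ tells you a non-zero witness word of length $\leq M$ exists, but there are $|\Sigma|^M$ such words, so the paper guesses one and verifies it in \PSPACE~(w.r.t. the doubly exponential word length), obtaining \TWOEXPSPACE~via Savitch. In the commutative case words collapse to Parikh images, i.e., monomials, and there are only $\binom{M+d}{d}$ monomials of total degree $\leq M$ in $d$ variables---polynomially many in $M$, hence doubly exponentially many overall.

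The paper's actual algorithm (Lemmas~\ref{lem:CDF nonzero witness degree bound} and \ref{lem:CDF table}) contains no Gröbner basis computation at all. \Cref{lem:CDF nonzero witness degree bound} transfers the ideal-chain length bound to conclude that zeroness is equivalent to the vanishing of all coefficients $\coefficient{x^n}g$ with $\onenorm n \leq M = D^{k^{\bigO{k^2}}}$. \Cref{lem:CDF table} is a dynamic-programming table that computes all these integer coefficients directly, in time $(M + d\cdot D + k)^{\bigO{d\cdot D + k}} \cdot (\log H)^{\bigO 1}$; the numbers stay small by the coefficient-growth and denominator bounds. The combination is a deterministic \TWOEXPTIME~enumeration, with no ideal-membership oracle.

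Beyond the diagnostic issue, the Gröbner-basis route you sketch has a concrete quantitative gap. The generators of $I_n$ have degree growing linearly in $n$ (\cref{lem:small degree property}), so by the time the chain stabilizes after up to $M = D^{k^{\bigO{k^2}}}$ steps the generators have doubly exponential degree. Gröbner bases of ideals with doubly exponential-degree generators are not generically computable in singly exponential space or doubly exponential time; the well-known degree bounds for Gröbner bases are doubly exponential \emph{in the number of variables} starting from \emph{bounded}-degree input, and here the input degree itself is already doubly exponential. Your ``multiply chain length by per-step cost'' estimate therefore does not land in \TWOEXPTIME~as written. You would need to circumvent ideal-membership entirely, which is exactly what the paper's coefficient-enumeration argument does.
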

\noindent
The complexity improvement from \TWOEXPSPACE~to \TWOEXPTIME~is due to commutativity.
In the special \emph{univariate} case,
decidability was observed in~\cite{BergeronReutenauer:EJC:1990} with no complexity analysis,
while~\cite{BergeronSattler:TCS:1995} did not discuss decidability in the multivariate case.
%
%
In~\cref{sec:species} we apply~\cref{thm:CDF zeroness in TWOEXPTIME} to multiplicity equivalence of a class of constructible species.
This follows from the observation that their exponential generating series (\egs) are effectively \CDF,
proved by an inductive argument based on the closure properties of \CDF~series.
For instance, the \egs~of Cayley trees satisfies $C = x \cdot e^C$;
by introducing auxiliary series $D := e^C, E := (1 - C)^{-1}$ and by differentiating
we obtain \CDF~equations $\partial x C = D \cdot E$, $\partial x D = D^2 \cdot E$, $\partial x E = D \cdot E^3$.

\begin{restatable}{theorem}{multiplicityEquivalenceOfConstructibleSpecies}
    \label{thm:multiplicity equivalence of constructible species}
    Multiplicity equivalence of strongly constructible species is decidable.
\end{restatable}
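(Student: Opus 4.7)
The plan is to reduce multiplicity equivalence of strongly constructible species to zeroness of a single \CDF~power series, and then invoke~\cref{thm:CDF zeroness in TWOEXPTIME}. Two species $F$ and $G$ are multiplicity equivalent iff their exponential generating series $f, g \in \powerseries \Q x$ coincide, and since $f = g$ is equivalent to $f - g = 0$ and \CDF~is closed under subtraction, it suffices to show that the \egs~of every strongly constructible species is effectively \CDF.

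I would prove this by structural induction on the definition of a strongly constructible species, carrying along a polynomial differential system that witnesses the \CDF~property of the associated \egs. The atomic cases $\ZERO$, $\ONE$, $\X$ and the sum and combinatorial product constructors are immediate from the closure of \CDF~under $+$ and $\cdot$. For $\SETSP[f] = e^f$, $\SEQSP[f] = (1-f)^{-1}$ and $\CYCSP[f] = \log\tuple{(1-f)^{-1}}$, the trick is to introduce one auxiliary series per non-polynomial operator and append a polynomial ODE for it to the system (\eg, setting $D = e^C$ yields $\partial x D = D \cdot \partial x C$, exactly as illustrated for Cayley trees earlier in the introduction). Composition $F \circ G$ with $G(0)=0$ translates to substitution on \egs~and is handled by combining the outer system evaluated at the inner series with the inner system.

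The most delicate step---and the one I expect to be the main obstacle---is the resolution of implicit equations $F = H(\X, F)$ that lies at the heart of strong constructibility. Here the \CDF~presentation cannot be read off directly from the defining equation. Instead, differentiating with respect to $x$ yields
\[
    \partial x f \;=\; (\partial x H)(\X, f) + (\partial F H)(\X, f) \cdot \partial x f,
\]
so that $\partial x f = (\partial x H)(\X, f) \cdot U$, where $U$ is the multiplicative inverse of $1 - (\partial F H)(\X, f)$. Applying the inductive hypothesis to the partial derivatives of $H$---which again admit strongly constructible presentations---and augmenting the ambient system with one extra generator for $U$ (whose derivative is polynomial in $U$ and the other generators, by the quotient rule) yields a \CDF~presentation of $f$. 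The strong constructibility hypothesis is precisely what guarantees $1 - (\partial F H)(0, 0) \neq 0$, so that $U$ is a well-defined power series and the induction goes through. Finally, feeding $f - g$ to the algorithm of~\cref{thm:CDF zeroness in TWOEXPTIME} decides multiplicity equivalence.
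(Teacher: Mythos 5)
Your high-level plan is the same as the paper's: the statement follows from showing that the \egs~of every strongly constructible species is effectively \CDF~(the paper's \cref{thm:strongly constructible species are CDF}), then running the \CDF~zeroness algorithm of \cref{thm:CDF zeroness in TWOEXPTIME} on the difference. The proof by structural induction, using \CDF~closure under sum, product, composition, and introducing auxiliary series for $\SETSP$, $\SEQSP$, $\CYCSP$ as in the Cayley-tree example, is also the approach the paper takes.

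There is, however, a genuine gap in your account of the inductive cases: you omit entirely the case of \emph{regular cardinality restrictions}. These are built into the definition of strongly constructible species — clause (2) closes the class under sum, product, strong composition, \emph{and regular cardinality restrictions} — and they are indispensable even for elementary examples: $\SET \X_{\geq 1}$, $\SEQ\X_{\geq 2}$, $\SET\X_{\geq 2}$ all appear in the species equations for set partitions and series-parallel graphs, and one cannot reformulate them away in general. Handling this case requires a nontrivial closure result (the paper's \cref{lem:CDF closure under regular cardinality restrictions}): one must show that restricting a \CDF~series to a regular subset of $\N^d$, for instance parity or modular constraints on the total degree, stays \CDF. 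The proof works by encoding the constraint with a finite monoid recognizer, splitting each generator of the differential algebra into its restrictions to the monoid's classes, and showing that the resulting larger collection of generators is again closed under partial derivatives via an exchange rule. This is a separate lemma you would need to prove; it is not an instance of any of the closure properties you invoke.

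A secondary imprecision: your treatment of the implicit-equation case is written for a single scalar equation, where the invertibility of $1 - (\partial F H)(0,0)$ follows trivially. Strong constructibility (clause (3)) allows \emph{systems} $\Y = \FF(\X,\Y)$ with $\FF$ a $k$-tuple, so $\partial \Y \FF(0,0)$ is a $k \times k$ matrix, and well-posedness only asserts that this matrix is \emph{nilpotent}, not zero. You then need that $I - M$ is invertible for $M$ nilpotent, form the matrix inverse (e.g.\ by Cramer's rule, which brings in the rational-function closure of \CDF), and verify composability of that inverse with the solution. This is the content of the paper's \cref{lem:CDF implicit power series theorem,lem:CDF constructible nilpotent}; it is salvageable from your sketch, but the scalar phrasing ``$U$ is the multiplicative inverse of $1 - (\partial F H)(\X, f)$'' hides where the actual work lies. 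Also, attributing the invertibility to ``the strong constructibility hypothesis'' is a misnomer: strong constructibility governs composability (polynomial vs.\ locally polynomial dependence on the $\Y$-variables), whereas invertibility of $I - \partial \Y \FF(0,0)$ comes from the separate well-posedness hypothesis on the system.
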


\subsection{Related works}


There have recently been many decidability results for models incomparable with \WBPP,
such as multiplicity equivalence of \emph{boundedly-ambiguous} Petri nets~\cite[Theorem 3]{CzerwinskiHofman:CONCUR:2022};
zeroness for weighted one-counter automata with \emph{deterministic counter updates}~\cite{PrincePenelleSaivasanSreejith:FSTTCS:2023};
zeroness of \emph{P-finite automata}, a model intermediate between \WFA~and polynomial automata
(even in \PTIME~\cite{Buna-MargineanChevalShirmohammadiWorrell:POPL:2024});
and zeroness of \emph{orbit-finite weighted automata} in sets with atoms~\cite{BojanczykKlinMoerman:LICS:2021}.


Regarding power series, there is a rich literature on dynamical systems
satisfying differential equations in the \CDF~format,
that is polynomial ordinary differential equations 
(ODE; \cf~\cite{Platzer:2018} and references therein).
While many algorithms have been proposed for their analysis
(e.g., invariant checking~\cite{PlatzerTan:JACM:2020}),
the complexity of the zeroness problem has not been addressed before.
A decision procedure for zeroness of multivariate \CDF~can be obtained from first principles
as a consequence of Hilbert's \emph{finite basis theorem}~\cite[Theorem 4, §5, Ch.~2]{CoxLittleOShea:Ideals:2015}.
For instance, decidability follows from the algorithm of~\cite{Boreale:IC:2022}
computing pre- and post-conditions for restricted systems of partial differential equations (covering $\CDF$),
and also from the \emph{Rosenfeld--Gr\"obner algorithm}~\cite{BoulierLazardOllivierPetitot:AAECC:2009},
which can be used to test membership in the radical differential ideal generated by the system of \CDF~equations.
In both cases, no complexity-theoretic analysis is provided and only decidability can be deduced.
In the univariate \CDF~case, decidability can also be deduced from~\cite{Boreale:LMCS:2019,Boreale:SCP:2020}.
Univariate \CDF~also arise in the coalgebraic treatment of stream equations with the shuffle product~\cite{BorealeGorla:CONCUR:2021,BorealeCollodiGorla:ACMTCL:2024},
where an equivalence algorithm based on Hilbert's theorem is provided.

The work~\cite{GabrielovVorobjov:2004} studies \emph{Noetherian functions},
which are analytic functions satisfying \CDF~equations.
In fact, Noetherian functions which are analytic around the origin coincide with multivariate \CDF~power series.
The work~\cite{vanderHoevenShackell:JSC:2006} discusses a subclass of Noetherian functions
obtained by iteratively applying certain extensions to the ring of multivariate polynomials
and presents a zeroness algorithm running in doubly exponential time.
\cref{thm:CDF zeroness in TWOEXPTIME} is more general since it applies to all Noetherian power series.

In the context of the realisability problem in control theory,
Fliess has introduced the class of \emph{differentially producible series}~\cite{Fliess:IM:1983}
(\cf~also the exposition of Reutenauer~\cite{Reutenauer:Lie:1986}),
a generalisation of \WBPP~series where the state and transitions are given by arbitrary power series (instead of polynomials).
Such series are characterised by a notion of \emph{finite Lie rank}
and it is shown that differentially producible series of minimal Lie rank exist and are unique.
Such series are not finitely presented and thus algorithmic problems, such as equivalence, cannot even be formulated.

Full proofs can be found in~\cref{app:preliminaries,app:WBPP,app:CDF}.

\subparagraph*{Preliminaries.}
\label{sec:preliminaries}

Let $\Sigma = \set{a_1, \dots, a_d}$ be a finite alphabet.
We denote by $\Sigma^*$ the set of \emph{finite words} over $\Sigma$,
a monoid under the operation of concatenation,
with neutral element the empty word $\e$.
The \emph{Parikh image} of a word $w \in \Sigma^*$
is $\Parikh w := \tuple{\Parikh w_{a_1}, \dots, \Parikh w_{a_d}} \in \N^d$,
where $\Parikh w_{a_j}$ is the number of occurrences of $a_j$ in $w$.
Let $\Q$ be the field of rational numbers.
Most results in the paper hold for any field,
however for computability considerations we restrict our presentation to $\Q$.
For a tuple of commuting indeterminates $x = \tuple{x_1, \dots, x_k}$,
denote by $\poly \Q x$ the ring of \emph{multivariate polynomials}
($\poly \Q k$ when the name of variables does not matter)
and by $\rational \Q x$ its fraction field of \emph{rational functions}
(that is, ratios of polynomials $p(x)/q(x)$).
The \emph{one norm} $\onenorm z$ of a vector $z = \tuple{z_1, \dots, z_k} \in \Q^k$
is $\abs {z_1} + \cdots + \abs {z_k}$,
and the \emph{infinity norm} is $\inftynorm z = \max_{1 \leq i \leq k} \abs {z_k}$.
Similarly, the \emph{infinity norm} (also called \emph{height})
of a polynomial $p \in \poly \Q k$,
written $\height p$, is the maximal absolute value of any of its coefficients.

A \emph{derivation} of a ring $R$ is a linear function $\delta : R \to R$ satisfying 
\begin{align}
    \tag{Leibniz rule}
    \label{eq:Leibniz rule}
    \delta (a \cdot b) = \delta(a) \cdot b + \delta (b).
\end{align}
%
%
%
A derivation $\delta$ of a polynomial ring $\poly R x$ 
is uniquely defined once we fix $\delta(x) \in \poly R x$.
%
%
\noindent
For instance, $\partial x : \poly R x \to \poly R x$
is the unique derivation $\delta$ of the polynomial ring \st~$\delta(x) = 1$.
Other technical notions will be recalled when necessary.
For a general introduction to algebraic geometry we refer to~\cite{CoxLittleOShea:Ideals:2015}.
\section{Weighted extension of basic parallel processes}
\label{sec:WBPP}

\subsection{Basic parallel processes}

In this section we recall the notion of basic parallel process (\BPP) together with its language semantics.
Let $\set{X_1, X_2, \dots}$ be a countable set of \emph{nonterminals} (process variables)
and let $\Sigma$ be a finite alphabet of \emph{terminals} (actions).
A \emph{\BPP~expression} is generated by the following abstract grammar~(\cf~\cite[Sec.~5]{Esparza:FI:1997}):
%
    $E, F ::= \bot \mid X_i \mid a . E \mid E + F \mid E \merge F$.
%
Intuitively, $\bot$ is a constant representing the \emph{terminated process},
$a.E$ (\emph{action prefix}), is the process that performs action $a$ and becomes $E$,
$E+F$ (\emph{choice}) is the process that behaves like $E$ or $F$,
and $E \merge F$ (\emph{merge}) is the \emph{parallel} execution of $E$ and $F$.
We say that an expression $E$ is \emph{guarded}
if every occurrence of a nonterminal $X_i$ is under the scope of an action prefix.
A \emph{\BPP} consists of a distinguished \emph{starting nonterminal} $X_1$ and rules
\begin{align}
    \label{eq:BPP}
    X_1 \to E_1
        \quad \cdots \quad
            X_k \to E_k,
\end{align}
where the \rhs~expressions $E_1, \dots, E_k$ are guarded
and contain only nonterminals $X_1, \dots, X_k$.
\begin{wrapfigure}{r}{0.5\textwidth}
    \vspace{-3em}
    \begin{center}
        \begin{align*}
            \begin{array}{ccc}
                a.E \goesto a E
                & \deductionrule {E \goesto a E'} {E + F \goesto a E'}
                & \deductionrule {F \goesto a F'} {E + F \goesto a F'} \\
                \deductionrule {E_i \goesto a E'} {X_i \goesto a E'}
                & \deductionrule {E \goesto a E'} {E \merge F \goesto a E' \merge F}
                & \deductionrule {F \goesto a F'} {E \merge F \goesto a E \merge F'}
            \end{array}
        \end{align*}
    \end{center}
    \vspace{-2.5em}
\end{wrapfigure}
A \BPP~induces an infinite labelled transition system 
where states are expressions and the labelled transition relations ${\goesto a}$
are the least family of relations closed under the rules on the side.
The transition relation is extended naturally to words ${\goesto w}$, $w \in \Sigma^*$.
An expression $E$ is \emph{final} 
if there are no $a, E'$ \st~$E \goesto a E'$
(e.g., $\bot \merge \bot$);
it \emph{accepts} a word $w \in \Sigma^*$ if there is a final expression $F$ \st~$E \goesto w F$.
The \emph{language} $\lang E$ recognised by an expression $E$ is the set of words it accepts,
and the language of a \BPP~is $\lang {X_1}$. 

An expression $E$ is in \emph{(full) standard form}
if it is a sum of products $a_1.\alpha_1 + \cdots + a_n.\alpha_n$,
where each $\alpha_i$ is a merge of nonterminals;
a \BPP~\cref{eq:BPP} is in standard form if every $E_1, \dots, E_k$ is in standard form.
The standard form for \BPP~is analogous to the Greibach normal form for context-free grammars~\cite{Greibach:JACM:1965}.
Every \BPP~can be effectively transformed to one in standard form
preserving bisimilarity~\cite[Proposition 2.31]{Christensen:PhD:1993},
and thus the language it recognises.

\begin{example}
    \label{ex:BPP}
    Consider two input symbols $\Sigma = \set{a, b}$
    and two nonterminals $N = \set {S, X}$.
    The following is a \BPP~in standard form:
    $S \to a. X, X \to a. (X \merge X) + b.\bot$.
    An example execution is
    $S \goesto a X \goesto a X \merge X \goesto b \bot \merge X \goesto b \bot \merge \bot$,
    and thus $a^2b^2 \in \lang S$.
    %
\end{example}

\noindent
While language equivalence is undecidable for \BPP~\cite[Sec.~5]{Hirshfeld:CSL:1994},
the finer bisimulation equivalence is decidable~\cite{ChristensenHirshfeldMoller:CONCUR:1993},
and in fact PSPACE-complete~\cite{Srba:STACS:2002,Jancar:LICS:2003}.
These initial results have motivated a rich line of research investigating decidability and complexity
for variants of bisimulation equivalence.
We consider another classical variation on language equivalence,
namely multiplicity equivalence,
and apply it to decide language equivalence of unambiguous~\BPP.
We show in~\cref{thm:UBPP language equivalence} that both problems are decidable and in \TWOEXPSPACE.
This is obtained by considering a more general model, introduced next.

\subsection{Weighted basic parallel processes}

\subparagraph{Preliminaries.}
\label{sec:series}

Let $\Sigma^* \to \Q$  be the set of \emph{(non-commutative) series} with coefficients in $\Q$,
also known as \emph{weighted languages}.
An alternative notation is $\ncpowerseries \Q \Sigma$.
We write a series as $f = \sum_{w \in \Sigma^*} f_w \cdot w$,
where the value of $f$ at $w$ is $f_w \in \Q$.
Thus, $3aba - \frac 5 2 bc$ and $1+a+a^2 + \cdots$ are series.
The set of series carries the structure of a vector space over $\Q$,
with element-wise scalar product $c \cdot f$ ($c \in \Q$) and sum $f + g$.
The \emph{support} of a series $f$
is the subset of its domain $\support f \subseteq \Sigma^*$ where it evaluates to a nonzero value.
\emph{Polynomials} $\ncpoly \Q \Sigma$ are  series with finite support.
%
%
%
The \emph{characteristic series} of a language $L \subseteq \Sigma^*$
is the series that maps words in $L$ to $1$ and all the other words to $0$.
%




For two words $u \in \Sigma^m$ and $v \in \Sigma^n$,
let $u \shuffle v$ be the multiset of all words $w = a_1 \cdots a_{m+n}$
\st~the set of indices $\set{1, \dots, m+n}$ can be partitioned into two subsequences
$i_1 < \cdots < i_m$ and $j_1 < \cdots < j_n$
\st~$u = a_{i_1} \cdots a_{i_m}$ and $v = a_{j_1} \cdots a_{j_m}$.
%
%
The multiset semantics preserves multiplicities,
e.g., $ab \shuffle a = \multiset{aab, aab, aba}$. 
The \emph{shuffle} of two series $f, g$ is the series $f \shuffle g$ defined as
$(f \shuffle g)_w := \sum_{w \in u \shuffle v} f_u \cdot g_v$,
for every $w \in \Sigma^*$,
where the sum is taken with multiplicities.
Shuffle product (called \emph{Hurwitz product} in~\cite{Fliess:1974})
leads to the commutative \emph{ring of shuffle series}
$\tuple{\ncpowerseries \Q \Sigma; +, \shuffle, 0, 1}$,
whose \emph{shuffle identity} $1$ is the series mapping $\e$ to $1$ and all other words to $0$.
A series $f$ has a \emph{shuffle inverse} $g$,
i.e., $f \shuffle g = 1$,
iff $f_\e \neq 0$.
The \emph{$n$-th shuffle power} $\shufflepower f n$ of a series $f$
is inductively defined by $\shufflepower f 0 := 1$ and $\shufflepower f {(n+1)} := f \shuffle \shufflepower f n$.
%
%

Consider the mapping $\derive {} : \Sigma^* \to \ncpowerseries \Q \Sigma \to \ncpowerseries \Q \Sigma$
\st~for every $u \in \Sigma^*$ and $f \in \ncpowerseries \Q \Sigma$,
$\derive u f \in \ncpowerseries \Q \Sigma$ is the series defined as
%
    $(\derive u f)_w = f_{uw}$, for every $w \in \Sigma^*$.
%
We call $\derive u f$ the \emph{$u$-derivative} of $f$
(\aka~\emph{shift} or \emph{left-quotient}).
%
For example, $\derive a (ab + c) = b$.
The derivative operation $\derive u$ is linear, for every $u \in \Sigma^*$.
The one-letter derivatives $\derive a$'s are (noncommuting) derivations of the shuffle ring
since they satisfy~\cref{eq:Leibniz rule},
\begin{align}
    \label{eq:derive shuffle}
    \derive a (f \shuffle g) = \derive a f \shuffle g + f \shuffle \derive a g,
    \quad \text{for all } a \in \Sigma, f, g \in \ncpowerseries \Q \Sigma,
\end{align}

\subparagraph{Syntax and semantics.}

A \emph{weighted basic parallel process} (\WBPP)
is a tuple $P = \tuple{\Sigma, N, S, F, \Delta}$
where $\Sigma$ is a finite input alphabet of \emph{terminal symbols}/\emph{actions},
$N$ is a finite set of \emph{nonterminal symbols}/\emph{processes},
$S \in N$ is the \emph{initial nonterminal},
$F : N \to \Q$ assigns a \emph{final weight} $F X \in \Q$ to each nonterminal $X \in N$,
and $\Delta : \Sigma \times N \to \poly \Q N$
is a \emph{transition function}
mapping a nonterminal $X \in N$ and an input symbol $a \in \Sigma$
to a polynomial $\Delta_a X \in \poly \Q N$.

\begin{example}
    \label{ex:from BPP to WBPP}
    A \BPP~in standard form is readily converted to a \WBPP~with $0, 1$ weights:
    The \BPP~from~\cref{ex:BPP} yields the \WBPP\-
    with output function $FS = FX = 0$ and transitions
    $\Delta_a S = X, \Delta_a X = X^2, \Delta_b S = 0, \Delta_b X = 1$.
    Configurations reachable from $S, X$ are of the form $cX^n$ ($c \in \N$).
    Action ``$a$'' acts as an increment $X^n \goesto a nX^{n+1}$
    and ``$b$'' as a decrement $X^n \goesto b nX^{n-1}$.
    The constant coefficient $c \in \N$ in a reachable configuration $cX^n$
    keeps track of the ``multiplicity'' of reaching this configuration,
    i.e., the number of distinct runs leading to it.
    For instance, $\sem S_{a^2b^2} = 2$ since
        $S \goesto a X \goesto a X^2 \goesto b 2X \goesto b 2$.
    %
    In the underlying \BPP,
    \begin{center}
        \begin{tikzpicture}[node distance = 3ex, thick, every label/.append style={font=\scriptsize}]]%
            \node (1) {$S$};

            \node (2) [right=of 1] {$X$};
            \draw[->] (1) -- node [midway, above] {$a$} (2);

            \node (3) [right=of 2] {$X \merge X$};
            \draw[->] (2) -- node [midway, above] {$a$} (3);

            \node (4) [right=of 3, yshift=3ex] {$\bot \merge X$};
            \draw[->] (3) -- node [midway, above] {$b$} (4);
            \node (5) [right=of 4] {$\bot \merge \bot$};
            \draw[->] (4) -- node [midway, above] {$b$} (5);

            \node (6) [right=of 3, yshift=-3ex] {$X \merge \bot$};
            \draw[->] (3) -- node [midway, below] {$b$} (6);
            \node (7) [right=of 6] {$\bot \merge \bot$};
            \draw[->] (6) -- node [midway, above] {$b$} (7);
        \end{tikzpicture}
    \end{center}
    where the branching upon reading the first symbol ``$b$''
    depends on whether the first or second occurrence of $X$ reads this symbol.
\end{example}


A \emph{configuration} of a \WBPP~is a polynomial $\alpha \in \poly \Q N$.
The transition function extends uniquely to a derivation of the polynomial ring $\poly \Q N$
via linearity and \cref{eq:Leibniz rule}:
\begin{align}
    \nonumber
    &\Delta : \Sigma \times \poly \Q N \to \poly \Q N \\
    \nonumber
    &\Delta_a(c \cdot \alpha)
        = c \cdot \Delta_a(\alpha),
        && \forall a \in \Sigma, c \in \Q, \\
    \nonumber
    &\Delta_a(\alpha + \beta)
        = \Delta_a (\alpha) + \Delta_a(\beta),
        && \forall a \in \Sigma, \alpha, \beta \in \poly \Q N, \\
    \label{eq:WBPP Leibniz}
    &\Delta_a (\alpha \cdot \beta)
        = \Delta_a(\alpha) \cdot \beta + \alpha \cdot \Delta_a(\beta),
        && \forall a \in \Sigma, \alpha, \beta \in \poly \Q N.
\end{align}
For example, from configuration $X \cdot Y$ we can read $a$ and go to
$\Delta_a(X\cdot Y) = \Delta_a(X) \cdot Y + X \cdot \Delta_a(Y)$;
this is models the fact that either $X$ reads $a$ and $Y$ is unchanged, or vice versa.
%
%
The transition function is then extended homomorphically to words:
\begin{align}
    \nonumber
    &\Delta : \Sigma^* \times \poly \Q N \to \poly \Q N \\
    \label{eq:Delta ext}
    &\Delta_\e \alpha := \alpha,\ 
    \Delta_{a \cdot w} \alpha := \Delta_w(\Delta_a \alpha),
    \quad \forall (a \cdot w) \in \Sigma^*, \alpha \in \poly \Q N.
\end{align}
Sometimes we write $\alpha \goesto w \beta$ when $\beta = \Delta_w(\alpha)$.
For instance, from configuration $\alpha$
we can read $ab \in \Sigma^*$ visiting configurations
$\alpha \goesto a \Delta_a(\alpha) \goesto b \Delta_b (\Delta_a(\alpha))$.
The order of reading symbols matters:
For the transition function $\Delta_a(X) = 0$, $\Delta_b(X) = Y$, and $\Delta_a(Y) = \Delta_b(Y) = 1$,
we have $X \goesto a 0 \goesto b 0$
but $X \goesto b Y \goesto a 1$.
The \emph{semantics} of a \WBPP~is the mapping
\begin{align}
    \nonumber
    &\sem \_ : \poly \Q N \to \series \Q \Sigma \\
    &\sem \alpha_w := F (\Delta_w \alpha),
        \quad \forall \alpha \in \poly \Q N, w \in \Sigma^*.
\end{align}
Here $F$ is extended homomorphically from nonterminals to configurations:
$F(\alpha + \beta) = F(\alpha) + F(\beta)$ and $F(\alpha \cdot \beta) = F(\alpha) \cdot F(\beta).$
We say that configuration $\alpha$ \emph{recognises} the series $\sem \alpha$.
The series recognised by a \WBPP~is the series recognised by its initial nonterminal.
A \emph{\WBPP~series} is a series which is recognised by some \WBPP.



\begin{example}
    \label{ex:WBPP nonregular support}
    We show a \WBPP~series which is not a \WFA~series.
    In particular, its support is nonregular support since \WFA~supports include the regular languages.
    Consider the \WBPP~from~\cref{ex:from BPP to WBPP}.
    The language $L := \support {\sem S} \cap a^* b^*$
    is the set of words of the form $a^n b^n$, which is not regular,
    and thus $\support {\sem S}$ is not regular either.
    Moreover, $\sem S$ is not a \WFA~series:
    \begin{inparaenum}[1)]
        \item the set $M$ of words of the form $a^m b^n$ with $m \neq n$ is a \WFA~support,
        \item if a language and its complement are \WFA~supports,
        then they are regular by a result of Restivo and Reutenauer~\cite[Theorem 3.1]{RestivoReutenauer:JCSS:1984}, and
        \item since $M$ is not regular, it follows that its complement is not a \WFA~support,
        and thus $L = (\Sigma^* \setminus M) \cap a^* b^*$ is not a \WFA~support either.
    \end{inparaenum}
\end{example}

\subsection{Basic properties}

We present some basic properties of the semantics of \WBPP.
First of all, applying the derivative $\derive w$ to the semantics
corresponds to applying $\Delta_w$ to the configuration.
\begin{restatable}[Exchange]{lemma}{WBPPexchangeProperty}
    \label{lem:exchange}
    For every $\alpha \in \poly \Q N$ and $w \in \Sigma^*$,
    $\derive w \sem \alpha = \sem {\Delta_w \alpha}$.
\end{restatable}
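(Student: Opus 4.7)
The plan is to unfold both sides of the claimed identity pointwise and reduce the equality of series to an equality of configurations obtained by applying the transition function. By definition of the derivative, for every $v \in \Sigma^*$ we have $(\derive w \sem \alpha)_v = \sem \alpha_{wv} = F(\Delta_{wv}\alpha)$, whereas $\sem{\Delta_w \alpha}_v = F(\Delta_v(\Delta_w \alpha))$. Hence it suffices to establish the purely syntactic \emph{composition law}
\begin{align*}
    \Delta_{wv}\,\alpha \;=\; \Delta_v\bigl(\Delta_w \alpha\bigr),
    \qquad \forall w,v \in \Sigma^*,\ \alpha \in \poly \Q N,
\end{align*}
after which applying $F$ on both sides and quantifying over $v$ yields the lemma.

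First I would prove the composition law by induction on the length of $w$. The base case $w = \e$ is immediate from $\Delta_\e \alpha = \alpha$ in~\cref{eq:Delta ext}. For the inductive step, write $w = a \cdot w'$ with $a \in \Sigma$; then, using the recursive clause of~\cref{eq:Delta ext} twice and the inductive hypothesis applied to the shorter word $w'$,
\begin{align*}
    \Delta_{a \cdot w' \cdot v}\,\alpha
        \;=\; \Delta_{w' \cdot v}\bigl(\Delta_a \alpha\bigr)
        \;=\; \Delta_v\bigl(\Delta_{w'}(\Delta_a \alpha)\bigr)
        \;=\; \Delta_v\bigl(\Delta_{a \cdot w'} \alpha\bigr)
        \;=\; \Delta_v\bigl(\Delta_w \alpha\bigr),
\end{align*}
which closes the induction. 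Note that the induction works uniformly in $\alpha$, so no additional care is needed regarding the extension of $\Delta_a$ from nonterminals to arbitrary polynomial configurations via linearity and the Leibniz rule~\cref{eq:WBPP Leibniz}: these properties are only needed to define $\Delta_a$ on $\poly \Q N$, but once it is defined, the word-level extension in~\cref{eq:Delta ext} is a plain iterated composition.

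Combining the composition law with the pointwise unfolding gives, for every $v \in \Sigma^*$,
\begin{align*}
    (\derive w \sem \alpha)_v
        \;=\; F(\Delta_{wv} \alpha)
        \;=\; F\bigl(\Delta_v(\Delta_w \alpha)\bigr)
        \;=\; \sem{\Delta_w \alpha}_v,
\end{align*}
so the two series coincide coefficient-wise and hence as elements of $\ncpowerseries \Q \Sigma$. There is no serious obstacle here: the only subtle point is keeping the direction of composition consistent (the recursive clause $\Delta_{a\cdot w}\alpha = \Delta_w(\Delta_a \alpha)$ reads symbols left-to-right, matching the convention $(\derive u f)_w = f_{uw}$ for the left shift), which is exactly what makes the induction on the \emph{prefix} $w$ work.
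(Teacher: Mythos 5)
Your proof is correct and amounts to the same argument as the paper's: you reduce the claim to the monoid-homomorphism property of the extended transition function ($\Delta_{wv}\alpha = \Delta_v(\Delta_w\alpha)$), proved by induction on the length of $w$, and then push $F$ through. The paper packages the same content differently — it first checks the single-letter case pointwise and then inducts directly on the semantic identity $\derive w \sem\alpha = \sem{\Delta_w\alpha}$, implicitly using the analogous composition law $\derive{a\cdot w} = \derive w \derive a$ on the quotient side — but the underlying computation and the induction are identical, so this is essentially the same proof with a cleaner separation of the syntactic ingredient.
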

\noindent
As a consequence, the semantics is a homomorphism from configurations to series.
\begin{restatable}[Homomorphism]{lemma}{WBPPhomomorphismProperty}
    \label{lem:WBPP semantics is a homomorphism}
    The semantics function $\sem \_$ is a homomorphism
    from the polynomial to the shuffle series ring:
    \begin{align*}
        &\sem \_ : \tuple{\poly \Q N; +, \cdot} \to \tuple{\series \Q N; +, \shuffle} \\
        &\sem{c \cdot \alpha} = c \cdot \sem \alpha,\quad
            \sem{\alpha + \beta} = \sem \alpha + \sem \beta,\quad
                \sem{\alpha \cdot \beta} = \sem \alpha \shuffle \sem \beta.
    \end{align*}
\end{restatable}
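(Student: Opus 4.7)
The plan is to reduce everything to a coefficient-wise statement over words $w \in \Sigma^*$ and then proceed by induction on $|w|$, using the Leibniz rules on both sides of the equation. Before that, the additive and scalar clauses are essentially free: each one-letter transition $\Delta_a$ is linear by definition, so its iterate $\Delta_w$ is linear for every $w \in \Sigma^*$; combined with linearity of the final-weight extension $F$, this gives $\sem{c\alpha}_w = c \sem{\alpha}_w$ and $\sem{\alpha+\beta}_w = \sem{\alpha}_w + \sem{\beta}_w$ pointwise.

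For the multiplicative clause, the strategy is to show $\sem{\alpha \cdot \beta}_w = (\sem{\alpha} \shuffle \sem{\beta})_w$ by induction on $|w|$, quantified uniformly over all polynomial pairs $(\alpha,\beta) \in \poly \Q N \times \poly \Q N$. The base case $w = \e$ is exactly the homomorphism clause built into $F$: by definition $F(\alpha \cdot \beta) = F(\alpha) \cdot F(\beta)$, which matches $(\sem{\alpha} \shuffle \sem{\beta})_\e = \sem{\alpha}_\e \cdot \sem{\beta}_\e$. For the inductive step, write $w = a w'$ and compute $\sem{\alpha\cdot\beta}_{aw'} = \sem{\Delta_a(\alpha \cdot \beta)}_{w'}$. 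Using the Leibniz rule~\cref{eq:WBPP Leibniz} for the derivation $\Delta_a$ on the polynomial ring, together with the already-established additive clause, this rewrites as $\sem{\Delta_a\alpha \cdot \beta}_{w'} + \sem{\alpha \cdot \Delta_a\beta}_{w'}$. Now the induction hypothesis, applied to the shorter word $w'$ but to the new polynomial pairs $(\Delta_a\alpha, \beta)$ and $(\alpha, \Delta_a\beta)$, converts each summand into a shuffle, yielding $\bigl(\sem{\Delta_a\alpha} \shuffle \sem{\beta}\bigr)_{w'} + \bigl(\sem{\alpha} \shuffle \sem{\Delta_a\beta}\bigr)_{w'}$.

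The finishing move uses the \emph{Exchange} lemma to identify $\sem{\Delta_a\alpha} = \derive a \sem{\alpha}$ (and similarly for $\beta$), followed by the Leibniz rule~\cref{eq:derive shuffle} for the derivation $\derive a$ on the shuffle ring, which collapses the two summands into $\bigl(\derive a(\sem{\alpha} \shuffle \sem{\beta})\bigr)_{w'}$. By definition of the $a$-derivative of a series this equals $(\sem{\alpha} \shuffle \sem{\beta})_{aw'}$, closing the induction.

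I do not anticipate a genuine difficulty; the only point that needs care is the bookkeeping of the induction statement, which must be quantified over all polynomial pairs simultaneously so that the hypothesis can be instantiated at $(\Delta_a\alpha, \beta)$ and $(\alpha, \Delta_a\beta)$ (whose degrees may exceed those of $(\alpha,\beta)$). Since the induction proceeds only on $|w|$, this is harmless. Conceptually, the lemma expresses nothing more than the fact that the two Leibniz rules—one for $\Delta_a$ on products, one for $\derive a$ on shuffles—are made to match by the Exchange lemma, so the homomorphism property is essentially forced.
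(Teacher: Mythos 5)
Your proposal is correct and follows essentially the same route as the paper's proof: induction on word length, base case via the multiplicativity of $F$, inductive step via the Leibniz rule for $\Delta_a$ on polynomials, then the Exchange lemma, then the Leibniz rule for $\derive a$ on shuffles. Your explicit remark that the induction hypothesis must be quantified over all polynomial pairs (so it can be applied at $(\Delta_a\alpha,\beta)$ and $(\alpha,\Delta_a\beta)$) makes explicit a point the paper leaves implicit, but it is not a different argument.
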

%
%
%
%
%
\noindent
\Cref{lem:exchange,lem:WBPP semantics is a homomorphism} illustrate the interplay between the syntax and semantics of \WBPP,
and they can be applied to obtain some basic closure properties for the class of \WBPP~series.
\begin{restatable}[Closure properties]{lemma}{WBPPclosureProperties}
    \label{lem:WBPP basic closure properties}
    Let $f, g \in \series \Q \Sigma$ be \WBPP~series.
    The following series are also \WBPP:
    $c \cdot f$,
    $f + g$,
    $f \shuffle g$,
    $\derive a f$,
    the shuffle inverse of $f$ (when defined).
\end{restatable}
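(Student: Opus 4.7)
The plan is to exhibit, for each of the five operations, an explicit \WBPP~construction that uses the \emph{Exchange} and \emph{Homomorphism} lemmas as black boxes. Fix disjoint-nonterminal \WBPPs~$P_f = \tuple{\Sigma, N_f, S_f, F_f, \Delta_f}$ for $f$ and $P_g = \tuple{\Sigma, N_g, S_g, F_g, \Delta_g}$ for $g$. In every case we will take the disjoint union $N := N_f \uplus N_g \uplus \set S$ and define $S$'s transitions and final weight so that $\sem S$ matches the target series.

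For \emph{scalar multiplication}, \emph{sum}, and \emph{shuffle product} the constructions are immediate from~\cref{lem:WBPP semantics is a homomorphism}: the polynomial configurations $c \cdot S_f$, $S_f + S_g$, and $S_f \cdot S_g$ recognise $c \cdot f$, $f+g$, and $f \shuffle g$ respectively. To package these as genuine \WBPPs~with a single initial nonterminal, set $\Delta_a S$ to $c \cdot \Delta_a S_f$, $\Delta_a S_f + \Delta_a S_g$, or $\Delta_a S_f \cdot S_g + S_f \cdot \Delta_a S_g$ (applying~\cref{eq:WBPP Leibniz}), and $FS$ to $c \cdot FS_f$, $FS_f + FS_g$, or $FS_f \cdot FS_g$. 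The \emph{derivative} $\derive a f$ is handled analogously: by~\cref{lem:exchange}, the configuration $\Delta_a S_f$ already recognises $\derive a f$, and we can fold it into a fresh initial nonterminal by setting $\Delta_b S := \Delta_b(\Delta_a S_f)$ for each $b \in \Sigma$ and $FS := F(\Delta_a S_f)$.

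The main technical step is the \emph{shuffle inverse}. Assuming $f_\e = F S_f \neq 0$, the desired series $g = f^{-1}$ satisfies the differential identity $\derive a g = -\shufflepower g 2 \shuffle \derive a f$, obtained by differentiating $f \shuffle g = 1$ via~\cref{eq:derive shuffle}. This motivates introducing a single new nonterminal $G$ on top of $N_f$, with
\begin{align*}
    FG := f_\e^{-1}, \qquad \Delta_a G := -G^2 \cdot \Delta_a S_f \quad \text{for every } a \in \Sigma.
\end{align*}
Let $h := \sem G$ in the resulting \WBPP. Combining~\cref{lem:exchange,lem:WBPP semantics is a homomorphism} gives $h_\e = f_\e^{-1}$ and $\derive a h = -h \shuffle h \shuffle \derive a f$. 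To conclude $h = g$, set $k := h \shuffle f - 1$ and check that $k_\e = 0$ and $\derive a k = -h \shuffle \derive a f \shuffle k$ for every $a \in \Sigma$. An induction on $\length w$ then forces $k_w = 0$ everywhere: if $w = a w'$, then $k_w = (\derive a k)_{w'}$ expands via the shuffle formula into a sum of products $h_u \cdot (\derive a f)_v \cdot k_z$ with $\length z \leq \length {w'} < \length w$, so the inductive hypothesis kills every term.

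The only real obstacle is the self-referential form of $\Delta_a G$ in the shuffle-inverse case: the transition function mentions $G$ itself, so one has to argue that the semantics $\sem G$ is nonetheless well-defined (which it is, since $\sem{\cdot}$ is uniquely determined pointwise by the inductive formula $\sem \alpha_w = F(\Delta_w \alpha)$) and that the fixpoint equation it satisfies pins it down to the shuffle inverse, which is exactly what the $k$-induction above establishes.
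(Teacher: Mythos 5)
Your constructions for scalar multiplication, sum, shuffle product, and derivative are identical in spirit to the paper's: use the Exchange and Homomorphism lemmas, add a fresh initial nonterminal, and define its transitions by mimicking the corresponding algebraic identity. The paper verifies these by checking agreement on $\e$ plus matching one-step derivatives, which amounts to the same thing you do.

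The interesting divergence is the shuffle inverse, where you actually improve on what is written in the paper. First, you have the correct sign: differentiating $f \shuffle g = 1$ gives $\derive a g = -\shufflepower{g}{2} \shuffle \derive a f$, so the transition must be $\Delta_a G = -G^2 \cdot \Delta_a S_f$; the paper writes $\derive a g = \derive a f \shuffle \shufflepower{g}{2}$ and $\Delta_a U = \Delta_a S \cdot U^2$, which drops the minus sign (easily checked: with $f = 1+a$ one gets $g_a = -1$ while the paper's recurrence predicts $g_a = +1$). Second, you correctly observe that the shuffle-inverse case is not literally ``as in the previous cases'': the transition for $G$ mentions $G$ itself, so $\sem G$ is only \emph{implicitly} characterised by a differential equation, and one must argue that this equation has a unique solution, rather than simply reading off equality of derivatives. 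Your argument --- setting $k := h \shuffle f - 1$, showing $k_\e = 0$ and $\derive a k = -h \shuffle \derive a f \shuffle k$, and then killing $k$ by induction on word length via the shuffle expansion --- supplies exactly this missing uniqueness step. (An equivalent route would be to note that $\sem G_\e$ is fixed and that the recurrence $\sem G_{aw} = -(\derive a f \shuffle \sem G^2)_w$ determines $\sem G_{aw}$ from strictly shorter coefficients, hence uniqueness follows directly.) In short: correct proof, same architecture as the paper, but with the sign fixed and the fixpoint step spelled out rather than elided.
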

%
%
\noindent
\WBPP~series generalise the rational series
(i.e., recognised by finite weighted automata~\cite{BerstelReutenauer:CUP:2010}),
which in fact correspond to \WBPP~with a linear transition relation.

\begin{example}
    The shuffle of two \WBPP~series with context-free support
    can yield a \WBPP~series with non-context-free support.
    Consider the \WBPP~from~\cref{ex:from BPP to WBPP} over $\Sigma = \set {a, b}$.
    %
    Make a copy of this \WBPP~over a disjoint alphabet $\Gamma = \set {c, d}$ with nonterminals $\set{T, Y}$.
    Now consider the shuffle $f := \sem S \shuffle \sem T \in \series \Q {\Sigma \cup \Gamma}$.
    It is \WBPP~recognisable by \cref{lem:WBPP basic closure properties}.
    (For instance we can add a new initial nonterminal $U$ with rules
    %
    $\Delta_a U = X \cdot T$, $\Delta_c U = S \cdot Y$, and $\Delta_b U = \Delta_d U = 0$.)
    $\support f$ is not context free,
    since intersecting it with the regular language $a^* c^* b^* d^*$
    yields $\setof{a^m c^n b^m d^n}{m, n \in \N}$,
    which is not context-free by the pumping lemma for context-free languages~\cite[Theorem 7.18]{HopcroftMotwaniUllman:2000}
    (\cf~\cite[Problem 101]{AutomataBook:CUP:2023}).
\end{example}

\subsection{Differential algebra of shuffle-finite series}
\label{sec:shuffle-finite series}

Differential algebra allows us to provide an elegant characterisation of \WBPP~series.
An \emph{algebra} (over $\Q$)
is a vector space equipped with a bilinear product.
Shuffle series are a commutative algebra, called \emph{shuffle series algebra}.
A subset of $\series \Q \Sigma$ is a \emph{subalgebra}
if it contains $\Q$ and is closed under scalar product, addition, and shuffle product.
It is \emph{differential} if it is closed under derivations $\derive a$ ($a \in \Sigma$).
By \cref{lem:WBPP basic closure properties},
\WBPP~series are a differential subalgebra.
Let $\poly \Q {f^{(1)}, \dots, f^{(k)}} \subseteq \series \Q \Sigma$
be the smallest subalgebra containing $f^{(1)}, \dots, f^{(k)} \in \series \Q \Sigma$.
Algebras of this form are called \emph{finitely generated}.
%
A series 
is \emph{shuffle finite}
if it belongs to a finitely generated differential subalgebra of shuffle series.
\begin{restatable}{theorem}{WBPPequalsShuffleFinite}
    \label{thm:WBPP equals shuffle finite}
    A series is shuffle finite iff it is \WBPP.
\end{restatable}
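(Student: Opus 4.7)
The plan is to prove the two implications separately, leveraging the exchange and homomorphism lemmas already available. The forward direction (WBPP $\Rightarrow$ shuffle-finite) is nearly immediate; the reverse direction (shuffle-finite $\Rightarrow$ WBPP) requires an explicit construction together with a coefficient-level verification.

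For the forward direction, given a \WBPP~$P = \tuple{\Sigma, N, S, F, \Delta}$ with nonterminals $N = \set{X_1, \dots, X_k}$, I would set $f^{(i)} := \sem{X_i}$ and consider the subalgebra $A := \poly{\Q}{f^{(1)}, \dots, f^{(k)}} \subseteq \series{\Q}{\Sigma}$. By \cref{lem:WBPP semantics is a homomorphism}, $A$ coincides with the image $\sem{\poly{\Q}{N}}$, since $\sem{\cdot}$ sends polynomial products to shuffle products. For closure under each derivation $\derive{a}$, \cref{lem:exchange} gives $\derive{a} \sem{\alpha} = \sem{\Delta_a \alpha}$, and $\Delta_a \alpha$ remains in $\poly{\Q}{N}$ by definition of the transition function. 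Hence $A$ is a finitely generated differential subalgebra containing $f = \sem{S}$, which is the definition of shuffle finite.

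For the reverse direction, suppose $f$ is shuffle finite, witnessed by generators $f^{(1)}, \dots, f^{(k)}$ of a differential subalgebra containing $f$. Without loss of generality I assume $f = f^{(1)}$ (adding $f$ as an extra generator preserves all relevant properties). Closure under $\derive{a}$ provides polynomials $p_{a,i} \in \poly{\Q}{X_1, \dots, X_k}$ such that $\derive{a} f^{(i)} = p_{a,i}(f^{(1)}, \dots, f^{(k)})$, where polynomial multiplication is interpreted as shuffle on the right-hand side. I then define a \WBPP~with $N := \set{X_1, \dots, X_k}$, initial symbol $X_1$, transition $\Delta_a X_i := p_{a,i}$, and final weights $F(X_i) := (f^{(i)})_\e$.

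It remains to verify $\sem{X_i} = f^{(i)}$ for each $i$, from which $\sem{X_1} = f$ follows. Since a series $g$ is determined by its coefficients $g_w = (\derive{w} g)_\e$, I would show by induction on $\card{w}$ the stronger statement: for every $w \in \Sigma^*$, if $\Delta_w X_i = q_w(X_1, \dots, X_k)$ as a polynomial in $\poly{\Q}{N}$, then $\derive{w} f^{(i)} = q_w(f^{(1)}, \dots, f^{(k)})$ in the shuffle algebra. The base case $w = \e$ is trivial, and the inductive step follows because $\Delta_a$ acts as a derivation on $\poly{\Q}{N}$ (by \cref{eq:WBPP Leibniz}) while $\derive{a}$ acts as a derivation on the shuffle algebra (by \cref{eq:derive shuffle}), so both sides evolve in parallel. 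Combining this with the observation that the $\e$-coefficient of a shuffle product equals the product of $\e$-coefficients of its factors, and that $F$ is defined multiplicatively on products of nonterminals with $F(X_i) = (f^{(i)})_\e$, I conclude $(\sem{X_i})_w = F(\Delta_w X_i) = (\derive{w} f^{(i)})_\e = (f^{(i)})_w$.

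The main obstacle, while purely bookkeeping, is the parallel-induction step: one must keep $\Delta_w X_i$ and $\derive{w} f^{(i)}$ synchronised across the ring homomorphism that swaps polynomial product for shuffle product. Essentially this re-derives the content of \cref{lem:WBPP semantics is a homomorphism} for the newly-constructed \WBPP, but the verification is forced by how $\Delta_a$ and $F$ were chosen from the outset.
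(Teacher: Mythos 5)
Your proposal is correct and follows essentially the same approach as the paper in both directions. For \WBPP~$\Rightarrow$ shuffle-finite you identify the finitely generated differential subalgebra as the image of $\sem{\_}$, closed via \cref{lem:exchange,lem:WBPP semantics is a homomorphism} --- this is exactly the paper's argument. For the converse you first pass to the same explicit format (a system of shuffle-finite equations, the content of \cref{lem:shuffle-finite working characterisation}) and build the identical \WBPP. The one point of divergence is the verification that $\sem{X_i} = f^{(i)}$: the paper notes that the system of shuffle-finite equations \cref{eq:shuffle-finite equations} has a \emph{unique} solution once the initial condition $\tuplesmall{f^{(1)}_\e, \dots, f^{(k)}_\e}$ is fixed and then checks that $\sem{X_i}$ is also a solution, whereas you instead carry out the coefficient-by-coefficient induction on $\length w$ that keeps $\Delta_w X_i$ and $\derive w f^{(i)}$ synchronised under the evaluation homomorphism $X_j \mapsto f^{(j)}$. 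These are two phrasings of the same underlying fact --- your explicit induction is precisely what a proof of the uniqueness claim would look like --- so the choice is stylistic; the uniqueness framing is tidier, while yours makes the bookkeeping visible (as you observe, it re-derives \cref{lem:WBPP semantics is a homomorphism} for the freshly-built \WBPP, which is indeed how the paper closes the argument too, by referring back to \cref{eq:if direction}).
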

\noindent
The characterisation above provides an insight into the algebraic structure of \WBPP~series.
%
%
Other classes of series can be characterised in a similar style.
For instance, a series is accepted by a \WFA\-
iff it belongs to a finitely generated differential vector space over $\Q$
\cite[Proposition 5.1]{BerstelReutenauer:CUP:2010};
by a weighted context-free grammar iff
it belongs to a $\derive a$-closed, finitely generated subalgebra of the algebra of series with (noncommutative) \emph{Cauchy product}
($(f \cauchy g)_w := \sum_{w = u \cdot v} f_u \cdot f_v$);
and by a polynomial automaton~\cite{BenediktDuffSharadWorrell:PolyAut:2017} iff its reversal
($f^R_{a_1\dots a_n} := f_{a_n \cdots a_1}$)
belongs to a $\derive a$-closed, finitely generated subalgebra of the algebra of series with \emph{Hadamard product}
($(f \hadamard g)_w := f_w \cdot g_w$).
Considering other products yields novel classes of series, too.
For instance, the \emph{infiltration product}~\cite{BasoldHansenPinRutten:MSCS:2017}
yields the class of series that belong to a $\derive a$-closed, finitely generated subalgebra of the algebra of series with infiltration product.




\subsection{Equivalence and zeroness problems}

The \emph{\WBPP~equivalence problem} takes in input two \WBPP~$P, Q$
and amounts to determine whether $\sem P = \sem Q$.
In the special case where $\sem Q = 0$, we have an instance of the \emph{zeroness problem}.
Since \WBPP~series form an effective vector space, equivalence reduces to zeroness,
and thus we concentrate on the latter.

\subparagraph{Evaluation and word-zeroness problems.}

We first discuss a simpler problem,
which will be a building block in our zeroness algorithm.
The \emph{evaluation problem} takes in input a \WBPP~with initial configuration $\alpha$~and a word $w \in \Sigma^*$,
and it amounts to compute $\sem \alpha _w$.
The \emph{word-zeroness} problem takes the same input,
and it amounts to decide whether $\sem \alpha_w = 0$.
\begin{theorem}
    \label{thm:WBPP word evaluation and zeroness problems}
    The evaluation and word-zeroness problems for \WBPP~are in \PSPACE.
\end{theorem}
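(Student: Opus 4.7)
The plan is to compute $\sem \alpha_w = F(\Delta_w \alpha)$ recursively, while avoiding explicit construction of the polynomial $\Delta_w \alpha$, which may have exponentially many monomials in $|w|$. By linearity it suffices to treat one monomial of $\alpha$ at a time, and two identities proved earlier drive the recursion.

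First, by the homomorphism property (\cref{lem:WBPP semantics is a homomorphism}), for a monomial $\mu = X_{i_1} \cdots X_{i_m}$ we have $\sem \mu = \sem{X_{i_1}} \shuffle \cdots \shuffle \sem{X_{i_m}}$, whose coefficient at $w$ unfolds as
\[
  \sem \mu_w \;=\; \sum_{f \,:\, \{1, \dots, |w|\} \to \{1, \dots, m\}}\ \prod_{j=1}^m \sem{X_{i_j}}_{\restrictsmall{w}{f^{-1}(j)}},
\]
where $\restrictsmall{w}{S}$ denotes the subsequence of $w$ at the positions in $S$. Second, by the exchange property (\cref{lem:exchange}), for a nonempty word $w = aw'$ we have $\sem{X_i}_{aw'} = \sem{\Delta_a X_i}_{w'}$, which reduces a variable-evaluation call to a sum of monomial-evaluations on a strictly shorter word. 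The base case $w = \varepsilon$ returns $F(\mu)$, read off directly from the output function.

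Together these give two mutually recursive procedures $\mathtt{EvalMono}$ and $\mathtt{EvalVar}$ that I would implement by depth-first search, enumerating both the shuffle-partition $f$ and the monomials of $\Delta_a X_i$ one at a time and accumulating the running sum in place. Each stack frame stores only a pointer into the WBPP, a bitmask selecting which positions of $w$ are still in scope, and the partition $f$ currently being enumerated, for a total of $O(|w| \log D)$ bits with $D$ the maximum degree appearing in any $\Delta_a X_i$. Because every $\mathtt{EvalVar} \to \mathtt{EvalMono}$ transition consumes one letter of the current word, each root-to-leaf branch has depth $O(|w|)$, yielding polynomial total space.

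It remains to check that the arithmetic itself fits in polynomial space. A straightforward induction on prefix length bounds the height and degree of $\Delta_{w'}\alpha$ by $|w|^{O(|w|)}$, so $\sem \alpha_w$ and all DFS-accumulated partial sums are rationals of $O(|w| \log |w|)$ bits. Word-zeroness then reduces to comparing the computed value against $0$. The main obstacle I anticipate is the combined path explosion from the shuffle partitions and the monomial expansion of each $\Delta_a X_i$; this is precisely what the in-place DFS accumulation neutralises, by never keeping more than one term of each sum in memory at a time.
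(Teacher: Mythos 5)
Your argument is correct but takes a genuinely different route from the paper's. The paper proves this theorem by (i) constructing a succinctly-described algebraic circuit of size $4^{|w|}\cdot n$ that computes $\Delta_w\alpha$ (\cref{lem:multi-step circuit}), (ii) observing that $\Delta_w\alpha$ has only \emph{linear} degree $\bigO{|w|\cdot D}$ (\cref{lem:small degree property}), and (iii) invoking, as a black box, the theorem that circuits computing polynomials of small degree can be evaluated in \NC~\cite[Theorem 2.4.5]{Mittmann:PhD:2013}; the last step is what turns an exponential-size object into a polynomial-space computation. You instead avoid the circuit-evaluation machinery entirely: you unroll $\sem\alpha_w$ directly through the shuffle-partition formula from \cref{lem:WBPP semantics is a homomorphism} and the exchange rule of \cref{lem:exchange}, and control space by a depth-first accumulation in which every $\mathtt{EvalVar}\to\mathtt{EvalMono}$ step strictly shortens the word, giving recursion depth $\bigO{|w|}$ with each frame of polynomial size. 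That is a more elementary and self-contained route, at the cost of doing the bookkeeping yourself that the NC theorem would otherwise hide. The one place where you should be slightly more careful is the arithmetic: since the WBPP has rational coefficients, the factors $\sem{X_{i_j}}_{\restrictsmall w S}$ are rationals whose denominators grow like $q^{|w|}$ for $q$ a common denominator of $\Delta$ and $F$; this is still polynomially many bits, and combined with the height bound (cf.~\cref{lem:iterated derivation bounds}) it justifies your ``$\bigO{|w|\log|w|}$ bits'' figure (up to a $\log H$ factor). Also note a small slip: the degree of $\Delta_w\alpha$ grows \emph{linearly} in $|w|$, not like $|w|^{\bigO{|w|}}$ — that tighter bound is what the paper needs, though your argument does not actually rely on the degree bound.
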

The proof follows from the following three ingredients:
The construction of an algebraic circuit of exponential size computing the polynomial $\Delta_w \alpha$ (\cref{lem:multi-step circuit}),
the fact that this polynomial has polynomial degree (\cref{lem:small degree property}),
and the fact that circuits computing multivariate polynomials of polynomial degree can be evaluated in \NC~\cite[Theorem 2.4.5]{Mittmann:PhD:2013}.

\begin{restatable}{lemma}{WBPPmultistepCircuit}
    \label{lem:multi-step circuit}
    Fix a word $w \in \Sigma$ and an initial configuration $\alpha \in \poly \Q N$ of a \WBPP,
    where $\alpha, \Delta_a X_i \in \poly \Q N$ are the outputs of an algebraic circuit of size $n$.
    We can construct an algebraic circuit computing $\Delta_w \alpha$ of size $\leq 4^{\length w} \cdot n$.
    The construction can be done in space polynomial in $\length w$ and logarithmic in $n$.
\end{restatable}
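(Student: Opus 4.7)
The plan is to prove the lemma by induction on $\length w$, symbolically differentiating the input algebraic circuit once per letter. For $w = \e$, the given circuit already computes $\Delta_\e \alpha = \alpha$ and has size $n = 4^0 \cdot n$. For the inductive step $w = a \cdot w'$, I would first build a circuit $C_1$ of size at most $4n$ computing $\Delta_a \alpha$ while still exposing the outputs $\Delta_b X_i$ inherited from the input (so that further differentiations remain possible), and then apply the induction hypothesis to $w'$ with $C_1$ as the new input, obtaining a circuit for $\Delta_{w'}(\Delta_a \alpha) = \Delta_{a \cdot w'} \alpha$ of size at most $4^{\length{w'}} \cdot 4n = 4^{\length w} \cdot n$ by \cref{eq:Delta ext}. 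The one-letter step is the symbolic differentiation of the circuit under the derivation rules: a leaf $X_i$ is simply wired to the preexisting subcircuit for $\Delta_a X_i$ (no new gates); a constant yields $0$; an addition gate $g_1 + g_2$ contributes a single new gate for $\Delta_a g_1 + \Delta_a g_2$; and by the Leibniz rule in \cref{eq:WBPP Leibniz} a multiplication gate $g_1 \cdot g_2$ contributes three new gates realising $\Delta_a g_1 \cdot g_2 + g_1 \cdot \Delta_a g_2$. Since each original gate spawns at most three new gates, the overall blowup per letter is at most a factor of four, and correctness is a direct induction on circuit depth.

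For the space bound, I would introduce a canonical naming scheme identifying each gate of the final circuit with a pair consisting of a gate of the input circuit and a ``differentiation trace'' of length at most $\length w$ describing the pattern of Leibniz expansions that produced it. Such a trace uses $O(\length w)$ bits and the pointer into the input circuit uses $O(\log n)$ bits. The operation performed at each gate and its argument gates can be determined locally from this name using the symbolic rules above, so the expanded circuit can be emitted in a streaming fashion using space polynomial in $\length w$ and logarithmic in $n$, without ever materialising any intermediate circuit in full.

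The main obstacle is precisely the design of this naming scheme. The size bound is a routine induction, but to obtain the space guarantee one must encode the tree of symbolic expansions so that, given a gate's name, the names of its predecessors can be computed directly and unambiguously from the input circuit alone. Once such a local encoding is fixed, the streaming construction follows immediately, and the only remaining verification is that the composition of $\length w$ one-letter differentiation passes indeed realises the operator $\Delta_w$ on the output gate---which, given \cref{eq:Delta ext} and the per-step correctness, is immediate.
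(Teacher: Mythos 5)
Your proposal matches the paper's proof in essence: the paper also proceeds by a one‑letter symbolic differentiation pass (\cref{lem:one-step circuit}), handling constant gates ($\mapsto 0$), variable gates (wired to the preexisting gate for $\Delta_a X_i$, zero new gates), addition gates (one new gate), and multiplication gates via Leibniz (three new gates), yielding exactly the factor‑$4$ blowup, and then iterates $\length w$ times to get $4^{\length w} \cdot n$. For the space bound, the paper is terse, stating only that step $i+1$ ``can be done in space logarithmic in the size $\leq 4^i \cdot n$ of the circuit obtained in the previous step, and thus altogether polynomial in $\length w$ and logarithmic in $n$.'' Your explicit naming scheme—identifying each final gate by a pointer into the input circuit plus an $O(\length w)$‑bit differentiation trace, so that predecessors can be recomputed locally without materialising intermediate circuits—is really just the careful way to justify that sentence, and is arguably clearer than the paper's appeal to composing log‑space passes. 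Both approaches give the same argument; you have simply made explicit the bookkeeping the paper leaves implicit.
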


\begin{restatable}{lemma}{SmallDegreeProperty}
    \label{lem:small degree property}
    Let $D \in \N$ be the maximum of the degree of the transition relation $\Delta$
    and the initial configuration $\alpha$.
    The configuration $\Delta_w \alpha \in \poly \Q N$
    reached by reading a word $w \in \Sigma^n$ of length $n$
    has total degree $\bigO {n \cdot D}$.
\end{restatable}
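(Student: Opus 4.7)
The plan is to prove, by induction on $n = \length w$, the slightly more flexible bound that for any polynomial $\beta \in \poly \Q N$,
\[
\deg(\Delta_w \beta) \;\leq\; \deg(\beta) + n(D-1).
\]
Applied to $\beta = \alpha$ with $\deg(\alpha) \leq D$, this yields $\deg(\Delta_w \alpha) \leq D + n(D-1) = \bigO{n \cdot D}$ as required.

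The heart of the argument is a single-step degree bound: for every polynomial $p \in \poly \Q N$ of degree $d \geq 1$ and every $a \in \Sigma$, $\deg(\Delta_a p) \leq d + D - 1$. To establish this, I would invoke linearity of $\Delta_a$, reducing to the case of a single monomial $m = X_{i_1} \cdots X_{i_k}$ of degree $k$. Iterating the Leibniz rule \cref{eq:WBPP Leibniz} $k-1$ times gives
\[
\Delta_a(X_{i_1} \cdots X_{i_k}) \;=\; \sum_{j=1}^{k} X_{i_1} \cdots X_{i_{j-1}} \cdot \Delta_a(X_{i_j}) \cdot X_{i_{j+1}} \cdots X_{i_k}.
\]
Since each $\Delta_a(X_{i_j})$ has total degree at most $D$ by hypothesis on $\Delta$, each summand has total degree at most $(k-1) + D$, whence $\deg(\Delta_a m) \leq k + D - 1$. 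The degree-$0$ case is trivial, as $\Delta_a$ annihilates constants (from Leibniz applied to $1 = 1 \cdot 1$, we get $\Delta_a(1) = 2 \Delta_a(1)$, hence $\Delta_a(1) = 0$).

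The inductive step is then immediate: for $w = a \cdot w'$ with $\length{w'} = n - 1$, \cref{eq:Delta ext} gives $\Delta_w \beta = \Delta_{w'}(\Delta_a \beta)$, and applying the inductive hypothesis to $\Delta_a \beta$ in tandem with the single-step bound yields
\[
\deg(\Delta_w \beta) \;\leq\; \deg(\Delta_a \beta) + (n-1)(D-1) \;\leq\; \bigl(\deg(\beta) + D - 1\bigr) + (n-1)(D-1) \;=\; \deg(\beta) + n(D-1).
\]

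There is no substantial obstacle here. The only conceptual point worth underlining is that the derivation property of $\Delta_a$ forces degrees to grow \emph{additively} (by at most $D-1$ per step) rather than multiplicatively, which is precisely what keeps $\deg(\Delta_w \alpha)$ polynomial in $n$ instead of exponential. This polynomial degree bound is the missing ingredient allowing the exponential-size circuit of \cref{lem:multi-step circuit} to be evaluated in \NC, and thereby delivers the \PSPACE{} upper bound announced in \cref{thm:WBPP word evaluation and zeroness problems}.
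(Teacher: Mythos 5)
Your proof is correct and takes essentially the same approach as the paper's. The paper proves the more general statement that iterating any polynomial-ring derivation $L$ of degree $\leq D$ satisfies $\deg(L^n\alpha) \leq \max(\deg\alpha + n(D-1), 0)$ (appendix Lemma on iterated derivations), and then notes that $\Delta_w$ is a composition of $n$ such derivations; the core single-step Leibniz argument you give — a derivation increases total degree by at most $D-1$ — is the same engine, with the paper using structural induction on the polynomial $\alpha$ where you expand a single monomial. The only minor polish the paper adds is the $\max(\cdot, 0)$ to cover the degenerate case of constant configurations (which you implicitly handle by restricting your single-step bound to $\deg p \geq 1$ and noting $\Delta_a$ annihilates constants).
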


\subparagraph{Decidability of the zeroness problem.}

Fix a \WBPP~and a configuration $\alpha \in \poly \Q N$.
Suppose we want to decide whether $\sem \alpha$ is zero.
An algorithm for this problem follows from first principles.
Recall that an \emph{ideal} $I \subseteq \poly \Q N$ is a subset
closed under addition, and multiplication by arbitrary polynomials~\cite[§4, Ch.~1]{CoxLittleOShea:Ideals:2015}.
Let $\ideal S$ be the smallest ideal including $S \subseteq \poly \Q N$.
Intuitively, this is the set of ``logical consequences'' of the vanishing of polynomials in $S$.
Build a chain of polynomial ideals
\begin{align}
    \label{eq:WBPP ideal chain}
    I_0 \subseteq I_1 \subseteq \cdots \subseteq \poly \Q N,
    \ \text{with } I_n := \idealof {\Delta_w \alpha} {w \in \Sigma^{\leq n}}, n \in \N.
\end{align}
Intuitively, $I_n$ is the set of polynomials that vanish
as a consequence of the vanishing of $\Delta_w \alpha$ for all words $w$ of length $\leq n$.
The chain above has some important structural properties,
essentially relying on the fact that the $\Delta_a$'s are derivations of the polynomial ring.
\begin{restatable}{lemma}{LemmaDeltaOfIdeal}
    \label{lem:Delta of ideal}
    %
    \begin{inparaenum}[1.]
        \item $\Delta_a I_n \subseteq I_{n+1}$.
        \item $I_{n+1} = I_n + \ideal{\bigcup_{a \in \Sigma} \Delta_a I_n}$.
        \item $I_n = I_{n+1}$ implies $I_n = I_{n+1} = I_{n+2} = \cdots$.
    \end{inparaenum}
\end{restatable}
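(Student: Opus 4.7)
The plan is to prove the three claims in order, each exploiting that every $\Delta_a$ is a derivation of $\poly \Q N$ (\cref{eq:WBPP Leibniz}) together with the definition $\Delta_{a \cdot w} \alpha = \Delta_w(\Delta_a \alpha)$ from \cref{eq:Delta ext}. Unfolding the latter by induction on $\length w$ shows that for $w = w_1 \cdots w_m$ one has $\Delta_w = \Delta_{w_m} \circ \cdots \circ \Delta_{w_1}$, so prepending $\Delta_a$ on the left of $\Delta_w$ amounts to appending the letter $a$ on the right of $w$; in particular $\Delta_a \Delta_w \alpha = \Delta_{wa} \alpha$, an identity used throughout.

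For the first statement, I would write a generic $f \in I_n$ as a finite sum $f = \sum_i p_i \cdot \Delta_{w_i} \alpha$ with $p_i \in \poly \Q N$ and $\length{w_i} \leq n$, and apply $\Delta_a$. By \cref{eq:WBPP Leibniz} this yields $\Delta_a f = \sum_i \Delta_a(p_i) \cdot \Delta_{w_i} \alpha + \sum_i p_i \cdot \Delta_{w_i a} \alpha$, and both summands lie in $I_{n+1}$: the first because $\Delta_{w_i} \alpha$ is a generator of $I_n \subseteq I_{n+1}$, the second because $\Delta_{w_i a} \alpha$ is a generator of $I_{n+1}$ (as $\length{w_i a} \leq n+1$). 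For the second statement, the inclusion $\supseteq$ is immediate from $I_n \subseteq I_{n+1}$ and from the first statement; for $\subseteq$ it suffices to treat the generators of $I_{n+1}$: a generator $\Delta_w \alpha$ of length $\leq n$ is already in $I_n$, while one of length $n+1$ factors as $w = va$ and satisfies $\Delta_w \alpha = \Delta_a \Delta_v \alpha \in \Delta_a I_n$.

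For the third statement I would argue by induction on $k \geq 0$ that $I_{n+k} = I_n$ under the assumption $I_n = I_{n+1}$; the cases $k \in \set{0, 1}$ are immediate. For the step, the second statement gives $I_{n+k+1} = I_{n+k} + \ideal{\bigcup_{a \in \Sigma} \Delta_a I_{n+k}}$; by induction $I_{n+k} = I_n$, and by the first statement together with the assumption one has $\Delta_a I_n \subseteq I_{n+1} = I_n$, so the second summand is contained in $I_n = I_{n+k}$ and equality follows. The whole argument is essentially bookkeeping; the only delicate point is the convention by which $\Delta$ extends to words, which determines whether $\Delta_a \Delta_w$ corresponds to $wa$ or $aw$, and getting this direction right is what makes the first statement fall into place.
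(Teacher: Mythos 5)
Your proposal is correct and takes essentially the same route as the paper: point (1) by applying the derivation to a generic combination of generators, point (2) by splitting the generators of $I_{n+1}$ by word length and rewriting $\Delta_{va}\alpha = \Delta_a\Delta_v\alpha$, and point (3) by a short induction using (1) and (2). You are somewhat more explicit about the word-extension convention $\Delta_a\Delta_w = \Delta_{wa}$ and about spelling out the induction for (3), which the paper dispatches as ``immediate from the second point,'' but the underlying argument is the same.
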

\noindent
By Hilbert's finite basis theorem~\cite[Theorem 4, §5, Ch.~2]{CoxLittleOShea:Ideals:2015},
there is $M \in \N$ \st~$I_M = I_{M+1} = \cdots$.
By~\cref{lem:Delta of ideal}~(3) and decidability of ideal inclusion~\cite{Mayr:STACS:1989},
$M$ can be computed.
This suffices to decide \WBPP~zeroness.
Indeed, let $\Delta_{w_1} \alpha, \dots, \Delta_{w_m} \alpha$ be the generators of $I_M$.
For every input word $w \in \Sigma^*$
there are $\beta_1, \dots, \beta_m \in \poly \Q N$ \st\-
$\Delta_w \alpha = \beta_1 \cdot \Delta_{w_1} \alpha + \cdots \beta_m \cdot \Delta_{w_m} \alpha$.
By applying the output function $F$ on both sides, we have
$\sem \alpha_w = F (\Delta_w \alpha) =  F \beta_1 \cdot \sem \alpha_{w_1} + \cdots + F \beta_m \cdot \sem \alpha_{w_m}$.
It follows that if $\sem \alpha_w = 0$ for all words of length $\leq M$, then $\sem \alpha = 0$.
One can thus enumerate all words $w$ of length $\leq M$
and check $\sem \alpha_w = 0$ with~\cref{thm:WBPP word evaluation and zeroness problems}.
%
%
So far we only know that $M$ is computable.
In the next section we show that in fact $M$ is an elementary function of the input \WBPP.

\subparagraph{Elementary upper bound for the zeroness problem.}

We present an elementary upper bound on the length of the chain of polynomial ideals~\cref{eq:WBPP ideal chain}.
This is obtained by generalising the case of a single derivation from Novikov and Yakovenko~\cite[Theorem 4]{NovikovYakovenko:1999}
to the situation of several, not necessarily commuting derivations $\Delta_a$, $a \in \Sigma$.
The two main ingredients in the proof of~\cite[Theorem 4]{NovikovYakovenko:1999} are
1) a structural property of the chain~\cref{eq:WBPP ideal chain} called \emph{convexity}, and
2) a degree bound on the generators of the $n$-th ideal $I_n$
(which we have already established in \cref{lem:small degree property}).
For two sets $I, J \subseteq \poly \Q N$ consider the \emph{colon set}
$I : J := \setof {f \in \poly \Q N} {\forall g \in J, f \cdot g \in I}$
\cite[Def.~5, §4, Ch.~4]{CoxLittleOShea:Ideals:2015}.
%
If $I, J$ are ideals of $\poly \Q N$ then $I : J$ is also an ideal.
%
%
%
An ideal chain $I_0 \subseteq I_1 \subseteq \cdots$ is \emph{convex}
if the colon ideals $I_n : I_{n+1}$ form themselves a chain $I_0 : I_1 \subseteq I_1 : I_2 \subseteq \cdots$.
Chain of ideals obtained by iterated application of a single derivation are convex by~\cite[Lemma 7]{NovikovYakovenko:1999}.
We extend this observation to a finite set of derivations.
\begin{restatable}[\protect{generalisation of~\cite[Lemma 7]{NovikovYakovenko:1999}}]{lemma}{convexIdealChain}
    \label{lem:WBPP ideal chain is convex}
    The ideal chain~\cref{eq:WBPP ideal chain} is convex.
\end{restatable}
\begin{proof}
    We extend the argument from~\cite{NovikovYakovenko:1999} to the case of many derivations.
    Assume $f \in I_{n-1} : I_n$ and let $h \in I_{n+1}$ be arbitrary.
    We have to show $f \cdot h \in I_n$.
    \begin{claim*}
        $f \cdot \Delta_a g \in I_n$,
        for all $a \in \Sigma$ and $g \in I_n$.
    \end{claim*}
    \begin{proof}[Proof of the claim.]
        Since $\Delta_a$ is a derivation~\cref{eq:Delta ext},
        $\Delta_a (f \cdot g) = \Delta_a f \cdot g + f \cdot \Delta_a g$,
        and by solving for $f \cdot \Delta_a g$ we can write
        %
            $f \cdot \Delta_a g = \underbrace {\Delta_a (\overbrace{f \cdot g}^{\text{(a) } I_{n-1}})}_{\text{(b) }I_n} - \underbrace{\Delta_a f \cdot g}_{\text{(c) } I_n}$.
        %
        Condition (a) follows from the definition of colon ideal,
        (b) from point (1) of~\cref{lem:Delta of ideal},
        and (c) from $I_n$ being an ideal.
    \end{proof}
    Since $h \in I_{n+1}$, by point (2) of~\cref{lem:Delta of ideal},
    we can write $h = h_0 + h_1$ with $h_0 \in I_n$
    and $h_1 \in \ideal {\bigcup_{a \in \Sigma} \Delta_a I_n}$.
    In particular, $h_1 = \sum_i p_i \cdot \Delta_{a_i} g_i$ with $g_i \in I_n$,
    By the claim, $f \cdot h_1 = \sum_i p_i \cdot f \cdot \Delta_{a_i} g_i \in I_n$.
    Consequently, $f \cdot h = f \cdot h_0 + f \cdot h_1 \in I_n$ as well.
\end{proof}
\noindent
Thanks to~\cref{lem:WBPP ideal chain is convex}
we can generalise the whole proof of~\cite[Theorem 4]{NovikovYakovenko:1999},
eventually arriving at the following elementary bound.
The \emph{order} of a \WBPP~is the number of nonterminals
and its \emph{degree} is the maximal degree of the polynomials $\Delta_a X$ ($a \in \Sigma, X \in N$).
\begin{restatable}{theorem}{NovikovYakovenko}
    \label{thm:chain length bound}
    %
    Consider a \WBPP~of order $\leq k$ and degree $\leq D$.
    The length of the ideal chain~\cref{eq:WBPP ideal chain} is at most $D^{k^{\bigO{k^2}}}$.
\end{restatable}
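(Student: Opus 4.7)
The plan is to lift the argument from the proof of Theorem~4 in~\cite{NovikovYakovenko:1999} to our setting of finitely many, not necessarily commuting, derivations $\Delta_a$ for $a\in\Sigma$. Inspection of their proof reveals that it relies on only two properties of the chain~\cref{eq:WBPP ideal chain}: convexity, and a linear degree bound on the ideal generators at each step. Both are already at our disposal: convexity has just been established in \cref{lem:WBPP ideal chain is convex}, while~\cref{lem:small degree property} gives that $I_n$ is generated by polynomials $\Delta_w\alpha$ of total degree $\bigO{n\cdot D}$, with at most $\card\Sigma^n$ generators.

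Concretely, I would mirror the NY scheme as follows. First, set up a dimension-theoretic induction on the Krull dimension of $\poly\Q N$, which equals the order $k$ of the \WBPP. In the base case ($k=0$) the ring is a field and the chain stabilises after at most one step. For the inductive step, combine convexity with primary decomposition of the radicals $\sqrt{I_n}$: whenever $I_n\subsetneq I_{n+1}$, convexity forces a strict containment of the colon ideals $I_{n-1}:I_n\subsetneq I_n:I_{n+1}$, which in turn localises a prime component of $\sqrt{I_n}$ whose vanishing locus strictly drops in Krull dimension when intersected with $V(I_{n+1})$. An effective Bezout-style estimate, applied to generators of degree $\bigO{nD}$, bounds the number of such dimension drops possible at a fixed level.

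Telescoping these estimates through the dimension induction---each of the $k$ levels contributing a factor polynomial in the current degree bound with exponent $\bigO k$, while the degree bound itself grows linearly with the chain position $n$---and solving the resulting recursion for $n$ yields a bound of the shape $D^{k^{\bigO{k^2}}}$, matching exactly the quantity obtained in the single-derivation analysis of NY.

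The main obstacle is bookkeeping rather than conceptual: one must verify that non-commutativity of the derivations $\Delta_a$ enters NY's argument \emph{only} through the convexity lemma (which we now have), and that every remaining step---primary decomposition, colon-ideal manipulations, Bezout-type intersection estimates---is purely ideal-theoretic and transfers verbatim. Secondarily, one must carefully propagate the linear degree growth $\bigO{nD}$ of generators, rather than a fixed $D$, through the Bezout-type inequalities in order to arrive at precisely the tower $D^{k^{\bigO{k^2}}}$ rather than a weaker non-elementary bound of the Seidenberg type~\cite{Seidenberg:TAMS:1974}.
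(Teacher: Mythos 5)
Your high-level plan---lift NY's Theorem~4 to several derivations, using only convexity and the linear degree growth of the generators $\Delta_w\alpha$, then solve a recursion over dimension levels---is indeed the strategy the paper follows, and you are right that non-commutativity of the $\Delta_a$'s enters only through the convexity lemma (everything else is purely ideal-theoretic). But there is a concrete error in the mechanism you propose for the inductive step. You assert that $I_n\subsetneq I_{n+1}$ together with convexity forces a \emph{strict} inclusion $I_{n-1}:I_n\subsetneq I_n:I_{n+1}$ and hence a drop in Krull dimension. Neither implication holds: convexity only gives non-strict nesting of the colon ideals, and a strictly increasing ideal chain can perfectly well have a stationary colon chain (e.g.\ $I_0=(x^2)\subsetneq I_1=(x^2,xy)\subsetneq I_2=(x^2,xy,y^2)$ has $I_0:I_1=I_1:I_2=(x)$). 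If your claim were true, the chain would terminate after at most $k$ strict steps and the whole analysis would collapse to something trivial; the difficulty NY (and the paper) actually address is precisely the long stretches where the colon dimension stays put.

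What is missing is the segment decomposition and the accompanying counting argument. The paper divides the chain into at most $k$ maximal segments on which $\dim(I_n:I_{n+1})$ is constant, and then proves (\cref{claim:NY:Lemma 8+9}, generalising NY Lemmas 8 and 9) that the length of each such segment is bounded by the number of primary components of a primary decomposition of the segment's starting ideal, counted with multiplicities. Establishing this requires the leading-term ideal $\leadingterm{I_n}$ (the intersection of the maximal-dimensional primary components): one shows the leading terms form a \emph{strictly} ascending chain within a segment, that all associated primes of these leading terms come from a single fixed finite set $\mathcal P$, and then invokes NY's Lemma 6 to bound the length by $\card{\mathcal P}$ with multiplicities. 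Only then does the effective bound on the number of primary components (\cref{lem:NY variant:Corollary 4}), applied with the degree bound $\bigO{n_i D}$ at the start of segment $i$, feed into the recursion $\ell_i \le (D\cdot n_i)^{k^{\bigO k}}$, whose telescoping over at most $k$ segments yields $D^{k^{\bigO{k^2}}}$. Your proposal names the right ingredients and the right final recursion, but skips the leading-term chain and the ``constant colon dimension'' segment structure, and replaces them with a step that is actually false.
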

\noindent
%
The elementary bound above may be of independent interest.
Already in the case of a single derivation, it is not known whether the bound from~\cite{NovikovYakovenko:1999} is tight,
albeit it is expected not to be so.
%
We provide a proof sketch of~\cref{thm:chain length bound}
in order to illustrate the main notions from algebraic geometry which are required.
The full proof is presented in~\cref{sec:ideal chain complexity}.
\begin{proof}[Proof sketch.]
    We recall some basic facts from algebraic geometry.
    %
    %
    The \emph{radical} $\rad I$ of an ideal $I$
    is the set of elements $r$ \st~$r^m \in I$ for some $m \in \N$;
    note that $\rad I$ is itself an ideal.
    An ideal $I$ is \emph{primary} if $p \cdot q \in I$ and $p \not\in I$ implies $q \in \rad I$.
    %
    %
    A \emph{primary decomposition} of an ideal $I$
    is a collection of primary ideals $\set{Q_1, \dots, Q_s}$,
    called \emph{primary components},
    \st~$I = Q_1 \cap \cdots \cap Q_s$.
    The \emph{dimension} $\dim I$ of a polynomial ideal $I \subseteq \poly \Q k$
    is the dimension of its associated variety
    $V(I) = \setof {x \in \C^k} {\forall p \in I. p(x) = 0}$.
    Since the operation of taking the variety of an ideal is inclusion-reversing,
    ideal inclusion is dimension-reversing:
    $I \subseteq J$ implies $\dim I \geq \dim J$.
    Consider a convex chain of polynomial ideals as in \cref{eq:WBPP ideal chain}.
    By convexity, the colon ideals also form a chain
    $I_0 \colon I_1 \subseteq I_1 \colon I_2 \subseteq \cdots \subseteq \poly \Q k$.
    The colon dimensions are at most $k$ and non-increasing,
    $k \geq \dim {(I_0 \colon I_1)} \geq \dim {(I_1 \colon I_2)} \geq \cdots$.
    Divide the original ideal chain \cref{eq:WBPP ideal chain} into segments,
    where in the $i$-th segment the colon dimension is a constant $m_i$:
    \begin{align}
        \label{eq:i-th segment}
        \underbrace {I_0 \subseteq \cdots \subseteq I_{n_0 - 1}}_{\dim {(I_n \colon I_{n+1})} = m_0} \subseteq 
        \underbrace {I_{n_0} \subseteq \cdots \subseteq I_{n_1 - 1}}_{\dim {(I_n \colon I_{n+1})} = m_1} \subseteq
        \cdots \subseteq
        \underbrace {I_{n_i} \subseteq \cdots \subseteq I_{n_{i+1} - 1}}_{\dim {(I_n \colon I_{n+1})} = m_i} \subseteq \cdots.
    \end{align}
    Since the colon dimension can strictly decrease at most $k$ times,
    there are at most $k$ segments.
    In the following claim we show that
    the length of a convex ideal chain with equidimensional colon ideal chain
    can be bounded by the number of primary components of the initial ideal.
    \begin{restatable}[\protect{\cite[Lemmas 8+9]{NovikovYakovenko:1999}}]{claim}{claimA}
        \label{claim:NY:Lemma 8+9}
        Consider a strictly ascending convex chain of ideals
        $I_0 \subsetneq I_1 \subsetneq \cdots \subsetneq I_\ell$ of length $\ell$
        where the colon ratios have the same dimension
        $m := \dim {(I_0 \colon I_1)} = \cdots = \dim {(I_{\ell - 1} \colon I_\ell)}$.
        Then $\ell$ is at most the number of primary components 
        of any primary ideal decomposition of the initial ideal $I_0$ (counted with multiplicities%
        \footnote{We refer to \cite[Sec.~4.1]{NovikovYakovenko:1999}
        for the notion of \emph{multiplicity} of a primary component.}%
        ).
    \end{restatable}
    \noindent
    We apply~\cref{claim:NY:Lemma 8+9} to the $i$-th segment~\cref{eq:i-th segment}
    and obtain that its length $\ell_i := n_{i+1} - n_i$ is at most the number of primary components
    in any primary ideal decomposition of its starting ideal $I_{n_i}$.
    We now use a result from effective commutative algebra
    showing that we can compute primary ideal decompositions of size bounded by the degree of the generators.
    \begin{restatable}[\protect{variant of \cite[Corollary 2]{NovikovYakovenko:1999}}]{claim}{claimB}
        \label{lem:NY variant:Corollary 4}
        An ideal $I \subseteq \poly \C k$ generated by polynomials of degree $\leq D$
        admits a primary ideal decomposition of size $D^{k^{\bigO k}}$ (counted with multiplicities).
    \end{restatable}
    \noindent
    By \cref{lem:NY variant:Corollary 4}, $I_{n_i}$ admits some primary decomposition of size $d_i^{k^{\bigO k}}$,
    where $d_i$ is the maximal degree of the generators of $I_{n_i}$.
    By~\cref{lem:small degree property}, $d_i$ is at most $\bigO {D \cdot n_i}$.
    All in all, the $i$-th segment has length 
    $\ell_i = n_{i+1} - n_i \leq (D \cdot n_i)^{k^{\bigO k}}$.
    We have $n_i \leq \bigO {f_i}$ where $f_i$ satisfies $f_{i+1} \leq a \cdot f_i^b$
    with $a = D^b$ and $b = k^{\bigO k}$.
    Thus $f_k \leq a \cdot a^b \cdots a^{b^{k-1}} \leq a^{b^{\bigO k}}$,
    yielding the required upper bound on the length of the ideal chain
    $n_k \leq D^{k^{\bigO {k^2}}}$.
\end{proof}
Thanks to the bound from~\cref{thm:chain length bound},
we obtain the main contribution of the paper, which was announced in the introduction.%
\WBPPmain*
\begin{proof}
    The bound on the length of the ideal chain~\cref{eq:WBPP ideal chain} from~\cref{thm:chain length bound}
    implies that if the \WBPP~is not zero,
    then there exists a witnessing input word of length at most doubly exponential.
    We can guess this word and verify its correctness in \TWOEXPSPACE~by \cref{thm:WBPP word evaluation and zeroness problems}.
    This is a nondeterministic algorithm,
    but by courtesy of Savitch's theorem~\cite{Savitch:JCSS:1970} we obtain a \emph{bona fide} deterministic \TWOEXPSPACE~algorithm.
\end{proof}

\subparagraph{Application to \BPP.}

The \emph{multiplicity semantics} of a \BPP~is its series semantics as an $\N$-\WBPP.
%
Intuitively, one counts all possible ways in which an input is accepted by the model.
The \emph{\BPP~multiplicity equivalence problem} takes as input two \BPP~$P, Q$
and returns ``yes'' iff $P, Q$ have the same multiplicity semantics.
Decidability of \BPP~multiplicity equivalence readily follows from~\cref{thm:main}.
%
%
We say that a \BPP~is \emph{unambiguous} if its multiplicity semantics is $\set{0, 1}$-valued.
While \BPP~language equivalence is undecidable~\cite[Sec.~5]{Hirshfeld:CSL:1994},
we obtain decidability for unambiguous \BPP.
We have thus proved~\cref{thm:BPP multiplicity equivalence}.
This generalises decidability for deterministic~\BPP,
which follows from decidability of bisimulation equivalence~\cite{ChristensenHirshfeldMoller:CONCUR:1993}.
%
%
Language equivalence of unambiguous context-free grammars, the sequential counterpart of \BPP\-
(sometimes called \BPA~in process algebra),
is a long-standing open problem, as well as the more general multiplicity equivalence problem
(\cf~\cite{ForejtJancarKieferWorrell:IC:2014,Clemente:EPTCS:2020,BalajiClementeNosanShirmohammadiWorrell:LICS:2023}).

\section{Constructible differentially finite power series}
\label{sec:CDF}

In this section we study a class of multivariate power series in commuting variables
called \emph{constructible differentially finite} (\CDF)~\cite{BergeronReutenauer:EJC:1990,BergeronSattler:TCS:1995}.
We show that \CDF~power series arise naturally as the commutative variant of \WBPP~series from~\cref{sec:WBPP}.
Stated differently, the novel \WBPP~can be seen as the noncommutative variant of \CDF,
showing a connection between the theory of weighted automata and differential equations.
As a consequence, by specialising to the commutative context the \TWOEXPSPACE~\WBPP~zeroness procedure,
we obtain an algorithm to decide zeroness for \CDF~power series in \TWOEXPTIME.
This is the main result of the section,
which was announced in the introduction (\cref{thm:CDF zeroness in TWOEXPTIME}).

On the way, we recall and further develop the theory of \CDF~power series.
In particular, we provide a novel closure under regular support restrictions (\cref{lem:CDF closure under regular cardinality restrictions}).
In~\cref{sec:species} we illustrate a connection between \CDF~power series and combinatorics,
by showing that the generating series of a class of constructible species of structures are \CDF,
which will broaden the applicability of the \CDF~zeroness algorithm to multiplicity equivalence of species.

\subparagraph{Preliminaries.}
In the rest of the section, we consider commuting variables
$x = \tuple{x_1, \dots, x_d}$, $y = \tuple{y_1, \dots, y_k}$.
We denote by $\powerseries \Q x$ the set of multivariate power series in $x$,
endowed with the structure of a commutative ring $\tuple{\powerseries \Q x; +, \cdot, 0, 1}$
with pointwise addition and (Cauchy) product.
%
The partial derivatives $\partial {x_j}$'s satisfy~\cref{eq:Leibniz rule},
and thus form a family of commuting derivations of this ring.
To keep notations compact, we use vector notation:
For a tuple of naturals $n = \tuple{n_1, \dots, n_d} \in \N^d$,
define $n! := n_1! \cdots n_d!$,
$x^n := x_1^{n_1} \cdots x_d^{n_d}$,
and $\partial x^n := \partial {x_1}^{n_1} \cdots \partial {x_d}^{n_d}$.
We write a power series as $f = \sum_{n \in \N^d} f_n \cdot \frac {x^n} {n!} \in \powerseries \Q x$,
and define the \emph{(exponential) coefficient extraction} operation $\coefficient {x^n} f := f_n$, for every $n \in \N^d$.
This is designed in order to have the following simple commuting rule with partial derivative:
\begin{align}
    \label{eq:coefficient extraction and derivation}
    \coefficient {x^m} (\partial x^n f) = \coefficient {x^{m + n}} f,
    \quad \text{for all } m, n \in \N^d.
\end{align}
\noindent
Coefficient extraction is linear,
and \emph{constant term} extraction $\coefficient {x^0}$ is even a homomorphism 
since $\coefficient {x^0} (f \cdot g) = \coefficient {x^0} f \cdot \coefficient {x^0} g$.
%
%
The \emph{Jacobian matrix} of a tuple of power series
$f = \tuplesmall{f^{(1)}, \dots, f^{(k)}} \in \powerseries \Q x^k$
is the matrix $\partial x f \in \matrices k d {\powerseries \Q x}$
where entry $(i, j)$ is $\partial {x_j} f^{(i)}$.
Consider commuting variables $y = \tuple{y_1, \dots, y_k}$.
For a set of indices $I \subseteq \set{1, \dots, k}$,
by $y_I$ we denote the tuple of variables $y_i$ \st~$i \in I$ and
by $y_{\setminus I}$ we denote the tuple of variables $y_i$ \st~$i \not\in I$.
A power series $f \in \powerseries \Q y$ is \emph{locally polynomial \wrt~$y_I$}
if $f \in \powerseries {\poly \Q {y_I}} {y_{\setminus I}}$
($f$ is a power series in $y_{\setminus I}$ with coefficients polynomial in $y_I$),
and that it is \emph{polynomial \wrt~$y_I$}
if $f \in \poly {\powerseries \Q {y_{\setminus I}}} {y_I}$
($f$ is a polynomial in $y_I$ with coefficients which are power series in $y_{\setminus I}$).
For instance $\frac 1 {1 - y_1 \cdot y_2} = 1 + y_1 y_2 + (y_1 y_2)^2 + \cdots$ is not polynomial,
but it is locally polynomial in $y_{\set 1}$ (and $y_{\set 2}$).
A power series $f \in \powerseries \Q {x, y}$
and a tuple $g = \tuplesmall{g^{(1)}, \dots, g^{(k)}} \in \powerseries \Q x^k$
are \emph{$y$-composable} if $f$ is locally polynomial \wrt~$y_I$,
where $I$ is the set of indices $i$ \st~$g^{(i)}(0) \neq 0$;
\emph{strongly $y$-composable} is obtained by replacing ``locally polynomial'' with ``polynomial''.
As a corner case often arising in practice, $f, g$ are always strongly $y$-composable when $g(0) = 0$.
When $f, g$ are $y$-composable, their \emph{composition}
$f \compose y g \in \powerseries \Q x$
obtained by replacing $y_i$ in $f$ with $g^{(i)}$, for every $1 \leq i \leq k$, exists.
Composition extends component-wise to vectors and matrices.
%
%
%
%
%

\subsection{Multivariate \CDF~power series}

    A power series $f^{(1)} \in \powerseries \Q x$ is \CDF~\cite{BergeronReutenauer:EJC:1990,BergeronSattler:TCS:1995}
    if it is the first component of a solution $f = \tuplesmall{f^{(1)}, \dots, f^{(k)}} \in \powerseries \Q x^k$
    of a system of polynomial partial differential equations
    \begin{align}
        \label{eq:multivariate CDF - matrix form}
        \partial x f = P \compose y f,
            \quad\text{where } P \in \matrices k d {\poly \Q {x, y}}.
    \end{align}
%
We call $k$ the \emph{order} of the system and $d$ its \emph{dimension};
in the univariate case $d = 1$, \cref{eq:multivariate CDF - matrix form} is a system of ordinary differential equations.
The matrix $P$ is called the \emph{kernel} of the system.
The \emph{degree} the system is the maximum degree of polynomials in the kernel,
and so it is its \emph{height}.
When the kernel does not contain $x$ the system is called \emph{autonomous}, otherwise \emph{non-autonomous}.
There is no loss of expressive power in considering only autonomous systems.
Many analytic functions give rise to univariate \CDF~power series,
such as polynomials, the exponential series $f := e^x = 1 + x + x^2/2! + \cdots$ (since $\partial x f = f$),
the trigonometric series $\sin x$, $\cos x$, $\sec x := 1 / \cos x$, $\arcsin, \arccos, \arctan$
their hyperbolic variants $\sinh, \cosh, \tanh$, $\sech = 1 / \cosh$, $\arsinh$, $\artanh$,
the non-elementary error function $\erf(x) := \int_0^x e^{-t^2} dt$
(since $\partial x \erf = e^{-x^2}$ and $\partial x (e^{-x^2} )= -2x \cdot e^{-x^2}$).
Multivariate \CDF~power series include polynomials, rational power series,
constructible algebraic series (in the sense of \cite[Sec.~2]{Fliess:1974};
\cite[Theorem 4]{BergeronReutenauer:EJC:1990},\cite[Corollary 13]{BergeronSattler:TCS:1995}),
and a large class of D-finite series (\cite[Lemma 6]{BergeronSattler:TCS:1995}; but not all of them).
Moreover, we demonstrate in~\cref{thm:strongly constructible species are CDF}
that the generating series of strongly constructible species are \CDF.
We recall some basic closure properties for the class of \CDF~power series.
\begin{restatable}[\protect{Closure properties; \cite[Theorem 2]{BergeronReutenauer:EJC:1990}, \cite[Theorem 11]{BergeronSattler:TCS:1995}}]{lemma}{basicClosurePropertiesCDF}
    \label{lem:CDF basic closure properties}
    \begin{inparaenum}[(1)]
        \item If $f, g \in \powerseries \Q x$ are \CDF,
        then are also \CDF:
        $c \cdot f$ for $c \in \Q$,
        $f + g$,
        $f \cdot g$,
        $\partial {x_j} f$ for $1 \leq j \leq d$,
        $1/f$ (when defined).
        \item If $\partial {x_1} f, \dots, \partial {x_d} f$ are \CDF,
        then so is $f$.
        \item Closure under strong composition:
        If $f \in \powerseries \Q {x, y}, g \in \powerseries \Q x^k$
        are strongly $y$-composable and \CDF, then $f \compose y g$ is \CDF.
    \end{inparaenum}
\end{restatable}
\begin{remark}
    In the univariate case $d = 1$,~\cite[Theorem 2]{BergeronReutenauer:EJC:1990}
    proves closure under composition under the stronger assumption $g(0) = 0$.
    In the multivariate case,~\cite[Theorem 11]{BergeronSattler:TCS:1995} claims without proof closure under composition (when defined).
    We leave it open whether \CDF~power series are closed under composition. 
\end{remark}
\noindent
Of the many pleasant closure properties above, especially composition is remarkable,
since this does not hold for other important classes of power series,
such as the algebraic and the D-finite power series.
For instance, $e^x$ and $e^x - 1$ are D-finite, but $e^{e^x - 1}$ is not~\cite[Problem 7.8]{KauerPaule:Tetrahedron:2011}.
On the other hand, \CDF~power series are not closed under \emph{Hadamard product},
already in the univariate case~\cite[Sec.~4]{BergeronReutenauer:EJC:1990}.
(The \emph{Hadamard product} of $f = \sum_{n \in \N^d} f_n \cdot x^n, g = \sum_{n \in \N^d} g_n \cdot x^n \in \powerseries \Q x$
is $f \hadamard g = \sum_{n \in \N^d} (f_n \cdot g_n) \cdot x^n$.)
Another paramount closure property regards resolution of systems of power series equations.
A system of equations of the constructible form $y = f$ with $f \in \powerseries \Q {x, y}^k$ is \emph{well posed}
if $f(0, 0) = 0$ and the Jacobian matrix 
evaluated at the origin $\partial y f(0, 0)$ is nilpotent.
A \emph{canonical solution} is a series $g \in \powerseries \Q x^k$ solving the system for $y := g(x)$ \st~$g(0) = 0$.
The following is a slight generalisation of \cite[Corollary 13]{BergeronSattler:TCS:1995}.
\begin{restatable}[Constructible power series theorem]{lemma}{constructiblePowerSeriesTheorem}
    \label{lem:CDF constructible nilpotent}
    A well-posed system of equations
    %
        $y = f(x, y)$
    %
    has a unique canonical solution $y := g(x)$.
    Moreover, if $f$ is \CDF, then $g$ is \CDF.
\end{restatable}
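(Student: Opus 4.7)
The statement splits into two parts: existence and uniqueness of the canonical solution $g$, and then \CDF-ness of $g$ whenever $f$ is \CDF.

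For existence and uniqueness I would run a Newton-style formal iteration: set $g_0 := 0$ and $g_{n+1} := f(x, g_n)$. Well-posedness (namely $f(0,0) = 0$ together with nilpotency of $\partial y f(0,0)$) makes each iterate well-defined (strong $y$-composability is automatic because $g_n(0) = 0$) and turns $g \mapsto f(x, g)$ into a formal contraction on $\powerseries \Q x^k$ in the topology given by the maximal ideal $(x_1, \ldots, x_d)$. Specifically, after $r$ iterations, where $r$ is the nilpotency index of $\partial y f(0,0)$, the error $g_{n+r} - g_n$ gains one extra power of the maximal ideal; hence $(g_n)$ converges to a unique limit $g \in \powerseries \Q x^k$ solving $y = f(x, y)$ with $g(0) = 0$. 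The same contractivity applied to the difference of two canonical solutions yields uniqueness.

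For \CDF-ness, assume $f = (f_1, \ldots, f_k)$ extends to a closed \CDF~system of components $\vec f = (f_1, \ldots, f_m) \in \powerseries \Q {x,y}^m$ governed by polynomial identities $\partial {x_j} f_i = P_{ij}(x, y, \vec f)$ and $\partial {y_\ell} f_i = Q_{i\ell}(x, y, \vec f)$. Substituting $y := g(x)$, set $h_i := f_i \compose y g$, noting that $h_i = g_i$ for $i \leq k$. Differentiating $g_i = f_i(x, g(x))$ in $x_j$ and applying the chain rule yields the linear system $(I - Q(x, g, h)) \cdot \partial {x_j} g = P_j(x, g, h)$, where $Q$ is the $k \times k$ submatrix of $y$-Jacobian entries. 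At $x = 0$ this matrix reduces to $\partial y f(0,0)$, nilpotent by well-posedness, so $I - Q(x, g, h)$ admits a formal Neumann inverse $R \in \powerseries \Q x^{k \times k}$. Introducing the entries of $R$ as further auxiliary variables, one checks that $\partial {x_j} g = R \cdot P_j(x, g, h)$ is polynomial in $(x, g, h, R)$, that $\partial {x_j} h_i$ is polynomial in the same variables (chain rule applied to $f_i \compose y g$), and that $\partial {x_j} R = R \cdot \partial {x_j}(Q(x, g, h)) \cdot R$ is polynomial in $(x, g, h, R)$ (differentiate the identity $R \cdot (I - Q) = I$ and expand). The triple $(g, h, R)$ thus satisfies a closed polynomial PDE system, witnessing that $g$ is \CDF.

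The main obstacle will be the bookkeeping: verifying that strong $y$-composability holds at every substitution (automatic since $g(0) = 0$), that the inverse matrix $R$ really exists as a power-series matrix (immediate from nilpotency via the Neumann series in the maximal-ideal topology), and that all chain-rule expansions land in the declared polynomial ring in $(x, g, h, R)$. All of this is a careful but routine use of the closure properties collected in~\cref{lem:CDF basic closure properties}; the essential idea is the introduction of $R$ as auxiliary variables, which linearises the implicit differentiation and converts the implicit equation into a genuine \CDF~system.
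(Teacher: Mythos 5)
Correct, and in substance the same argument as the paper's: implicitly differentiate $y = f(x,y)$, use nilpotency of $\partial y f(0,0)$ to invert the matrix $I - Q$ in $\matrices k k {\powerseries \Q x}$, and close a polynomial PDE system for $g$ together with the auxiliary series. The paper reaches this by invoking \cref{lem:CDF implicit power series theorem} (and thence \cref{lem:CDF differentially constructible power series theorem}), establishes \CDF-ness of the inverse Jacobian via Cramer's rule and the scalar closure under inverse from \cref{lem:CDF basic closure properties}, and cites Joyal for existence and uniqueness; you inline all of this, close the system for $R := (I-Q)^{-1}$ directly via $\partial {x_j} R = R \cdot (\partial {x_j} Q) \cdot R$, and supply a concrete Newton iteration — different packaging of the same idea, with the same key observation that $g(0)=0$ supplies strong $y$-composability throughout.
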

\noindent
For example, the unique canonical solution of the well-posed equation $y = f := x \cdot e^y$ is \CDF.

\subsection{Support restrictions}
\label{sec:CDF support restrictions}

We discuss a novel closure property for \CDF~power series,
which will be useful later in the context of combinatorial enumeration (\cref{sec:species}).
The \emph{restriction} of $f \in \powerseries \Q x$
by a \emph{support constraint} $S \subseteq \N^d$
is the series $\restrict f S \in \powerseries \Q x$
which agrees with $f$ on the coefficient of $x^n$ for every $n \in S$,
and is zero otherwise.
%
%
We introduce a small constraint language in order to express a class of support constraints.
The set of \emph{constraint expressions} of dimension $d \in \N$ is generated by the following abstract grammar,
\begin{align}
    \varphi, \psi \ ::=\ z_j = n \mid \equivmod {z_j} n m \mid \varphi \lor \psi \mid \varphi \land \psi \mid \lneg \varphi, 
\end{align}
where $1 \leq j \leq d$ and $m, n \in \N$ with $m \geq 1$.
Expressions $z_j \leq n$ and $z_j \geq n$ can be derived.
The semantics of a constraint expressions $\varphi$ of dimension $d$, written $\sem \varphi \subseteq \N^d$,
is defined by structural induction in the expected way.
For instance, the semantics of $z_1 \geq 2 \land \equivmod {z_2} 1 2$
is the set of pairs $\tuple {a, b} \in \N^2$ where $a \geq 2$ and $b$ is odd.
%
%
%
%
Call a set $S \subseteq \N^d$ \emph{regular} if it is denoted by a constraint expression.
%

\begin{restatable}
{lemma}{CDFclosureUnderRegularRestrictions}
    \label{lem:CDF closure under regular cardinality restrictions}
    \CDF~power series are closed under regular support restrictions.
\end{restatable}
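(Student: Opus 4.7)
The plan is to proceed by structural induction on the constraint expression $\varphi$. The Boolean cases reduce to strictly smaller formulas via operations already preserved by \CDF: negation uses $\restrict f {\sem{\lneg\varphi}} = f - \restrict f {\sem\varphi}$; conjunction iterates via $\restrict f {\sem{\varphi\land\psi}} = \restrict{(\restrict f {\sem\varphi})}{\sem\psi}$; and disjunction uses inclusion--exclusion, $\restrict f {\sem{\varphi\lor\psi}} = \restrict f {\sem\varphi} + \restrict f {\sem\psi} - \restrict f {\sem{\varphi\land\psi}}$. By~\cref{lem:CDF basic closure properties}, each of these preserves \CDF-ness, so it suffices to treat the two atomic constraints $z_j = n$ and $\equivmod{z_j}{n}{m}$.

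For the single-exponent constraint, I would use the identity
\[
    \restrict f {\sem{z_j = n}} \;=\; \tfrac{x_j^n}{n!} \cdot \bigl(\partial{x_j}^n f\bigr)\big|_{x_j = 0},
\]
verified by a direct coefficient computation in the exponential representation. The iterated partial derivative is \CDF~by~\cref{lem:CDF basic closure properties}; the substitution $x_j \mapsto 0$ is strong composition with the zero polynomial (strong composability holds automatically since the substituted value is $0$ at the origin); and multiplication by the polynomial $x_j^n/n!$ preserves \CDF-ness by closure under product.

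For the modular constraint, I would directly construct a \CDF~system for all $m$ sections of each companion series in the variable $x_j$. Starting from a \CDF~system $\partial x f = P \compose y f$ with companions $f^{(1)},\dots,f^{(k)}$, introduce for each $1 \le i \le k$ and each residue $r \in \{0,\dots,m-1\}$ the section $f^{(i)}_r := \restrict{f^{(i)}}{\sem{\equivmod{z_j}{r}{m}}}$, so that $f^{(i)} = f^{(i)}_0 + \cdots + f^{(i)}_{m-1}$. The key algebraic observation is that the $x_j$-section operator is $\Q$-linear and distributes over Cauchy products modulo $m$,
\[
    (g \cdot h)_r \;=\; \sum_{r_1 + r_2 \equiv r \pmod m} g_{r_1} \cdot h_{r_2},
\]
because $x_j$-exponents add in a Cauchy product. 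Iterating, the section of any polynomial in $x$ and the $f^{(i)}$'s expands as a polynomial in $x$ and the sections $f^{(i')}_{r'}$. A direct coefficient computation further shows that $\partial{x_{j'}}$ commutes with $(\cdot)_r$ for $j' \neq j$, whereas $\partial{x_j}$ shifts the residue by $-1$: $\partial{x_j} f^{(i)}_r = \bigl(\partial{x_j} f^{(i)}\bigr)_{r - 1 \pmod m}$. Applying $(\cdot)_r$ to the original system therefore yields a polynomial PDE system for the $mk$ sections, so every section, and in particular $\restrict f {\sem{\equivmod{z_j}{n}{m}}} = f^{(1)}_n$ (taking $f = f^{(1)}$), is \CDF.

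The main obstacle is the bookkeeping in the modular case, namely the distribution identity for sections on Cauchy products and the off-by-one residue shift under $\partial{x_j}$. Once these are established, closure of the sections under the polynomial combinations appearing in the original system becomes automatic. A slicker but less elementary alternative is the roots-of-unity identity $\restrict f {\sem{\equivmod{z_j}{n}{m}}} = \tfrac 1 m \sum_{\ell=0}^{m-1} \omega^{-n\ell} f(x_1,\dots,\omega^\ell x_j,\dots,x_d)$ over $\Q[\omega]$, followed by a Galois descent back to $\Q$; I would prefer the direct construction, which stays within $\Q$ throughout.
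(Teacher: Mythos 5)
Your proof is correct and uses the same underlying mechanism (section operators, the binomial product rule for sections, and the exchange rule between sections and partial derivatives), but the organization is genuinely different. The paper first reduces regular constraints to \emph{recognizable} subsets of $\N^d$ (i.e.\ those recognized by a finite commutative monoid $M$ via a homomorphism $h$), and then runs a single uniform argument: it shows that the $|M|\cdot k$ sections $\restrictsmall{f^{(i)}}{h^{-1}m}$ form a finitely generated algebra closed under partial derivatives, using a product rule $(g\cdot h)|_m=\sum_{a+b=m}g|_a\cdot h|_b$ and an exchange rule $\partial{x_j}(\restrictsmall{f}{m})=\sum_{m'+h(e_j)=m}\restrictsmall{(\partial{x_j}f)}{m'}$. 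Your modular case is exactly this argument specialized to the cyclic group $\Z/m\Z$, where the preimage $m'+1=m$ is a singleton and the exchange rule simplifies to the clean shift $\partial{x_j}f^{(i)}_r=(\partial{x_j}f^{(i)})_{r-1}$. Your treatment of $z_j=n$, by contrast, is a different technique not used by the paper: the identity $\restrict{f}{\sem{z_j=n}}=\tfrac{x_j^n}{n!}\cdot\bigl(\partial{x_j}^nf\bigr)\big|_{x_j=0}$ reduces this atom directly to the already-established closure under derivation, strong composition (the substitution $x_j\mapsto 0$, which is trivially strongly composable), and products, whereas the paper handles $z_j=n$ via the saturating monoid $\{0,\dots,n,\infty\}$ inside the same uniform section argument. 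What each route buys: your structural induction on formulas is more elementary and needs no theory of recognizable subsets of $\N^d$, but requires two ad-hoc atomic cases; the paper's monoid argument is more uniform and modular, absorbing both atoms (and all Boolean combinations at once) into a single closure proof.
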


\noindent
For instance, since $e^x$ is \CDF~also
$\sinh x = \restrict {e^x}{\sem{\equivmod {z_1} 1 2}}$ is \CDF.
\CDF~are not closed under more general \emph{semilinear support restrictions}.
E.g., restricting to the semilinear set $\setof{\tuple{m, \dots, m} \in \N^d}{m \in \N}$
amounts to taking the \emph{diagonal},
which in turn can be used to express the Hadamard product of power series~\cite[remark (2) on pg.~377]{Lipshitz:JA:1988},
and \CDF~are closed under none of these operations. 

\subsection{\CDF~=~Commutative \WBPP~series}

We demonstrate that \CDF~power series correspond to \WBPP~series satisfying a commutativity condition.
In particular, they coincide in the univariate case $x = \tuple{x_1}$ and $\Sigma = \set {a_1}$.
A series $f \in \series \Q \Sigma$ over a finite alphabet $\Sigma = \set{a_1, \dots, a_d}$
is \emph{commutative} if $f_u = f_v$ whenever $\Parikh u = \Parikh v$;
in this case we associate to it a power series $\stops f \in \powerseries \Q x$ in commuting variables $x = \tuple{x_1, \dots, x_d}$
by $\stops f := \sum_{n \in \N^d} f_n \cdot \frac {x^n} {n!}$
where $f_n := f_w$ for any $w \in \Sigma^*$ \st~$\Parikh w = n$.
Conversely, to any power series $f \in \powerseries \Q x$ we associate a commutative series $\pstos f \in \series \Q \Sigma$
by $\pstos f := \sum_{w \in \Sigma^*} \coefficient {x^{\Parikh w}} f \cdot w$.
These two mappings are mutual inverses
and by the following lemma we can identify $\CDF$~power series with commutative \WBPP~series,
thus providing a bridge between the theory of weighted automata and differential equations.
\begin{restatable}{lemma}{commutativeShuffleFinite}
    \label{lem:commutativeShuffleFinite}
    If $f \in \series \Q \Sigma$ is a commutative \WBPP~series,
    then $\stops f \in \powerseries \Q x$ is a \CDF~power series.
    Conversely, if $f \in \powerseries \Q x$ is a \CDF~power series,
    then $\pstos f \in \series \Q \Sigma$ is a commutative \WBPP~series.
\end{restatable}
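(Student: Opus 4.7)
The plan is to exploit the algebraic correspondence between commutative shuffle series and multivariate exponential power series set up by the mutually inverse bijections $\stops{\cdot}$ and $\pstos{\cdot}$. A preliminary step I would carry out is to record two key compatibilities, both of which follow by direct coefficient manipulation: the map $\stops{\cdot}$ transports the shuffle product to the Cauchy product of exponential generating series, since the number of interleavings of words with Parikh vectors $m$ and $n$ producing a fixed word with Parikh vector $m+n$ is the multinomial coefficient $\frac{(m+n)!}{m!\, n!}$; and the shuffle derivative $\derive{a_j}$ corresponds to the partial derivative $\partial{x_j}$, since shifting a word by $a_j$ on the left decrements the $j$-th Parikh coordinate. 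The analogous statements for $\pstos{\cdot}$ follow by inversion.

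For the converse direction, suppose $f \in \powerseries \Q x$ is \CDF, arising as the first component of a solution $h = (h^{(1)}, \dots, h^{(k)}) \in \powerseries \Q x^k$ of a polynomial system $\partial{x_j} h^{(i)} = P^{(i)}_j(x, h)$. I would introduce $d$ auxiliary power series $h^{(k+j)} := x_j$ satisfying $\partial{x_l} h^{(k+j)} = \delta_{jl}$ to eliminate the explicit $x$-dependence, obtaining an autonomous system, and then construct a \WBPP~with nonterminals $Y_1, \dots, Y_k, Z_1, \dots, Z_d$ by setting $\Delta_{a_l} Y_i := P^{(i)}_l(Z, Y)$, $\Delta_{a_l} Z_j := \delta_{jl}$, and final weights $F(Y_i) := h^{(i)}(0)$, $F(Z_j) := 0$. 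Noting that $\pstos{x_j} = a_j$ matches the transitions for $Z_j$, an induction on word length using \cref{lem:exchange}, \cref{lem:WBPP semantics is a homomorphism}, and the compatibilities above shows $\sem{Y_i} = \pstos{h^{(i)}}$; in particular $\sem{Y_1} = \pstos{f}$, which is commutative by construction.

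For the forward direction, suppose $f$ is commutative and \WBPP, with $f = \sem{X_1}$ for some \WBPP~with nonterminals $X_1, \dots, X_k$. Assuming first that all $\sem{X_i}$ are commutative, I would set $h^{(i)} := \stops{\sem{X_i}}$ and compute
\[
    \partial{x_j} h^{(i)} = \stops{\derive{a_j} \sem{X_i}} = \stops{\sem{\Delta_{a_j} X_i}} = P^{(i)}_j(h^{(1)}, \dots, h^{(k)}),
\]
where $P^{(i)}_j \in \poly \Q {y_1, \dots, y_k}$ arises from $\Delta_{a_j} X_i \in \poly \Q N$ by renaming $X_l \mapsto y_l$. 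The three equalities invoke, respectively, the intertwining $\partial{x_j} \circ \stops{\cdot} = \stops{\cdot} \circ \derive{a_j}$, the exchange property \cref{lem:exchange}, and the shuffle-to-Cauchy homomorphism derived from \cref{lem:WBPP semantics is a homomorphism}. This is exactly a system of the form \cref{eq:multivariate CDF - matrix form}, witnessing that $\stops f = h^{(1)}$ is \CDF.

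The main obstacle is discharging the assumption that every $\sem{X_i}$ is commutative. The observation to exploit is that commutativity of $f$ propagates to all iterated derivatives $\derive w f = \sem{\Delta_w X_1}$, since $(\derive w f)_v$ depends only on $\Parikh{wv}$; one then wants to replace the original generators by finitely many such commutative ones. I would invoke \cref{thm:WBPP equals shuffle finite} to view $f$ inside a finitely generated differential subalgebra, then use the ideal chain stabilisation guaranteed by \cref{lem:Delta of ideal} and \cref{thm:chain length bound} to bound the number of derivatives $\Delta_w X_1$ needed modulo the kernel of $\sem{\cdot}$, so that the commutative series $\{\sem{\Delta_u X_1} : |u| \leq M\}$ together with finitely many of their shuffle products provide the required commutative generators for a bona fide \WBPP~realising $f$ in which the hypothesis applies.
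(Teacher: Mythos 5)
Your overall approach mirrors the paper's in both directions: pass through the characterisation of shuffle-finite/\CDF~series as elements of finitely generated differential subalgebras (\cref{lem:shuffle-finite working characterisation}, \cref{lem:CDF characterisation}), and exploit that the mutually inverse maps $\stops{\cdot}$, $\pstos{\cdot}$ intertwine the shuffle/Cauchy products and the derivations $\derive{a_j} / \partial{x_j}$. The converse direction (\CDF~$\Rightarrow$ commutative \WBPP) is sound: the auxiliary series $\pstos{f^{(i)}}$ are commutative by construction, so the intertwining identities apply without reservation; your explicit \WBPP~with extra nonterminals $Z_j$ encoding the variables $x_j$ is an unwinding of the paper's algebraic argument together with the standard reduction to autonomous systems.

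Your treatment of the forward direction correctly flags the delicate point: both the derivative identity $\partial{x_j}\stops g = \stops{\derive{a_j} g}$ and the homomorphism identity $\stops{g \shuffle h} = \stops g \cdot \stops h$ require $g, h$ to be \emph{commutative}, whereas the generators $f^{(i)} = \sem{X_i}$ supplied by \cref{lem:shuffle-finite working characterisation} need not be. However, the repair you propose does not close this gap. Stabilisation of the ideal chain $I_0 \subseteq I_1 \subseteq \cdots$ from \cref{eq:WBPP ideal chain} gives, for $\length w > M$, a representation $\Delta_w X_1 = \sum_i \beta_i \cdot \Delta_{w_i} X_1$ with $\length{w_i} \leq M$, but the multipliers $\beta_i$ range over all of $\poly \Q N$; this is membership in the \emph{ideal} generated by $\set{\Delta_u X_1 : \length u \leq M}$, not in the \emph{subalgebra} those elements generate. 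Hence it does not follow that the commutative series $\set{\derive u f : \length u \leq M}$ and their shuffle products form a $\derive{a_j}$-closed algebra. In fact, the paper's own remark on \emph{simple \CDF}~series (the algebra generated by all partials $\partial x^n f$ of a \CDF~series $f$ need not be finitely generated, e.g.\ for $e^{x^2}$) shows that closing up under iterated derivatives of $f$ alone cannot succeed in general: genuinely new auxiliary series, outside the differential algebra generated by $f$, may be required, and one would have to argue separately that they can be chosen commutative. This is a real gap in your proposal; note that the paper's own proof of this direction also applies the commutativity-dependent identities to the $f^{(i)}$ without arguing they can be taken commutative, so the point appears to warrant more care there as well.
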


\subsection{Zeroness of \CDF~power series}
\label{sec:CDF zeroness}

\subparagraph{Coefficient computation.}
\label{sec:CDF computing coefficients}

We provide an algorithm to compute \CDF~power series coefficients.
While a \PSPACE~algorithm follows from~\cref{thm:WBPP word evaluation and zeroness problems},
we are interested here in the precise complexity \wrt~degree, height, and order.
This will allow us obtain the improved \TWOEXPTIME~complexity for zeroness (\cref{thm:CDF zeroness in TWOEXPTIME}).

\begin{restatable}{lemma}{CDFtableLemma}
    \label{lem:CDF table}
    Given a tuple of $d$-variate \CDF~power series $f \in \powerseries \Z x^k$
    satisfying an integer system of \CDF~equations~\cref{eq:multivariate CDF - matrix form}
    of degree $D$, order $k$, height $H$,
    and a bound $N$, we can compute all coefficients $\coefficient {x^n} f \in \Z^k$
    with total degree $\onenorm n \leq N$ in deterministic time
    $\leq (N + d \cdot D + k)^{\bigO{d \cdot D + k}} \cdot (\log H)^{\bigO 1}$.
\end{restatable}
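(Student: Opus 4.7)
The plan is to fill a dynamic-programming table storing the exponential coefficients $f^{(i)}_n := \coefficient{x^n}{f^{(i)}}$ for $1 \leq i \leq k$ and $n \in \N^d$ with $\onenorm n \leq N$, processed in increasing order of $\onenorm n$. The base case $\onenorm n = 0$ uses the initial values $f^{(i)}(0) \in \Z$ provided as part of the CDF system. For $\onenorm n \geq 1$, I pick any coordinate $j$ with $n_j \geq 1$ and apply~\cref{eq:coefficient extraction and derivation} to the $(i,j)$-th equation of the system, obtaining
\begin{align*}
    f^{(i)}_n \;=\; \coefficient{x^{n - e_j}}{\partial{x_j} f^{(i)}} \;=\; \coefficient{x^{n - e_j}}{P_{ij}(x, f)},
\end{align*}
where $e_j$ denotes the $j$-th standard basis vector. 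This expresses $f^{(i)}_n$ purely in terms of values $f^{(l)}_{n'}$ with $\onenorm{n'} < \onenorm n$, already present in the table.

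To evaluate the right-hand side, I expand $P_{ij}(x, y) = \sum_{\alpha, \beta} c_{\alpha, \beta} \cdot x^\alpha y^\beta$ as a sum of at most $\binom{d + k + D}{D} \leq (d + k + D)^D$ monomials. Then
\begin{align*}
    f^{(i)}_n \;=\; \sum_{\alpha, \beta} c_{\alpha, \beta} \cdot \coefficient{x^{n - e_j - \alpha}}{\prod_{l=1}^{k} (f^{(l)})^{\beta_l}},
\end{align*}
with the convention that terms having a negative component in the extraction index are dropped. Each inner coefficient is obtained by iterated binomial convolution, exploiting the exponential-coefficient product rule $\coefficient{x^n}{g \cdot h} = \sum_{n_1 + n_2 = n} \binom{n}{n_1} \coefficient{x^{n_1}}{g} \cdot \coefficient{x^{n_2}}{h}$; the table of binomial coefficients $\binom{n}{n_1}$ for $\onenorm n \leq N$ is precomputed once and reused.

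For the complexity, let $M := N + d \cdot D + k$. The table has $k \cdot \binom{N+d}{d} \leq M^{\bigO{d}}$ entries; each $P_{ij}$ contributes at most $M^{D}$ monomials; and each binomial convolution of at most $D$ truncated series of degree $\leq N$ in $d$ variables involves at most $M^{\bigO{d \cdot D}}$ summands. Together this yields an arithmetic-operation count of $M^{\bigO{d \cdot D + k}}$. A straightforward induction on $\onenorm n$, using the binomial convolution structure, shows that $|f^{(i)}_n| \leq (M \cdot H)^{\bigO{N}} \cdot (N!)^{\bigO{D}}$, so each $f^{(i)}_n$ occupies only $M^{\bigO{1}} \cdot \log H$ bits; every arithmetic operation therefore costs $M^{\bigO{1}} \cdot (\log H)^{\bigO{1}}$ bit operations, producing the stated total bound $(N + d \cdot D + k)^{\bigO{d \cdot D + k}} \cdot (\log H)^{\bigO{1}}$.

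The main technical care lies in the bit-growth analysis: a naive argument that allows $|f^{(i)}_n|$ to be raised to the $D$-th power at each step would give a doubly-exponential blow-up in $\onenorm n$, far beyond the target $(\log H)^{\bigO{1}}$ factor. Controlling this requires exploiting that each binomial convolution distributes the growth across \emph{all} decompositions of $n$, so the coefficients stay within a $C^{N} \cdot N!^{\bigO{D}}$ envelope, consistent with explicit examples such as the solution of $\partial{x} f = f^{2}$ with $f(0) = 1$, whose exponential coefficients are exactly $n!$.
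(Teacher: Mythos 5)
Your dynamic-programming setup is the same as the paper's: fill a table of size $k \cdot \binom{N+d}{d}$ with the exponential coefficients $\coefficient{x^n}{f^{(i)}}$ in order of increasing $\onenorm n$, obtain each new entry from the \CDF~equation as a multinomial-convolution combination of earlier entries, and bound both the arithmetic-operation count and the bit-sizes. The operation-count part matches the paper's.

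The genuine gap is in the coefficient-size bound. The paper routes this ingredient through the Lie-derivative machinery: \cref{lem:CDF exchange rule} expresses $\coefficient{x^n}{f^{(i)}}$ as $L_w\,y_i$ evaluated at $f(0)$ for any $w$ with Parikh image $n$, and \cref{lem:CDF small coefficient} (in turn relying on the iterated-derivation degree and height bounds of \cref{lem:iterated derivation bounds}) yields $\inftynorm{\coefficient{x^n}f} \leq (\onenorm n\, D + k)^{\bigO{\onenorm n\, k}}\cdot H^{\bigO{\onenorm n\, D}}$, which is then plugged into the bit-cost. You instead assert that a ``straightforward induction'' on the binomial-convolution recursion gives $|f^{(i)}_n| \leq (M H)^{\bigO{N}} (N!)^{\bigO{D}}$, and you correctly flag that this is the delicate point: a naive induction that iterates raising the running maximum to the $D$-th power would be doubly exponential in $\onenorm n$, and to avoid this one must exploit that the parts of each decomposition satisfy $\onenorm{m^{(1)}} + \cdots + \onenorm{m^{(e)}} = \onenorm n$, so that $\prod_\ell (\onenorm{m^{(\ell)}})! \leq (\onenorm n)!$. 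But you do not actually carry out the induction. Making it rigorous requires tracking, together with that factorial bound, the multinomial weights $\binom{n}{m^{(1)},\dots,m^{(e)}}$ (up to $D^{\onenorm n}$ each) and the number of decompositions (up to $(N + dD)^{\bigO{dD}}$), and it is not obvious that the exponents in your envelope come out exactly as stated, even if the resulting $\log$ is $M^{\bigO{1}}\cdot(1 + \log H)^{\bigO{1}}$ in any case. This sub-lemma is the heart of the argument and cannot be left as an assertion; the paper's Lie-derivative route sidesteps the convolution-level bookkeeping entirely and is where the work actually lives.

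Two smaller issues. When you extract the coefficient of a monomial $x^\alpha y^\beta$ of $P_{ij}$, the $x^\alpha$ factor contributes a nontrivial multiplier under exponential coefficient extraction: $\coefficient{x^m}{x^\alpha g} = \tfrac{m!}{(m-\alpha)!}\,\coefficient{x^{m-\alpha}}{g}$, not $\coefficient{x^{m-\alpha}}{g}$; your displayed formula drops this factor. It is harmless asymptotically (bounded by $N^D$), but either include it or first autonomise the system by adjoining variables for $x_1,\dots,x_d$, as the paper implicitly does by taking $p^{(i)}_j$ to depend on $y$ only. Also, ``$M^{\bigO{1}}\cdot\log H$ bits'' degenerates when $H=1$; it should read $M^{\bigO{1}}\cdot(1 + \log H)$ or equivalent.
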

\noindent
The lemma is proved by a dynamic programming algorithm storing all required coefficients in a table,
which is feasible since numerators and denominators are not too big.
This rough estimation shows that the complexity is exponential in $d, D, k$ and \emph{polynomial} in $N$.
%

\subparagraph{Zeroness.}
\label{subsec:CDF zeroness}

The \emph{zeroness problem} for \CDF~power series takes as input
a polynomial $p \in \poly \Q y$
and a system of equations~\cref{eq:multivariate CDF - matrix form}
with an initial condition $c \in \Q^k$ extending to a (unique) power series series solution $f$ \st~$f(0) = c$,
and asks whether $p \compose y f = 0$.
\begin{remark}
    This is a \emph{promise problem}: We do not decide solvability in power series.
    In our application in \cref{sec:species}
    this is not an issue since power series solutions exist by construction.
    In the univariate case $d = 1$ the promise is always satisfied.
    We leave it as future work to investigate the problem of solvability in power series of \CDF~equations.
\end{remark}
The following lemma gives short nonzeroness witnesses. 
It follows immediately from the \WBPP~ideal construction~\cref{eq:WBPP ideal chain}.
Together with \Cref{lem:CDF table} it yields the announced~\cref{thm:CDF zeroness in TWOEXPTIME}.
\begin{restatable}{lemma}{CDFnonzeroWitnessBound}
    \label{lem:CDF nonzero witness degree bound}
    Consider a \CDF~$f \in \powerseries \Q x^k$
    and $p \in \poly \Q y$, both of degree $\leq D$.
    The power series $g := p \compose y f$ is zero iff
    $\coefficient {x^n} g = 0$ for all monomials $x^n$
    of total degree $\onenorm n \leq D^{k^{\bigO {k^2}}}$.
\end{restatable}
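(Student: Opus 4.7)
The plan is to reduce the statement to the ideal-chain bound for \WBPP~(\cref{thm:chain length bound}) via the commutative correspondence \cref{lem:commutativeShuffleFinite}.

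First I would build an explicit \WBPP~whose semantics, under the commutative interpretation, realises $g = p \compose y f$. Assuming as usual that the \CDF~system \cref{eq:multivariate CDF - matrix form} is autonomous (the non-autonomous case reduces to this), the kernel is a matrix $P \in \matrices k d {\poly \Q y}$ of polynomials of degree $\leq D$, where $y = (y_1, \dots, y_k)$. Introduce an alphabet $\Sigma = \set{a_1, \dots, a_d}$ and nonterminals $N = \set{X_1, \dots, X_k}$, and define the \WBPP~by transitions $\Delta_{a_j} X_i := P_{i,j}(X_1, \dots, X_k)$ and initial configuration $\alpha := p(X_1, \dots, X_k)$. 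The output function plays no role (we look at coefficients directly). By construction this \WBPP~has order $\leq k$ and degree $\leq D$, and by \cref{lem:commutativeShuffleFinite} together with \cref{lem:exchange,lem:WBPP semantics is a homomorphism} its semantics is precisely $\pstos g$, so that for every word $w \in \Sigma^*$ one has $\sem \alpha_w = (\Parikh w)! \cdot \coefficient{x^{\Parikh w}} g$ up to the factorial normalisation inherent in the $\stops{\cdot}/\pstos{\cdot}$ correspondence.

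Next I would instantiate the \WBPP~ideal chain \cref{eq:WBPP ideal chain} on this construction. Since $\alpha$ and all $\Delta_{a_j} X_i$ are polynomials in $\poly \Q N \cong \poly \Q k$ of degree $\leq D$, \cref{thm:chain length bound} gives a length bound $M \leq D^{k^{\bigO{k^2}}}$ for the chain. By the argument recalled right before \cref{thm:chain length bound}, once $I_M = I_{M+1}$, every $\Delta_w \alpha$ with $w$ of arbitrary length lies in the ideal generated by the $\Delta_{w'} \alpha$ for $\length{w'} \leq M$, so $\sem \alpha = 0$ is equivalent to $\sem \alpha_w = 0$ for all words of length $\length w \leq M$.

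Finally I would transport this back to the commutative coefficient bound. Since $\sem \alpha_w$ depends only on $\Parikh w$ and $\length w = \onenorm{\Parikh w}$, the condition ``$\sem \alpha_w = 0$ for all $\length w \leq M$'' translates exactly into ``$\coefficient{x^n} g = 0$ for all $n \in \N^d$ with $\onenorm n \leq M$''. Conversely the ``only if'' direction is immediate. This yields the claimed threshold $D^{k^{\bigO{k^2}}}$.

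The only real subtlety I foresee is making sure the parameters line up: the bound from \cref{thm:chain length bound} is in terms of the order $k$ (number of nonterminals) and the degree $D$ (max degree of the polynomials $\Delta_a X$ and of $\alpha$), and crucially is independent of the alphabet size $d$. So the $d$-dependence of the \CDF~problem does not inflate the bound, and the constants hidden in $\bigO{k^2}$ are inherited verbatim from \cref{thm:chain length bound}. Everything else is essentially bookkeeping through the commutative/noncommutative dictionary of \cref{lem:commutativeShuffleFinite}.
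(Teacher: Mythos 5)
Your proof is correct and takes essentially the same route as the paper: the Lie derivatives $L_j := \sum_{h} P_{hj}\,\partial{y_h}$ appearing in the paper's argument are exactly the derivations induced by the \WBPP~transitions $\Delta_{a_j}$ you define, so the two ideal chains coincide and both reduce to the bound of \cref{thm:chain length bound}; the paper just works in the polynomial ring directly via \cref{lem:CDF exchange rule} instead of detouring through \cref{lem:commutativeShuffleFinite}. One small inaccuracy: with the paper's \emph{exponential} coefficient convention $\coefficient{x^n}f = f_n$ where $f = \sum_n f_n \frac{x^n}{n!}$, one has $\sem \alpha_w = \coefficient{x^{\Parikh w}} g$ exactly, with no additional factor of $(\Parikh w)!$ — the factorial is already absorbed into the definition of $\coefficient{x^n}$; this does not affect the argument.
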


\section{Constructible species of structures}
\label{sec:species}

The purpose of this section is to show how a rich combinatorial framework for building classes of finite structures (called \emph{species})
gives rise in a principled way to a large class of \CDF~power series.
The main result of this section is that multiplicity equivalence is decidable for a large class of species
(\cref{thm:multiplicity equivalence of constructible species}).
\emph{Combinatorial species of structures}~\cite{Joyal:AM:1981}
are a formalisation of combinatorics based on category theory,
designed in such a way as to expose a bridge between combinatorial operations on species
and corresponding algebraic operations on power series.
Formally, a \emph{$d$-sorted species} is a $d$-ary endofunctor $\FF$ in the category of finite sets and bijections.
In particular, $\FF$ defines a mapping from $d$-tuples of finite sets $U = \tuple{U_1, \dots, U_d}$ to a finite set $\FF[U]$,
satisfying certain naturality conditions which ensure that $\FF$ is independent of the names of the elements of $U$.
In particular, the cardinality of the output $\card {\FF[U_1, \dots, U_d]}$
depends only on the cardinality of the inputs $\card {U_1}, \dots, \card{U_d}$,
which allows one to associate to $\FF$ the \emph{exponential generating series} (\egs)~$\EGS \FF := \sum_{n \in \N^d} \FF_n \cdot \frac {x^n} {n!}$,
where $\FF_{n_1, \dots, n_d} := \card {\FF[U_1, \dots, U_d]}$
for some (equivalently, all) finite sets of cardinalities $\card {U_1} = n_1, \dots, \card {U_d} = n_d$.
%
%
We refer to~\cite[Sec.~1]{PivoteauSalvySoria:JCT:2012} for an introduction to species
tailored towards combinatorial enumeration (\cf~also the book~\cite{CombinatorialSpecies:CUP:1998}).
Below we present the main ingredients relevant for our purposes by means of examples.

Species can be built from basic species by applying species operations and solving species equations.
Examples of \emph{basic species} are the \emph{zero} species $\ZERO$ with \egs~$0$,
the \emph{one} species $\ONE$ with \egs~$1$,
the \emph{singleton} species $\X_j$ of sort $j$ with \egs~$x_j$,
the \emph{sets} species $\SETSP$ with \egs~$e^x = 1 + x + x^2 / 2! + \cdots$
(since there is only one set of size $n$ for each $n$),
and the \emph{cycles} species $\CYCSP$ with \egs~$-\log(1 - x)$.
New species can be obtained by the operations of \emph{sum} (disjoint union) $\FF + \GG$,
\emph{combinatorial product} $\FF \cdot \GG$ (generalising the Cauchy product for words),
\emph{derivative} $\partial {X_j} \FF$~(\cf~\cite[Sec.~1.2 and 1.4]{PivoteauSalvySoria:JCT:2012} for formal definitions),
and \emph{cardinality restriction} $\restrict \FF S$ (for a \emph{cardinality constraint} $S \subseteq \N^d$).
Regarding the latter, $\restrict \FF S$ equals $\FF$ on inputs $\tuple{U_1, \dots, U_d}$
satisfying $\tuple{\card {U_1}, \dots, \card {U_d}} \in S$,
and is $\emptyset$ otherwise;
we use the notation $\FF_{\sim n}$ for the constraint
$\card {U_1} + \cdots + \card {U_d} \sim n$,
for $\sim$ a comparison operator such as $=$ or $\geq$.

Another important operation is that of composition of species~\cite[Sec.~1.5]{PivoteauSalvySoria:JCT:2012}.
Consider sorts $\X = \tuple{\X_1, \dots, \X_d}$ and $\Y = \tuple{\Y_1, \dots, \Y_k}$.
Let $\FF$ be a $\tuple{\X, \Y}$-sorted species and let $\GG = \tuple{\GG_1, \dots, \GG_k}$ be a $k$-tuple of $\X$-sorted species.
For a set of indices $I \subseteq \set{1, \dots, k}$, we write $\Y_I$ for the tuple of those $\Y_i$'s \st~$i \in I$.
We say that $\FF$ is \emph{polynomial} \wrt~$\Y_I$ if $\EGS \FF$ is polynomial \wrt~$y_I$,
and similarly for \emph{locally polynomial}.
We say that $\FF, \GG$ are \emph{$\Y$-composable} if $\FF$ is locally polynomial \wrt~$\Y_I$,
where $I$ is the set of indices $i$ \st~$\GG_i[\emptyset] \neq \emptyset$. 
The notion of \emph{strongly $\Y$-composable} is obtained by replacing ``locally polynomial'' with ``polynomial''.
For two $\Y$-composable species $\FF, \GG$ their \emph{composition} $\FF \compose \Y \GG$
is a well-defined $\X$-sorted species.
Informally, it is obtained by replacing each $\Y_i$ in $\FF$ by $\GG_i$.

We will not need the formal definitions of these operations,
but we will use the fact that each of these has a corresponding operation on power series~\cite[Ch.~1]{CombinatorialSpecies:CUP:1998}:
$\EGS{\FF + \GG} = \EGS\FF + \EGS\GG$,
$\EGS{\FF \cdot \GG} = \EGS\FF \cdot \EGS\GG$,
$\EGS{\partial {X_j} \FF} = \partial {x_j} \EGS\FF$,
$\EGS{\FF \compose \Y \GG} = \EGS\FF \compose y \EGS\GG$,
and $\EGS{\restrict \FF S} = \restrict {\EGS\FF} S$.
For instance, $\SET \X_{\geq 1}$ is the species of nonempty sets, with \egs~$e^x - 1$;
$\SET \X \cdot \SET \X$ is the species of \emph{subsets} with \egs~$e^x \cdot e^x = \sum_{n\in\N} 2^n \cdot x^n/n!$
since subsets correspond to partitions of a set into two parts and there are $2^n$ ways to do this for a set of size $n$;
$\X \cdot \X$ is the species of pairs with \egs~$2! \cdot x^2 / 2!$
since there are two ways to organise a set of size $2$ into a pair;
$\SEQ \X = 1 + \X + \X \cdot \X + \cdots$ is the species of lists
with \egs~$(1-x)^{-1} = 1 + x + x^2 + \cdots$ since there are $n!$ ways to organise a set of size $n$ into a tuple of $n$ elements;
$\SET \Y \compose \Y \SET \X_{\geq 1}$ is the species of set partitions with \egs~$e^{e^x - 1}$
since a set partition is a collection of nonempty sets which are pairwise disjoint and whose union is the whole set.


%

Finally, species can be defined as unique solutions of systems of species equations.
E.g., the species of sequences $\SEQ \X$ is the unique species satisfying $\Y = \ONE + \X \cdot \Y$
since a nonempty sequence decomposes uniquely into a first element together with the sequence of the remaining elements;
\emph{binary trees} is the unique species solution of $\Y = \ONE + \X \cdot \Y^2$;
\emph{ordered trees} is the unique species solution of $\Y = \ONE + \X \cdot \SEQ \Y$;
\emph{Cayley trees} (rooted unordered trees)
is the unique species satisfying $\Y = \X \cdot \SET \Y$
since a Cayley tree uniquely decomposes into a root together with a set of Cayley subtrees.
%
\begin{wrapfigure}{r}{0.62\textwidth}
    \vspace{-1em}
    \!\!\!\!
    \!\!\!\!\begin{align}
        \label{eq:SP graphs}
        \left\{\begin{array}{rclc}
        \Y_1 &=& \X + \Y_2 + \Y_3,
            &\text{(sp graphs)} \\
        \Y_2 &=& \SEQ {\X + \Y_3}_{\geq 2},
            &\text{(series graphs)} \\
        \Y_3 &=& \SET {\X + \Y_2}_{\geq 2}.
            &\text{(parallel graphs)}
        \end{array}\right.
    \end{align}
    \vspace{-2em}
\end{wrapfigure}
For a more elaborate example, the species of \emph{series-parallel graphs} is the unique solution for $\Y_1$
of the system in~\cref{eq:SP graphs} \cite[Sec.~0]{PivoteauSalvySoria:JCT:2012}.
Joyal's \emph{implicit species theorem}~\cite{Joyal:AM:1981}
(\cf~\cite[Theorem 2.1]{PivoteauSalvySoria:JCT:2012}, \cite[Theorem 2 of Sec.~3.2]{CombinatorialSpecies:CUP:1998}), which we now recall,
provides conditions guaranteeing existence and uniqueness of solutions to species equations.
%
%
Let a system of species equations $\Y = \FF(\X, \Y)$ (with $\FF$ a $k$-tuple of species) be \emph{well posed}
if $\FF(\ZERO, \ZERO) = \ZERO$ and the Jacobian matrix $\partial \Y \FF$ (defined as for power series~\cite[Sec.~1.6]{PivoteauSalvySoria:JCT:2012})
is nilpotent at $(\ZERO, \ZERO)$.
A \emph{canonical solution} is a solution $\Y := \GG(\X)$ \st~$\GG(\ZERO) = \ZERO$.
\begin{theorem}[Implicit species theorem~\cite{Joyal:AM:1981}]
    \label{thm:implicit species theorem}
    A well-posed system of species equations
    %
        $\Y = \FF(\X, \Y)$
    %
    admits a unique canonical solution $\Y := \GG(\X)$.
\end{theorem}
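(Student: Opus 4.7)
The plan is to mirror at the species level the formal-power-series contraction argument underlying~\cref{lem:CDF constructible nilpotent}, with well-posedness playing the role of the usual contractivity hypothesis. For existence I would set up a Picard-style iteration $\GG^{(0)} := \ZERO$, $\GG^{(n+1)} := \FF(\X, \GG^{(n)})$. The hypothesis $\FF(\ZERO, \ZERO) = \ZERO$ preserves the invariant $\GG^{(n)}(\ZERO) = \ZERO$, so $\FF$ and $\GG^{(n)}$ are automatically strongly $\Y$-composable at every step and the iteration is well defined. The central technical claim is a \emph{stabilization} property: for every $N \in \N$ there is a threshold $M(N)$ such that for all $n \geq M(N)$ and all inputs $U$ of total cardinality at most $N$, there is a natural bijection $\GG^{(n)}[U] \cong \GG^{(M(N))}[U]$. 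Granting the claim, setting $\GG[U] := \GG^{(M(|U|))}[U]$ yields a species with $\GG(\ZERO) = \ZERO$ that satisfies $\GG = \FF(\X, \GG)$, because one more iteration above the stabilization threshold leaves the values unchanged.

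For uniqueness I would argue by strong induction on the total size of the input $U$. Given two canonical solutions $\GG, \GG'$, unfolding $\GG = \FF(\X, \FF(\X, \dots \FF(\X, \GG) \dots))$ a sufficient number of times exhibits $\GG[U]$ as a finite assembly over $\X$-elements of $U$ together with $\GG$-substructures; the same stabilization mechanism shows that after $n \geq M(|U|)$ iterations every remaining $\GG$-occurrence sits on a strictly smaller support than $U$. The inductive hypothesis then identifies these with the corresponding $\GG'$-structures, and the two canonical solutions coincide naturally on $U$.

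The hard part will be the stabilization claim, which is where nilpotency of $\partial \Y \FF(\ZERO, \ZERO)$ is actually used. At the \egs~level this is already the content of~\cref{lem:CDF constructible nilpotent}; the task is to lift that argument one combinatorial layer up. A structure in $\GG^{(n)}[U]$ corresponds to a tree of depth at most $n$ whose internal nodes are atoms drawn from the Taylor decomposition of $\FF$ at the origin and whose leaves consume elements of $U$; a maximal chain of internal nodes that consumes no $\X$-element corresponds to an iterate of the Jacobian $\partial \Y \FF(\ZERO, \ZERO)$, and the hypothesis that some power of this Jacobian vanishes uniformly bounds the length of such chains. Combined with the bound $|U| \leq N$ on the number of $\X$-consuming nodes, this caps the total depth of any tree built by the iteration by an explicit function of $N$, whence any further iteration leaves the species pointwise unchanged.
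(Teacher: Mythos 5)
The paper does not prove this statement: it is Joyal's implicit species theorem, cited directly from~\cite{Joyal:AM:1981} (with pointers to~\cite{PivoteauSalvySoria:JCT:2012} and~\cite{CombinatorialSpecies:CUP:1998}) and used as a black box, so there is no in-paper proof to compare your attempt against. On its own merits, your sketch follows the standard route in the species literature: the Picard iteration $\GG^{(0)} = \ZERO$, $\GG^{(n+1)} = \FF(\X, \GG^{(n)})$; the invariant $\GG^{(n)}(\ZERO) = \ZERO$ coming from $\FF(\ZERO, \ZERO) = \ZERO$; stabilisation on any fixed finite support because nilpotency of $\partial_\Y \FF(\ZERO, \ZERO)$ bounds the length of ``null chains'' (consecutive unfoldings that pass the entire support to a single $\Y$-argument while consuming no $\X$-elements), and the constraint $\GG^{(n)}(\ZERO) = \ZERO$ forces every $\Y$-argument to consume at least one element, so that between null chains the support strictly decreases; and uniqueness by unfolding sufficiently far and strong induction on $\card U$. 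The one place you would need to be more careful in a full write-up is the phrase ``a maximal chain \ldots\ corresponds to an iterate of the Jacobian'': nodes of $\Y$-degree $\geq 2$ that consume no $\X$-elements can appear along the unfolding, but at most one of their $\Y$-children can retain the full support (again because $\GG^{(n)}(\ZERO) = \ZERO$ rules out nonempty structures on $\emptyset$), so the argument does reduce to Jacobian iterates after this observation is made explicit. You should also record that the stabilising identifications are natural in $U$, so that the limit is again a functor. With those details filled in, the proof is correct and essentially the one appearing in the cited references.
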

\noindent
The implicit species theorem is a direct analogue of the implicit function theorem for power series.
Furthermore, if $\Y = \FF(\X, \Y)$ is a well-posed system of species equations
then $y = \EGS \FF(x, y) $ is a well-posed system of power series equations;
moreover the \egs~of the canonical species solution of the former is the canonical power series solution of the latter.

\noindent
We now have enough ingredients to define a large class of combinatorial species.
%
    \emph{Strongly constructible species} are the smallest class of species
    \begin{inparaenum}[(1)]
        \item containing the basic species $\ZERO, \ONE, \X_j~\text{($j \in \N$)}, \SETSP, \CYCSP$;
        \item closed under sum, product, strong composition, regular cardinality restrictions; and
        \item closed under canonical resolution of well-posed systems $\Y = \FF(\X, \Y)$ with $\FF$ a tuple of strongly constructible species.
    \end{inparaenum}
%
Note that the equation $\Y = \ONE + \X \cdot \Y$ for sequences is not well posed,
nonetheless sequences are strongly constructible:
Nonempty sequences $\SEQ \X_{\geq 1}$ are the unique canonical solution of the well-posed species equation $\ZZ = \X + \X \cdot \ZZ$
and $\SEQ \X = \ONE + \SEQ \X_{\geq 1}$.
Similar manipulations show that all the examples mentioned are strongly constructible.
%
%
%
\begin{remark}
    The class of strongly constructible species is incomparable with the class from~\cite[Definition 7.1]{PivoteauSalvySoria:JCT:2012}.
    On the one hand, \cite{PivoteauSalvySoria:JCT:2012} considers as cardinality restrictions only finite unions of intervals,
    while we allow general regular restrictions, \eg~periodic constraints such as ``even size'';
    moreover, constraints in~\cite{PivoteauSalvySoria:JCT:2012} are applied only to basic species,
    while we allow arbitrary strongly constructible species.
    On the other hand, we consider well-posed systems,
    while~\cite{PivoteauSalvySoria:JCT:2012} considers more general \emph{well-founded systems}.
    Finally, we consider strong composition, while \cite{PivoteauSalvySoria:JCT:2012} considers composition.
    %
\end{remark}
%
%
%
Since \CDF~power series include the the basic species \egs\-
$0$, $1$, $x_j$ ($j \in \N$), $(1-x)^{-1}$, $e^x$, and $-\log(1-x)$,
from the \CDF~closure properties\-
\cref{lem:CDF basic closure properties,lem:CDF constructible nilpotent,lem:CDF closure under regular cardinality restrictions}
and the discussion above, we have:
\begin{theorem}
    \label{thm:strongly constructible species are CDF}
    The \egs~of a strongly constructible species is effectively \CDF.
\end{theorem}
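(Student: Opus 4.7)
The proof is by structural induction on the definition of strongly constructible species, leveraging the one-to-one correspondence between species operations and their \egs~counterparts.

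For the base cases, I would exhibit an explicit \CDF~system for each basic \egs: $\EGS \ZERO = 0$, $\EGS \ONE = 1$, $\EGS {\X_j} = x_j$ are polynomials and thus trivially \CDF; $\EGS \SETSP = e^x$ satisfies $\partial x f = f$ with $f(0) = 1$; and $\EGS \CYCSP = -\log(1-x)$ can be captured by the auxiliary variable $g = (1-x)^{-1}$ with $\partial x g = g^2$, $\partial x f = g$, $f(0) = 0$, $g(0) = 1$. For the inductive step under the algebraic operations I would use the identities $\EGS{\FF + \GG} = \EGS\FF + \EGS\GG$, $\EGS{\FF \cdot \GG} = \EGS\FF \cdot \EGS\GG$, $\EGS{\FF \compose \Y \GG} = \EGS\FF \compose y \EGS\GG$, and $\EGS{\restrict \FF S} = \restrict{\EGS\FF} S$; each time the induction hypothesis gives that the operand \egs~are \CDF, and then \cref{lem:CDF basic closure properties} handles sum, product, and strong composition, while \cref{lem:CDF closure under regular cardinality restrictions} handles regular cardinality restrictions.

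For strong composition I would verify that strong $\Y$-composability at the species level translates directly to strong $y$-composability of the corresponding \egs: by definition, the species $\FF$ is polynomial in $\Y_I$ iff $\EGS\FF$ is polynomial in $y_I$, and the set of indices $i$ \st~$\GG_i[\emptyset] \neq \emptyset$ coincides with the set of indices \st~$\EGS{\GG_i}(0) \neq 0$; hence the assumption for applying \cref{lem:CDF basic closure properties}(3) is satisfied. For the canonical resolution of a well-posed system $\Y = \FF(\X, \Y)$, I would invoke \cref{thm:implicit species theorem} to obtain the unique canonical species solution $\Y := \GG(\X)$, and then observe that the induced power series system $y = \EGS\FF(x, y)$ is also well-posed: $\EGS\FF(0, 0) = 0$ since $\FF(\ZERO, \ZERO) = \ZERO$, and the Jacobian $\partial y \EGS\FF(0, 0)$ is nilpotent since it equals the Jacobian of $\FF$ evaluated at $(\ZERO, \ZERO)$. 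By induction, $\EGS\FF$ is \CDF, so by \cref{lem:CDF constructible nilpotent} the unique canonical power series solution is \CDF; uniqueness forces this solution to coincide with $\EGS\GG$.

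The main subtlety, which I would treat carefully, is effectivity: the statement requires that a \CDF~system for the \egs~can actually be computed from the syntactic presentation of the species. Fortunately, each closure lemma invoked supplies an effective procedure, so a bottom-up traversal of the species expression yields an algorithm producing the \CDF~system. The only nonroutine step is the resolution case, where one must check well-posedness of $y = \EGS\FF(x, y)$ at the level of power series — this reduces to testing nilpotency of the Jacobian at the origin, which is decidable from the \CDF~presentation of $\EGS\FF$ since evaluating the Jacobian at $0$ is just reading off constant terms, which are computable by \cref{lem:CDF table}. No single step is a serious obstacle; the proof is essentially a careful bookkeeping exercise matching species constructors with the already-established \CDF~closure toolkit.
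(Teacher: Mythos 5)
Your proposal is correct and follows essentially the same approach as the paper: structural induction over the inductive definition of strongly constructible species, translating the basic species to their \CDF~\egs~and then invoking the \CDF~closure lemmas (\cref{lem:CDF basic closure properties}, \cref{lem:CDF closure under regular cardinality restrictions}, \cref{lem:CDF constructible nilpotent}) via the \egs~homomorphism identities. The paper compresses this into a one-sentence remark before the theorem, relying on the preceding discussion of the implicit species theorem and the \egs~translation of well-posed systems; your version just spells out the same chain more explicitly, including the (correct, if minor) observation that well-posedness at the species level transfers to the power series level because the nilpotency condition is read off from the constant terms.
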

\begin{remark}
    \emph{Constructible species} are obtained by considering composition instead of strong composition.
    We conjecture that even the \egs~of constructible species are \CDF,
    which would follow by generalising~\cref{lem:CDF basic closure properties}(3) 
    from ``strongly composable'' to ``composable''.
\end{remark}
For instance, the well-posed species equation $\Y = \X \cdot \SET \Y$ for Cayley trees
translates to the well-posed power series equation $y = x \cdot e^y$ for its \egs.
\begin{wrapfigure}{r}{0.46\textwidth}
    \vspace{-2em}
    \begin{align}
        \label{eq:SP graphs - power series equations}
        \left\{
        \begin{array}{l}
            y_1 = x + y_2 + y_3, \\
            y_2 = \frac 1 {1 - (x + y_3)} - 1 - (x + y_3), \\
            y_3 = e^{x + y_2} - 1 - (x + y_2),
        \end{array}
        \right.
    \end{align}
    \vspace{-1.5em}
\end{wrapfigure}

\noindent
The well-posed species equations for series-parallel graphs~\cref{eq:SP graphs}
translate to well-posed power series equations for their \egs~in~\cref{eq:SP graphs - power series equations}.
We conclude this section by deciding multiplicity equivalence of species.
Two $d$-sorted species $\FF, \GG$ are \emph{multiplicity equivalent}
(\emph{equipotent}~\cite{PivoteauSalvySoria:JCT:2012})
if $\FF_n = \GG_n$ for every $n \in \N^d$.
Decidability of multiplicity equivalence of strongly constructible species,
announced in \cref{thm:multiplicity equivalence of constructible species},
follows from \cref{thm:strongly constructible species are CDF,thm:CDF zeroness in TWOEXPTIME}.

\section{Conclusions}
\label{sec:conclusions}

We have presented two related computation models, \WBPP~series and \CDF~power series.
We have provided decision procedures of elementary complexity for their zeroness problems (\cref{thm:CDF zeroness in TWOEXPTIME,thm:main}),
which are based on a novel analysis on the length of chains of polynomial ideals obtained by iterating a finite set of possibly noncommuting derivations
(\cref{thm:chain length bound}).
On the way, we have developed the theory of \WBPP~and \CDF,
showing in particular that the latter arises as the commutative variant of the former.
Finally, we have applied \WBPP~to the multiplicity equivalence of \BPP\-
(\cref{thm:BPP multiplicity equivalence}),
and \CDF~to the multiplicity equivalence of constructible species
(\cref{thm:multiplicity equivalence of constructible species}).
Many directions are left for further work.
Some were already mentioned in the previous sections.
We highlight here some more.

\subparagraph{Invariant ideal.}

Fix a \WBPP~(or \CDF). Consider the \emph{invariant ideal}
of all configurations evaluating to zero
$Z := \setof {\alpha \in \poly \Q N} {\sem \alpha = 0}$.
%
%
Zeroness is just membership in $Z$.
Since $Z$ is a polynomial ideal, it has a finite basis.
The most pressing open problem is whether we can compute one such finite basis,
perhaps leveraging on differential algebra~\cite{Kolchin:1973}.
$Z$ is computable in the special case of~\WFA\-
\cite{HrushovskiOuakninePoulyWorrell:LICS:2018,HrushovskiOuakninePouly:Worrell:JACM:2023},
however for polynomial automata it is not~\cite{MullnerMoosbruggerKovacs:arXiv:2023}.



\subparagraph{Regular support restrictions.}

\BPP~languages are not closed under intersection with regular languages~\cite[proof of Proposition 3.11]{Christensen:PhD:1993},
and thus it is not clear for instance whether we can decide \BPP~multiplicity equivalence within a given regular language.
%
%
We do not know whether \WBPP~series are closed under regular support restriction,
and thus also zeroness of \WBPP~series within a regular language is an open problem.

\subparagraph{\WBPP~with edge multiplicities.}

One can consider a slightly more expressive \BPP~model
where one transition can remove more than one token from the same place~\cite{MayrWeihmann:RP:2013}.
It is conceivable that zeroness stays decidable,
however a new complexity analysis is required
since the corresponding ideal chains may fail to be convex.



\newpage
\bibliographystyle{plainurl}
\bibliography{literature}

\appendix

\section{Additional preliminaries}
\label{app:preliminaries}
\subsection{Complexity bounds}

In this section we develop some basic complexity estimations which will be useful when applied to our \WBPP~and \CDF~zeroness algorithms.
We begin by a rough estimation of the complexity of evaluating polynomials.
\begin{lemma}[Polynomials evaluation bound]
    \label{lem:polynomials evaluation bound}
    Let $p \in \poly \Q k$ be a multivariate polynomial
    of total degree $D := \deg p$ and height $H := \height p$.
    The evaluation of $p$ at some point $x \in \Q^k$ bounded by $B := \inftynorm x$
    produces a value $p(x)$ bounded by
    \begin{align*}
        \inftynorm {p(x)} \leq \binom {D + k} k \cdot H \cdot B^D.
    \end{align*}
\end{lemma}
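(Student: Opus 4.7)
The plan is to bound $\abs{p(x)}$ by the triangle inequality after expanding $p$ as a sum of monomials, using three elementary facts: a monomial count, the coefficient bound given by the height, and a pointwise bound on each monomial value.

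Write $p = \sum_{\alpha} c_\alpha x^\alpha$, where $\alpha = \tuple{\alpha_1, \dots, \alpha_k}$ ranges over exponent vectors in $\N^k$ with $\onenorm \alpha := \alpha_1 + \cdots + \alpha_k \leq D$. The triangle inequality yields
\[
    \inftynorm{p(x)} = \abs{p(x)} \leq \sum_{\onenorm \alpha \leq D} \abs{c_\alpha} \cdot \abs{x^\alpha}.
\]
First I would bound the number of terms in this sum: the number of $\alpha \in \N^k$ with $\onenorm \alpha \leq D$ equals $\binom{D + k}{k}$, by the standard stars-and-bars argument (equivalently, the number of $\tuple{\alpha_0, \alpha_1, \dots, \alpha_k} \in \N^{k+1}$ summing to exactly $D$, after introducing a slack variable $\alpha_0$). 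Next, the coefficient bound $\abs{c_\alpha} \leq H$ is immediate from the definition of height. Finally, for each monomial $\abs{x^\alpha} = \abs{x_1}^{\alpha_1} \cdots \abs{x_k}^{\alpha_k} \leq B^{\onenorm \alpha} \leq B^D$; the last inequality uses $B \geq 1$, which one can assume without loss of generality by replacing $B$ with $\max(1, B)$ (the lemma's bound is monotone in $B$).

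Combining these three estimates gives $\inftynorm{p(x)} \leq \binom{D+k}{k} \cdot H \cdot B^D$, as required. There is no real obstacle here; the only mild subtlety is the harmless assumption $B \geq 1$ in the monomial bound, which I would handle either by the replacement above or by explicitly stating the hypothesis.
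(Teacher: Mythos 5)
Your argument is correct and takes the same route as the paper's one-line proof: expand $p$ into monomials, apply the triangle inequality, and bound the number of terms by $\binom{D+k}{k}$, the coefficients by $H$, and each monomial value by $B^D$. You have in fact been more careful than the paper: the step $B^{\onenorm\alpha}\leq B^D$ genuinely requires $B\geq 1$, and the lemma as written fails otherwise --- take $p(x)=1+x$ (so $k=D=H=1$) at $x=1/2$: then $p(1/2)=3/2$, which exceeds the claimed bound $\binom{2}{1}\cdot 1\cdot (1/2)=1$. The paper's proof does not remark on this. Your two suggested repairs (add the hypothesis $B\geq 1$, or state the bound with $\max(1,B)^D$) are exactly right, and the gap is harmless downstream since the estimate is only consumed asymptotically in \cref{lem:iterated derivation bounds}.
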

\begin{proof}
    Recall that the number of multivariate monomials in $k$ variables
    of total degree $\leq D$ is $\binom {D + k} k$.
    For $x \in \Q^k$ we have
    \begin{align*}
        \inftynorm {p(x)}
            &= \inftynorm {\sum_{\onenorm n \leq D} p_n \cdot x^n}
            \leq \binom {D + k} k \cdot H \cdot B^D. \qedhere
    \end{align*}
\end{proof}

The next lemma shows that degrees of polynomials obtained by iteratively applying a derivation of the polynomial ring
(\emph{Lie derivative})
grow at most linearly in the number of iterations.
The \emph{degree} and \emph{height} of a derivation $L : \poly \Q k \to \poly \Q k$,
denoted $\deg L$, resp., $\height L$,
are the maximal degree, resp., heights of $L x_1, \dots, L x_k \in \poly \Q k$.

\begin{restatable}{lemma}{SmallDegreePropertyDerivation}
    \label{lem:iterated derivation - complexity bounds}
    Consider a derivation $L : \poly \Q k \to \poly \Q k$ and a polynomial $\alpha \in \poly \Q k$.
    %
    For every $n \in \N$,
    \begin{enumerate}
        \item $L^n\alpha$ has degree $\deg {L^n\alpha} \leq \max(\deg \alpha + n \cdot (\deg L - 1), 0)$, 
        \item $L^n\alpha$ has height
        \begin{align*}
            \height {L^n\alpha}
            &\leq (\deg \alpha + (n-1) (\deg L-1) +k)^{n(k+1)} \cdot \height \alpha \cdot \height L^n.
        \end{align*}
    \end{enumerate}    
\end{restatable}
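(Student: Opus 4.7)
The plan is to prove both bounds simultaneously by induction on $n$, relying on the fundamental identity that any derivation $L$ of $\poly \Q k$ is determined by its action on the generators via the chain rule:
\begin{align*}
    L p \;=\; \sum_{j=1}^k \partial_{x_j} p \cdot L x_j, \qquad \text{for every } p \in \poly \Q k.
\end{align*}
This follows from the Leibniz rule and linearity, and it is the workhorse of the whole argument. Throughout, I will write $d_n := \deg(L^n \alpha)$ and $H_n := \height(L^n \alpha)$ and aim to produce one-step recurrences that can be iterated.

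For part (1), I would do induction on $n$. The base case $n=0$ is immediate. For the inductive step, from the chain rule identity each summand has degree at most $(d_n - 1) + \deg L$ (assuming $d_n \geq 1$; otherwise $L p = 0$ and there is nothing to prove), hence $d_{n+1} \leq d_n + (\deg L - 1)$. Iterating gives $d_n \leq \deg \alpha + n(\deg L - 1)$, with the $\max(\cdot, 0)$ accounting for the constant case, which establishes (1).

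For part (2), again by induction on $n$. The base case is immediate. For the inductive step I will combine two elementary height estimates: (a) for any polynomial $p$, $\height(\partial_{x_j} p) \leq \deg(p) \cdot \height(p)$, since partial differentiation multiplies each coefficient by at most the corresponding exponent; and (b) for multivariate polynomials $f,g \in \poly \Q k$, $\height(fg) \leq \binom{\deg(fg)+k}{k} \cdot \height(f) \cdot \height(g) \leq (\deg(fg)+k)^k \cdot \height(f) \cdot \height(g)$, by counting the number of monomial pairs that contribute to a given coefficient of $fg$. Applying (a) and (b) summand-wise in the chain rule identity, together with the degree bound of part (1), yields a one-step recurrence of the schematic form
\begin{align*}
    H_{n+1} \;\leq\; (d_n + \deg L + k)^{k+1} \cdot H_n \cdot \height L,
\end{align*}
where the extra factor above the $(d_n+k)^k$ from (b) absorbs both the differentiation multiplier from (a) and the summation over the $k$ variables. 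Iterating this recurrence $n$ times and upper-bounding each factor $(d_i + \deg L + k)$ by the largest one encountered along the chain produces the claimed telescoped bound with base $\deg \alpha + (n-1)(\deg L - 1) + k$ raised to the power $n(k+1)$.

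The main obstacle is the bookkeeping in part (2): the multivariate height bound for a product carries the combinatorial factor $\binom{\deg+k}{k}$, so one has to be careful that after $n$ iterations the exponent on the degree factor is exactly $n(k+1)$ and not worse; this requires absorbing the $k$ summands from the chain rule and the degree-multiplying factor coming from $\partial_{x_j}$ into the single combinatorial factor without letting it blow up. The degree bound from part (1) is essential here because it lets us replace every intermediate degree $d_i$ by the uniform upper bound $\deg \alpha + (n-1)(\deg L - 1)$ before collecting the factors, at which point the product telescopes cleanly.
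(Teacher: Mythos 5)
Your approach is genuinely different from the paper's: you expand $L$ via the chain rule $Lp = \sum_{j=1}^k \partial_{x_j} p \cdot L x_j$ and then invoke generic multivariate height estimates for $\partial_{x_j}$ and for products, whereas the paper expands $\alpha$ monomial-by-monomial and first proves the dedicated bound $\height{L y^m} \le \onenorm m \cdot \height L$ for a \emph{monic monomial} $y^m$, by a direct Leibniz-rule induction on the degree. That device is what makes the constants work: multiplying by a single monomial never changes heights, so inside each $L y^m$ the paper pays only a \emph{linear} factor $\deg\alpha$ and never the multinomial factor $\binom{\cdot + k}{k}$; the combinatorial factor appears exactly once, when summing over the monomials of $\alpha$.

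Part (1) of your argument is correct. Part (2), however, has a real gap in the claimed one-step recurrence $H_{n+1} \le (d_n + \deg L + k)^{k+1} \cdot H_n \cdot \height L$. What your estimates (a) and (b) literally yield, after summing over the $k$ terms of the chain-rule identity, is $H_{n+1} \le k \cdot (d_{n+1}+k)^k \cdot d_n \cdot H_n \cdot \height L$. For the single additional power to absorb both the differentiation multiplier $d_n$ \emph{and} the summation factor $k$ you would need $k \cdot d_n \le d_n + \deg L + k$, i.e. $(k-1)d_n \le \deg L + k$, which fails for $k \ge 2$ as soon as $d_n$ grows (and $d_n$ grows linearly with $n$ by part (1)). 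Iterating the correct recurrence leaves an extraneous factor $k^n$ in the final bound (and a slightly larger base, since you compare against $d_{n+1}+k$ rather than $d_n+k$), so the exact numerical bound of the statement does not follow from this route. The extra $k^n$ is harmless for the downstream soft bound $\height{L^n\alpha} \le (nD+k)^{\bigO{nk}} \cdot H^{\bigO n}$, but the lemma as stated is not established; to close the gap you would either switch to the monomial-wise argument or relax the statement to tolerate the extra $k^n$.
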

%
%
\noindent
This should be contrasted with iteratively applying a \emph{homomorphism} of the polynomial ring,
where the degree can grow exponentially: For instance, let $h(x) := x^2$ and extend it homomorphically to all univariate polynomials.
We have $h^n(x) = ((x^2)^2 \cdots)^2 = x^{2^n}$ for every $n \in \N$.
This relative tameness of derivations will be a contributing factor in the elementary bound on convex ideal chains from~\cref{thm:chain length bound}.
\begin{proof}
    Write $D := \deg L$, $E := \deg \alpha$, $H := \height L$, and $I := \height \alpha$.
    %
    We first prove the claim for $n = 1$.
    We begin from the degree bound (1).
    We proceed by induction on the structure of $\alpha$.
    In one base case, $\alpha = 1$ of degree $E = 0$, and we have $\Delta_a 1 = 0$
    which has degree $0 \leq \max(D - 1, 0)$.
    In the other base case, $\alpha = X_i$ of degree $E = 1$,
    and we have $\Delta_a X_i$ by assumption of degree $\leq D \leq 1 + D - 1$.
    For the first inductive case,
    write $\alpha = \beta + \gamma$ of degree $E$.
    Thus $\beta, \gamma$ are of degree $\leq E$ and we have
    \begin{align*}
        \Delta_a \alpha &=
            \underbrace {\Delta_a \beta}_{\leq \max(E + D - 1, 0)} \ +\ \underbrace {\Delta_a \gamma}_{\leq \max(E + D - 1, 0)},
    \end{align*}
    from which we conclude that $\Delta_a \alpha$ is also of degree $\leq \max(E + D - 1, 0)$.
    For the second inductive case,
    write $\alpha = \beta \cdot \gamma$ of total degree $E = F + G$,
    where $\beta$ is of total degree $F \geq 1$ and $\gamma$ of total degree $G \geq 1$.
    We have
    \begin{align*}
        \Delta_a \alpha &=
            \underbrace {\Delta_a \beta}_{\leq \max(F + D - 1, 0) = F + D - 1} \cdot \underbrace \gamma_{G}\ +\ \underbrace \beta_{F} \cdot \underbrace {\Delta_a \gamma}_{\leq \max(G + D - 1, 0) = G + D - 1}.
    \end{align*}
    It follows that the degree of $\Delta_a \alpha$ is at most
    \begin{align*}
        \max((F + D - 1) + G, F + (G + D - 1)) =  E + D - 1 \geq 0,
    \end{align*}
    as required.

    We now come to the height bound (2).
    Let $\height L \leq H$ and $\deg \alpha \leq E$.
    We first address the case of a single monic monomial.
    Let $y = \tuple{y_1, \dots, y_k}$ be a $k$-tuple of variables.
    \begin{claim*}
        If $y^m$ is a monic monomial ($m \in \N^k$) of total degree $\onenorm m \leq E$,
        then $\height {L y^m} \leq E \cdot H$.
    \end{claim*}
    \begin{proof}
        This is readily proved by induction.
        The base cases $E = 0$ and $E = 1$ are clear.
        For larger degrees, we have
        \begin{align*}
            \height {L(y_i \cdot y^m)}
                &= \height {L y_i \cdot y^m + y_i \cdot L y^m} \leq \\
                &\leq \height {L y_i \cdot y^m} + \height {y_i \cdot L y^m} \leq \\
                &\leq \height {L y_i} + \height {L y^m} \leq \\
                &\leq H + E \cdot H = (E+1) \cdot H. \qedhere
        \end{align*}
    \end{proof}
    
    We now consider the case of a single application $n = 1$ of $L$.
    Consider a polynomial $\alpha = \sum_{m \in S} c_m \cdot y^m$ of total degree $\deg \alpha$,
    where $S \subseteq \N^k$ is its support.
    For every $m \in S$ we have $\onenorm m \leq \deg \alpha$.
    Recall also that the number of $k$-variate monomials of total degree $D$
    is ${D + k \choose k}$, and that this quantity is monotonic in $D$.
    We can thus write
    \begin{align}
        \nonumber
        \height {L \alpha}
        &\leq \sum_{m \in S} \abs {c_m} \cdot \height {L y^m}
        \leq \binom {\deg \alpha + k} k \cdot \height \alpha \cdot \deg \alpha \cdot \height L \leq \\
        \label{eq:height bound n = 1}
        &\leq (\deg \alpha + k)^{k+1} \cdot \height \alpha \cdot \height L.
    \end{align}

    We now come to the general case $n \in \N$.
    %
    %
    Let $\deg \alpha \leq E$, $\deg L \leq D$, $\height \alpha \leq I$, and $\height L \leq H$.
    We prove the following height bound,
    for all $n \in \N$ and $\alpha \in \poly \Q k$ satisfying the prescribed bounds:
    \begin{align*}
        \height {L^n \alpha} \leq (E + (n-1) (D-1) +k)^{n(k+1)} \cdot I \cdot H^n.
    \end{align*}
    The base case $n = 0$ is clear since $L^0 \alpha = \alpha$
    and the right-hand side is just $I$.
    Inductively, we have
    \begin{align*}
        &\height {L^{n+1} \alpha} = \height {L (L^n \alpha)} \leq
            \quad \text{(by~\cref{eq:height bound n = 1})} \\
        &\leq (\deg {L^n \alpha} + k)^{k+1} \cdot \height {L^n \alpha} \cdot H \leq
            \quad \text{(by~(1))} \\
        &\leq (E + n (D - 1) + k)^{k+1} \height {L^n \alpha} \cdot H \leq
            \quad \text{(by~ind.)} \\
        &\leq (E + n (D - 1) + k)^{k+1} ((E + (n-1) (D-1) +k)^{n(k+1)} \cdot I \cdot H^n) H \leq \\
        &\leq (E + n (D - 1) + k)^{(n+1)(k+1)} \cdot I \cdot H^{n+1}. \qedhere
    \end{align*}
\end{proof}

The following bounds on iterated application of derivations and their evaluation
will be useful in complexity considerations.
It is an immediate consequence of \cref{lem:iterated derivation - complexity bounds,lem:polynomials evaluation bound}.

\begin{restatable}{lemma}{iteratedDerivationBounds}
    \label{lem:iterated derivation bounds}
    Consider a derivation $L : \poly \Q k \to \poly \Q k$ and a polynomial $\alpha \in \poly \Q k$
    both of degree bounded by $D$ and height bounded by $H$.
    For every $n \in \N$ and tuple of rationals $x \in \Q^k$
    of height also bounded by $\inftynorm x \leq H$,
    \begin{align}
        \label{eq:simplified iterated derivation degree bound}
        &\deg {L^n\alpha}
            \leq \bigO {n \cdot D}, \\
        \label{eq:simplified iterated derivation height bound}
        &\height {L^n\alpha}
            \leq (n \cdot D + k)^{\bigO{n \cdot k}} \cdot H^{\bigO n}, \text{ and} \\
        \label{eq:bound on evaluationing iterated derivations}
        &\abs {(L^n \alpha)(x)} \leq (n \cdot D + k)^{\bigO{n \cdot k}} \cdot H^{\bigO {n \cdot D}}.
    \end{align}
\end{restatable}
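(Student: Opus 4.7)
The plan is to derive all three bounds as straightforward consequences of \cref{lem:iterated derivation - complexity bounds} (for the first two) combined with \cref{lem:polynomials evaluation bound} (for the third), by simply plugging in the hypothesis $\deg L, \deg \alpha \leq D$ and $\height L, \height \alpha \leq H$ and absorbing constants into the big-$O$ notation. Since the target bounds are stated asymptotically, essentially no real calculation is needed, only careful bookkeeping.

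First I would address \cref{eq:simplified iterated derivation degree bound}. Applying part~(1) of \cref{lem:iterated derivation - complexity bounds} with $\deg \alpha \leq D$ and $\deg L \leq D$, I obtain
\begin{align*}
  \deg{L^n \alpha} \;\leq\; \deg \alpha + n \cdot (\deg L - 1) \;\leq\; D + n(D-1) \;\leq\; (n+1) D \;=\; \bigO{n \cdot D},
\end{align*}
which is the desired degree bound.

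Next I would handle \cref{eq:simplified iterated derivation height bound} by invoking part~(2) of \cref{lem:iterated derivation - complexity bounds}. Substituting the hypotheses gives
\begin{align*}
  \height{L^n \alpha}
  &\leq \bigl(D + (n-1)(D-1) + k\bigr)^{n(k+1)} \cdot H \cdot H^n
  \;\leq\; (nD + k)^{\bigO{n k}} \cdot H^{\bigO n},
\end{align*}
as required (the $(k+1)$ factor in the exponent is absorbed into $\bigO{nk}$, and the leading $H$ into $H^{\bigO n}$).

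Finally, to prove \cref{eq:bound on evaluationing iterated derivations}, I would apply \cref{lem:polynomials evaluation bound} to the polynomial $p := L^n \alpha$ at the point $x$, using the two bounds just established for $D' := \deg p$ and $H' := \height p$, together with $B := \inftynorm x \leq H$. This yields
\begin{align*}
  \abs{(L^n \alpha)(x)}
  &\leq \binom{D' + k}{k} \cdot H' \cdot B^{D'}
  \;\leq\; (nD + k)^{\bigO k} \cdot (nD + k)^{\bigO{n k}} \cdot H^{\bigO n} \cdot H^{\bigO{n D}},
\end{align*}
where I use the standard estimate $\binom{D'+k}{k} \leq (D'+k)^k$ and $D' = \bigO{nD}$. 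Collecting the exponents of $(nD+k)$ into $\bigO{nk}$ and the exponents of $H$ into $\bigO{nD}$ (noting $n \leq nD$ for $D \geq 1$) delivers the claimed bound $(nD+k)^{\bigO{nk}} \cdot H^{\bigO{nD}}$. There is no real obstacle here; the only care needed is to verify that the constants implicit in each big-$O$ combine correctly and that the corner case $D = 0$ (where $L$ acts trivially on variables and the iterates stabilize) does not spoil the asymptotic statement.
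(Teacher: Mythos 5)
Your proposal is correct and follows exactly the same route as the paper: parts (1) and (2) are direct instantiations of the preceding degree/height lemma, and part (3) plugs those into the polynomial evaluation bound with $B = \inftynorm{x} \leq H$. The big-$O$ bookkeeping matches the paper's computation as well.
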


\begin{proof}
    The first two bounds are just instantiations of~\cref{lem:iterated derivation - complexity bounds}.
    For the last one~\cref{eq:bound on evaluationing iterated derivations}, write $B:=\inftynorm x \leq H$.
    We have,
    \begin{align*}
        &\abs {(L^n \alpha)(x)} \leq
            \quad \textrm{(by~\cref{lem:polynomials evaluation bound})} \\
        &\leq (\deg (L^n \alpha)+k)^k \cdot \height {L^n \alpha} \cdot B^{\deg (L^n \alpha)}
            \quad \textrm{(by~\cref{eq:simplified iterated derivation degree bound}, \cref{eq:simplified iterated derivation height bound})}\\
        &\leq (\bigO{n \cdot D}+k)^k \cdot (\bigO{n \cdot D} + k)^{\bigO{n \cdot k}} \cdot H^{\bigO n} \cdot B^{\bigO{n \cdot D}} \\
        &\leq (nD + k)^{\bigO{n \cdot k}} \cdot H^{\bigO {n \cdot D}}. \qedhere
    \end{align*}
\end{proof}

\subsection{Topology}

Series $\series \Q \Sigma$ carry a natural topology, which we now recall.
The \emph{order} of a series $f \in \series \Q \Sigma$, denoted $\ord f$,
is the minimal length of a word in its support if $f \neq 0$, and $\infty$ otherwise.
%
%
For instance, $\ord {3ab-2abc} = 2$.
This gives rise to the non-Archimedean absolute value $\abs f := 2^{-\ord f}$,
with the convention that $\abs 0 = 2^{-\infty} = 0$.
In turn, this yields the ultrametric $d(f, g) := \abs {f - g}$.
Spelling out the definition, $d(f, g) \leq 2^{-n}$ iff $f, g$ agree on all words of length $\leq n$.
The operations of scalar multiplication, sum, shuffle product, and derivation are all continuous in the topology induced by the ultrametric $d$.
Intuitively, this means that, e.g., $(f \shuffle g)_w$ depends only on finitely many values $f_u, g_v$.

\subsection{Derivations and $\sigma$-derivations}

We recall a generalisation of derivations.
Consider a (not necessarily commutative, but unital) ring $R$ with an endomorphism $\sigma : R \to R$.
A \emph{$\sigma$-derivation} of $R$ is a linear function $\delta : R \to R$ satisfying 
\begin{align}
    \tag{general Leibniz rule}
    \label{eq:general Leibniz rule}
    \delta (a \cdot b) = \delta(a) \cdot b + \sigma(a) \cdot \delta (b).
\end{align}
When $\sigma$ is the identity endomorphism, we recover the notion of derivation from~\cref{eq:Leibniz rule}.
A \emph{$\sigma$-differential ring} is a ring equipped with finitely many $\sigma$-derivations
$\tuple {R; \sigma, \delta_1, \dots, \delta_k}$ (not necessarily commuting).
Similarly, one can define $\sigma$-differential algebras.
%

For a subring $S \subseteq R$ and $a_1, \dots, a_k \in R$,
let $\ncpoly S {a_1, \dots, a_k}$ be the smallest subring of $R$
containing $S \cup \set{a_1, \dots, a_k}$.
We call $a_1, \dots, a_k$ the \emph{generators} of $\ncpoly S {a_1, \dots, a_k}$ over $S$.
We call such subrings \emph{finitely generated} over $S$.
Finitely generated algebras are defined similarly.
The next property says that, in order to close a finitely generated subring
under the application a $\sigma$-derivation $\delta$,
it suffices to add to the generators their image under $\sigma, \delta$.
\begin{fact}
    \label{fact:closure lemma}
    Consider a $\sigma$-differential ring $\tuple{R; \sigma, \delta}$
    with subring $S \subseteq R$.
    For every $a_1, \dots, a_k \in R$,
    \begin{align*}
        \delta (\ncpoly S {a_1, \dots, a_k})
            \subseteq
                \ncpoly S {a_1, \dots, a_k, \sigma(a_1), \dots, \sigma(a_k), \delta(a_1), \dots, \delta(a_k)}.
    \end{align*}
\end{fact}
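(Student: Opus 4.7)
The plan is to prove the inclusion by a straightforward structural induction, using the generalised Leibniz rule to reduce the claim to the generators. First I would observe that every element of $\ncpoly S {a_1, \dots, a_k}$ is, by definition of the smallest subring containing $S \cup \set{a_1, \dots, a_k}$, a finite sum of finite products $b_1 b_2 \cdots b_n$ with each $b_i \in S \cup \set{a_1, \dots, a_k}$. Since $\delta$ is linear, it suffices to verify the claim on a single such monomial.

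Next I would derive, by induction on $n$, the iterated form of~\eqref{eq:general Leibniz rule}:
\[
\delta(b_1 b_2 \cdots b_n) \;=\; \sum_{i=1}^{n} \sigma(b_1) \cdots \sigma(b_{i-1}) \cdot \delta(b_i) \cdot b_{i+1} \cdots b_n.
\]
The base case $n=1$ is trivial. For the inductive step, write $b_1 \cdots b_n = (b_1 \cdots b_{n-1}) \cdot b_n$, apply~\eqref{eq:general Leibniz rule}, use the induction hypothesis on the first factor, and use the fact that $\sigma$ is a ring homomorphism (hence $\sigma(b_1 \cdots b_{n-1}) = \sigma(b_1) \cdots \sigma(b_{n-1})$) to push $\sigma$ through the product.

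Once the iterated formula is in hand, it remains to verify that every factor appearing on the right-hand side lies in the target ring $T := \ncpoly S {a_1, \dots, a_k, \sigma(a_1), \dots, \sigma(a_k), \delta(a_1), \dots, \delta(a_k)}$. For $b_j = a_i$, the elements $b_j$, $\sigma(b_j) = \sigma(a_i)$, and $\delta(b_j) = \delta(a_i)$ are all generators of $T$ by construction. For $b_j \in S$, the elements $b_j$, $\sigma(b_j)$, $\delta(b_j)$ all lie in $S \subseteq T$ provided $S$ is itself closed under $\sigma$ and $\delta$, which is the standard tacit assumption on the scalar subring in a $\sigma$-differential setup (and holds trivially in the paper's applications, where $S$ is a field of constants on which $\delta$ vanishes and $\sigma$ is the identity). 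Since $T$ is a subring, any product of its elements lies in $T$, and the sum is then in $T$ by closure under addition.

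The only (minor) obstacle is bookkeeping of the non-commutativity of $R$: one must be careful that in the iterated Leibniz formula the $\sigma$-twisted factors appear to the \emph{left} of the differentiated factor and the untouched factors appear to the \emph{right}, in that specific order. Once this is set up correctly the verification is mechanical, and no further ingredient beyond~\eqref{eq:general Leibniz rule} and the homomorphism property of $\sigma$ is needed.
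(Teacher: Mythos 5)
The paper leaves this Fact without proof, so there is no official argument to compare against; I am judging your proposal on its own merits. Your argument is correct and is the natural one: reduce by linearity to a single noncommutative monomial $b_1\cdots b_n$ with $b_i\in S\cup\{a_1,\dots,a_k\}$, establish the iterated form of the general Leibniz rule
\[
\delta(b_1\cdots b_n)=\sum_{i=1}^n \sigma(b_1)\cdots\sigma(b_{i-1})\,\delta(b_i)\,b_{i+1}\cdots b_n,
\]
and then check membership of each factor in the target ring $T$. The left/right ordering of the $\sigma$-twisted and untouched factors is handled correctly, which is the only place a careless noncommutative computation could go wrong, and the iterated formula itself is verified by a clean induction on $n$ using the homomorphism property of $\sigma$.

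You are also right to flag the one subtlety explicitly: the conclusion $\delta(s)\in T$ and $\sigma(s)\in T$ for $s\in S$ only follows if $S$ is itself closed under $\sigma$ and $\delta$. As stated, the Fact is literally false without that hypothesis (take any $s\in S$ with $\delta(s)\notin T$), so your observation is not a cosmetic remark but a genuine missing hypothesis that the Fact should carry. The standard convention in differential algebra — and the way the paper actually uses the Fact, with $S=\Q$, $\sigma=\mathrm{id}$, and $\delta$ a $\Q$-linear derivation vanishing on $\Q$ — makes this automatic, but it deserves to be made explicit. Your proof is complete and correct under this tacit assumption.
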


Let $x$ be an indeterminate.
The following property says that endomorphisms $\sigma$'s and $\sigma$-derivations $\delta$'s
of (not necessarily commutative) polynomial rings $\ncpoly S x$
are uniquely defined once we fix how they act on $S$ and $x$.
For $\sigma$ this is an immediate consequence of the fact that it is an endomorphism,
and for $\sigma$-derivations this is an immediate consequence of linearity and~\cref{eq:general Leibniz rule}.
%
\begin{fact}
    \label{lem:unique extension}
    Fix a $\sigma$-differential ring $\tuple{S; \sigma, \delta}$
    and consider the noncommutative polynomial ring $R = \ncpoly S x$.
    \begin{enumerate}
        \item For every element $a \in R$,
        there exists a unique way to extend $\sigma$
        to an endomorphism of $R$ \st~$\sigma(x) = a$.
        \item For every endomorphism $\sigma$ of $R$ and element $b \in R$,
        there exists a unique way to extend $\delta$
        to a $\sigma$-derivation of $R$ \st~$\delta(x) = b$.
    \end{enumerate}
\end{fact}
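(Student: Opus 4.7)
The plan is to exploit the fact that $R = \ncpoly S x$ is the free $S$-ring on the single generator $x$: every element of $R$ has a unique representation as a finite sum of monomials of the form $s_0 \cdot x \cdot s_1 \cdot x \cdots x \cdot s_n$ with $s_0, \dots, s_n \in S$ and $n \geq 0$. Both uniqueness statements then follow immediately. For (1), any endomorphism extending $\sigma\restrictsmall{S}$ must satisfy $\sigma(s_0 x s_1 \cdots x s_n) = \sigma(s_0) \cdot \sigma(x) \cdot \sigma(s_1) \cdots \sigma(x) \cdot \sigma(s_n)$, so fixing $\sigma(x) = a$ pins down $\sigma$ on every monomial and, by additivity, on all of $R$. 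For (2), repeated application of the generalized Leibniz rule~\eqref{eq:general Leibniz rule} expresses $\delta$ on any monomial as a sum of terms in which $\delta$ falls on exactly one factor and $\sigma$ is applied to the preceding ones, so $\delta$ is determined once we know $\sigma$, $\delta\restrictsmall{S}$, and $\delta(x) = b$.

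For existence in (1), I would \emph{define} $\sigma$ on a monomial $s_0 x s_1 \cdots x s_n$ by $\sigma(s_0) \cdot a \cdot \sigma(s_1) \cdots a \cdot \sigma(s_n)$ using the $\sigma$ already defined on $S$, extend linearly to $R$, and then verify multiplicativity. The latter reduces to the identity
\[
    \sigma\bigl((s_0 x \cdots x s_n) \cdot (t_0 x \cdots x t_m)\bigr) = \sigma(s_0 x \cdots x s_n) \cdot \sigma(t_0 x \cdots x t_m),
\]
which follows by concatenation of the two words together with the observation that the adjacent factors $\sigma(s_n)$ and $\sigma(t_0)$ combine into $\sigma(s_n \cdot t_0)$ because $\sigma\restrictsmall{S}$ is already a ring homomorphism.

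For existence in (2), given the (already extended) endomorphism $\sigma$, I would define $\delta$ on a word $w = u_1 u_2 \cdots u_n$ with each $u_i \in S \cup \set{x}$ by the position-wise formula
\[
    \delta(u_1 u_2 \cdots u_n)\ :=\ \sum_{i=1}^n \sigma(u_1) \cdots \sigma(u_{i-1}) \cdot \delta(u_i) \cdot u_{i+1} \cdots u_n,
\]
where $\delta(x) := b$ and $\delta(u_i) \in S$ is as given when $u_i \in S$; I would then extend $\Q$-linearly. The main obstacle is the final verification that this extension satisfies~\eqref{eq:general Leibniz rule} on an arbitrary product $u \cdot v$ in $R$, not merely on single generators. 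By linearity it suffices to check the identity on two monomials $u = u_1 \cdots u_n$ and $v = v_1 \cdots v_m$: applying the formula to the concatenated word $u \cdot v$ and splitting the resulting sum at index $i = n$ gives precisely $\delta(u) \cdot v + \sigma(u) \cdot \delta(v)$, where the factor $\sigma(u) = \sigma(u_1) \cdots \sigma(u_n)$ arises from accumulating the $\sigma$-prefixes in the terminal portion of the sum. This is bookkeeping rather than a conceptual difficulty, and it is the standard free-algebra argument adapted to $\sigma$-derivations.
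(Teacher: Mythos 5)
Your proof is correct, and it fills in exactly the detail behind the paper's one-line remark that the fact follows from multiplicativity of $\sigma$ and the general Leibniz rule; the approach is the same in spirit, just made explicit. One imprecision should be patched, though: for a genuinely noncommutative $\ncpoly S x$, a representation as a finite sum of monomials $s_0 x s_1 \cdots x s_n$ is \emph{not} unique. Already $s_0 x s_1 + s_0' x s_1 = (s_0 + s_0') x s_1$ exhibits two distinct monomial expansions of the same element, and more generally the assignment $\tuple{s_0, \dots, s_n} \mapsto s_0 x s_1 \cdots x s_n$ is $\Z$-multilinear with a large kernel. So ``extend linearly to $R$'' does not by itself produce a well-defined map. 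What rescues the construction is that your defining expressions for $\sigma$ and $\delta$ on a monomial are themselves $\Z$-multi-additive in each slot $s_i$ (because $\restrictsmall{\sigma}{S}$ and $\restrictsmall{\delta}{S}$ are additive), and hence descend through the relations of the free product of $S$ with $\Z[x]$ to give well-defined maps on $R$; equivalently, for part (1) the universal property of the free $S$-ring hands you the extension of $\sigma$ with no computation. The uniqueness halves of both claims, by contrast, only need that monomials \emph{span} $R$, which is certainly true, so that part of your argument is fine as written. Your Leibniz verification for part (2) — splitting the sum on the concatenated word at the boundary between $u$ and $v$ and collecting the $\sigma$-prefixes of the second block into $\sigma(u)$ — is exactly the right bookkeeping, and your observation that the cross terms $\sigma(s_n)\sigma(t_0)$ combine to $\sigma(s_n t_0)$ in part (1) is the correct check of multiplicativity.
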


\section{Additional material for~\NoCaseChange{\cref{sec:WBPP}}}
\label{app:WBPP}

In this section we present additional material and full proofs regarding \WBPP~and the series they recognise.

\WBPPexchangeProperty*
\begin{proof}
    First of all, we observe
    \begin{align}
        \derive a \sem \alpha = \sem {\Delta_a \alpha}.
    \end{align}
    Indeed, for every $w \in \Sigma^*$ we have
    \begin{align*}
        (\derive a \sem \alpha)_w
        &= \sem \alpha_{a \cdot w}
        = F\, \Delta_{a \cdot w}\, \alpha
        = F \, \Delta_w\, \Delta_a\, \alpha
        = \sem {\Delta_a \alpha}_w.
    \end{align*}
    Using the observation above, we now prove the lemma.
    We proceed by induction on words.
    The base case $w = \e$ is clear,
    since $\derive \e \sem \alpha = \sem \alpha = \sem {\Delta_\e \alpha}$.
    For the inductive step, we have
    \begin{align*}
        \derive {a \cdot w} \sem \alpha
        &= \derive w \derive a \sem \alpha 
        = \derive w \sem {\Delta_a \alpha}
        = \sem {\Delta_w \Delta_a \alpha}
        = \sem {\Delta_{a \cdot w} \alpha}. \quad \qedhere
    \end{align*}
\end{proof}

\WBPPhomomorphismProperty*
\begin{proof}
    It follows directly from its definition that the semantic function is linear:
    \begin{align*}
        \sem {k \cdot \alpha} = k \cdot \sem \alpha \text{ (with $k \in \Q$) }
        \quad\text{and}\quad
            \sem {\alpha + \beta} = \sem \alpha + \sem \beta.
    \end{align*}
    We now show that it is multiplicative
    \begin{align*}
        \sem {\alpha \cdot \beta} = \sem \alpha \shuffle \sem \beta.
    \end{align*}
    We proceed by induction on the length of words.
    In the base case $w = \e$, we have
    $\sem {\alpha \cdot \beta}_\e = F(\alpha \cdot \beta) = F \alpha \cdot F \beta$ and
    $(\sem \alpha \shuffle \sem \beta)_\e = \sem \alpha_\e \cdot \sem \beta_\e = F \alpha \cdot F \beta$.
    In the inductive step, we have
    \begin{align*}
        \sem {\alpha \cdot \beta}_{a \cdot w}
        &= F\, \Delta_{a \cdot w}\, (\alpha \cdot \beta) = \\
        &= F\, \Delta_w\, \Delta_a\, (\alpha \cdot \beta) = 
	        && \text{(def.~of $\Delta$)}\\
        &= F\, \Delta_w\, (\Delta_a \alpha \cdot \beta + \alpha \cdot \Delta_a \beta) = 
	        && \text{(by~\cref{eq:Leibniz rule})} \\
        &= F\, \Delta_w\, (\Delta_a \alpha \cdot \beta) + F\, \Delta_w\, (\alpha \cdot \Delta_a \beta) = 
	        && \text{(by~linearity)} \\
        &= \sem{\Delta_a \alpha \cdot \beta}_w + \sem {\alpha \cdot \Delta_a \beta}_w =
        		&& \text{(by def.~of the semantics)} \\
        &= (\sem {\Delta_a \alpha} \shuffle \sem \beta)_w + (\sem \alpha \shuffle \sem {\Delta_a \beta})_w = 
            && \text{(by ind.)} \\
        &= (\sem {\Delta_a \alpha} \shuffle \sem \beta + \sem \alpha \shuffle \sem {\Delta_a \beta})_w = 
            && \text{(by~linearity)} \\
        &= (\derive a \sem \alpha \shuffle \sem \beta + \sem \alpha \shuffle \derive a \sem \beta)_w =
            && \text{(by~\cref{lem:exchange})} \\ 
        &= (\derive a (\sem \alpha \shuffle \sem \beta))_w =
				&& \text{(by~\cref{eq:Leibniz rule})} \\
        &= (\sem \alpha \shuffle \sem \beta)_{a \cdot w}.
        		&& \text{(by def.~of $\derive a$)}
        \quad \qedhere
    \end{align*}
\end{proof}

\WBPPclosureProperties*
\begin{proof}
    Let $f = \sem S$, $g = \sem T$ be two \WBPP~series.
    The correctness of the following constructions follows from~\cref{lem:exchange,lem:WBPP semantics is a homomorphism}.
    Consider a fresh initial nonterminal $U$.
    \begin{enumerate}
        \item Scalar product by a constant $c \in \Q$:
        Define $\Delta_a U = c \cdot \Delta_a S$ and $F U = c \cdot F S$.
        Correctness $\sem U = c \cdot \sem S$ holds since
        \begin{align*}
            \sem U_\e &= F U = c \cdot S U = c \cdot \sem S_\e, \text{ and} \\
            \derive a \sem U &= \sem {\Delta_a U} = \sem {c \cdot \Delta_a S} = c \cdot \sem {\Delta_a S} = c \cdot \derive a \sem S, \forall a \in \Sigma.
        \end{align*}

        \item Sum: Define $\Delta_a U = \Delta_a S + \Delta_a T$ and $F U = F S + F T$.
        Correctness $\sem U = \sem S + \sem T$ follows from
        \begin{align*}
            \sem U_\e &= F U = F S + F T = \sem S_\e + \sem T_\e, \text{ and} \\
            \derive a \sem U
                &= \sem {\Delta_a U}
                = \sem {\Delta_a S + \Delta_a T}
                = \sem {\Delta_a S} + \sem {\Delta_a T} = \\
                &= \derive a \sem S + \derive a \sem T, \forall a \in \Sigma.
        \end{align*}

        \item Shuffle product: Define $\Delta_a U = \Delta_a S \cdot T + S \cdot \Delta_a T$
        and $F U = F S \cdot F T$.
        Correctness $\sem U = \sem S \shuffle \sem T$ follows from
        \begin{align*}
            \sem U_\e &= F U = F S \cdot F T = \sem S_\e \cdot \sem T_\e, \text{ and} \\
            \derive a \sem U
                &= \sem {\Delta_a U}
                = \sem {\Delta_a S \cdot T + S \cdot \Delta_a T} = \\
                &= \sem {\Delta_a S} \shuffle \sem T + \sem S \shuffle \sem {\Delta_a T} = \\
                &= \derive a \sem S \shuffle \sem T + \sem S \shuffle \derive a \sem T = \\
                &= \derive a (\sem S \shuffle \sem T), \forall a \in \Sigma.
        \end{align*}

        \item Derivation: Define $\Delta_b U = \Delta_b \Delta_a S$ and $F U = \sem S_a$.
        Correctness $\sem U = \derive a \sem S$ follows from
        \begin{align*}
            \sem U_\e &= F U = \sem S_a = (\derive a \sem S)_\e, \text{ and} \\
            \derive b \sem U
                &= \sem {\Delta_b U} = \sem {\Delta_b \Delta_a S} = \derive b (\derive a \sem S).
        \end{align*}

        \item Shuffle inverse of $f = \sem S$.
        Assume $f_\e \neq 0$, so that its shuffle inverse $g$ \st~$f \shuffle g = 1$ exists.
        By the basic properties of derivations, we have
        \begin{align*}
            \derive a g = \derive a f \shuffle \shufflepower g 2.
        \end{align*}
        This leads us to define $\Delta_a U = \Delta_a S \cdot U^2$ and $F U = \frac 1 {F S}$.
        Correctness follows as in the previous cases. \qedhere
    \end{enumerate}
\end{proof}


We notice that \BPP~languages
coincide with supports of \emph{$\N$-\WBPP} series,
i.e., those series recognised by \WBPP~where all coefficients are natural numbers:
$\Delta_a X \in \poly \N N, F X \in \N$ for every $a \in \Sigma, X \in N$.
\begin{restatable}{lemma}{BPPlanguages}
    \label{lem:BPP languages characterisation}
    The class of \BPP~languages equals the class of supports of $\N$-\WBPP~series.
\end{restatable}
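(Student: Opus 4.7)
The plan is to prove both inclusions by explicit constructions that match the polynomial transition function $\Delta$ of $\N$-\WBPP~with the interleaving semantics of $\merge$ in \BPP~via the Leibniz identity~\cref{eq:WBPP Leibniz}.

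For the inclusion \BPP-languages $\subseteq$ supports of $\N$-\WBPP~series, I would first put the \BPP~in standard form~\cite[Proposition 2.31]{Christensen:PhD:1993}, so each rule has the shape $X \to \sum_i a_i.\alpha_i$ with each $\alpha_i$ a merge of nonterminals (the empty merge read as $\bot$). Viewing $\merge$ as commutative multiplication in the polynomial ring $\poly \N N$ and $\bot$ as the constant $1$, I would set $\Delta_a X := \sum_{i : a_i = a} \alpha_i$, and $F X := 1$ precisely when $X$'s rule is the empty sum---so $X$ is itself \BPP-final---and $F X := 0$ otherwise; this generalises the construction illustrated in~\cref{ex:from BPP to WBPP}. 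The correctness claim, proved by induction on $\length w$ using~\cref{lem:exchange,lem:WBPP semantics is a homomorphism}, is that for every configuration $\alpha$ the value $\sem \alpha_w \in \N$ counts the number of distinct \BPP-runs $\alpha \goesto w \beta$ into a final $\beta$; hence $\sem{X_1}_w > 0$ iff $w \in \lang{X_1}$.

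For the converse, given an $\N$-\WBPP~I decompose each transition polynomial as $\Delta_a X = \sum_i c_{a,i} \cdot \mu_{a,i}$ with $c_{a,i}$ a positive integer and $\mu_{a,i} = Y_{i,1} \cdots Y_{i,k_i}$ a monic monomial (possibly the constant $1$), and define the \BPP~rule $X \to \sum_{a,i}\sum_{j=1}^{c_{a,i}} a.(Y_{i,1} \merge \cdots \merge Y_{i,k_i})$, rendering the constant monomial $1$ as $\bot$. The multiplicity $c_{a,i}$ is realised by repetition of the summand, so that the number of \BPP~$a$-transitions issuing from $X$ matches the integer weight in $\Delta_a X$. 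To handle the output function, I would first normalise the $\N$-\WBPP~into a form satisfying $F X > 0 \Rightarrow \Delta_a X = 0$ for every $a$, by splitting each ``mixed'' nonterminal $X$ into an active copy (original transitions, $F = 0$) and a final copy (no transitions, $F X$ preserved), updating every occurrence of $X$ inside other transition polynomials to the appropriate sum of the two copies so as to preserve the series semantics. The final copies are then \BPP-final (rule $\bot$), and the $+$ and $\merge$ operators of \BPP~mirror polynomial $+$ and $\cdot$ in $\poly \N N$; a symmetric induction on $\length w$ then yields $\lang{X_1} = \support{\sem{X_1}}$.

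The main obstacle is the mismatch between the output weight $F$ in $\N$-\WBPP~and the deadlock-based $\e$-acceptance of \BPP: a \BPP~expression accepts $\e$ iff it has no outgoing transitions, whereas an $\N$-\WBPP~accepts $\e$ with multiplicity $F X_1$ regardless of whether $X_1$ has outgoing transitions. The forward direction sidesteps this automatically because positive $F$ is assigned only to \BPP-final nonterminals; the reverse direction requires the preliminary active/final split above, after which the clause-by-clause translation of transition polynomials into \BPP~rules is routine.
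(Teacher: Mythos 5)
Your proposal is correct and takes essentially the same route as the paper: the forward direction interprets $\merge$ as polynomial multiplication with $\bot \mapsto 1$ (and $F$ supported only on BPP-final configurations), and the reverse direction identifies the same obstacle (the mismatch between $F$-acceptance and deadlock-based $\e$-acceptance) and resolves it by the same active/final nonterminal split. The one stylistic difference is the finishing move of the reverse direction: you propose to re-run a direct simulation induction, whereas the paper observes that after the active/final split the normalised $\N$-\WBPP{} is exactly the image of its induced BPP under the forward translation, so the reverse inclusion is subsumed by the forward argument without a second induction---both finishes are valid.
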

\begin{proof}
    The left-to-right inclusion is obtained by generalising~\cref{ex:from BPP to WBPP}.
    Consider a \BPP~in standard form with productions~\cref{eq:BPP}.
    Write $E_i = a_1 . \alpha_{i, 1} + \cdots + a_n . \alpha_{i, n}$
    where $\alpha_{i, j}$ is a merge of nonterminals.
    We interpret \BPP~expressions $\alpha$ not containing action prefixes $a. \_$
    as polynomials $[\alpha] \in \poly \N N$ (in fact, coefficients are $0, 1$):
    \begin{align*}
        [\bot] &= 1, \\
        [X_i] &= X_i, \\
        [\alpha \merge \beta] &= [\alpha] \cdot [\beta], \\
        [\alpha + \beta] &= [\alpha] + [\beta].
    \end{align*}        
    For instance, $[X \merge Y \merge Y + \bot \merge Y] = XY^2 + 1 \cdot Y$.
    We then define $\Delta_a(X_i) := [E_{i, a}]$,
    where $E_{i, a}$ is the sum of the $\alpha_{i, j}$'s \st~$a_j = a$.
    The output function is $FX_i = 0$, for every nonterminal $X_i$.
    %
    %
    We claim that $\lang {X_i} = \support {\sem {X_i}}$.
    
    Since the \BPP~is in standard form, all reachable states $X_1 \goesto w E$
    are merge of nonterminals $E = X_{i_1} \merge \cdots \merge X_{i_n}$,
    which are thus mapped to monomials $[E]$.
    We assume that all nonterminals $X_i$ are \emph{productive}
    in the sense that there is an action $a \in \Sigma$ \st~$X_i \goesto a \_$.
    Consequently, reachable final states are precisely those of the form $\alpha = \bot \merge \cdots \merge \bot$,
    and thus are mapped to $[\alpha] = 1$ by the translation.
    For a merge of nonterminals $E = X_{i_1} \merge \cdots \merge X_{i_n}$
    and a polynomial $\alpha \in \poly \N N$, write
    \begin{align*}
        E \in \alpha
            \quad\text{if}\quad
                \alpha - [E] \in \poly \N N.
    \end{align*}
    One then proves that ``$\in$'' is a simulation of the \BPP~transition system
    by the (deterministic) \WBPP~transition system:
    If $E \in \alpha$ then
    \begin{enumerate}
        \item if $E$ is final, then $\e \in \support {\sem \alpha}$ (that is, $F \alpha \neq 0$), and
        \item for all $a \in \Sigma$ and $E \goesto a E'$
        we have $\alpha \goesto a \alpha'$ (where $\alpha'$ is uniquely determined by $\alpha, a$)
        and $E' \in \alpha'$.
    \end{enumerate}
    Since $X_i \in [X_i]$,
    induction on words shows that $X_i \goesto w E$ implies $[X_i] \goesto w \alpha$ with $E \in [\alpha]$.
    In particular, if $E$ is final, then $F \alpha \neq 0$ and thus $\sem {X_i}_w \neq 0$, that is $w \in \support {\sem {X_i}}$.
    This shows the inclusion $\lang {X_i} \subseteq \support {\sem {X_i}}$.
    
    For the other inclusion $\support {\sem {X_i}} \subseteq \lang {X_i}$,
    one shows by induction on words $w \in \Sigma^*$ that
    whenever there is a \WBPP~execution $X_i \goesto w \Delta_w X_i$ and $E \in \Delta_w X_i$,
    there is a \BPP~run $X_i \goesto w E$.
    The inclusion follows since $w \in \support {\sem {X_i}}$ means $\sem {X_i}_w = F \Delta_w X_i \neq 0$,
    and thus there is a \WBPP~execution $X_i \goesto w \Delta_w X_i$.
    But since $\Delta_w X_i$ is a polynomial with coefficients in $\N$ and a nonzero constant term,
    $\Delta_w X_i - 1$ is in $\poly \N N$
    and thus there is a final state $E \in \Delta_w X_i$ \st~$X_i \goesto w E$.
    This means $w \in \lang {X_i}$, as required.

    We now prove the right-to-left inclusion.
    Consider a $\N$-\WBPP~$P$ with transition function $\Delta_a X_i \in \poly \N N$.
    The supports $\support {\sem X} = \tuple{\support {\sem {X_1}}, \dots, \support {\sem {X_1}}}$
    are not changed if we replace the nonzero coefficients in $\Delta_a X_i$ by $1$.
    Similarly $F X_i$ can be assumed to be in $\set{0, 1}$.

    \begin{claim*}
        The following assumption is without loss of generality:
        $F X_i \neq 0$ implies $\Delta_a X_i = 0$, for all $X_i \in N$ and $a \in \Sigma$.
    \end{claim*}
    \begin{proof}[Proof of the claim]
        This is proved by splitting $X_i = Y_i + Z_i$ as the sum of two fresh variables $Y_i, Z_i$
        and separating their roles in the transition structure \wrt~output function:
        \begin{align*}
            \Delta_a Y_i := \Delta_a X_i,\ \Delta_a Z_i := 0,\ F Y_i := 0,\ \text{and } F Z_i := F X_i.
        \end{align*}
        Let $Y + Z = \tuple{Y_1 + Z_1, \dots, Y_k + Z_k}$.
        One can then prove by structural induction on configurations
        \begin{align*}
            (\Delta_a \alpha) \compose N (Y+Z)
                &= \Delta_a (\alpha \compose N (Y+Z)),
                    \text{ for all } a \in \Sigma, \alpha \in \poly \Q N.
        \end{align*}
        (The base case $\Delta_a X_i = \Delta_a (Y_i + Z_i)$ holds by definition.)
        From this and induction on words, one then proves
        \begin{align*}
            (\Delta_w \alpha) \compose N (Y+Z)
                &= \Delta_w (\alpha \compose N (Y+Z)),
                \text{ for all } w \in \Sigma^*, \alpha \in \poly \Q N.
        \end{align*}
        Finally, this yields $\sem \alpha = \sem {\alpha \compose N (Y + Z)}$ for every $\alpha \in \poly \Q N$.
    \end{proof}

    We have obtained just a different representation of a \BPP~$Q$ in standard form,
    since for instance $X^2 Y + Y$ corresponds to the \BPP~expression $X \merge X \merge Y + Y$.
    By the claim, $FX_i = 1$ implies that $X_i$ is final in the \BPP,
    and thus can be replaced by $\bot$ without changing any language recognised by \BPP~nonterminals.
    If we now convert the \BPP~$Q$ to an $\N$-\WBPP~$P'$ using the construction from the first part of the proof,
    then in fact we come back to $P' = P$.
    This shows that the language of $Q$ is the same as the support of $P$.
\end{proof}

\subsection{Shuffle-finite series}

In this section we prove that shuffle-finite series coincide with \WBPP~series.
First of all, we rephrase the definition of shuffle-finite series into a more syntactic format.
This will be our working definition for shuffle-finite series from now on.

\begin{restatable}{lemma}{shufflefinite}
    \label{lem:shuffle-finite working characterisation}
    A series $f \in \series \Q \Sigma$ is shuffle finite iff
    there exist series $f = f^{(1)}, \dots, f^{(k)} \in \series \Q \Sigma$,
    and commutative polynomials $p_a^{(i)} \in \poly \Q k$ for every $a \in \Sigma$ and $1 \leq i \leq k$,
    \st, for every $a \in \Sigma$, they satisfy the following \emph{system of shuffle-finite equations}:
    \begin{align}
        \label{eq:shuffle-finite equations}
        \left\{
            \begin{array}{rcl}
                \derive a f^{(1)} &=& p^{(1)}_a \compose {} \tuplesmall{f^{(1)}, \dots, f^{(k)}}, \\
                &\vdots& \\
                \derive a f^{(k)} &=& p^{(k)}_a \compose {} \tuplesmall{f^{(1)}, \dots, f^{(k)}}.
            \end{array}
        \right.
    \end{align}
\end{restatable}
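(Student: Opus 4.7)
The plan is to treat this lemma as an unwinding of the abstract definition of shuffle-finiteness, proving both implications by explicit construction. The key bridge between the two formulations is the observation that closure of a finitely generated polynomial subalgebra under a derivation is determined by the action of the derivation on a set of generators, thanks to linearity and the shuffle Leibniz rule~\cref{eq:derive shuffle}.

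For the easy direction $(\Leftarrow)$, I would suppose that $f^{(1)}, \dots, f^{(k)}$ satisfy the system~\cref{eq:shuffle-finite equations} and take $A \subseteq \series \Q \Sigma$ to be the subalgebra generated by $f^{(1)}, \dots, f^{(k)}$ under addition and shuffle product, i.e.\ $A = \poly \Q{f^{(1)}, \dots, f^{(k)}}$. Then $A$ is finitely generated by construction and contains $f = f^{(1)}$. To show it is differential, I fix $a \in \Sigma$ and use that every element of $A$ is a $\Q$-linear combination of shuffle-monomials in the $f^{(i)}$'s. Since $\derive a$ is linear and satisfies the Leibniz rule with respect to $\shuffle$ by~\cref{eq:derive shuffle}, iterating Leibniz reduces $\derive a$ on any such monomial to an expression in the $f^{(i)}$'s and the values $\derive a f^{(i)} = p_a^{(i)} \compose{} (f^{(1)}, \dots, f^{(k)})$, all of which lie in $A$. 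Hence $\derive a A \subseteq A$, witnessing shuffle-finiteness.

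For the converse $(\Rightarrow)$, I would assume $f$ lies in a finitely generated differential subalgebra $A = \poly \Q{g^{(1)}, \dots, g^{(m)}}$ of the shuffle algebra. Since $A$ is closed under each $\derive a$, for every $a \in \Sigma$ and $1 \le i \le m$ there exists a commutative polynomial $q_a^{(i)} \in \poly \Q m$ such that $\derive a g^{(i)} = q_a^{(i)} \compose{} (g^{(1)}, \dots, g^{(m)})$, and since $f \in A$ there exists $r \in \poly \Q m$ with $f = r \compose{} (g^{(1)}, \dots, g^{(m)})$. Now I would define $f^{(1)} := f$, $f^{(j+1)} := g^{(j)}$ for $1 \le j \le m$, and $k := m + 1$. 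The desired equations for indices $i \ge 2$ are immediate by taking $p_a^{(i)}$ equal to $q_a^{(i-1)}$ (viewed as a polynomial that ignores its first variable). For $i = 1$, I compute $\derive a f$ by applying $\derive a$ to the expression $r \compose{} (g^{(1)}, \dots, g^{(m)})$, repeatedly using linearity and~\cref{eq:derive shuffle}; the result is a $\Q$-polynomial shuffle-expression in the $g^{(j)}$'s and their derivatives $\derive a g^{(j)}$, which after substituting $\derive a g^{(j)} = q_a^{(j)} \compose{} (g^{(1)}, \dots, g^{(m)})$ becomes a polynomial $p_a^{(1)}$ in $f^{(2)}, \dots, f^{(k)}$, as required.

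The proof is essentially bookkeeping, so I do not anticipate a real obstacle. The one point needing care is the compatibility between substituting series into commutative polynomials and the shuffle product: this is legitimate precisely because $\shuffle$ is commutative on $\series \Q \Sigma$, so the shuffle algebra receives a well-defined evaluation homomorphism from any commutative polynomial ring over $\Q$. Once this is noted, both directions reduce to the same underlying fact that a finitely generated subalgebra is $\derive a$-stable iff its generators are.
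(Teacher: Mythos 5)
Your proof is correct and follows essentially the same route as the paper: for the backward implication you take the subalgebra generated by $f^{(1)},\dots,f^{(k)}$ and argue via linearity and the Leibniz rule that it is $\derive a$-stable, and for the forward implication you express $f$ and the derivatives of the generators as polynomial shuffle-expressions in the generators. The paper's forward direction is slightly more terse—it simply observes that $\derive a f \in S$ directly (since $S$ is a differential ring containing $f$) rather than re-deriving $\derive a f$ by expanding $r \compose{} (g^{(1)},\dots,g^{(m)})$ via Leibniz—but both arguments are sound and amount to the same bookkeeping.
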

\noindent
In the statement of the lemma,
$p^{(i)}_a \compose {} \tuplesmall{f^{(1)}, \dots, f^{(k)}} \in \series \Q \Sigma$
is the series that is obtained from the polynomial $p^{(i)}_a$
by replacing its $h$-th variable with the series $f^{(h)}$,
for every $1 \leq h \leq k$.
When doing so, the product operation in $p^{(i)}_a$
becomes the shuffle operation on series.
For instance,
\begin{align*}
    (y_1 \cdot y_2 - y_3) \compose {} \tuplesmall{f^{(1)}, f^{(2)}, f^{(3)}} = f^{(1)} \shuffle f^{(2)} - f^{(3)}.
\end{align*}
%
\begin{proof}
    For the ``if'' direction, let $f = f^{(1)}, \dots, f^{(k)} \in \series \Q \Sigma$ satisfy \cref{eq:shuffle-finite equations}.
    This witnesses that the commutative ring $$S := \poly \Q {f^{(1)}, \dots, f^{(k)}}$$ generated by these series
    contains $f = f^{(1)} \in S$ and it is closed under derivations.
    The latter claim follows since
    \begin{inparaenum}[1)]
        \item the derivatives of the generators $\derive a f^{(i)}$
        are in $S$ by \cref{eq:shuffle-finite equations}, and thus
        \item the derivative of every element of $S$ is in $S$
        by Leibniz rule~\cref{eq:derive shuffle} and induction on the complexity of expressions;
        \cf~\cref{fact:closure lemma}.
    \end{inparaenum}

    For the ``only if'' direction, assume $f$ is shuffle-finite.
    There are series $f^{(2)}, \dots, f^{(k)}$ generating a differential ring
    \begin{align*}
        S := \poly \Q {f^{(2)}, \dots, f^{(k)}}
    \end{align*}
    containing $f = f^{(1)} \in S$.
    Since $\derive a f^{(i)} \in S$ there is a polynomial $p^{(i)}_a$
    \st~$\derive a f^{(i)} = p^{(i)}_a \circ \tuple{p^{(1)}_a, \dots, p^{(k)}_a}$.
\end{proof}

We are now ready to prove the announced characterisation of \WBPP~series.
\WBPPequalsShuffleFinite*
\begin{proof}
    For the ``if'' direction, consider a \WBPP~$P = \tuple{\Sigma, N, S, F, \Delta}$ with nonterminals $N = \set{X_1, \dots, X_k}$.
    By~\cref{lem:WBPP semantics is a homomorphism},
    $\sem \_$ is a homomorphism from configurations to shuffle series.
    Consider the tuple of series $\sem X = \tuple{\sem {X_1}, \dots, \sem{X_k}}$.
    Since composition $\_ \circ \sem X$ is also such a homomorphism
    and moreover it agrees with $\sem \_$ on the generators,
    $X_i \circ \sem X = \sem {X_i}$, they are in fact the same function:
    \begin{align*}
        \sem \alpha = \alpha \circ \sem X, \quad\text{for all } \alpha \in \poly \Q N.
    \end{align*}
    In particular,
    \begin{align}
        \label{eq:if direction}
        \derive a \sem {X_i} = \sem {\Delta_a X_i} = \Delta_a X_i \circ \sem X,
    \end{align}
    showing that $\sem X$ satisfies a system of shuffle-finite equations~\cref{eq:shuffle-finite equations}.
    
    For the ``only if'' direction,
    let $f \in \series \Q \Sigma$ be a shuffle-finite series.
    By~\cref{lem:shuffle-finite working characterisation}
    there are series $f = f^{(1)}, \dots, f^{(k)}$ and a system~\cref{eq:shuffle-finite equations} of shuffle-finite equations
    defined by polynomials $p_a^{(i)}$'s \st
    \begin{align}
        \label{eq:s-finite}
        \derive a f^{(i)} = p_a^{(i)} \circ \tuple {f^{(1)}, \dots, f^{(k)}},
            \ \forall a \in \Sigma, i \in \set{1, \dots, k}.
    \end{align}
    We build a \WBPP~$P = \tuple{\Sigma, N, X_1, F, \Delta}$
    whose semantics is the solution of~\cref{eq:s-finite}.
    The set of nonterminals is $N = \set{X_1, \dots, X_k}$,
    the initial nonterminal is $X_1$,
    the final weight function $F$ maps $X_i$ to $f^{(i)}_\e$,
    and the transition function satisfies
    \begin{align*}
        \Delta_a(X_i) := p_a^{(i)}(X_1, \dots, X_k).
    \end{align*}
    For convenience, we assume that the polynomials $p_a^{(i)}$'s are in $\poly \Q N$.
    We now prove $f^{(i)} = \sem {X_i}$ for all $i \in \set{1, \dots, k}$.
    Clearly the two series $f^{(i)}, \sem {X_i}$ agree on the empty word $w = \e$,
    since $\sem {X_i}_\e = F\, X_i = f^{(i)}_\e$ by definition of $F$.
    Notice that system~\cref{eq:shuffle-finite equations} has a unique solution,
    once the initial condition $\tuple{f^{(1)}_\e, \dots, f^{(k)}_\e}$ is fixed.
    Consequently, it suffices to show that $\sem {X_i}$ is a solution of this system.
    This is established as in~\cref{eq:if direction}.
\end{proof}


\subsection{Zeroness}

In this section we present supplementary material concerning decidability and complexity
of the zeroness problem for \WBPP.

\subsubsection{Word-zeroness}

We begin by expanding on the complexity of the word-zeroness problem for \WBPP.
First of all, we establish that \WBPP~configurations obtained by reading words of length $n$
have total degree linear in $n$.

\SmallDegreeProperty*
\begin{proof}
    This follows immediately from~\cref{lem:iterated derivation bounds}
    since $\Delta_w$ is the composition of $n$ derivations
    each of which has degree $\leq D$.
\end{proof}

We now show how reachable \WBPP~configurations can be compactly represented with algebraic circuits.
Let us recall that an \emph{algebraic circuit} (or simply, \emph{circuit}) $C$ over $\poly \Q {x_1, \dots, x_m}$
is a directed acyclic graph of a special kind.
Its nodes are commonly called \emph{gates},
its incoming edges are called \emph{input edges} and its outgoing edges are called \emph{output edges}.
Gates with zero output edges are called \emph{output gates}.
Gates have either zero input edges (\emph{nullary gates} or \emph{input gates})
or two input edges (\emph{internal gates}).
Input and internal gates are further refined into the following kinds:
\begin{enumerate}
    \item \emph{input constant gates}, labelled with rational numbers $\Q$;
    \item \emph{input variable gates}, bijectively labelled with $x_1, \dots, x_m$;
    \item \emph{internal addition gates} labelled with ``$+$''; and
    \item \emph{internal multiplication gates} labelled with ``$\cdot$''.
\end{enumerate}
An gate $G$ of an circuit as above
encodes in a natural way a polynomial $\sem G \in \poly \Q {x_1, \dots, x_m}$.
The advantage over an explicit encoding (sum of product of monomials)
is that circuits can be more concise in terms of number of monomials and degree.
For instance, a circuit of size $n$ can encode a polynomial of degree $2^n$ by iterative squaring.

If we have a circuit encoding the \WBPP~transition relation
and a starting configuration $\alpha$,
then we can extend it to a circuit computing $\Delta_a \alpha$
with only a linear blow-up in complexity.
This is the content of the next lemma,
with a slightly more general statement in order to make it inductive.
\begin{lemma}
    \label{lem:one-step circuit}
    Let $C$ be an circuit of size $n$
    containing gates $G_{a, i}$'s computing polynomials
    \begin{align*}
        \sem{G_{a, i}} = \Delta_a X_i \in \poly \Q N,
        \quad\text{for all $a \in \Sigma$, $1 \leq i \leq k$}.
    \end{align*}
    For every input symbol $a \in \Sigma$,
    we can extend $C$ to an circuit $C_a$ of size $\leq 4 \cdot n$
    \st~for every gate $G$ of $C$ the circuit $C_a$ contains a gate $G_a$ computing
    \begin{align*}
        \sem {G_a} = \Delta_a \sem G.
    \end{align*}
    The construction can be done in space logarithmic in $n$.
\end{lemma}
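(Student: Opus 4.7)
The plan is to construct $C_a$ from $C$ by induction on the structure of each gate $G$ of $C$, producing for each $G$ a corresponding gate $G_a$ encoding $\Delta_a \sem{G}$, and to share the existing gates of $C$ whenever a subexpression of $\Delta_a \sem{G}$ already appears in $C$.

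First I would handle the four gate kinds separately, using the fact that $\Delta_a$ is a derivation of $\poly \Q N$ uniquely determined by its action on the variables $X_i$.
If $G$ is a constant gate labelled $c \in \Q$, then $\Delta_a \sem{G} = 0$, so I take $G_a$ to be a fresh input constant gate labelled $0$ (a single such gate can be shared across all constant gates, but even without sharing the count works out).
If $G$ is a variable gate labelled $X_i$, then by hypothesis the gate $G_{a,i}$ already present in $C$ computes $\Delta_a X_i$, so I simply set $G_a := G_{a,i}$ and add no new gates.
If $G$ is an addition gate with children $G_1, G_2$, then by linearity $\Delta_a \sem{G} = \Delta_a \sem{G_1} + \Delta_a \sem{G_2}$, and I introduce a single new addition gate $G_a$ with inputs $(G_1)_a$ and $(G_2)_a$.
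If $G$ is a multiplication gate with children $G_1, G_2$, then Leibniz's rule~\cref{eq:Leibniz rule} yields $\Delta_a \sem{G} = \Delta_a \sem{G_1} \cdot \sem{G_2} + \sem{G_1} \cdot \Delta_a \sem{G_2}$, so I introduce two new multiplication gates (reusing the already existing $G_1, G_2, (G_1)_a, (G_2)_a$) feeding into one new addition gate $G_a$, for a total of three new gates.

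Correctness of the assignment $\sem{G_a} = \Delta_a \sem{G}$ then follows immediately by structural induction on $G$, since each case recovers exactly the definition of $\Delta_a$ on that form of expression.
For the size bound, the construction adds at most three fresh gates per gate of $C$ (zero for variable gates, one for constants, one for additions, three for multiplications), so $C_a$ has at most $n + 3n = 4n$ gates.

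For the logspace bound, I would observe that the construction is entirely local: visiting the gates of $C$ in any fixed order, each new gate of $C_a$ is determined by the label of a single gate of $C$ together with pointers to gates already named (either original gates of $C$, the specified gates $G_{a,i}$, or $G_a$ for subgates $G$ already processed). A suitable encoding uses indices into $C$, which require only $O(\log n)$ bits, and the transformation rules above can be applied in constant space per gate. The main (very mild) obstacle is bookkeeping the correspondence $G \mapsto G_a$ so that later gates can reference $(G_1)_a, (G_2)_a$; this is handled by fixing a canonical numbering scheme that places $G_a$ at a position computable from the position of $G$ in $C$, so no auxiliary table is needed and the whole construction runs in space logarithmic in $n$.
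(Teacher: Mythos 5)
Your proposal is correct and follows essentially the same route as the paper's proof: a case analysis by gate type, where constants yield a zero gate, variables are mapped to the already-present $G_{a,i}$ gates, additions add one gate, and multiplications add three gates via the Leibniz rule, giving the $\leq 4n$ bound and a local (hence logspace) construction.
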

\begin{proof}
    We proceed by a case analysis on gate $G$ of $C$,
    where we assume that all gates of smaller height (= maximum distance to an input gate) have already been added.
    If $G$ is a constant gate, then we add the constant gate $G_a = 0$.
    If $G = X_i$ is a variable gate,
    then we just set $G_a := G_{a, i}$ (which is already in $C$ by assumption), without adding new gates.
    If $G = G^{(0)} + G^{(1)}$ is an addition gate
    then we add a gate $G_a$ defined as
    \begin{align*}
        G_a := G^{(0)}_a + G^{(1)}_a.
    \end{align*}
    (Since $G^{(0)}, G^{(1)}$ have smaller heights than $G$ in $C$,
    the gates $G^{(0)}_a, G^{(1)}_a$ have already been added to $C_a$.)
    If $G = G^{(0)} \cdot G^{(1)}$ is a multiplication gate,
    then we add the following three new gates:
    \begin{align*}
        D^{(0)} := G^{(0)}_a \cdot G^{(1)}, \ 
        D^{(1)} := G^{(0)} \cdot G^{(1)}_a, \text{ and }
        G_a := D^{(0)} + D^{(1)}.
    \end{align*}
    (Again, $G^{(0)}_a, G^{(1)}_a$ were already in $C_a$.)
    In each case we added $\leq 3$ gates,
    and thus the new circuit has size $\leq 4 \cdot n$.
    The logarithmic space complexity for performing this construction
    follows from the fact that the construction of a new gate
    depends only on a constant number of original or already constructed gates.
\end{proof}

In the next lemma we iteratively apply~\cref{lem:one-step circuit}
to achieve closure under $\Delta_w$ when reading an arbitrary word $w \in \Sigma^*$.
The complexity is exponential in the length of $w$.
\WBPPmultistepCircuit*
\begin{proof}
    Let $C$ be an circuit of size $n$
    containing gates $G_{a, i}$'s computing polynomials $\Delta_a X_i \in \poly \Q N$ for all $1 \leq i \leq k$,
    and a gate $D$ computing $\alpha$.
    An inductive application of~\cref{lem:one-step circuit}
    immediately shows the following claim.
    \begin{claim*}
        For every word $w \in \Sigma^*$
        we can construct a circuit $C_w$ of size $\leq 4^{\length w} \cdot n$
        \st~for every gate $G$ of $C$ there is a gate $G_w$ of $C_w$ computing
        \begin{align*}
            \sem {G_w} = \Delta_w \sem G.
        \end{align*}
    \end{claim*}
    By the claim, $C_w$ contains in particular a gate $D_w$ computing $\Delta_w \alpha$.
    Step $i+1$ of the construction can be done in space logarithmic
    in the size $\leq 4^i \cdot n$ of the circuit obtained in the previous step,
    and thus altogether polynomial in $\length w$ and logarithmic in $n$.
\end{proof}

\subsubsection{Decidability}

In this section we provide more details on the decidability of the \WBPP~zeroness problem.
This amounts to prove some important structural properties of the ideal chain~\cref{eq:WBPP ideal chain}.

\LemmaDeltaOfIdeal*
\begin{proof}
    For the first point, let $p \in \Delta_a I_n$.
    Then $p = \Delta_a q$ for some $q \in I_n$ of the form
    \begin{align*}
        q = \beta_1 \cdot \Delta_{w_1} \alpha + \cdots + \beta_m \cdot \Delta_{w_m} \alpha,
        \quad\text{with } w_1, \dots, w_m \in \Sigma^{\leq n}.
    \end{align*}
    Applying $\Delta_a$ to both sides of the equation above
    and using the fact that by~\cref{eq:Delta ext} $\Delta_a(\alpha \cdot \beta)$
    is a polynomial function of $\alpha$, $\beta$, $\Delta_a \alpha$, and $\Delta_a \beta$,
    we easily deduce $\Delta_a q \in I_{n+1}$.

    Let us now consider the second point.
    The ``$\supseteq$'' inclusion follows immediately from the definitions and the previous point.
    For the reverse ``$\subseteq$'' inclusion, let $p \in I_{n+1}$.
    We can write
    \begin{align*}
        p = \beta_1 \cdot \Delta_{w_1} \alpha + \cdots + \beta_m \cdot \Delta_{w_m} \alpha,
        \quad\text{with } w_1, \dots, w_m \in \Sigma^{\leq n + 1}.
    \end{align*}
    It is now a matter of partitioning the $w_i$'s into two classes:
    In the first class we put those words in $\Sigma^{\leq n}$
    and in the second one those in $\Sigma^{n+1}$.
    We re-index the $w_i$'s in order to have the words from the first class appear first.
    Let $k$ be the size of the first class.
    We can write
    \begin{align*}
        p = \underbrace{\beta_1 \cdot \Delta_{w_1} \alpha + \cdots + \beta_k \cdot \Delta_{w_k} \alpha}_{\in I_n\ \text{(a)}}
        + \underbrace{\beta_{k+1} \cdot \Delta_{w_{k+1}} \alpha + \cdots + \beta_m \cdot \Delta_{w_m} \alpha}_{\in \ideal{\bigcup_{a \in \Sigma} \Delta_a I_n}\ \text{(b)}}.
    \end{align*}
    Point (a) follows directly from the definition of $I_n$.
    For point (b), notice that $\Delta_u \alpha$ for $u$ of length $n+1$ can be written as
    $\Delta_{v \cdot a} \alpha = \Delta_a \Delta_v \alpha$ for $v$ of length $n$,
    and thus $\Delta_u \alpha \in \Delta_a I_n$.

    The third point is an immediate consequence of the second point.
\end{proof}

\subsubsection{Complexity}
\label{sec:ideal chain complexity}

In this section we provide the details necessary to establish
the elementary upper bound on the complexity of the \WBPP~zeroness problem, claimed in~\cref{thm:main}.
In fact, all we need to do is to prove the ideal chain length bound~\cref{thm:chain length bound},
which is the main technical contribution of the paper.
In turn, it suffices to prove~\cref{claim:NY:Lemma 8+9}.
This is done by generalising the argument of Novikov and Yakovenko~\cite{NovikovYakovenko:1999}
from the case of one derivation to the case of a finite number of derivations, possibly noncommuting.
%
%
We recall two easy properties of colon ideals:
\begin{enumerate}
    \item If $I$ is an ideal, then $I \subseteq I : J$.
    \item The colon operator is monotonic in its first argument and anti-monotonic in its second argument:
    $I \subseteq I'$ and $J' \subseteq J$ implies $I \colon J \subseteq I' \colon J'$.
\end{enumerate}
We adapt the proof from~\cite[Sec.~4]{NovikovYakovenko:1999} from the setting of a single derivation
to the more general situation of multiple, possible noncommuting, derivations $\Delta_a$, $a \in \Sigma$.
%
%
%
We need to recall some definitions.
An ideal $I$ is \emph{prime} if $p \cdot q \in I$ and $p \not\in I$ implies $q \in I$.
The \emph{radical} $\rad I$ of an ideal $I$
is the set of elements $r$ \st~$r^m \in I$ for some $m \in \N$.
Note that $\rad I$ is itself an ideal.
A \emph{radical ideal} is an ideal $I$ \st~$I = \rad I$;
in other words, $p^m \in I$ implies $p \in I$.
%
%
An ideal $I$ is \emph{primary} if $p \cdot q \in I$ and $p \not\in I$ implies $q \in \rad I$.
The radical $\rad I$ of a primary ideal $I$ is prime
and it is called the \emph{associated prime} of $I$;
in the same situation, we also say that $I$ is \emph{$\rad I$-primary}.
%
The colon operation preserves primary ideals.
\begin{claim}
    If $I, J$ are ideals with $I$ primary, then $I \colon J$ is also primary.
\end{claim}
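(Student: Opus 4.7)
The plan is a direct verification from the definition of primary ideal. Suppose $p \cdot q \in I \colon J$ and $p \notin I \colon J$; the goal is to show $q \in \rad{I \colon J}$, which establishes that $I \colon J$ is primary.

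The key move is to extract a single witness $g_0 \in J$ with $p \cdot g_0 \notin I$---such a $g_0$ exists by negating the statement ``$p \cdot g \in I$ for all $g \in J$''. Plugging $g_0$ into the relation $p \cdot q \cdot g \in I$ (valid for every $g \in J$ by the assumption $p \cdot q \in I \colon J$) yields the single membership $(p \cdot g_0) \cdot q = p \cdot q \cdot g_0 \in I$ while $p \cdot g_0 \notin I$. Since $I$ is primary, this forces $q \in \rad{I}$, so $q^m \in I$ for some $m \geq 1$.

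To conclude, I would transfer this back to the colon ideal: because $I$ is an ideal and $q^m \in I$, we have $q^m \cdot g \in I$ for every $g \in J$, so $q^m \in I \colon J$, and hence $q \in \rad{I \colon J}$, as required.

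I do not anticipate a real obstacle; the argument is essentially one observation (the choice of $g_0$) plus two routine ideal-theoretic steps, and it does not use any special structure of $\poly \Q N$ beyond commutativity. The only delicate point is the degenerate case $J \subseteq I$, where $I \colon J$ is the whole ring; one either excludes this under the implicit convention that primary ideals are proper, or treats the unit ideal as trivially primary.
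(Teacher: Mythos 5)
Your proof is correct and follows the same route as the paper: extract a witness $g_0 \in J$ (the paper calls it $r$) with $p \cdot g_0 \notin I$, apply the primary property of $I$ to $(p \cdot g_0) \cdot q \in I$ to get $q^m \in I$, and then transfer to $I \colon J$ (the paper uses $I \subseteq I \colon J$ where you verify it directly, but these are the same observation). Your closing remark on the degenerate case $J \subseteq I$ is well taken, though under the paper's stated definition of primary (which does not require properness) the unit ideal is vacuously primary, so no exclusion is needed there.
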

\begin{proof}[Proof of the claim]
    Let $p \cdot q \in I \colon J$ and $p \not\in I \colon J$.
    There is $r \in J$ \st~$p \cdot r \not\in I$.
    But $(p \cdot q) \cdot r = (p \cdot r) \cdot q \in I$ and since $I$ is primary,
    there is $m \in \N$ \st~$q^m \in I \subseteq I \colon J$.
\end{proof}

The \emph{dimension} $\dim I$ of a polynomial ideal $I \subseteq \poly \Q k$
is the dimension of its associated variety
$V(I) = \setof {x \in \C^k} {\forall p \in I. p(x) = 0}$.
Since the operation of taking the variety of an ideal is inclusion-reversing,
ideal inclusion is dimension-reversing:
$I \subseteq J$ implies $\dim I \geq \dim J$.
By Hilbert's Nullstellensatz~\cite[Chapter 4, §1, Theorem 2]{CoxLittleOShea:Ideals:2015},
an ideal $I$ and its radical $\rad I$
have the same associated variety,
and thus the dimension of an ideal $I$ its the same as that of its radical,
$\dim I = \dim {\rad I}$.

A \emph{primary decomposition} of an ideal $I$
is a collection of primary ideals $\set{Q_1, \dots, Q_s}$
\st~$I = Q_1 \cap \cdots \cap Q_s$.
Such a decomposition is \emph{irredundant} if
\begin{enumerate}
    \item no $Q_i$ can be omitted without changing the intersection, and
    \item the associated primes $\rad {Q_i}$'s are pairwise distinct.
\end{enumerate}
By the Lasker-Noether theorem~\cite[\protect{Chapter 4, §8, Theorem 7}]{CoxLittleOShea:Ideals:2015}
every polynomial ideal $I$ has an irredundant primary decomposition.
Let the \emph{leading term} of an irredundant primary decomposition of $I$
be the intersection of all primary components of maximal dimension.
While irredundant primary decompositions of an ideal are not unique in general,
the leading terms of two irredundant primary decompositions of the same ideal $I$ are in fact equal.
Thus the leading term, denoted $\leadingterm I$, depends only on $I$.

Chains of equidimensional ideals induce chains of the respective leading terms.
\begin{claim}[\protect{variant of \cite[Lemma 4]{NovikovYakovenko:1999}}]
    \label{lem:NY adapted:Lemma 4}
    If two polynomial ideals $I \subseteq J$ have the same dimension $\dim I = \dim J$,
    then $\leadingterm I \subseteq \leadingterm J$.
\end{claim}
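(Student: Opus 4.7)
The plan is to leverage the fact that primary components of maximal dimension in an irredundant primary decomposition are \emph{isolated} (their associated primes are minimal among the associated primes of the ideal), and are therefore uniquely determined by the ideal as the contraction of its localization at the corresponding prime. Fix $d := \dim I = \dim J$, and let $I = \bigcap_{i \in A} Q_i$ and $J = \bigcap_{j \in B} R_j$ be irredundant primary decompositions. Write $A_d \subseteq A$ and $B_d \subseteq B$ for the indices of components of dimension exactly $d$, so that $\leadingterm I = \bigcap_{i \in A_d} Q_i$ and $\leadingterm J = \bigcap_{j \in B_d} R_j$.

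First I would verify that, for each $i \in A_d$, the associated prime $P_i := \rad{Q_i}$ is a minimal associated prime of $I$: no other associated prime can be strictly contained in $P_i$, because such a prime would correspond to an irreducible variety strictly containing $V(P_i)$ and hence of strictly greater dimension, contradicting the fact that $d$ is the maximum dimension of any associated prime of $I$. By a standard uniqueness result for isolated primary components (\cf~\cite[Ch.~4, §7]{CoxLittleOShea:Ideals:2015}), $Q_i = I_{P_i} \cap \poly \Q k$, and symmetrically $R_j = J_{\rad{R_j}} \cap \poly \Q k$ for each $j \in B_d$.

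The core step will then proceed as follows. Fix any $j \in B_d$ and set $P := \rad{R_j}$, a prime of dimension $d$. Since $I \subseteq J \subseteq R_j \subseteq P$ and $P$ is prime, the inclusion $\bigcap_{i \in A} Q_i \subseteq P$ forces $\rad{Q_i} \subseteq P$ for some $i \in A$ by a standard prime-avoidance argument. The chain $\rad{Q_i} \subseteq P$ between primes both of dimension $d$ (using $\dim \rad{Q_i} \leq d$ from the maximality of $d$ and $\dim \rad{Q_i} \geq \dim P = d$ from the inclusion) forces $\rad{Q_i} = P$, because irreducible varieties of the same dimension cannot properly contain one another; in particular $i \in A_d$. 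Uniqueness of isolated components combined with the fact that localization preserves the inclusion $I \subseteq J$ then yields $Q_i = I_P \cap \poly \Q k \subseteq J_P \cap \poly \Q k = R_j$. Intersecting these inclusions $\leadingterm I \subseteq Q_i \subseteq R_j$ over $j \in B_d$ gives $\leadingterm I \subseteq \leadingterm J$ as required.

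The main obstacle will be invoking the commutative-algebra machinery cleanly: specifically, the characterization of isolated primary components as contractions of localizations at the associated primes, and verifying that embedded (lower-dimensional) components of $I$ do not spuriously affect the dimension bookkeeping. These are standard facts, but they must be orchestrated carefully to ensure that the prime $P$ arising from a maximal-dimensional component of $J$ is actually attained by a maximal-dimensional component of $I$, and not merely dominated by a lower-dimensional embedded one.
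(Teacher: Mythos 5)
Your proof is correct, and it fills in what the paper does not: the paper simply cites \cite[Lemma 4]{NovikovYakovenko:1999} for the key assertion that each maximal-dimensional primary component of $J$ contains some maximal-dimensional primary component of $I$, and then concludes. You supply a self-contained argument for exactly that assertion via the standard machinery of isolated components and localization: the maximal-dimensional components are isolated (minimal associated primes), hence uniquely determined as contractions of localizations; prime avoidance extracts an associated prime $\rad{Q_i}$ of $I$ contained in the associated prime $P = \rad{R_j}$ of the chosen maximal-dimensional component of $J$; the dimension sandwich $d \le \dim \rad{Q_i} \le d$ forces equality of primes; and the inclusion $I_P \cap \poly \Q k \subseteq J_P \cap \poly \Q k$ yields $Q_i \subseteq R_j$. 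This is presumably also the substance of Novikov--Yakovenko's Lemma~4, so the route is not genuinely different, but your write-up makes the argument explicit where the paper defers to a citation.

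One small caveat worth noting: the paper defines $\dim I$ via the complex variety $V(I) \subseteq \C^k$ of an ideal $I \subseteq \poly \Q k$, so for a $\Q$-prime $P$ the set $V(P)$ need not be $\C$-irreducible, and the geometric phrasing ``irreducible varieties of the same dimension cannot properly contain one another'' should really be the algebraic statement that a strict inclusion of primes $P_1 \subsetneq P_2$ in a Noetherian ring forces $\dim P_1 > \dim P_2$. This is what your argument actually uses, and it is correct; I only flag the phrasing so you do not accidentally appeal to $\C$-irreducibility of $V(P)$, which may fail.
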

\begin{proof}[Proof of the claim]
    Write $\leadingterm I = P_1 \cap \cdots \cap P_s$ where the $P_i$'s are the primary ideals of an irredundant decomposition of $I$ of maximal dimension,
    and similarly $\leadingterm J = Q_1 \cap \cdots \cap Q_t$.
    Under the same assumptions, \cite[Lemma 4]{NovikovYakovenko:1999} shows that
    each component $Q_i$ of $\leadingterm J$ contains some component $P_j$ of $\leadingterm I$,
    from which $\leadingterm I \subseteq \leadingterm J$ follows immediately.
\end{proof}

As a first stepping stone, we can already derive a bound on the length of chains of ideals
whose primary decomposition only involves associated primes from a fixed set.
We refer to \cite[Sec.~4.1]{NovikovYakovenko:1999} and the discussion therein
for the notion of \emph{multiplicity} of a primary component.
\begin{claim}[\protect{\cite[Lemma 6]{NovikovYakovenko:1999}}]
    \label{lem:NY:Lemma 6}
    Consider a finite collection of pairwise distinct prime ideals $\mathcal P$ of the same dimension
    and a strictly ascending ideal chain
    \begin{align*}
        J_0 \subsetneq J_1 \subsetneq \cdots \subsetneq J_\ell
    \end{align*}
    where each $J_i$'s has an irredundant primary ideal decomposition with associated primes from $\mathcal P$.
    (Due to the dimensionality requirement such a decomposition is unique.)
    Then, the length of the chain $\ell$ is at most the number of primary components of $J_0$
    (counted with multiplicities).
\end{claim}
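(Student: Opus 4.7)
The plan is to assign to each $J_i$ a nonnegative integer invariant $\mu(J_i)$, the \emph{total multiplicity} summed over primary components, and to show that $\mu$ strictly decreases along the chain. Since $\mu(J_0)$ will equal the number of primary components of $J_0$ counted with multiplicities, the bound $\ell \leq \mu(J_0)$ follows from the descent of a nonnegative integer.

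Write $R$ for the ambient polynomial ring and $R_P$ for its localisation at a prime $P \in \P$. First I would exploit the hypothesis that all primes in $\P$ are equidimensional and hence pairwise incomparable, since a strict inclusion of primes would strictly drop dimension. This makes the primary decomposition of each $J_i$ over $\P$ uniquely determined: the $P$-primary component $Q_i^{(P)}$ coincides with the contraction to $R$ of the localised ideal $(J_i)_P \subseteq R_P$, adopting the convention $Q_i^{(P)} = R$ when $P$ is not associated to $J_i$. This is the unique decomposition alluded to in the parenthetical of the claim. Define $\mu_i^{(P)}$ to be the length of $R_P / (J_i)_P$ as an $R_P$-module and set $\mu(J_i) := \sum_{P \in \P} \mu_i^{(P)}$; by construction this is the number of primary components of $J_i$ counted with multiplicities.

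The heart of the argument is to show that $J_i \subsetneq J_{i+1}$ forces $\mu(J_i) > \mu(J_{i+1})$. Exactness of localisation gives $(J_i)_P \subseteq (J_{i+1})_P$ for every $P \in \P$, so $\mu_{i+1}^{(P)} \leq \mu_i^{(P)}$ and $\mu$ is non-increasing. For strict decrease, suppose by contradiction that $(J_i)_P = (J_{i+1})_P$ for every $P \in \P$. Then the contraction identity forces $Q_i^{(P)} = Q_{i+1}^{(P)}$ for all $P$, and intersecting over $\P$ yields $J_i = J_{i+1}$, contradicting strict ascent. Hence some $\mu_{i+1}^{(P)} < \mu_i^{(P)}$, $\mu$ strictly decreases by at least one at each step, and iterating along the chain yields $\ell \leq \mu(J_0)$.

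The main obstacle is the careful use of equidimensionality: once to force the primes in $\P$ to be pairwise incomparable, and once to guarantee that each $J_i$ is uniquely recoverable from its localisations at the primes of $\P$. Without this hypothesis, embedded primary components introduce genuine non-uniqueness in the decomposition and the identification between the ambient ideal and its local data breaks down; this is precisely why the equidimensionality requirement on $\P$ is essential to the statement.
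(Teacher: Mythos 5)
The paper does not prove this claim; it is stated as a direct citation of Lemma~6 of Novikov and Yakovenko (1999), with a pointer to their Section~4.1 for the definition of multiplicity of a primary component. Your proof is correct, and the localisation-and-length argument you give is the standard (and essentially the only natural) way to prove it: the incomparability of the primes in $\mathcal P$ (forced by equidimensionality) ensures all associated primes of each $J_i$ are minimal, so the $P$-primary component is uniquely the contraction of $(J_i)_P$; the length of $R_P/(J_i)_P$ is then exactly the multiplicity of that component; strict descent of $\mu(J_i) := \sum_{P \in \mathcal P} \operatorname{length}_{R_P} R_P/(J_i)_P$ follows because a strict inclusion $J_i \subsetneq J_{i+1}$ must be witnessed in at least one localisation, else the contracted primary components would all coincide and force $J_i = J_{i+1}$. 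This is, to the best of my knowledge, precisely the argument in the cited source, so you have reconstructed the omitted proof rather than taken a different route.
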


The results recalled so far did not use convexity of the chain~\cref{eq:WBPP ideal chain}.
In the next claim we are generalising the analysis of~\cite{NovikovYakovenko:1999} to multiple derivations.
With~\cref{lem:NY adapted:Lemma 4,lem:NY:Lemma 6}
we obtain our second stepping stone,
namely a chain length bound
when the ideal chain and the colon chain are equidimensional.
\begin{claim}[\protect{\NoCaseChange{generalising} \protect{\cite[Lemma 8]{NovikovYakovenko:1999}}}]
    \label{claim:NY:Lemma 8}
    Consider a strictly ascending chain of convex ideals of the form~\cref{eq:WBPP ideal chain} of length $\ell$
    where the colon ratios have the same dimension as the starting ideal $I_0$:
    \begin{align*}
        \dim {I_0} = \dim {(I_0 \colon I_1)} = \dim {(I_1 \colon I_2)} = \cdots = \dim {(I_{\ell - 1} \colon I_\ell)}.
    \end{align*}
    (In particular, also the ideals $I_n$'s have the same dimension,
    \begin{align*}
        \dim {I_0} = \dim {I_1} = \cdots = \dim {I_\ell}.)
    \end{align*}
    Then $\ell$ is at most the number of primary components of the leading term $\leadingterm {I_0}$
    of the starting ideal (counted with multiplicities).
\end{claim}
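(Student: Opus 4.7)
The plan is to adapt the argument of~\cite[Lemma 8]{NovikovYakovenko:1999} from the single-derivation setting to ours, where we have a finite set of possibly non-commuting derivations. The crucial observation is that in that proof, the single-derivation hypothesis only intervenes through convexity of the chain, and convexity has already been generalised in~\cref{lem:WBPP ideal chain is convex}. Consequently the original argument should carry over with only minor cosmetic adjustments.

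First I would verify that all ideals in the chain share the common dimension $\dim {I_0}$. Since $I_n \subseteq I_n \colon I_{n+1}$ (every element of $I_n$ trivially multiplies any polynomial into $I_n$) and ideal inclusion is dimension-reversing, $\dim {I_n} \geq \dim {(I_n \colon I_{n+1})} = \dim {I_0}$, while $I_0 \subseteq I_n$ gives the reverse $\dim {I_n} \leq \dim {I_0}$. Applying~\cref{lem:NY adapted:Lemma 4} to each consecutive pair $I_n \subseteq I_{n+1}$ then yields the ascending chain of leading terms
\begin{align*}
    \leadingterm{I_0} \subseteq \leadingterm{I_1} \subseteq \cdots \subseteq \leadingterm{I_\ell}.
\end{align*}

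The main obstacle is to upgrade this chain to a \emph{strictly} ascending one. Suppose for contradiction that $\leadingterm{I_n} = \leadingterm{I_{n+1}}$ for some $n$. I would reproduce the Novikov--Yakovenko argument: each element $p \in I_{n+1}$ decomposes relative to the primary decomposition into a top-dimensional part, which already lies in $\leadingterm{I_n} \subseteq I_n$, plus a strictly lower-dimensional residual; this residual is then controlled via convexity of the colon chain $I_{n-1} \colon I_n \subseteq I_n \colon I_{n+1}$ and the constancy of colon dimensions, eventually forcing $p \in I_n$ and contradicting $I_n \subsetneq I_{n+1}$. The care required in our setting is that~\cite{NovikovYakovenko:1999} occasionally passes from a generator to its image under ``the'' derivation, whereas here we must allow for any $\Delta_a$, $a \in \Sigma$; since~\cref{lem:WBPP ideal chain is convex} already absorbs this subtlety into the convexity statement, I expect the remainder to go through symmetrically in $a$.

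Having established strict ascent $\leadingterm{I_0} \subsetneq \leadingterm{I_1} \subsetneq \cdots \subsetneq \leadingterm{I_\ell}$, the proof concludes by invoking~\cref{lem:NY:Lemma 6}. By construction every associated prime appearing in any $\leadingterm{I_n}$ has dimension $\dim {I_0}$, and since the chain is finite the union $\mathcal P$ of primes appearing across all leading terms is a finite set of equidimensional primes. Lemma 6 then bounds $\ell$ by the number of primary components (counted with multiplicities) of the initial leading term $\leadingterm{I_0}$, as required.
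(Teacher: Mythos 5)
Your high-level plan matches the paper's: use \cref{lem:NY adapted:Lemma 4} to get an ascending leading-term chain, upgrade to strict ascent, and conclude via \cref{lem:NY:Lemma 6}. Your preliminary verification that all $I_n$ share dimension $\dim I_0$ is correct, and your observation that one can take $\mathcal P$ to be the finite union of associated primes over the whole chain is also an adequate input to \cref{lem:NY:Lemma 6} (in fact slightly more direct than the paper's variety-based argument that the top-dimensional components only accumulate). The problem is that the heart of the claim—strict ascent of $\leadingterm{I_0} \subsetneq \cdots \subsetneq \leadingterm{I_\ell}$—is hand-waved, and the sketch you give is not correct. You write that the top-dimensional part of an element of $I_{n+1}$ ``already lies in $\leadingterm{I_n} \subseteq I_n$'', but the inclusion runs the other way: $\leadingterm{I_n}$ is the intersection of \emph{only} the top-dimensional primary components of $I_n$, so $I_n \subseteq \leadingterm{I_n}$, not $\leadingterm{I_n}\subseteq I_n$. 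Moreover ``forcing $p \in I_n$'' for $p \in I_{n+1}$ cannot be the right conclusion of a contradiction argument when $I_n \subsetneq I_{n+1}$ is a standing hypothesis; the paper derives a \emph{dimension} contradiction instead.

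The actual strict-ascent argument does not reason about individual elements, and—contrary to your framing—convexity plays no direct role in this step. What is used is the structure of the chain~\cref{eq:WBPP ideal chain}: setting $\Delta_{n+1}\alpha := \bigcup_{w \in \Sigma^{n+1}}\Delta_w\alpha$, one has $I_{n+1} = I_n + \ideal{\Delta_{n+1}\alpha}$ by \cref{lem:Delta of ideal}~(2). If $\leadingterm{I_n} = \leadingterm{I_{n+1}}$, then every polynomial in $\Delta_{n+1}\alpha$ lies in $\leadingterm{I_{n+1}} = \leadingterm{I_n}$, hence $\leadingterm{I_n} \colon \ideal{\Delta_{n+1}\alpha} = \ideal{1}$. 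Writing $I_n = \leadingterm{I_n} \cap R_n$ with $R_n$ the intersection of the lower-dimensional primary components, distributivity of the colon over intersection gives
\begin{align*}
I_n \colon I_{n+1} = \bigl(\leadingterm{I_n}\colon\ideal{\Delta_{n+1}\alpha}\bigr) \cap \bigl(R_n\colon\ideal{\Delta_{n+1}\alpha}\bigr) \supseteq R_n,
\end{align*}
so $m = \dim(I_n\colon I_{n+1}) \leq \dim R_n < m$, a contradiction. This is exactly where the multi-derivation setting enters: the new generators $\Delta_{n+1}\alpha$ are a finite union over $a \in \Sigma$, and the computation is uniform in $a$ because one works with the whole set of derivatives simultaneously. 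So the subtlety you flag is resolved by the ideal-chain structure and the constant-colon-dimension hypothesis, not by convexity; your proposal would need to supply this colon-ideal computation to constitute a proof.
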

\begin{proof}[Proof of the claim.]
    Let $m := \dim {I_0}$ be the common dimension of the ideals and colon ideals.
    Since the ideals $I_n$ have the same dimension $m$,
    by~\cref{lem:NY adapted:Lemma 4} the leading terms chain is ascending,
    $\leadingterm {I_0} \subseteq \leadingterm {I_1} \subseteq \cdots \subseteq \leadingterm {I_\ell}$.
    
    In fact the latter chain is strictly ascending.
    If $\leadingterm {I_n} = \leadingterm {I_{n+1}}$ for some $0 \leq n \leq \ell - 1$,
    then we have
    \begin{align}
        \label{eq:successor in leading term}
        \Delta_w \alpha \in I_{n+1} \subseteq \leadingterm {I_{n+1}} = \leadingterm {I_n},
        \quad \text{for all } w \in \Sigma^{n+1},
    \end{align}
    and thus $\Delta_w \alpha$ belongs to all primary components of $I_n$ of maximal dimension $m$.
    Write $I_n = \leadingterm {I_n} \cap R_n$, where $R_n$ is the intersection of primary components of dimension $< m$
    and define
    \begin{align}
        \label{eq:def of Delta n}
        \Delta_n \alpha := \bigcup_{w \in \Sigma^n} \Delta_w \alpha
    \end{align}
    for the set of polynomials which can be obtained from $\alpha$
    by words of length exactly $n$.
    By~\cref{eq:successor in leading term} we have $\Delta_{n+1} \alpha \subseteq \leadingterm {I_n}$ and thus
    \begin{align}
        \label{eq:trivial colon ideal}
        \leadingterm {I_n} \colon \ideal {\Delta_{n+1} \alpha} = \ideal 1.
    \end{align}
    We can now write
    \begin{align*}
        I_n \colon I_{n+1}
            &= I_n \colon \ideal {\Delta_{n+1} \alpha} = \\
            &= (\leadingterm {I_n} \cap R_n) \colon \ideal {\Delta_{n+1} \alpha} = \\
            &= \left(\leadingterm {I_n} \colon \ideal {\Delta_{n+1} \alpha} \right) \cap \left (R_n \colon \ideal {\Delta_{n+1} \alpha} \right) = && \text{(by~\cref{eq:trivial colon ideal})} \\
            &= \ideal 1 \cap \left (R_n \colon \ideal {\Delta_{n+1} \alpha} \right)
            \supseteq R_n.
    \end{align*}
    Passing to dimensions this means $m = \dim {(I_n \colon I_{n+1})} \leq \dim {R_n} < m$, which is a contradiction.
    We have thus established that the leading terms chain is strictly ascending:
    $\leadingterm {I_0} \subsetneq \leadingterm {I_1} \subsetneq \cdots \subsetneq \leadingterm {I_\ell}$.

    Consider now the affine variety $X_n := V(\leadingterm {I_n}) \subseteq \C^k$
    corresponding to the leading term ideal $\leadingterm {I_n}$.
    We have a non-increasing chain of varieties $X_0 \supseteq X_1 \supseteq \cdots \supseteq X_\ell$,
    where the next variety $X_{n+1}$ is obtained from the previous one $X_n$
    by intersecting with $V(\Delta_{n+1} \alpha)$,
    \begin{align*}
        X_{n+1} = X_n \cap V(\Delta_{n+1} \alpha).
    \end{align*}
    Each maximal-dimensional irreducible component $C$ of $X_n$
    gives rise to an affine subvariety $C \cap V(\Delta_{n+1} \alpha)$ of $X_{n+1}$.
    There are two options:
    Either all polynomials from $\Delta_{n+1} \alpha$ vanish on $C$ 
    and in this case $C = C \cap V(\Delta_{n+1} \alpha)$ is also a maximal-dimensional irreducible component of $X_{n+1}$,
    or otherwise $C$ gives rise to an affine (not necessarily irreducible) subvariety $C \cap V(\Delta_{n+1} \alpha) \subsetneq C$ of $X_{n+1}$
    of strictly smaller dimension $< m$.
    It follows that the irreducible maximal-dimensional components of $X_{n+1}$ include those of $X_n$.

    Recall that the ideals of the irreducible components of maximal dimension of $X_n$
    are precisely the associated primes of $I_n$ of maximal dimension.
    It follows that the set of associated primes of $\leadingterm {I_{n+1}}$ (which are all of the same dimension $m$)
    includes those of $\leadingterm {I_n}$.
    In particular, the associated primes of $\leadingterm {I_n}$ include those of $\leadingterm {I_0}$,
    and are all of the same dimension $m$,
    call them $\mathcal P$.

    Summarising, we have proved that the leading term chain is strictly increasing
    \begin{align*}
        \leadingterm {I_0} \subsetneq \leadingterm {I_1} \subsetneq  \cdots \subsetneq \leadingterm {I_\ell}
    \end{align*}
    and the associated primes all come from a fixed set $\mathcal P$ of prime ideals,
    all of the same dimension $m$.
    We can thus apply~\cref{lem:NY:Lemma 6} to this chain and bound its length $\ell$
    by the number of primary components of $\leadingterm {I_0}$ (counted with multiplicities).
\end{proof}

We would like to relax the requirement from~\cref{claim:NY:Lemma 8}
that the chain of colon ideals has the same dimension as the first element $I_0$ of the ideal chain.
To this end, we show that the length of ideal chains as in~\cref{claim:NY:Lemma 8}
in fact provide an upper bound on the length of ideal chains without the additional requirement.
\begin{claim}[generalising \protect{\cite[Lemma 9]{NovikovYakovenko:1999}}]
    \label{claim:NY:Lemma 9}
    Assume that the colon ideals $I_n \colon I_{n+1}$
    along a strictly ascending chain~\cref{eq:WBPP ideal chain} of length $\ell$
    have all the same dimension $m$.
    Consider any primary decomposition of the initial ideal $I_0$ and write it as
    \begin{align*}
        I_0 = I_0' \cap S
    \end{align*}
    where $I_0'$ is the intersection of primary components of dimension $\leq m$
    and $S$ is the intersection of primary components of dimension $\geq m + 1$.
    Consider the new ideal chain started at $I_0'$,
    \begin{align}
        \label{eq:NY:Lemma 9:chain}
        I_0' \subseteq I_1' \subseteq \cdots \subseteq I_\ell',
            \quad\text{where }
                I_{n+1}' := I_n' + \ideal {\Delta_{n+1} \alpha}.
    \end{align}
    ($\Delta_n$ is defined in~\cref{eq:def of Delta n}.)
    Then, for all $0 \leq n < \ell$,
    \begin{enumerate}[(1)]
        \item $\Delta_{n+1}\alpha \subseteq S$,
        \item $I_n = I_n' \cap S$, and
        \item $I_n \colon I_{n+1} = I_n' \colon I_{n+1}'$.
    \end{enumerate}
\end{claim}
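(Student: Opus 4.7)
I would prove parts (1) and (2) simultaneously by induction on $n$, and then deduce part (3) as an easy corollary.

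The base case $n = 0$ is immediate: (2) holds by the very definition of the decomposition $I_0 = I_0' \cap S$, and (1) is vacuous since no step has yet been taken.

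For the inductive step, suppose (1) and (2) hold up to level $n$. To prove (1) at level $n+1$, I would use the hypothesis that all colon ideals have dimension $m$. Since $I_n = I_n' \cap S$ by induction and $I_{n+1} = I_n + \ideal{\Delta_{n+1}\alpha}$, distributivity of the colon over intersection gives
\begin{align*}
    I_n \colon I_{n+1}
    &= (I_n' \cap S) \colon \ideal{\Delta_{n+1}\alpha} \\
    &= (I_n' \colon \ideal{\Delta_{n+1}\alpha}) \cap (S \colon \ideal{\Delta_{n+1}\alpha}).
\end{align*}
Since the variety of an intersection of ideals is a union of varieties, the dimension of an intersection of ideals is the maximum of the two individual dimensions; combined with $\dim(I_n \colon I_{n+1}) = m$, this forces $\dim(S \colon \ideal{\Delta_{n+1}\alpha}) \leq m$. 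On the other hand, by the definition of $S$, every irreducible component of $V(S)$ has dimension $\geq m+1$. If some $f \in \Delta_{n+1}\alpha$ did not vanish identically on some such component $C$, then $V(S \colon \ideal{f}) = \overline{V(S) \setminus V(f)}$ would still contain $C$ (because $C$ is irreducible and $C \cap V(f)$ is a proper closed subset), pushing the dimension up to $\geq m+1$, a contradiction. Hence every $f \in \Delta_{n+1}\alpha$ vanishes on $V(S)$; upgrading this from $f \in \rad S$ to $f \in S$ requires an additional argument exploiting the specific structure of $S$ as the intersection of higher-dimensional primary components of $I_0$ (essentially as in \cite{NovikovYakovenko:1999}). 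This gives $\Delta_{n+1}\alpha \subseteq S$.

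Granted (1) at level $n+1$, part (2) at level $n+1$ follows from the modular law: since $\ideal{\Delta_{n+1}\alpha} \subseteq S$,
\begin{align*}
    I_{n+1} = (I_n' \cap S) + \ideal{\Delta_{n+1}\alpha} = (I_n' + \ideal{\Delta_{n+1}\alpha}) \cap S = I_{n+1}' \cap S.
\end{align*}
Part (3) then drops out of (2) and distributivity of the colon: using $\Delta_{n+1}\alpha \subseteq S$, the factor $S \colon \ideal{\Delta_{n+1}\alpha}$ is the unit ideal and disappears, leaving $I_n \colon I_{n+1} = I_n' \colon \ideal{\Delta_{n+1}\alpha} = I_n' \colon I_{n+1}'$.

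The main obstacle is the inductive step for (1), and specifically the upgrade from \emph{``$f$ vanishes on $V(S)$''} to \emph{``$f \in S$''}: this is not valid for an arbitrary ideal (as $\rad S$ may strictly contain $S$), and must be carried out by choosing the primary decomposition carefully, so that the components gathered in $S$ are saturated enough to make the implication go through. Everything else, including the modular-law manipulation in (2) and the distributivity step in (3), is routine once (1) is in hand.
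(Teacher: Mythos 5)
Your overall architecture is right and matches the paper's: prove (1) and (2) by simultaneous induction, establish (2) from (1) via the modular law, and deduce (3) from (1) and (2) by colon distributivity. The routine steps you labelled as routine are indeed fine. But there are two problems, one minor and one substantial.

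The minor one: at $n=0$, part~(1) is \emph{not} vacuous. It reads $\Delta_1 \alpha \subseteq S$, a genuine statement about the first step of the chain. The induction should be organised so that only (2) at $n=0$ is a base case, with (1) at level $n$ deduced from (2) at level $n$, and (2) at level $n+1$ deduced from (1) and (2) at level $n$.

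The substantial one is the gap you yourself flagged, and your proposed remedy does not repair it. Your geometric argument shows that every $p \in \Delta_{n+1}\alpha$ vanishes on $V(S)$, i.e.\ $\Delta_{n+1}\alpha \subseteq \rad S$, which for a primary intersection $S = Q_1 \cap \cdots \cap Q_t$ only gives $p \in \bigcap_i \rad{Q_i}$ and never $p \in S$ — primary ideals need not be radical. Your suggestion of ``choosing the primary decomposition carefully'' cannot work: the claim explicitly allows \emph{any} primary decomposition of $I_0$, and in any case no choice of decomposition makes a primary ideal equal to its radical. The correct fix, and the route the paper takes, is algebraic rather than geometric and argues one component at a time. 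Suppose for contradiction that some $p \in \Delta_{n+1}\alpha$ lies outside $S$; then $p \notin Q$ for some primary component $Q$ of $S$ with $\dim Q \geq m+1$. The key algebraic fact is that for a primary $Q$ and $p \notin Q$, one has $\rad{(Q \colon \ideal p)} = \rad Q$: the inclusion $\supseteq$ is trivial, and if $q^r \cdot p \in Q$ with $p \notin Q$, primariness forces $q^r \in \rad Q$, giving $\subseteq$. Hence $\dim{(Q \colon \ideal p)} = \dim Q \geq m+1$. Now by (2) at level $n$ we have $I_n = I_n' \cap S \subseteq Q$, and $\ideal p \subseteq I_{n+1}$, so by monotonicity $I_n \colon I_{n+1} \subseteq Q \colon \ideal p$; since inclusion reverses dimension, $\dim{(I_n \colon I_{n+1})} \geq m+1$, contradicting the hypothesis $\dim{(I_n \colon I_{n+1})} = m$. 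This gives $p \in S$ outright, with no passage through the radical. Everything downstream of (1) — the modular-law step for (2) and the distributivity step for (3) — then goes through exactly as you wrote it.
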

\noindent
Note that from condition (2) it follows that termination of the $I_n'$ chain
implies the same for the $I_n$ chain.
\begin{proof}[Proof of the claim.]
    Properties (1) and (2) are proved by simultaneous induction on $n$.
    Property (3) will then follow at once.

    \noindent
    (1)~
    By way of contradiction,
    assume $p := \Delta_w \alpha \in \Delta_{n+1} \alpha \subseteq I_{n+1}$
    is not in $S$ for some word $w \in \Sigma^{n+1}$
    of minimal length $n+1$.
    (In particular, (1) holds for words of shorter length $n$ and thus below we can use (2) at $n$.)
    Then $p$ does not belong to some primary component $Q$ of $S$ of dimension $\geq m+1$.
    We argue that $\dim {(Q \colon \ideal p)} = \dim Q$.
    In fact, we prove the stronger fact
    \begin{align*}
        \rad {Q \colon \ideal p} = \rad Q.
    \end{align*}
    The ``$\supseteq$'' inclusion follows from the fact that the colon and radical operations are monotone.
    For the ``$\subseteq$'' inclusion, let $q \in \rad {Q \colon \ideal p}$.
    There is $m \in \N$ \st~$q^m \in Q \colon \ideal p$.
    By definition of colon ideal, $q^m \cdot p \in Q$.
    Since $Q$ is primary and $p \not\in Q$,
    we have $q^{m \cdot h} \in Q$ for some $h \in \N$,
    which means $q \in \rad Q$.
    
    By (2), we have $I_n \subseteq S \subseteq Q$.
    By monotonicity of colon ideals,
    $I_n : I_{n+1} \subseteq I_n : \ideal p$ (since $\ideal p \subseteq I_{n+1}$)
    and $I_n : \ideal p \subseteq Q : \ideal p$ (since $I_n \subseteq Q$).
    Since the latter ideal $\dim {(Q \colon \ideal p)}$ has dimension $\geq m + 1$,
    the same holds for the colon ideal $I_n : I_{n+1}$, which is a contradiction.
    We conclude $p \in S$, as required.
    
    \noindent
    (2)~
    We proceed by induction.
    The base case $n = 0$ holds by construction.
    For the inductive step,
    \begin{align*}
        I_{n+1}' \cap S
            &= (I_n' + \ideal {\Delta_{n+1} \alpha}) \cap S = \\
            &= I_n' \cap S + \ideal {\Delta_{n+1} \alpha} \cap S = 
                && \text{(by (1))} \\
            &= I_n' \cap S + \ideal {\Delta_{n+1} \alpha} =
                && \text{(by (2))} \\
            &= I_n + \ideal {\Delta_{n+1} \alpha} = I_{n+1}.
    \end{align*}

    \noindent
    (3)~We have
    \begin{align*}
        I_n \colon I_{n+1}
            &= I_n \colon \ideal {\Delta_{n+1} \alpha} =
                && \text{(by (2))} \\
            &= (I_n' \cap S) \colon \ideal {\Delta_{n+1} \alpha} =
                && \text{(by (1))} \\
            &= I_n' \colon \ideal {\Delta_{n+1} \alpha} = I_n' \colon I_{n+1}'.
            \qedhere
    \end{align*}
\end{proof}

Thanks to \cref{claim:NY:Lemma 9},
we get rid of the assumption from \cref{claim:NY:Lemma 8} that the chain of colon ideals
has the same dimension as the initial ideal,
yielding our third, and final, stepping stone.
%
\claimA*
\begin{proof}
    Consider an arbitrary primary component decomposition of $I_0$
    and as in the statement of \cref{claim:NY:Lemma 9} write it as $I_0 = I_0' \cap S$,
    where $I_0'$ is the intersection of primary components of dimension $\leq m$
    and $S$ is the intersection of primary components of dimension $\geq m+1$.
    Let $\ell'$ be the length of the ideal chain~\cref{eq:NY:Lemma 9:chain}.
    By construction, this chain starts at an ideal $I_0'$ of dimension $m$,
    and by \cref{claim:NY:Lemma 9}, point (3), this dimension is the same for the corresponding chain of colon ideals $I_n' : I_{n+1}'$.
    Clearly $\ell \leq \ell'$: If $I_n' = I_{n+1}'$, then by \cref{claim:NY:Lemma 9}, point (2), $I_n = I_{n+1}$.
    Moreover, \cref{claim:NY:Lemma 8} applies to the new ideal chain~\cref{eq:NY:Lemma 9:chain},
    since the dimension of colon ratios of the new chain is $m$,
    the same dimension as the new starting ideal $I_0'$.
    Thus $\ell'$ is at most the number of primary components of $I_0'$
    of maximal dimension $= m$ (counted with multiplicities).
    In turn, this number is at most the number of \emph{all} primary components
    of the given decomposition of the starting ideal $I_0$ (counted with multiplicities).
\end{proof}
%
%
Having established~\cref{claim:NY:Lemma 8+9},
the proof of~\cref{thm:chain length bound} is now completed.

\section{Additional material for \NoCaseChange{\cref{sec:CDF}}}
\label{app:CDF}

In this section we provide additional details on the development of \CDF~power series.

We recall a characterisation of \CDF~in the same spirit as for \WBPP~series in~\cref{sec:shuffle-finite series}.
This characterisation will be put at work in the proofs of~\cref{lem:CDF basic closure properties,lem:CDF closure under regular cardinality restrictions,lem:CDF strong composition}.
%
%
\begin{lemma}[\protect{\cite[Proposition 10]{BergeronSattler:TCS:1995}}]
    \label{lem:CDF characterisation}
    A power series is \CDF~if, and only if, it belongs to a finitely generated differential subalgebra of~$\powerseries \Q x$.
\end{lemma}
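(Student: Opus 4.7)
The plan is to mirror the proof of \cref{thm:WBPP equals shuffle finite} for \WBPP~series, replacing the noncommutative shuffle algebra by the commutative power series algebra $\powerseries \Q x$ and the shift derivations $\derive a$ by the partial derivatives $\partial {x_j}$ (for $1 \leq j \leq d$). The two directions will be handled separately and each is essentially an unpacking of the definitions.

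For the ``if'' direction, I will start from a finitely generated differential subalgebra $S = \Q[f_1, \ldots, f_k] \subseteq \powerseries \Q x$ containing the target series $f$. Closure of $S$ under the partial derivatives gives, for every generator $f_i$ and every $1 \leq j \leq d$, a polynomial $P_{i,j} \in \poly \Q {y_1, \ldots, y_k}$ with $\partial {x_j} f_i = P_{i,j}(f_1, \ldots, f_k)$. Collecting these equations is exactly an autonomous \CDF~system of the form~\cref{eq:multivariate CDF - matrix form}, so each generator $f_i$ is \CDF. Since $f \in S$, one can write $f = Q(f_1, \ldots, f_k)$ for some $Q \in \poly \Q {y_1, \ldots, y_k}$, and closure of \CDF~under sums and products (\cref{lem:CDF basic closure properties}(1)) delivers that $f$ is itself \CDF.

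For the ``only if'' direction, I will take $f = f^{(1)}$ to be the first component of a solution $\tuple{f^{(1)}, \ldots, f^{(k)}}$ of a \CDF~system~\cref{eq:multivariate CDF - matrix form}. Using the fact, already noted in the text, that autonomous and non-autonomous systems define the same class, I may assume the kernel $P$ lies in $\poly \Q y^{k \times d}$; otherwise, I adjoin the $d$ extra components $g^{(j)}(x) := x_j$ satisfying $\partial {x_l} g^{(j)} = \delta_{jl}$, which enlarges the generator list but does not affect the argument. The subalgebra $S := \Q[f^{(1)}, \ldots, f^{(k)}] \subseteq \powerseries \Q x$ is then finitely generated and contains $f$, and the defining equations yield $\partial {x_j} f^{(i)} \in S$ for every $i, j$. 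An appeal to linearity and the Leibniz rule (\cref{fact:closure lemma}) then propagates this to arbitrary polynomial combinations of the generators, showing $\partial {x_j} S \subseteq S$ for every $j$ and completing the proof.

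Since the argument is a direct translation of definitions, I do not anticipate a substantive obstacle; the only mildly delicate point is the reduction to autonomous systems in the second direction, and this is already sanctioned by the discussion following~\cref{eq:multivariate CDF - matrix form}.
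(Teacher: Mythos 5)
The paper does not include its own proof of \cref{lem:CDF characterisation}; it is stated as a citation to Bergeron and Sattler. Your plan of modelling the argument on the shuffle-finite analogue (\cref{lem:shuffle-finite working characterisation}) is the right one, and your ``only if'' direction is correct: adjoining $x_1, \dots, x_d$ as extra components reduces to the autonomous case, the resulting generators have derivatives expressible as polynomials in the generators, and \cref{fact:closure lemma} lifts this to the whole subalgebra.

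Your ``if'' direction, however, would introduce a circular dependency with the paper as written: you invoke \cref{lem:CDF basic closure properties}(1), but the paper's proof of that lemma opens with ``A quick argument for these basic closure properties can be obtained by virtue of \cref{lem:CDF characterisation}''. Inserting your proof as-is would make the two lemmas prove each other. The fix is cheap and in fact shortens your argument: since $S = \poly{\Q}{f_1, \dots, f_k}$ is \emph{by hypothesis} closed under the $\partial{x_j}$'s and contains $f$, you already have $\partial{x_j} f \in S$, so there are polynomials $R_j \in \poly{\Q}{y_1, \dots, y_k}$ with $\partial{x_j} f = R_j(f_1, \dots, f_k)$. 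Now simply adjoin $f$ as a new first coordinate: the augmented tuple $(f, f_1, \dots, f_k)$ satisfies the autonomous \CDF{} system whose $j$-th column is $\bigl(R_j, P_{1,j}, \dots, P_{k,j}\bigr)$ and has $f$ in first position, so $f$ is \CDF{} directly from the definition. This avoids the detour through ``each generator is \CDF'' and the appeal to closure under sums and products, and mirrors the ``if'' direction of \cref{lem:shuffle-finite working characterisation} exactly.
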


\begin{remark}[Simple \CDF]
    Let $f \in \powerseries \Q x$ be \emph{simple \CDF}~\cite[Sec.~5, eq.~(12)]{BergeronReutenauer:EJC:1990}
    if the algebra $\polyof \Q {\partial x^n f} {n \in \N^d}$ generated by all derivatives of $f$ is finitely generated.
    Since this algebra is closed under derivation by definition,
    if $f$ is simple \CDF~then it is \CDF.
    In~\cite[Sec.~5]{BergeronReutenauer:EJC:1990} it is shown that
    $e^{x^2}$ is a \CDF~power series which is not simple.
\end{remark}

\subsection{Basic closure properties}

In order to be self-contained, we provide full proofs
of the basic \CDF~closure properties.
This is also a nice playground for the characterisation of \CDF~from~\cref{lem:CDF characterisation}.

\basicClosurePropertiesCDF*
\begin{proof}
    A quick argument for these basic closure properties
    can be obtained by virtue of~\cref{lem:CDF characterisation}.
    Thanks to \cref{fact:closure lemma},
    in order to show that a finitely generated algebra is closed under derivatives
    it suffices to show that this is the case for its generators,
    which will be used several times below.
    
    Let $f, g$ be \CDF. By the characterisation given by~\cref{lem:CDF characterisation}
    there are generators
    $f^{(1)}, \dots, f^{(k)}, g^{(1)}, \dots, g^{(m)} \in \powerseries \Q x$ \st
    \begin{align*}
        f \in F := \poly \Q {f^{(1)}, \dots, f^{(k)}}
            \quad \text{and} \quad
                g \in G := \poly \Q {g^{(1)}, \dots, g^{(m)}},
    \end{align*}
    and $F, G$ are closed under partial derivatives $\partial {x_j}$'s.
    Closure under scalar product is immediate since $c \cdot f \in F$ by definition.
    Regarding closure under sum and product it is enough to take the union of the generators:
    \begin{align*}
        f + g, f \cdot g \in R := \poly \Q {f^{(1)}, \dots, f^{(k)}, g^{(1)}, \dots, g^{(m)}}.
    \end{align*}
    The new ring $R$ is clearly closed for derivatives $\partial {x_j}$.
    Closure under derivation holds by definition since $\partial {x_j} f \in F$ already.

    We show closure under multiplicative inverse (when it exists).
    Let $g := f^{-1}$.
    By definition $f \cdot g = 1$ and thus by the product rule
    $\partial {x_j} (f \cdot g) = \partial {x_j} f \cdot g + f \cdot \partial {x_j} g = 0$.
    We solve for $\partial {x_j} g$ and write
    \begin{align}
        \label{eq:equation for inverse g}
        \partial {x_j} g = - f^{-1} \cdot \partial {x_j} f \cdot g = - \partial {x_j} f \cdot g^2.
    \end{align}
    This suggests to add $g$ to the set of generators
    and consider the finitely generated algebra
    \begin{align*}
        g \in R := \poly \Q {f^{(1)}, \dots, f^{(k)}, g}.
    \end{align*}
    The ring $R$ is closed under derivatives $\partial {x_j}$'s by~\cref{eq:equation for inverse g}.

    Finally, assume $\partial {x_1} f, \dots, \partial {x_d} f$ are \CDF.
    By taking the union of their generators,
    we can assume that they all lie in $\poly \Q {g^{(1)}, \dots, g^{(m)}}$,
    closed under derivatives.
    We construct the finitely generated algebra
    \begin{align*}
        \poly \Q {g^{(1)}, \dots, g^{(m)}, f}
    \end{align*}
    which clearly contains $f$ and is closed under derivatives.
    
    Closure under strong composition requires some additional developments and it will be proved in~\cref{app:CDF composition} (\cref{lem:CDF strong composition}).
\end{proof}

\subsection{Composition of \CDF~power series}
\label{app:CDF composition}

In analogy for series, the \emph{order} $\ord f$ of a power series $f \in \powerseries \Q d$
is the minimal total degree of a monomial in its support.
This endows the set of power series with a natural topology.
Common operations on power series are continuous for this topology
(addition, multiplication, partial derivative, composition $\_ \compose y g$).

A family of power series $\set{f_0, f_1, \dots} \subseteq \powerseries \Q x$ is \emph{summable} if the following limit exists,
\begin{align*}
    \sum_{n = 0}^\infty f_n := \lim_{n \to \infty} \sum_{i = 0}^n f_i \in \powerseries \Q x.
\end{align*}
Spelling out the definition, this means that for every $n \in \N^d$
there are finitely many power series $f_i$ with nonzero coefficient $\coefficient {x^n} f_i$.
This also gives formal meaning to the infinite sum representation of power series
$f = \sum_{n \in \N^d} f_n \cdot \frac {x^n} {n!}$
since the family of monomials $f_n \cdot \frac {x^n} {n!}$
is summable and its sum is $f$.

A linear and continuous function on power series $F : \powerseries \Q x \to \powerseries \Q x$
commutes with sums of families of summable functions, provided it preserves summability:
\begin{align*}
    F \sum_{n = 0}^\infty f_n
    = F \lim_{n \to \infty} \sum_{i = 0}^n f_i
    = \lim_{n \to \infty} F \sum_{i = 0}^n f_i
    = \lim_{n \to \infty} \sum_{i = 0}^n F f_i
    = \sum_{n = 0}^\infty F f_n.
\end{align*}
The operations of scalar product $c \cdot \_$ ($c \in \Q$)
and partial derivation $\partial {x_j}$ preserve summability.
Moreover, if $F, G : \powerseries \Q x \to \powerseries \Q x$ preserve summability,
then $F + G$ (defined as $(F + G)(f) := Ff + Gg$)
and $F \cdot G$ (defined as $(F \cdot G)(f) := Ff \cdot Gf$)
also preserve summability.

\begin{lemma}
    \label{lem:composable closure}
    If $f \in \powerseries \Q {x, y}, g \in \powerseries \Q x^k$ are $y$-composable,
    then the same is true for $\partial {x_j} f, g$ and $\partial {y_i} f, g$.
\end{lemma}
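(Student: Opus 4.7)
The plan is to unpack the definitions and observe that both differentiations preserve the locally polynomial structure that underlies $y$-composability. The set $I = \setof{i}{g^{(i)}(0) \neq 0}$ depends only on $g$, which is unchanged, so it suffices to show that whenever $f \in \powerseries{\Q}{x, y}$ is locally polynomial \wrt~$y_I$, the same is true for $\partial{x_j} f$ and $\partial{y_i} f$.

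First I would fix a decomposition $f = \sum_\alpha p_\alpha(x, y_I) \cdot y_{\setminus I}^\alpha$, viewing $f$ as an element of $\powerseries{\powerseries{\Q}{x}[y_I]}{y_{\setminus I}}$; here the sum ranges over tuples $\alpha$ indexed by coordinates not in $I$, and each $p_\alpha$ is a polynomial in $y_I$ with coefficients in $\powerseries{\Q}{x}$. This is just a restatement of local polynomiality. The only nontrivial point is that the family $\set{p_\alpha \cdot y_{\setminus I}^\alpha}_\alpha$ is summable in $\powerseries{\Q}{x, y}$, which holds by construction.

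Next I would handle each derivation in turn. For $\partial{x_j}$, the derivation acts coefficient-wise on the outer power series in $y_{\setminus I}$, yielding $\partial{x_j} f = \sum_\alpha \partial{x_j} p_\alpha \cdot y_{\setminus I}^\alpha$; since $\partial{x_j}$ maps $\powerseries{\Q}{x}[y_I]$ into itself, each $\partial{x_j} p_\alpha$ remains polynomial in $y_I$, so local polynomiality \wrt~$y_I$ is preserved. For $\partial{y_i}$ with $i \in I$, the derivation again acts coefficient-wise on the outer power series, $\partial{y_i} f = \sum_\alpha \partial{y_i} p_\alpha \cdot y_{\setminus I}^\alpha$, and derivatives of polynomials in $y_I$ are still polynomials in $y_I$. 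For $\partial{y_i}$ with $i \notin I$, the derivation acts on the $y_{\setminus I}^\alpha$ factor, producing a shifted monomial in $y_{\setminus I}$ (with a scalar from differentiation), while leaving the polynomial coefficients $p_\alpha$ in $y_I$ unchanged; local polynomiality is again preserved.

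The main ``obstacle'' is really just careful bookkeeping: making sure that the locally polynomial structure is intrinsic to the coefficient ring decomposition $\powerseries{\powerseries{\Q}{x}[y_I]}{y_{\setminus I}}$, and that both kinds of partial derivations respect this decomposition (either acting on the coefficient polynomials, which stay polynomial, or on the outer monomials, which stay monomial). Concluding, $\partial{x_j} f$ and $\partial{y_i} f$ are locally polynomial \wrt~the same $y_I$, hence $y$-composable with $g$.
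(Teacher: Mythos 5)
Your case analysis---checking that each derivation preserves the underlying ring---is the right idea and essentially the approach of the paper's (very terse) proof, but your initial decomposition contains a genuine gap. You assert that local polynomiality of $f$ \wrt~$y_I$ amounts to $f \in \powerseries{\poly{\powerseries \Q x}{y_I}}{y_{\setminus I}}$, i.e., $f = \sum_\alpha p_\alpha(x, y_I)\,y_{\setminus I}^\alpha$ with each $p_\alpha$ a \emph{polynomial} in $y_I$ over $\powerseries \Q x$, and you call this ``just a restatement of local polynomiality.'' It is not. The notion the paper uses (extending the definition from $\powerseries \Q y$ to $\powerseries \Q {x,y}$ by grouping $x$ with $y_{\setminus I}$) is $f \in \powerseries{\poly \Q {y_I}}{x, y_{\setminus I}}$: for each monomial $x^m y_{\setminus I}^\alpha$, the coefficient is a polynomial in $y_I$, but the $y_I$-degree may grow unboundedly with $m$. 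Your condition requires, for each fixed $\alpha$, a bound on the $y_I$-degree uniform in $m$, which is strictly stronger. Concretely, take a single $x$-variable, $y_I = \set{y_1}$, and $y_{\setminus I}$ empty: the series $f := \sum_{n \geq 0} x^n y_1^n$ lies in $\powerseries{\poly \Q {y_1}}{x}$ (the coefficient of $x^n$ is the polynomial $y_1^n$), so $f$ is locally polynomial \wrt~$y_1$, yet $f$ is not a polynomial in $y_1$ over $\powerseries \Q x$, hence $f \notin \poly{\powerseries \Q x}{y_1}$. Your decomposition therefore does not exist for all admissible $f$, and the proof as written does not cover them.

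The repair is mechanical and preserves your case analysis: write $f = \sum_{m,\alpha} c_{m,\alpha}(y_I)\cdot x^m y_{\setminus I}^\alpha$ with $c_{m,\alpha} \in \poly \Q {y_I}$. Then $\partial {x_j}$ shifts the $x$-exponent and leaves $c_{m,\alpha}$ in $\poly \Q {y_I}$; $\partial {y_i}$ with $i \in I$ differentiates the $c_{m,\alpha}$, which stay polynomial in $y_I$; and $\partial {y_i}$ with $i \notin I$ shifts the $y_{\setminus I}$-exponent. Each case visibly lands back in $\powerseries{\poly \Q {y_I}}{x, y_{\setminus I}}$, giving the claim.
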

\begin{proof}
    Let $I$ be the set of indices $i$ \st~$g^{(i)}(0) \neq 0$.
    By definition of $y$-composability, $f$ is locally polynomial \wrt~$y_I$,
    that is $f \in \powerseries {\poly \Q {y_I}}{x, y_{\setminus I}}$.
    Then it is clear that $\partial {x_j} f, \partial {y_i} f$ belong to the same ring,
    and thus they are $y$-composable with $g$.
\end{proof}

The chain rule is a fundamental property connecting derivation and composition.
It is the main ingredient in proving closure under composition for \CDF~series.
\begin{restatable}[Chain rule for power series]{lemma}{powerSeriesChainRule}
    \label{lem:chain rule for power series}
    For tuples of commuting variables $x = \tuple{x_1, \dots, x_d}$, $y = \tuple{y_1, \dots, y_k}$,
    and $y$-composable power series $f \in \powerseries \Q {x, y}$, $g \in \powerseries \Q x^k$
    we have, for every $1 \leq j \leq d$,
    \begin{align}
        \label{eq:chain rule for power series}
        \partial {x_j} (f \compose y g)
        &= \partial {x_j} f \compose y g + \sum_{i = 1}^k \partial {x_j} g^{(i)} \cdot (\partial {y_i} f \compose y g),
    \end{align}
\end{restatable}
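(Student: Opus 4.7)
The plan is to prove~\cref{eq:chain rule for power series} by first checking that both sides are well-defined, then verifying the identity on $y$-monomials, and finally extending by linearity and continuity to all $y$-composable power series $f$.

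\emph{Well-definedness.} By~\cref{lem:composable closure}, both $\partial_{x_j} f$ and $\partial_{y_i} f$ are $y$-composable with $g$, so $\partial_{x_j} f \compose y g$ and $\partial_{y_i} f \compose y g$ exist in $\powerseries \Q x$. Since $\partial_{x_j} g^{(i)} \in \powerseries \Q x$, the products $\partial_{x_j} g^{(i)} \cdot (\partial_{y_i} f \compose y g)$ are well-defined, and the whole right-hand side lies in $\powerseries \Q x$.

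\emph{Monomial case.} Write $I$ for the set of indices with $g^{(i)}(0) \neq 0$. By $y$-composability, $f \in \powerseries{\poly \Q {y_I}}{x, y_{\setminus I}}$, so we may expand $f = \sum_{n \in \N^k} a_n(x) \, y^n$ as a power series in $y_{\setminus I}$ with coefficients $a_n \in \powerseries \Q {x}$, and with the extra constraint that for each fixed power of $y_{\setminus I}$ only finitely many exponents in $y_I$ appear. For a single monomial $h := a(x) \cdot y^n$ we compute directly, using the product rule for $\partial_{x_j}$:
\begin{align*}
    \partial_{x_j}(h \compose y g)
    &= \partial_{x_j}\Bigl(a \cdot \prod_{i=1}^k (g^{(i)})^{n_i}\Bigr) \\
    &= \partial_{x_j} a \cdot \prod_{i=1}^k (g^{(i)})^{n_i}
        + \sum_{i=1}^k n_i \, \partial_{x_j} g^{(i)} \cdot a \cdot (g^{(i)})^{n_i - 1} \prod_{l \neq i} (g^{(l)})^{n_l}.
\end{align*}
On the other hand, $\partial_{x_j} h \compose y g = \partial_{x_j} a \cdot \prod_i (g^{(i)})^{n_i}$ and $\partial_{y_i} h \compose y g = n_i \, a \cdot (g^{(i)})^{n_i - 1} \prod_{l\neq i}(g^{(l)})^{n_l}$ (interpreted as $0$ when $n_i = 0$), so both sides of~\cref{eq:chain rule for power series} agree on $h$.

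\emph{Linearity and continuity.} Both sides of~\cref{eq:chain rule for power series} are $\Q$-linear in $f$; thus the identity extends to arbitrary finite $\Q$-linear combinations of $y$-monomials. To pass to the full power series $f = \sum_n a_n(x) \, y^n$, I would observe that the family $\{a_n(x) y^n\}_{n \in \N^k}$ is summable in $\powerseries \Q {x,y}$, that the operators $\partial_{x_j}$, $\partial_{y_i}$, and $(\_) \compose y g$ are continuous (the last one precisely because $y$-composability guarantees that for each target monomial $x^p$ only finitely many source monomials contribute to its coefficient), and that multiplication by the fixed series $\partial_{x_j} g^{(i)}$ is also continuous. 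Commuting the operators past the sum and invoking the monomial case termwise yields~\cref{eq:chain rule for power series}.

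\emph{Main obstacle.} The only delicate point is justifying the termwise interchange of $(\_)\compose y g$ with the infinite sum $\sum_n a_n y^n$, especially for indices $i \in I$ where $g^{(i)}(0) \neq 0$ so that increasing powers of $y_i$ do not produce increasing orders. The local polynomiality of $f$ with respect to $y_I$ is exactly what guarantees that, for each fixed $y_{\setminus I}$-power, only finitely many $y_I$-monomials occur, making the interchange legitimate. Once summability is established, the chain rule follows mechanically from the monomial computation above.
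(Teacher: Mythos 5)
Your proof is correct and follows essentially the same skeleton as the paper's: verify the identity on a base class of terms, extend by linearity, and pass to the full series by summability and continuity, with local polynomiality in $y_I$ being the key point making the interchange legitimate. The one structural difference is minor but worth noting: the paper abstracts the two sides of~\cref{eq:chain rule for power series} as operators $F(f)$ and $G(f)$, proves that \emph{both} satisfy a Leibniz-type product rule $(F(f_1 f_2) = F(f_1)\cdot(f_2\compose y g) + (f_1\compose y g)\cdot F(f_2)$, and similarly for $G)$, and then reduces the base case all the way down to single variables $x_\ell$ and $y_\ell$; you instead compute both sides directly on a term $a(x)\, y^n$ in one shot, which is more concrete and absorbs the $x$-part from the start, avoiding the separate product-closure argument. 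The only place where your write-up is slightly looser than the paper's is in the passage to the infinite sum: continuity of the operators alone does not give the interchange with $\sum_n a_n(x) y^n$; one also needs that the operators map summable families to summable families (the paper states this explicitly as a precondition of the interchange). You do touch on the right reason---$y$-composability via local polynomiality in $y_I$ bounds the $y_I$-exponents per fixed $y_{\setminus I}$-power---so the argument is recoverable, but when writing the final version you should state separately that the image families $\{\partial_{x_j}(a_n y^n \compose y g)\}_n$ and $\{\partial_{x_j}(a_n y^n)\compose y g\}_n$, $\{\partial_{x_j} g^{(i)}\cdot(\partial_{y_i}(a_n y^n)\compose y g)\}_n$ are summable, rather than folding this into "continuity."
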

%
\noindent
This is a general result on power series;
in particular, there is no assumption on whether the involved power series are \CDF.
%
\begin{proof}
    Fix an index $j$ and a tuple of power series $g$ as in the statement of the lemma.
    First note that the \rhs~of~\cref{eq:chain rule for power series} is well-defined
    since $\partial {x_j} f, g$ and $\partial {y_i} f, g$ are $y$-composable by~\cref{lem:composable closure}.
    
    Rewrite~\cref{eq:chain rule for power series} as
    \begin{align}
        \label{eq:chain rule FG}
        F(f) = G(f),
    \end{align}
    where we have used the abbreviations
    \begin{align*}
        F(f) &:= \partial {x_j} (f \compose y g), \text{and} \\
        G(f) &:= \partial {x_j} f \compose y g + \sum_{i = 1}^k \partial {x_j} g^{(i)} \cdot (\partial {y_i} f \compose y g).
    \end{align*}
    Since partial differentiation $\partial {x_j}$ and composition $\_ \compose y g$
    are linear and continuous, the same is true for $F$ and $G$.
    This implies that \cref{eq:chain rule FG} is inductive
    over scalar multiplication, summable families of power series, and products.
    We show this in the following claims.
    
    \begin{claim*}
        If \cref{eq:chain rule FG} holds for $f$,
        then it holds for $c \cdot f$, for every $c \in \Q$,
    \end{claim*}
    \begin{proof}[Proof of the claim]
        \begin{align*}
            F(c \cdot f) = c \cdot F(f) = c \cdot G(f) = G(c \cdot f).
            \quad \qedhere
        \end{align*}
    \end{proof}

    %
    \begin{claim*}
        Assume $\set{f_0, f_1, \dots} \subseteq \powerseries \Q x$ is a summable family
        \st~1) $\set{F(f_0), F(f_1), \dots}$ and $\set{G(f_0), G(f_1), \dots}$ are also summable
        and 2) for every $f_i$ the chain rule~\cref{eq:chain rule FG} holds.
        Then, it also holds for $\sum_{n = 0}^\infty f_n$.
    \end{claim*}
    \begin{proof}[Proof of the claim]
        \begin{align*}
            F(\sum_{n=0}^\infty f_n) = \sum_{n=0}^\infty F(f_n) = \sum_{n=0}^\infty G(f_n) = G(\sum_{n=0}^\infty f_n). 
            \quad \qedhere
        \end{align*}
    \end{proof}

    \begin{claim*}
        If \cref{eq:chain rule FG} holds for $f_1, f_2$,
        then it holds for $f_1 \cdot f_2$.
    \end{claim*}
    \begin{proof}[Proof of the claim]
        Both $F$ and $G$ satisfy a property akin to~\cref{eq:Leibniz rule}:
        \begin{align}
            \label{eq:almost derivation F}
            F(f_1 \cdot f_2) &= F(f_1) \cdot (f_2 \compose y g) + (f_1 \compose y g) \cdot F(f_2), \\
            \label{eq:almost derivation G}
            G(f_1 \cdot f_2) &= G(f_1) \cdot (f_2 \compose y g) + (f_1 \compose y g) \cdot G(f_2).
        \end{align}
        Indeed, regarding $F$ we have
        \begin{align*}
            F(f_1 \cdot f_2)
            &= \partial {x_j} ((f_1 \cdot f_2) \compose y g) = \\
            &= \partial {x_j} ((f_1 \compose y g) \cdot (f_2 \compose y g)) = \\
            &= \partial {x_j} (f_1 \compose y g) \cdot (f_2 \compose y g) + (f_1 \compose y g) \cdot \partial {x_j} (f_2 \compose y g) = \\
            &= F(f_1) \cdot (f_2 \compose y g) + (f_1 \compose y g) \cdot F(f_2),
        \end{align*}
        Regarding $G$, write $G_0(f) := \partial {x_j} f \compose y g$
        and $G_i(f) := \partial {y_i} f \compose y g$ for $1 \leq i \leq k$.
        We notice that~\cref{eq:almost derivation G} holds for $G_0, \dots, G_k$:
        \begin{align*}
            G_0(f_1 \cdot f_2)
            &= \partial {x_j} (f_1 \cdot f_2) \compose y g \\
            &= (\partial {x_j} f_1 \cdot f_2 + f_1 \cdot \partial {x_j} f_2) \compose y g = \\
            &= (\partial {x_j} f_1 \compose y g) \cdot (f_2 \compose y g) + (f_1 \compose y g) \cdot (\partial {x_j} f_2 \compose y g) = \\
            &=  G_0 (f_1) \cdot (f_2 \compose y g) + (f_1 \compose y g) \cdot G_0(f_2).
        \end{align*}
        The proof for $G_1, \dots, G_k$ is analogous.
        Since $G$ is a linear combination of $G_0, \dots, G_k$,
        property~\cref{eq:almost derivation G} follows for $G$ as well.
        
        Having established the product rules \cref{eq:almost derivation F} and~\cref{eq:almost derivation G},
        we show that~\cref{eq:chain rule FG} is inductive over products of power series,
        completing the proof of the claim.
        Indeed, assume that~\cref{eq:chain rule FG} holds for $f_1, f_2$.
        We can then write
        \begin{align*}
            F(f_1 \cdot f_2)
            &= F(f_1) \cdot (f_2 \compose y g) + (f_1 \compose y g) \cdot F(f_2) = \\
            &= G(f_1) \cdot (f_2 \compose y g) + (f_1 \compose y g) \cdot G(f_2) = \\
            &= G(f_1 \cdot f_2).
            \quad \qedhere
        \end{align*}
    \end{proof}
    In the last claim we establish the chain rule for single variables.%
    \begin{claim*}
        \cref{eq:chain rule FG} holds for single variables $f = x_\ell$ and $f = y_\ell$.
    \end{claim*}
    \begin{proof}[Proof of the claim]
        If $f = x_\ell$ then we have $\partial {y_i} f = 0$, and thus
        \begin{align*}
            F(f) &= \partial {x_j} (x_\ell \compose y g) = \partial {x_j} x_\ell, \text{and} \\
            G(f) &= \partial {x_j} x_\ell \compose y g = \partial {x_j} x_\ell.
        \end{align*}
        If $f = y_\ell$ then $f \compose y g = g^{(\ell)}$
        and $\partial {x_j} f = \partial {y_i} f = 0$ for all $i \neq \ell$,
        implying that the \rhs~of~\cref{eq:chain rule FG}~is just $\partial {x_j} g^{(\ell)}$, as required.
    \end{proof}

    We now prove~\cref{eq:chain rule FG} by bringing all claims together.
    Write the power series $f$ as the sum
    \begin{align*}
        f = \sum_{m\in\N^d, n \in \N^k} f_{m, n} \cdot x^m y^n
    \end{align*}
    of the summable family of monomials $\setof{f_{m, n} \cdot x^m y^n}{m \in \N^d, n \in \N^k}$.
    Since $f, \partial {x_j} f, \partial {y_i} f$ are all $y$-composable with $g$, the families
    \begin{align*}
        &\setof{F(f_{m, n} \cdot x^m y^n)}{m \in \N^d, n \in \N^k}
        \text{ and } \\
        &\setof{G(f_{m, n} \cdot x^m y^n)}{m \in \N^d, n \in \N^k}
    \end{align*}
    are also summable.
    Since $F(h) = G(h)$ for $h$ a single variable $h = x_\ell$ or $h = y_\ell$ by the last claim,
    we have $F(x^m y^n) = G(x^m y^n)$ by multiple applications of the claim for products,
    but then we also have $F(f_{m, n} \cdot x^m y^n) = G(f_{m, n} \cdot x^m y^n)$ by the claim for scalar multiplication,
    and finally
    \begin{align*}
        F(f)
        &= F (\sum_{m\in\N^d, n \in \N^k} f_{m, n} \cdot x^m y^n) = \\
        &= \sum_{m\in\N^d, n \in \N^k} F(f_{m, n} \cdot x^m y^n) = \\
        &= \sum_{m\in\N^d, n \in \N^k} G(f_{m, n} \cdot x^m y^n) = \\
        &= G (\sum_{m\in\N^d, n \in \N^k} f_{m, n} \cdot x^m y^n) = G(f)
    \end{align*}
    by the claim for summable families.
\end{proof}

The following lemma shows that if a \CDF~series is polynomial in a set of variables,
then the generators can be chosen with the same property.
This will be used in the proof of closure under strong composition
for \CDF~power series (\cref{lem:CDF strong composition}).
\begin{lemma}
    \label{lem:cutting lemma}
    Let $f$ belong to a finitely generated differential subalgebra of $\powerseries \Q x$.
    %
    If $f$ is a polynomial in $Z \subseteq \set{x_1, \dots, x_d}$,
    then the generators can be chosen to be polynomial in $Z$.
\end{lemma}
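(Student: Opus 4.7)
Without loss of generality let $Z = \set{x_1, \dots, x_s}$ and $W = \set{x_{s+1}, \dots, x_d}$. The plan is to decompose $f$ as a polynomial in $Z$ over $\powerseries \Q W$, show that each coefficient is $\CDF$ as a power series in $W$, combine the generators of the algebras that witness this, and finally adjoin the $Z$-variables themselves; every element of the resulting generating set will then be polynomial in $Z$.

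First I would write
\begin{align*}
    f \;=\; \sum_{\alpha \in S} c_\alpha \cdot Z^\alpha,\qquad c_\alpha \in \powerseries \Q W,
\end{align*}
where $S \subseteq \N^s$ is the finite set of $Z$-monomial multi-indices occurring in $f$. By Taylor's formula $c_\alpha = \frac{1}{\alpha!} \restrict{\partial Z^\alpha f}{Z = 0}$. I would then argue that each $c_\alpha$ is $\CDF$ in two moves: $\partial Z^\alpha f$ is $\CDF$ by closure under partial derivation (\cref{lem:CDF basic closure properties}(1)); and evaluating $Z := 0$ is the composition of $\partial Z^\alpha f \in \powerseries \Q {Z, W}$ with the tuple $(0, W) \in \powerseries \Q W^d$, every component of which has zero constant term, so the composability index set is empty and the composition is vacuously strong. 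By \cref{lem:CDF basic closure properties}(3) the result $c_\alpha \in \powerseries \Q W$ is $\CDF$.

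Next I would invoke the characterisation of $\CDF$ (\cref{lem:CDF characterisation}) in the variable set $W$: each $c_\alpha$ lies in a finitely generated differential subalgebra of $\powerseries \Q W$, generated by some $h^{(1)}_\alpha, \dots, h^{(k_\alpha)}_\alpha \in \powerseries \Q W$. Since $S$ is finite, collecting these generators across $\alpha \in S$ produces a single finitely generated differential subalgebra $B \subseteq \powerseries \Q W$ containing every $c_\alpha$, whose generators $h^{(1)}, \dots, h^{(n)}$ all lie in $\powerseries \Q W$ and are therefore polynomial in $Z$ (of $Z$-degree zero). Viewed as a subring of $\powerseries \Q x$, this $B$ remains closed under every $\partial {x_j}$, since for $x_j \in Z$ the operator $\partial {x_j}$ annihilates all of $\powerseries \Q W$.

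Finally I would take $A'' := \poly \Q {h^{(1)}, \dots, h^{(n)}, x_1, \dots, x_s}$ as the new generating set. Every generator is polynomial in $Z$; the decomposition $f = \sum_\alpha c_\alpha Z^\alpha$ with $c_\alpha \in B \subseteq A''$ and $Z^\alpha \in A''$ yields $f \in A''$; and closure of $A''$ under each $\partial {x_j}$ follows from \cref{fact:closure lemma} applied to the generators---for $x_j \in W$ the derivative annihilates each $x_i \in Z$ and maps $h^{(i)} \in B$ back into $B \subseteq A''$, while for $x_j \in Z$ it annihilates each $h^{(i)} \in \powerseries \Q W$ and sends each $x_i \in Z$ to a constant. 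I expect the step requiring the most care to be the strong composability check in the first paragraph, which hinges on every substitution value lying in the maximal ideal of constant-term-zero series; everything else is a bookkeeping exercise.
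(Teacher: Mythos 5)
You take a genuinely different route from the paper. The paper's proof reduces to $Z = \{x_1\}$, keeps the original generators $f^{(\ell)}$ (which need not be polynomial in $x_1$), and replaces them with the finite family of $x_1$-homogeneous slices $f^{(\ell)}_e := \coefficient{x_1^e}f^{(\ell)}\cdot x_1^e/e!$ for $0 \le e \le E$; it then shows directly, via convolution and a commutation rule for $\partial_{x_j}$, that these slices generate a differential subalgebra containing $f$. Your proof instead decomposes $f$ itself as $\sum_\alpha c_\alpha Z^\alpha$, recovers each coefficient $c_\alpha$ by differentiating and evaluating at $Z=0$, certifies each $c_\alpha$ as $\CDF$ in $\powerseries\Q W$, and adjoins the $Z$-variables at the end. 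This is cleaner and handles general $Z$ in one shot, but it leans on heavier machinery than the bare-hands approach in the appendix.

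The one real problem is a logical circularity you could not have seen without the appendix: \cref{lem:CDF basic closure properties}(3) (strong composition) is proved in \cref{lem:CDF strong composition}, and that proof cites \cref{lem:cutting lemma} — precisely to arrange that the generators $f^{(\ell)}$ can be chosen polynomial in $y_I$. So as written your argument appeals to a result whose proof depends on the very statement you are trying to establish. The good news is that the dependence disappears in your case: you substitute the tuple $(0,\dots,0,x_{s+1},\dots,x_d)$, all of whose components have zero constant term, so the index set $I$ is empty, strong $y$-composability is vacuous for \emph{any} choice of generators, and the cutting-lemma step in the proof of \cref{lem:CDF strong composition} is never invoked. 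To make your proof non-circular you must say this explicitly — e.g.\ restrict to the case $g(0)=0$ of strong composition and observe that the proof of \cref{lem:CDF strong composition} under this hypothesis does not use \cref{lem:cutting lemma}. Once you insert that remark, the remainder of your argument (finite union of generators, closure of $A'' := \poly\Q{h^{(1)},\dots,h^{(n)},x_1,\dots,x_s}$ under all $\partial_{x_j}$, and $f \in A''$ via the Taylor decomposition) is correct.
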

\noindent
We leave it open whether the lemma holds by replacing ``polynomial'' with ``locally polynomial''.
Were this the case, we could prove the generalisation of \cref{lem:CDF strong composition} stating that \CDF~power series are closed under composition.
\begin{proof}
    We show the lemma for $Z = \set {x_1}$.
    Since the construction preserves polynomiality of the generators,
    iterated application proves the general statement.
    
    Assume that $f$ is a polynomial in $x_1$ of degree $E \in \N$
    and that it belongs to the finitely generated differential algebra
    \begin{align*}
        R := \poly \Q {f^{(1)}, \dots, f^{(m)}} \subseteq \powerseries \Q x.
    \end{align*}
    For a power series $g \in \powerseries \Q x$ and $e \in \N$, write
    \begin{align*}
        g_e := \coefficient {x_1^e} g \cdot \frac {x_1^e} {e!},
        \quad \coefficient {x_1^e} g \in \powerseries \Q {x_2, \dots, x_d},
    \end{align*}
    so that any power series $g$ can be written as $g = g_0 + g_1 + \cdots$,
    where the family $\set{g_0, g_1, \dots}$ is summable.
    For instance, since $f$ is polynomial in $x_1$ of degree $E$ we can write
    \begin{align}
        \label{eq:representation for f}
        f = f_0 + f_1 + \cdots + f_E.
    \end{align}
    The operation $(\_)_e$ is linear, continuous, and preserves summability,
    for every $e \in \N$.
    We will use the following sum and product rules.
    \begin{claim*}
        For every power series $g, h \in \powerseries \Q x$ and $e \in \N$,
        \begin{align}
            \label{eq:coefficient extraction convolution identity}
            (g + h)_e = g_e + h_e
                \quad\text{and}\quad
                    (g \cdot h)_e = \sum_{a + b = e} g_a \cdot h_b.
        \end{align}
    \end{claim*}
    \begin{proof}[Proof of the claim.]
        The sum rule follows immediately by linearity.
        For the product rule, we have
        \begin{align*}
            (g \cdot h)_e
            &= \left(\sum_{a \in \N} g_a \cdot \sum_{b \in \N} h_b\right)_e
            = \left(\sum_{a, b \in \N} g_a \cdot h_b\right)_e
            = \sum_{a, b \in \N} \left(g_a \cdot h_b\right)_e = \\
            &= \sum_{a, b \in \N} \left(\coefficient {x_1^a} g \cdot \frac {x_1^a} {a!} \cdot \coefficient {x_1^b} h \cdot \frac {x_1^b} {b!}\right)_e = \\
            &= \sum_{a, b \in \N} \coefficient {x_1^e} \left(\coefficient {x_1^a} g \cdot \frac {x_1^a} {a!} \cdot \coefficient {x_1^b} h \cdot \frac {x_1^b} {b!}\right) \cdot \frac {x_1^e} {e!} = \\
            &= \sum_{a, b \in \N} \coefficient {x_1^e} \left(\coefficient {x_1^a} g \cdot \coefficient {x_1^b} h \cdot \frac {x_1^{a+b}} {a! b!}\right) \cdot \frac {x_1^e} {e!} = \\
            &= \sum_{a + b = e} \coefficient {x_1^a} g \cdot \coefficient {x_1^b} h \cdot \frac {x_1^{a+b}} {a!b!} = \\
            &= \sum_{a + b = e} \coefficient {x_1^a} g \cdot \frac {x_1^a} {a!} \cdot \coefficient {x_1^b} h \cdot \frac {x_1^b} {b!}
            = \sum_{a + b = e} g_a \cdot h_b. \qedhere
        \end{align*}
    \end{proof}

    %
    Consider the finitely generated algebra
    \begin{align*}
        \widetilde R := \polyof \Q {f^{(\ell)}_e} {0 \leq e \leq E, 1 \leq \ell \leq m}, 
    \end{align*}
    Two observations are in order.
    First, the $f^{(\ell)}_e$'s are polynomial in $x_1$ by construction.
    Second, if $f^{(\ell)}$ is polynomial in $x_j$, for any $1 \leq j \leq d$,
    then the same is true for $f^{(\ell)}_e$, guaranteeing repeatability of the argument.
    The proof is concluded by the following two claims.
    \begin{claim*}
        $f \in \widetilde R$.
    \end{claim*}
    \begin{proof}[Proof of the claim.]
        Since $f \in R$, the power series $f$ can be written as a polynomial function
        $f = p(f^{(1)}, \dots, f^{(m)})$ of the original generators $f^{(\ell)}$'s,
        for some $p \in \poly \Q m$.
        By the sum and product rules~\cref{eq:coefficient extraction convolution identity}, 
        the term containing $x_1^e$ in $f$
        depends only on the terms containing $x_1^{\leq e}$ in $f^{(1)}, \dots, f^{(m)}$.
        Thus, we can write
        \begin{align*}
            f_e = q\left(\tuple{f^{(\ell)}_a}_{0 \leq a \leq e, 1 \leq \ell \leq m}\right)
        \end{align*}
        for some polynomial $q \in \poly \Q {m \cdot (e+1)}$.
        Therefore $f_e \in \widetilde R$,
        and we conclude $f \in \widetilde R$
        since $f$ admits the representation~\cref{eq:representation for f}.
    \end{proof}
    \begin{claim*}
        The ring $\widetilde R$ is closed under partial derivatives $\partial {x_j}$'s.
    \end{claim*}
    \begin{proof}[Proof of the claim.]
        It suffices to show $\partial {x_j} f^{(\ell)}_e \in \widetilde R$.
        %
        If $e = 0$ and $j = 1$,
        we have $\partial {x_j} f^{(\ell)}_e = 0$ and we are done.
        Now assume $e \geq 1$ or $j \neq 1$.
        We have the following commuting rule
        \begin{align}
            \label{eq:a commuting rule}
            \partial {x_j} f^{(\ell)}_e =
            \left\{\begin{array}{ll}
                (\partial {x_j} f^{(\ell)})_{e-1}
                    &\text{ if } j = 1, \\
                (\partial {x_j} f^{(\ell)})_e
                    &\text{ otherwise.}
            \end{array}\right.
        \end{align}
        Indeed, for $j = 1$ (thus $e \geq 1$ by assumption) we have
        \begin{align*}
            \partial {x_1} f^{(\ell)}_e
            &= \partial {x_1} (\coefficient {x_1^e} f^{(\ell)} \cdot \frac {x_1^e} {e!})
            = \coefficient {x_1^e} f^{(\ell)} \cdot \partial {x_1} (\frac {x_1^e} {e!}) = \\
            &= \coefficient {x_1^e} f^{(\ell)} \cdot \frac {x_1^{e-1}} {(e-1)!}
            = \coefficient {x_1^{e-1}} (\partial {x_1} f^{(\ell)}) \cdot \frac {x_1^{e-1}} {(e-1)!} = \\
            &= (\partial {x_1} f^{(\ell)})_{e-1},
        \end{align*}
        and for $j \neq 1$ we have
        \begin{align*}
            \partial {x_j} f^{(\ell)}_e
            &= \partial {x_j} (\coefficient {x_1^e} f^{(\ell)} \cdot \frac {x_1^e} {e!})
            = \partial {x_j} (\coefficient {x_1^e} f^{(\ell)}) \cdot \frac {x_1^e} {e!} = \\
            &= \coefficient {x_1^e} (\partial {x_j} f^{(\ell)}) \cdot \frac {x_1^e} {e!}
            = (\partial {x_j} f^{(\ell)})_e.
        \end{align*}
        Since $R$ is closed under partial derivatives,
        $\partial {x_j} f^{(\ell)}$ can be written as a polynomial combination
        of the original generators $f^{(1)}, \dots, f^{(m)}$.
        Reasoning as in the previous claim,
        $(\partial {x_j} f^{(\ell)})_e$ can be written as a polynomial combination of new generators
        $f^{(\ell)}_a$ for $0 \leq a \leq e$ and $1 \leq \ell \leq m$.
        The same clearly applies to $(\partial {x_j} f^{(\ell)})_{e-1}$.
        Consequently, $(\partial {x_j} f^{(\ell)})_e$ and $(\partial {x_j} f^{(\ell)})_{e-1}$
        are in $\widetilde R$,
        and thanks to~\cref{eq:a commuting rule} we have $\partial {x_j} f^{(\ell)}_e \in \widetilde R$ as well,
        as required.
    \end{proof}
    \noindent
    The two claims conclude the proof.
\end{proof}

We can finally show closure for \CDF~power series under strong composition.
\begin{restatable}[Closure under strong composition]{lemma}{CDFclosureUnderStrongComposition}
    \label{lem:CDF strong composition}
    Consider commuting variables
    $x = \tuple{x_1, \dots, x_d}$, $y = \tuple{y_1, \dots, y_k}$,
    and strongly $y$-composable power series $f \in \powerseries \Q {x, y}$, $g \in \powerseries \Q x^k$.
    If $f, g$ are \CDF, then $f \compose y g$ is \CDF.
\end{restatable}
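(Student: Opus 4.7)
\medskip
\noindent
\textbf{Proof plan.}
The approach is to use the characterisation of \CDF~from~\cref{lem:CDF characterisation}: it suffices to exhibit a finitely generated subalgebra of $\powerseries \Q x$ containing $f \compose y g$ which is closed under all partial derivatives $\partial {x_j}$. The plan is to build this algebra from (suitably chosen) generators of the \CDF~presentations of $f$ and $g$ using the chain rule~\cref{eq:chain rule for power series}.

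First, since $g$ is \CDF, taking the union of the generators of each component $g^{(i)}$ yields a single finitely generated differential subalgebra of $\powerseries \Q x$ generated by some $h^{(1)}, \dots, h^{(p)}$, which contains every $g^{(i)}$. Second, since $f$ is \CDF, it belongs to a finitely generated differential subalgebra of $\powerseries \Q {x, y}$. Here strong $y$-composability enters: $f$ is polynomial \wrt~$y_I$, where $I = \setof i {g^{(i)}(0) \neq 0}$, so the cutting \cref{lem:cutting lemma} (applied with $Z := y_I$) allows us to assume that the generators $f^{(1)}, \dots, f^{(m)}$ are themselves polynomial \wrt~$y_I$. Consequently each $f^{(\ell)}$ is strongly $y$-composable with $g$, and the compositions $f^{(\ell)} \compose y g \in \powerseries \Q x$ are well-defined. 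I then propose to consider the finitely generated subalgebra
\begin{align*}
    R := \polyof \Q {f^{(\ell)} \compose y g,\ h^{(i)}} {1 \leq \ell \leq m,\ 1 \leq i \leq p} \subseteq \powerseries \Q x.
\end{align*}

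The next step is to check that $f \compose y g \in R$ and that $R$ is closed under each $\partial {x_j}$. For the first point, since $f$ lies in the algebra generated by $f^{(1)}, \dots, f^{(m)}$, we can write $f = q(f^{(1)}, \dots, f^{(m)})$ for some polynomial $q$, and then $f \compose y g = q(f^{(1)} \compose y g, \dots, f^{(m)} \compose y g) \in R$. For closure under $\partial {x_j}$, it suffices to check it on the generators: closure of the $g$-algebra under derivatives gives $\partial {x_j} h^{(i)} \in R$, and $\partial {x_j} g^{(i)}$ lies in the $h$-subalgebra hence in $R$; then the chain rule~\cref{eq:chain rule for power series} yields
\begin{align*}
    \partial {x_j} (f^{(\ell)} \compose y g) = \partial {x_j} f^{(\ell)} \compose y g + \sum_{i=1}^{k} \partial {x_j} g^{(i)} \cdot (\partial {y_i} f^{(\ell)} \compose y g),
\end{align*}
in which $\partial {x_j} f^{(\ell)}$ and $\partial {y_i} f^{(\ell)}$ are polynomial combinations of $f^{(1)}, \dots, f^{(m)}$ (by closure of the $f$-algebra under partial derivatives), so that $\partial {x_j} f^{(\ell)} \compose y g$ and $\partial {y_i} f^{(\ell)} \compose y g$ are polynomial combinations of $f^{(1)} \compose y g, \dots, f^{(m)} \compose y g$, hence in $R$. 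Every term of the right-hand side therefore lies in $R$. By~\cref{lem:CDF characterisation} every element of $R$, in particular $f \compose y g$, is \CDF.

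The main obstacle is the bookkeeping needed to ensure that every composition appearing in the chain rule is well-defined: this is exactly where strong composability and the cutting \cref{lem:cutting lemma} are used, since it is crucial that the chosen generators $f^{(\ell)}$ (and, via closure of the differential algebra, their partial derivatives $\partial {x_j} f^{(\ell)}, \partial {y_i} f^{(\ell)}$) remain polynomial in $y_I$, so that substituting the $g^{(i)}$'s into them is legal. If one only had composability (local polynomiality) instead of strong composability, the cutting lemma as stated would not apply, which is precisely why we leave closure under (plain) composition open.
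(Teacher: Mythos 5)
Your proof is correct and follows essentially the same route as the paper's: use the cutting lemma to replace the generators of the $f$-algebra by ones that are polynomial in $y_I$ (hence strongly $y$-composable with $g$), take the generators $f^{(\ell)}\compose y g$ together with a set of generators for a differentially closed algebra containing the components of $g$, and close the resulting algebra under $\partial{x_j}$ via the chain rule. The only cosmetic difference is that the paper absorbs your auxiliary generators $h^{(i)}$ directly into the tuple $g$ (adding dummy $y$-variables and assuming w.l.o.g.\ that $\poly\Q{g^{(1)},\dots,g^{(k)}}$ is already differentially closed), whereas you keep them as a separate family $h^{(1)},\dots,h^{(p)}$; both packagings yield the same finitely generated differential subalgebra of $\powerseries\Q x$.
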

%
\begin{proof}
    This is a consequence of the chain rule (\cref{lem:chain rule for power series}).
    Since $f$ is \CDF, by~\cref{lem:CDF characterisation} it belongs to a finitely generated algebra
    \begin{align*}
        f \in R := \poly \Q {f^{(1)}, \dots, f^{(m)}},
            \quad\text{for some } f^{(1)}, \dots, f^{(m)} \in \powerseries \Q {x, y},
    \end{align*}
    closed under $\partial {x_j}$'s and $\partial {y_i}$'s.
    The tuple of power series $g = \tuplesmall{g^{(1)}, \dots, g^{(k)}}$ is also \CDF,
    and we assume \wlg~that $$S := \poly \Q {g^{(1)}, \dots, g^{(k)}}$$ is already closed under $\partial {x_j}$'s.
    (Were this not the case, we could increase $k$ and add dummy variables $y_i$'s to accommodate additional generators).
    We now apply~\cref{lem:chain rule for power series} to $f, g$ and write
    \begin{align*}
        \partial {x_j} (f \compose y g) = \partial {x_j} f \compose y g + \sum_{i = 1}^k \partial {x_j} g^{(i)} \cdot (\partial {y_i} f \compose y g).
    \end{align*}
    This suggests to consider the finitely generated algebra
    \begin{align*}
        T := \poly \Q {f^{(1)} \compose y g, \dots, f^{(m)} \compose y g, g^{(1)}, \dots, g^{(k)}} \subseteq \powerseries \Q x.
    \end{align*}
    First of all, we need to ensure that the generators of $T$ are well defined.
    \begin{claim*}
        We can choose the generators $f^{(\ell)}$'s \st\-
        $f^{(\ell)}, g$ are strongly $y$-composable.
    \end{claim*}
    \noindent
    (Incidentally, this is the problematic step for showing closure under composable \CDF~power series:
    It is not clear how to ensure that the generators $f^{(\ell)}$'s of $f$ can be chosen
    so that if $f$ is locally polynomial in $y_I$, then $f^{(1)}, \dots, f^{(m)}$ are also locally polynomial in $y_I$.)
    \begin{proof}[Proof of the claim.]
        Let $I \subseteq \set{1, \dots, k}$ be the set of indices $i$ \st~$g^{(i)}(0) \neq 0$.
        Since $f, g$ are strongly $y$-composable,
        $f$ is polynomial \wrt~$y_I$,
        \ie, $f \in \poly {\powerseries \Q {x, y_{\setminus I}}} {y_I}$.
        We conclude by \cref{lem:cutting lemma}.
    \end{proof}
    Next, we show that the composition of $f$ and $g$ is indeed in $T$.
    \begin{claim*}
        $f \compose y g \in T$.
    \end{claim*}
    \begin{proof}[Proof of the claim.]
        Since $f \in R$ and $R$ is closed under $\partial {x_j}$'s and $\partial {y_i}$'s,
        we have $\partial {x_j} f, \partial {y_i} f \in R$ as well.
        Consequently,
        \begin{align*}
            \partial {x_j} f \compose y g, \partial {y_i} f \compose y g \in T' := \poly \Q {f^{(1)} \compose y g, \dots, f^{(m)} \compose y g}.
        \end{align*}
        By the same argument, $\partial {x_j} g^{(i)} \in S$.
        We conclude by noticing that the generators of $T$
        are obtained by taking the union of those of $T'$ and $S$.
    \end{proof}
    Finally, we show that $T$ is closed under partial derivatives.
    \begin{claim*}
        $T$ is closed under $\partial {x_j}$'s.
    \end{claim*}
    \begin{proof}[Proof of the claim.]
        By the product rule for power series~\cref{eq:Leibniz rule},
        it suffices to show that $\partial {x_j}$ applied to the generators of $T$ yields> a power series in $T$.
        The argument showing $\partial {x_j} g^{(i)} \in T$ is immediate since $\partial {x_j} g^{(i)} \in S$ and $S \subseteq T$.
        We conclude by showing $\partial {x_j} (f^{(\ell)} \compose y g) \in T$.
        This is another consequence of the chain rule~(\cref{lem:chain rule for power series}),
        this time applied to $f^{(\ell)}, g$:
        \begin{align*}
            \partial {x_j} (f^{(\ell)} \compose y g)
            = \underbrace {\partial {x_j} f^{(\ell)} \compose y g}_{\in T' \subseteq T}
            + \sum_{i = 1}^k \underbrace {\partial {x_j} g^{(i)}}_{\in S \subseteq T} \cdot (\underbrace {\partial {y_i} f^{(\ell)} \compose y g}_{\in T' \subseteq T}).
        \end{align*}
        Since $\partial {x_j} f^{(\ell)} \in R$, we have $\partial {x_j} f^{(\ell)} \compose y g \in T' \subseteq T$.
        Similarly, $\partial {y_i} f^{(\ell)} \compose y g \in T$.
        Also, $\partial {x_j} g^{(i)} \in S \subseteq T$.
        This suffices to show that $\partial {x_j} (f^{(\ell)} \compose y g) \in T$, as required.
    \end{proof}
    By the claims, we have shown that $f \compose y g$ belongs to a finitely generated algebra
    $T \subseteq \powerseries \Q x$ closed under $\partial {x_j}$'s.
    We conclude that $f \compose y g$ is \CDF~by~\cref{lem:CDF characterisation}.
\end{proof}

\subsection{Regular support restrictions}

We present below the formal semantics of constraint expressions introduced in~\cref{sec:CDF support restrictions}:
\begin{align*}
    \sem {z_j = n}
        &= \setof {\tuple {a_1, \dots, a_d} \in \N^d} {a_j = n} \\
    \sem {\equivmod {z_j} n m}
        &= \setof {\tuple {a_1, \dots, a_d} \in \N^d} {\equivmod {a_j} n m} \\
    \sem {\varphi \lor \psi}
        &= \sem \varphi \cup \sem \psi \\
    \sem {\varphi \land \psi}
        &= \sem \varphi \cap \sem \psi \\
    \sem {\lneg \varphi}
        &= \N^d \setminus \sem \varphi.
\end{align*}

While constraint expressions are convenient to describe subsets of $\N^d$,
we now introduce an algebraic formalism which is more suitable in proofs.
A set $S \subseteq \N^d$ is \emph{recognisable} if
there is a finite commutative monoid $\tuple{M, +, 0}$,
with a distinguished subset $F \subseteq M$,
and a homomorphism $h : \N^d \to M$ \st~$S = h^{-1} F$.
Recognisable subsets of $\N^d$ are in bijective correspondence with commutative regular languages
and form a strict subset of semilinear sets.
For instance, the semilinear set $\tuple {1, 1, 1}^* \subseteq \N^3$ is not recognisable.

\begin{lemma}
    \label{lem:regular constraints are recognisable}
    For every set $S \subseteq \N^d$,
    if $S$ is regular then $S$ is recognisable.
\end{lemma}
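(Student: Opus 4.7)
The plan is to proceed by structural induction on the constraint expression $\varphi$ denoting $S$. I will first handle the two atomic cases and then observe that the class of recognisable subsets of $\N^d$ is closed under Boolean operations, which takes care of the inductive cases $\varphi \lor \psi$, $\varphi \land \psi$, and $\lneg \varphi$.

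For the base case $\varphi = (z_j = n)$, I take the finite commutative monoid $M := \set{0, 1, \dots, n, \top}$ where $+$ is ordinary addition, capped at $\top$ as soon as the sum exceeds $n$ (and $\top$ is absorbing). The homomorphism $h \colon \N^d \to M$ sends $\tuple{a_1, \dots, a_d}$ to $a_j$ if $a_j \leq n$ and to $\top$ otherwise; $M$ is commutative and $h$ is a monoid homomorphism. Setting $F := \set{n}$ gives $h^{-1}(F) = \sem{z_j = n}$. For the base case $\varphi = (\equivmod{z_j}{n}{m})$, I take $M := \Z/m\Z$ (finite, commutative), $h\tuple{a_1, \dots, a_d} := a_j \bmod m$, and $F := \set{n \bmod m}$.

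For the inductive step, I will show that if $S_1, S_2 \subseteq \N^d$ are recognisable via $h_i \colon \N^d \to M_i$ with $F_i \subseteq M_i$ ($i = 1, 2$), then so are $S_1 \cup S_2$, $S_1 \cap S_2$, and $\N^d \setminus S_1$. For union and intersection, take the product monoid $M_1 \product M_2$ (still finite and commutative) with the diagonal homomorphism $h(v) := \tuple{h_1(v), h_2(v)}$, and accepting sets $(F_1 \product M_2) \cup (M_1 \product F_2)$ and $F_1 \product F_2$ respectively. For complement, reuse $M_1, h_1$ and replace $F_1$ by $M_1 \setminus F_1$.

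None of the steps is really an obstacle; the only mildly delicate point is the atomic case $z_j = n$, which requires the cap-at-$\top$ construction since $\N$ itself is not finite. Structural induction on $\varphi$ then yields the claim.
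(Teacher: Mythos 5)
Your proposal is correct and follows essentially the same argument as the paper: structural induction on the constraint expression, the cap-at-top construction for $z_j = n$, the cyclic monoid for the modular constraint, and products for the Boolean combinations. The only cosmetic difference is that you handle $\varphi \lor \psi$ directly via the accepting set $(F_1 \product M_2) \cup (M_1 \product F_2)$ whereas the paper reduces union to intersection and complement by double negation.
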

\noindent
In fact, the converse holds as well and thus regular and recognisable subsets of $\N^d$ coincide,
however we will not need the other direction in the sequel.
\begin{proof}
    Assume $S \subseteq \N^d$ is regular.
    There exists an expression $\varphi$ of dimension $d$ \st~$S = \sem \varphi$.
    We show that $\sem \varphi$ is recognisable by structural induction on $\varphi$.
    
    In the first base case, we have $\varphi \equiv x_j = n$.
    We construct a monoid with elements $0, 1, \dots, n, \infty$
    where element $i$ for $0 \leq i \leq n$ intuitively means ``coordinate $j$ equals $i$'',
    and $\infty$ means ``coordinate $j$ is $\geq n+1$''.
    The monoid operation is ordinary sum in $\N$ when the result is $\leq n$, and it is $\infty$ otherwise.
    The monoid homomorphism $h : \N^d \to M$ maps a vector $x \in \N^d$
    to its $j$-th coordinate $x_j$ if $x_j \leq n$, and $\infty$ otherwise.
    The accepting set is $F = \set n$.    

    In the second base case, we have $\varphi \equiv (\equivmod {x_j} n m)$.
    We assume \wlg~$0 \leq n \leq m - 1$.
    The construction is similar as in the previous case.
    We construct a monoid with elements $0, \dots, m-1$,
    representing equivalence classes modulo $m$.
    Addition in the monoid $a +_M b$ is $(a + b) \mod m$,
    the homomorphism is $h(x) = (x_j \mod m)$ and the accepting set is $F = \set n$.

    In the first inductive case, the constraint is of the form $\varphi \land \psi$.
    By inductive assumption, $\sem \varphi$ is recognised by $M_\varphi, h_\varphi, F_\varphi$
    and $\sem \psi$ by $M_\psi, h_\psi, F_\psi$.
    We construct a product monoid $M := M_\varphi \times M_\psi$
    where all data is defined component-wise:
    addition is $(a_0, b_0) +_M (a_1, b_1) := (a_0 +_{M_\varphi} a_1, b_0 +_{M_\psi} b_1)$,
    the neutral element is $0_M := \tuple{0_{M_\varphi}, 0_{M_\psi}}$,
    the accepting set is $F := F_\varphi \times F_\psi$,
    and the homomorphism is $h(x) := \tuple{h_\varphi(x), h_\psi(x)}$.
    We then have $h^{-1} F = h^{-1} (F_\varphi \times F_\psi) = h_\varphi^{-1} F_\varphi \cap h_\psi^{-1} F_\psi$.

    In the second inductive case, the constraint is of the form $\lneg \varphi$.
    If $M, h, F$ recognises $\sem \varphi$,
    then its complement $\N^d \setminus \sem \varphi$ is recognised by $M, h, M \setminus F$.

    In the last inductive case, the constraint is of the form $\varphi \lor \psi$.
    We can just reduce to the previous two cases by double negation,
    since $\sem {\varphi \lor \psi} = \sem {\lneg (\lneg \varphi \land \lneg \psi)}$.
\end{proof}

\CDFclosureUnderRegularRestrictions*
\begin{proof}
    Let $f^{(1)} \in \powerseries \Q x$ be a \CDF~power series,
    and thus by \cref{lem:CDF characterisation}
    $f^{(1)}$ belongs to a finitely generated algebra
    \begin{align*}
        R := \poly \Q {f^{(1)}, \dots, f^{(k)}}
    \end{align*}
    closed under $\partial {x_j}$'s.
    Let $S$ be a regular subset of $\N^d$,
    which is thus recognisable by virtue of~\cref{lem:regular constraints are recognisable}.
    There exist a finite commutative monoid $M$ and a surjective homomorphism $h : \N^d \to M$
    \st~$S = h^{-1} F$ for some $F \subseteq M$.
    Notice that $\setof {h^{-1} m} {m \in M}$ is a partition of $\N^d$.


    For every $m \in M$ and $1 \leq i \leq k$, let $\restrictsmall{f^{(i)}} m$ be the restriction of $f^{(i)}$
    to $S_m := h^{-1}m \subseteq \N^d$, i.e.,
    %
    \begin{align*}
        \restrictsmall {f^{(i)}} m := \restrictsmall {f^{(i)}} {S_m} = \sum_{n \in h^{-1}m} \coefficient {x^n} {f^{(i)}} \cdot \frac {x^n} {n!}.
    \end{align*}
    By the partition property, every power series $f \in \powerseries \Q x$ can be written as a finite sum of restrictions
    \begin{align}
        \label{eq:power series partition property}
        f = \sum_{m \in M} \restrictsmall f m.
    \end{align}
    First of all, restriction is a linear operation.
    \begin{claim*}
        For two power series $f, g \in \powerseries \Q x$ and $m \in M$, we have
        \begin{align}
            \label{eq:restriction - sum rule}
            \restrict {(f + g)} m = \restrict f m + \restrict g m.
        \end{align}
    \end{claim*}
    \begin{proof}[Proof of the claim]
        We have
        \begin{align*}
            \restrict {(f + g)} m
            &= \restrict {\left(\sum_{n \in \N^d} \coefficient {x^n} f \cdot\frac {x^n} {n!} + \sum_{n \in \N^d} \coefficient {x^n} g \cdot\frac {x^n} {n!}\right)} m = \\
            &= \restrict {\left(\sum_{n \in \N^d} (\coefficient {x^n} f + \coefficient {x^n} g) \cdot\frac {x^n} {n!}\right)} m = \\
            &= \sum_{n \in h^{-1} m} (\coefficient {x^n} f + \coefficient {x^n} g) \cdot\frac {x^n} {n!} = \\
            &= \sum_{n \in h^{-1} m} \coefficient {x^n} f \cdot\frac {x^n} {n!} + \sum_{n \in h^{-1} m} \coefficient {x^n} g \cdot\frac {x^n} {n!} = \\
            &= \restrict f m + \restrict g m. \quad \qedhere
        \end{align*}
    \end{proof}
    The following claim shows that restrictions can be pushed through products.
    \begin{claim*}
        For two power series $f, g \in \powerseries \Q x$,
        we have the following product rule for the restriction operation:
        \begin{align}
            \label{eq:restriction - product rule}
            \restrictsmall {(f \cdot g)} m = \sum_{a + b = m} \restrictsmall f a \cdot \restrictsmall g b,
            \quad \text{for all } m \in M.
        \end{align}
    \end{claim*}
    \begin{proof}[Proof of the claim.]
        By~\cref{eq:power series partition property} applied to $f$ and to $g$ we have
        \begin{align*}
            \restrictsmall {(f \cdot g)} m
                &= \restrict {\left(\left(\sum_{a \in M} \restrictsmall f a\right) \cdot \left(\sum_{b \in M} \restrictsmall g b\right)\right)} m = \\
                &= \restrict {\left(\sum_{a, b \in M} \restrictsmall f a \cdot \restrictsmall g b\right)} m = \\
                &= \sum_{a, b \in M} \restrict {(\restrictsmall f a \cdot \restrictsmall g b)} m = \\
                &= \sum_{a, b \in M} \restrict {\left(\sum_{h(d) = a} f_d \cdot x^d \cdot \sum_{h(e) = b} g_e \cdot x^e\right)} m = \\
                &= \sum_{a, b \in M} \restrict {\left(\sum_{h(d) = a, h(e) = b} f_d g_e \cdot x^{e+d}\right)} m = \\
                &= \sum_{a, b \in M} \sum_{h(d) = a, h(e) = b, h(e+d) = m} f_d g_e \cdot x^{e+d} = \\
                &= \sum_{a + b = m} \sum_{h(d) = a, h(e) = b} f_d g_e \cdot x^{e+d} = \\
                &= \sum_{a + b = m} \sum_{h(d) = a} f_d \cdot x^d \cdot \sum_{h(e) = b} g_e \cdot x^e = \\
                &= \sum_{a + b = m} \restrictsmall f a \cdot \restrictsmall g b. \qedhere
        \end{align*}
    \end{proof}

    \begin{claim*}
        Restrictions compose trivially:
        For all $a, b \in M$ with $a \neq b$,
        \begin{align}
            \label{eq:restriction - restriction rule}
            \restrict {(\restrictsmall f a)} a = \restrictsmall f a
                \quad \text{and} \quad
                    \restrict {(\restrictsmall f a)} b = 0.
        \end{align}
    \end{claim*}
    \begin{proof}[Proof of the claim.]
        The first equality is trivial, and the second one follows from the partition property
        (distinct monoid elements recognise disjoint subsets of $\N^d$).
    \end{proof}

    Consider the new finitely generated algebra
    \begin{align*}
        \widetilde R := \polyof \Q {\restrictsmall {f^{(i)}} m} {m \in M, 1 \leq i \leq k}.
    \end{align*}
    $\restrictsmall {f^{(1)}} S = \sum_{m \in F} \restrictsmall {f^{(1)}} m$,
    thus in $\widetilde R$.
    It remains to argue that $\widetilde R$ is closed under $\partial {x_j}$'s.
    In fact, it suffices to show this for an arbitrary generator $\restrictsmall {f^{(i)}} m$.
    We begin by an observation.
    \begin{claim*}
        We have the following exchange property
        \begin{align}
            \label{eq:partial derivative exchange property}
            \partial {x_j} (\restrictsmall {f^{(i)}} m) = \sum_{m' \in M_j} \restrictsmall {(\partial {x_j} f^{(i)})} {m'},
        \end{align}
        where $M_j := \setof{m' \in M}{m' + h(e_j) = m}$ and $e_j$ is the $j$-th unit vector.
    \end{claim*}
    \begin{proof}[Proof of the claim]
        Justification:
        \begin{align*}
            \partial {x_j} (\restrictsmall {f^{(i)}} m) 
                &= \partial {x_j} \sum_{n \in h^{-1}m} \coefficient {x^n} {f^{(i)}} \cdot \frac {x^n} {n!} \\
                &= \sum_{n \in h^{-1}m} \coefficient {x^n} {f^{(i)}} \cdot \partial {x_j} \frac {x^n} {n!} \\
                &= \sum_{n \in h^{-1}m, n_j \geq 1} \coefficient {x^n} {f^{(i)}} \cdot \frac {x^{n - e_j}} {(n - e_j)!} \\
                &= \sum_{(n + e_j) \in h^{-1}m} \coefficient {x^{n + e_j}} {f^{(i)}} \cdot \frac {x^n} {n!} \\
                &= \sum_{(n + e_j) \in h^{-1}m} \coefficient {x^n} {\partial {x_j} f^{(i)}} \cdot \frac {x^n} {n!} \\
                &= \sum_{h(n + e_j) = h(n) + h(e_j) = m} \coefficient {x^n} {\partial {x_j} f^{(i)}} \cdot \frac {x^n} {n!} \\
                &= \sum_{m' \in M_j} \sum_{n \in h^{-1}m'} \coefficient {x^n} {\partial {x_j} f^{(i)}}\cdot \frac {x^n} {n!} \\
                &= \sum_{m' \in M_j} \restrictsmall {(\partial {x_j} f^{(i)})} {m'}. \qedhere
        \end{align*}
    \end{proof}
    The exchange property from~\cref{eq:partial derivative exchange property}
    shows that $\partial {x_j} (\restrictsmall {f^{(i)}} m)$ is a linear combination of restrictions of $\partial {x_j} f^{(i)}$.
    Since $\partial {x_j} f^{(i)} \in R$, the latter can written as a polynomial combination of the generators $f^{(1)}, \dots, f^{(k)}$.
    Since $f^{(1)}, \dots, f^{(k)} \in \widetilde R$ by the partition property~\cref{eq:power series partition property},
    we also have $\partial {x_j} f^{(i)} \in \widetilde R$.
    The proof is concluded by showing that $\widetilde R$ is closed under restrictions.
    \begin{claim*}
        If $f \in \widetilde R$ then $\restrict f m \in \widetilde R$ for all $m \in M$.
    \end{claim*}
    \begin{proof}[Proof of the claim.]
        This is shown by induction on the structure of witnesses of membership of $f$ in $\widetilde R$.
        The base case for a generator $\restrictsmall {f^{(i)}} m$
        holds since restrictions compose by~\cref{eq:restriction - restriction rule}.
        The inductive step follows by the sum~\cref{eq:restriction - sum rule}
        and product~\cref{eq:restriction - product rule} rules.
    \end{proof}
    \noindent
    This concludes the proof of closure under regular support restrictions.
\end{proof}

\subsection{Implicit power series theorems}

The purpose of this section is to provide a proof of~\cref{lem:CDF constructible nilpotent}
regarding solutions of constructible systems of \CDF~equations.
We will obtain this as a consequence of the study of more general systems of equations, which we now introduce.

The next result shows that the class \CDF~does not change if we allow \CDF~kernels in~the definition of \CDF~power series
(\cf~\cite[Proposition 12]{BergeronSattler:TCS:1995}).
\begin{restatable}[\CDF~differentially constructible power series theorem]{lemma}{CDFdifferentiallyConstructibleTheorem}
    \label{lem:CDF differentially constructible power series theorem}
    Consider a matrix of power series $M \in \matrices k d {\powerseries \Q {x, y}}$
    and a tuple of power series $f \in \powerseries \Q x^k$ \st~$M, f$ are $y$-composable
    and satisfy a constructible system of partial differential equations
    \begin{align}
        \label{eq:CDF differentially constructible power series theorem}
        \partial x f = M \compose y f.
    \end{align}
    If $M$ is \CDF, then $f$ is \CDF.
\end{restatable}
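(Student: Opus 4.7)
The plan is to mimic the proof of closure under strong composition (\cref{lem:CDF strong composition}), but with the chain rule supplying the crucial differential equation for the composed generators. By \cref{lem:CDF characterisation}, each entry of $M$ lies in a finitely generated differential subalgebra $R := \poly \Q {g^{(1)}, \dots, g^{(m)}} \subseteq \powerseries \Q {x, y}$ that is closed under every $\partial{x_j}$ and every $\partial{y_i}$. First I would normalise the generators so that they are all $y$-composable with $f$: since every entry of $M$ is locally polynomial with respect to the variables $y_I$ (where $I$ is the set of indices $i$ with $f^{(i)}(0) \neq 0$, exactly as in \cref{lem:CDF strong composition}), I would apply \cref{lem:cutting lemma} in the polynomial setting afforded by the composability hypothesis to replace $g^{(1)}, \dots, g^{(m)}$ by a new finite set of generators of the same differential algebra, each enjoying the structural property needed to make $g^{(\ell)} \compose y f$ well-defined.

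Then I would form the candidate algebra
\begin{align*}
    T := \poly \Q {f^{(1)}, \dots, f^{(k)}, g^{(1)} \compose y f, \dots, g^{(m)} \compose y f} \subseteq \powerseries \Q x,
\end{align*}
which visibly contains each $f^{(i)}$. The heart of the argument is showing that $T$ is closed under every $\partial{x_j}$; by \cref{fact:closure lemma} it suffices to check this on generators. For $f^{(i)}$ I would use the system \cref{eq:CDF differentially constructible power series theorem}: since $M_{ij} \in R$ is a polynomial combination of the $g^{(\ell)}$'s, composing with $f$ yields $\partial{x_j} f^{(i)} = M_{ij} \compose y f$ as a polynomial combination of the $g^{(\ell)} \compose y f$'s, hence in $T$. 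For the composed generators I would invoke \cref{lem:chain rule for power series}:
\begin{align*}
    \partial{x_j} (g^{(\ell)} \compose y f)
    = (\partial{x_j} g^{(\ell)}) \compose y f + \sum_{i=1}^k \partial{x_j} f^{(i)} \cdot ((\partial{y_i} g^{(\ell)}) \compose y f).
\end{align*}
Because $R$ is closed under $\partial{x_j}$ and $\partial{y_i}$, both $\partial{x_j} g^{(\ell)}$ and $\partial{y_i} g^{(\ell)}$ are polynomial combinations of the $g^{(\ell)}$'s, so their compositions with $f$ lie in $T$; combined with the previous step showing $\partial{x_j} f^{(i)} \in T$, the right-hand side is an element of $T$.

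Having established that $T$ is a finitely generated differential subalgebra of $\powerseries \Q x$ containing each $f^{(i)}$, the characterisation of \cref{lem:CDF characterisation} yields that every $f^{(i)}$, and in particular $f^{(1)}$, is \CDF, completing the proof. The main obstacle is the composability bookkeeping in the first step: ensuring that the chosen generators $g^{(\ell)}$ of the algebra witnessing $M$'s \CDF-ness are themselves $y$-composable with $f$, so that compositions like $g^{(\ell)} \compose y f$ and $(\partial{y_i} g^{(\ell)}) \compose y f$ all make sense simultaneously. This is precisely the role of \cref{lem:cutting lemma}, and it is the reason the analogous statement for general (non-strong) composition of \CDF~series remains open: if the composability is merely local and the cutting lemma cannot be invoked to preserve polynomiality in the relevant variables, then the construction of $T$ breaks down at the very first step.
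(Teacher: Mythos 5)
Your proof takes essentially the same route as the paper's: use the chain rule to produce polynomial differential relations for the composed generators, and conclude via the characterisation of \CDF~series as elements of finitely generated differential subalgebras (\cref{lem:CDF characterisation}). The paper normalises the generators of $M$'s differential algebra as a WLOG (augmenting $M$ with auxiliary \CDF~series so that the partial derivatives of the entries become polynomial in the entries of the augmented $M$), while you invoke \cref{lem:cutting lemma} to renormalise; you also include $f^{(1)}, \dots, f^{(k)}$ directly as generators of $T$, while the paper shows the $g_{ij} := M_{ij} \compose y f$ satisfy a \CDF~system and then applies \cref{lem:CDF basic closure properties}(2). These are packaging differences, not substantive ones.

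There is, however, a genuine problem with your handling of composability. \Cref{lem:cutting lemma} applies when the target series is \emph{polynomial} in $y_I$, not merely locally polynomial, and your phrase ``the polynomial setting afforded by the composability hypothesis'' misreads the hypothesis: the lemma being proved only assumes $y$-composability, which gives that the entries of $M$ are \emph{locally} polynomial in $y_I$. What would license invoking the cutting lemma is \emph{strong} $y$-composability, which is not assumed here. Your closing paragraph then states that the construction of $T$ ``breaks down at the very first step'' if ``the composability is merely local'' --- but that is exactly the setting of the lemma, so you have in effect flagged a gap in your own proposal. The paper's proof avoids \cref{lem:cutting lemma} altogether: it handles the rewritten compositions $(p_{ijh} \compose y f) \compose z (M \compose y f)$ by noting that $p_{ijh} \compose y f$ is polynomial in the fresh variables $z$, so the outer $z$-composition is automatic; the well-definedness of $M \compose y f$ over the auxiliary entries is admittedly delicate there as well, but it is not disposed of via the cutting lemma. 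As written, your proposal establishes the lemma only under the stronger hypothesis of strong $y$-composability.
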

For instance, the unique univariate power series $f$ with $f(0) = 0$
satisfying $\partial x f =  x \cdot e^f$ is \CDF.
The next closure property is a simple application of~\cref{lem:CDF differentially constructible power series theorem},
albeit a very useful one since many power series can be shown to be \CDF~thanks to it.
\begin{restatable}[\CDF~implicit power series theorem]{lemma}{CDFimplicitPowerSeriesTheorem}
    \label{lem:CDF implicit power series theorem}
    Consider $y$-composable tuples of power series
    $f \in \powerseries \Q {x, y}^k, g \in \powerseries \Q x^k$ \st
    %
    \begin{align}
        \label{eq:CDF implicit power series equations}
        f \compose y g = 0.
    \end{align}
    Assume that the Jacobian matrix $\partial y f \in \matrices k k {\powerseries \Q {x, y}}$
    is invertible in the ring of power series matrices
    $(\partial y f)^{-1} \in \powerseries \Q {x, y}^{k \times k}$,
    and moreover $(\partial y f)^{-1}$ is $y$-composable with $g$.
    If $f$ is \CDF, then $g$ is \CDF.
\end{restatable}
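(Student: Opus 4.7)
The strategy is to differentiate the implicit equation $f \compose y g = 0$ to obtain an explicit system of \CDF~partial differential equations satisfied by $g$, and then invoke~\cref{lem:CDF differentially constructible power series theorem}.

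Concretely, I would first apply the chain rule for power series~(\cref{lem:chain rule for power series}) to each component $f^{(i)} \compose y g = 0$ and every variable $x_j$. Assembling the resulting $k \cdot d$ scalar identities into matrix form gives
\begin{align*}
    (\partial x f) \compose y g + \left((\partial y f) \compose y g\right) \cdot \partial x g = 0,
\end{align*}
where $\partial x f \in \matrices k d {\powerseries \Q {x,y}}$, $\partial y f \in \matrices k k {\powerseries \Q {x,y}}$, and $\partial x g \in \matrices k d {\powerseries \Q x}$ is the Jacobian of $g$. Since $(\partial y f)^{-1}$ exists and is $y$-composable with $g$, one can verify that $((\partial y f) \compose y g)^{-1} = (\partial y f)^{-1} \compose y g$ by composing both sides of $(\partial y f) \cdot (\partial y f)^{-1} = \mathrm{Id}$ with $g$. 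Solving for $\partial x g$ yields the autonomous-looking identity
\begin{align*}
    \partial x g = M \compose y g, \quad \text{where } M := -(\partial y f)^{-1} \cdot (\partial x f) \in \matrices k d {\powerseries \Q {x,y}}.
\end{align*}

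Next I would show that $M$ is a \CDF~matrix. Since $f$ is \CDF~so are, componentwise, the entries of $\partial x f$ and $\partial y f$ by the derivative closure of~\cref{lem:CDF basic closure properties}. The inverse matrix $(\partial y f)^{-1}$ has entries given by Cramer's rule as ratios of minors of $\partial y f$ by $\det(\partial y f)$; each minor is a polynomial combination of \CDF~series, hence \CDF, and $\det(\partial y f)$ admits a multiplicative inverse in $\powerseries \Q {x,y}$ (since $(\partial y f)^{-1}$ exists), so $1/\det(\partial y f)$ is \CDF~by the inversion closure of~\cref{lem:CDF basic closure properties}. Thus every entry of $(\partial y f)^{-1}$ is \CDF, and so are the entries of $M$.

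Finally, I would verify that $M, g$ are $y$-composable: the factor $(\partial y f)^{-1}$ is $y$-composable with $g$ by hypothesis, while $\partial x f$ is $y$-composable with $g$ by~\cref{lem:composable closure}, and the class of series $y$-composable with $g$ is closed under sums and products. Having assembled these ingredients,~\cref{lem:CDF differentially constructible power series theorem} applied to the system $\partial x g = M \compose y g$ yields that $g$ is \CDF. The only delicate step is handling the matrix inverse: one must be careful that $(\partial y f)^{-1}$ is interpreted in the ring of power series matrices (not merely formally), and that composition with $g$ distributes over the product and inversion, which is precisely what the $y$-composability hypothesis on $(\partial y f)^{-1}$ guarantees.
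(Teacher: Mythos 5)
Your proof is correct and follows essentially the same route as the paper: apply the matrix-form chain rule to $f \compose y g = 0$, solve for $\partial x g = M \compose y g$ with $M = -(\partial y f)^{-1}\cdot(\partial x f)$, argue via Cramer's rule and the basic closure properties that $M$ is a matrix of \CDF~series, check $y$-composability of $M$ with $g$, and conclude by \cref{lem:CDF differentially constructible power series theorem}. You spell out a couple of details (such as the identity $((\partial y f) \compose y g)^{-1} = (\partial y f)^{-1} \compose y g$, and the minor/determinant decomposition) more explicitly than the paper, but the decomposition, the key lemmas invoked, and the logical structure all coincide.
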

\begin{remark}
    We do not know whether the composability assumption on $(\partial y f)^{-1}, g$ can be dropped.
    In applications, this is often guaranteed by $g(0) = 0$.
    Under this condition, $f(0, g(0)) = f(0, 0) = 0$
    and in this case $g$ is the unique power series solution of~\cref{eq:CDF implicit power series equations}.
    This is a consequence of the \emph{implicit power series theorem}
    (\cf~\cite[Theorem 4]{Panholzer:JALC:2005},\cite[Theorem 16.5]{KuichSalomaa:1986}),
    a variation of the \emph{implicit function theorem} well-known from analysis
    (\cf~\cite[Theorem 9 on pg.~39]{BochnerMartin:Book:1948},\cite[Theorem 9.28]{Rudin:Analysis:1976});
    \cf~also~\cite[Theorem 2]{BorealeCollodiGorla:arXiv:2023},
    where a univariate \emph{implicit stream theorem},
    in the special case where $f$ is a polynomial,
    is presented in the context of the \emph{stream calculus}~\cite{Rutten:TCS:2003}.
\end{remark}

\begin{example}
    To illustrate~\cref{lem:CDF implicit power series theorem},
    let $d = k = 1$ and consider $y \cdot f - 1 = 0$ for some \CDF~$f \in \powerseries \Q x$.
    The Jacobian matrix is $\partial y (y \cdot f - 1) = f$
    and the invertibility assumption means $f(0) \neq 0$.
    Thus $f^{-1} \in \powerseries \Q x$ exists and solves the equation.
    Since $f$ is trivially $y$-composable with $f^{-1}$,
    by \cref{lem:CDF implicit power series theorem} $f^{-1}$ is \CDF.
\end{example}


We now prove~\cref{lem:CDF differentially constructible power series theorem}.
\begin{proof}[Proof of~\cref{lem:CDF differentially constructible power series theorem}]
    Let $M = \tuple{M_{ij}} \in \matrices k d {\powerseries \Q {x, y}}$ be a matrix of \CDF~power series.
    Write
    \begin{align*}
        \partial {x_j} f_i = g_{ij},
            \ \text{with }
                g_{ij} := M_{ij} \compose y f \in \powerseries \Q x,
                \ \forall 1 \leq i \leq k, 1 \leq j \leq d.
    \end{align*}
    It suffices to show that the $g_{ij}$'s are \CDF.
    We assume \wlg~that $M_{ij}$ satisfies equations of the form
    \begin{align*}
        \partial {x_h} M_{ij} &= p_{ijh} \compose z M
            \quad \forall 1 \leq h \leq d, \text{ and } \\
        \partial {y_\ell} M_{ij} &= q_{ij\ell} \compose z M,
            \quad \forall 1 \leq \ell \leq k,
    \end{align*}
    for some polynomials $p_{ijh}, q_{ij\ell} \in \poly \Q {x, y, z}$
    and fresh commuting variables $z = \tuple {z_{ij}}$
    with $1 \leq i \leq k$ and $1 \leq j \leq d$.
    (This can be guaranteed since 1) $M_{ij}$ is \CDF~and
    2) we can add auxiliary \CDF~power series to $M$ if necessary.)
    %
    We apply the chain rule from~\cref{lem:chain rule for power series},
    %
    \begin{align*}
        & \partial {x_h} g_{ij} \\
            &= \partial {x_h} (M_{ij} \compose y f) = \\
            &= \partial {x_h} M_{ij} \compose y f
                + \sum_{\ell = 1}^k \left(\partial {y_\ell} M_{ij} \compose y f \right) \cdot \partial {x_h} f_\ell = \\
            &= (p_{ijh} \compose z M) \compose y f
                + \sum_{\ell = 1}^k \left((q_{ij\ell} \compose z M) \compose y f \right) \cdot g_{\ell h} = \\
            &= (p_{ijh} \compose y f) \compose z (M \compose y f)
                + \sum_{\ell = 1}^k \left((q_{ij\ell} \compose y f) \compose z (M \compose y f) \right) \cdot g_{\ell h} = \\
            &= (p_{ijh} \compose y f) \compose z g
                + \sum_{\ell = 1}^k \left((q_{ij\ell} \compose y f) \compose z g \right) \cdot g_{\ell h} = \\
            &= p_{ijh} \compose {y, z} \tuple {f, g}
                + \sum_{\ell = 1}^k \left( q_{ij\ell} \compose {y, z} \tuple {f, g} \right) \cdot g_{\ell h}.
    \end{align*}
    Notice that $\partial {x_h} M_{ij}, f$ and $\partial {y_\ell} M_{ij}, f$ are still $y$-composable by~\cref{lem:composable closure}.
    Moreover, since $p_{ijh}$ is a polynomial and $f$ does not contain $z$,
    $p_{ijh} \compose y f$ is a polynomial in $z$,
    and thus $p_{ijh} \compose y f, M \compose y f$ are $z$-composable.
    Similarly for $(q_{ij\ell} \compose y f), (M \compose y f)$.
    The last equation follows from the fact that $f$ does not contain $z$.
    We have shown that $\partial {x_h} g_{ij}$ is a polynomial function of $x$, $f$, and $g$,
    as required for $g$ to be \CDF.
\end{proof}

Thanks to~\cref{lem:CDF differentially constructible power series theorem} we can now prove~\cref{lem:CDF implicit power series theorem}.
\begin{proof}[Proof of~\cref{lem:CDF implicit power series theorem}]
    Let $g$ be a solution of~\cref{eq:CDF implicit power series equations}.
    The composability assumption means that each entry $f^{(i)}$
    of $f = \tuplesmall{f^{(1)}, \dots, f^{(k)}}$ is $y$-composable with $g$,
    implying that $f^{(i)} \compose y g$ is defined.
    By deriving $f \compose y g$ \wrt~$x$ and applying the chain rule (\cref{lem:chain rule for power series}) in matrix form we obtain
    \begin{align*}
        \underbrace{\partial x (f \compose y g)}_{k \times d}
        = \underbrace {\partial x f}_{k \times d} \compose y g + (\underbrace {\partial y f}_{k \times k} \compose y g) \cdot \underbrace {\partial x g}_{k \times d} = 0.
    \end{align*}
    %
    %
    First of all we show that the \rhs~is well-defined.
    \begin{claim*}
        $\partial x f, g$ and $\partial y f, g$ are $y$-composable.
    \end{claim*}
    \begin{proof}[Proof of the claim.]
        We need to show that $\partial {x_j} f^{(i)}, g$ are $y$-composable.
        By assumption $f^{(i)}, g$ are $y$-composable and we can apply~\cref{lem:composable closure}.
        The argument showing composability of $\partial {y_\ell} f^{(i)}, g$ is analogous.
    \end{proof}
    Since by assumption the Jacobian matrix $\partial y f$
    is invertible as a power series matrix, we can write
    \begin{align*}
        \partial x g = P \compose y g,
        \ \text{where } P := - \underbrace{(\partial y f)^{-1}}_{\matrices k k {\powerseries \Q {x, y}}} \cdot \underbrace {\partial x f}_{\matrices k d {\powerseries \Q {x, y}}} \in \matrices k d {\powerseries \Q {x, y}}.
    \end{align*}
    Notice that $P, g$ are $y$-composable
    since $(\partial y f)^{-1}, g$ are $y$-composable by assumption
    and $\partial x f, g$ are $y$-composable by the claim.
    By Cramer's rule, the entries of $(\partial y f)^{-1}$
    are rational functions of the entries of $\partial y f$.
    Since $f$ is \CDF, by \cref{lem:CDF basic closure properties}
    $\partial x f$, $\partial y f$, and thus $(\partial y f)^{-1}$ are \CDF.
    Thus $P$ is a matrix of \CDF~power series
    and we conclude by~\cref{lem:CDF differentially constructible power series theorem}.
\end{proof}

The proof of~\cref{lem:CDF constructible nilpotent}
relies on two simple facts about matrices,
which we recall below in~\cref{lem:power series matrix invertibility criterion,lem:nilpotent lemma}.
The following lemma provides a simple condition
guaranteeing that $\partial y f$ is invertible as a power series matrix.
\begin{lemma}
    \label{lem:power series matrix invertibility criterion}
    A power series matrix $M \in \matrices k k {\powerseries \Q z}$
    has a power series matrix inverse $M^{-1} \in \matrices k k {\powerseries \Q z}$
    iff $M$ evaluated at the origin $M(0) \in \matrices k k \Q$ is invertible (in $\matrices k k \Q$).
\end{lemma}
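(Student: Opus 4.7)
The plan is to prove both directions by straightforward algebraic manipulation in the ring of power series matrices, exploiting the natural valuation on power series.

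For the forward direction, I would proceed as follows. Suppose $M \in \matrices k k {\powerseries \Q z}$ admits an inverse $N \in \matrices k k {\powerseries \Q z}$ with $M \cdot N = N \cdot M = I_k$. Evaluation at the origin is a ring homomorphism from $\powerseries \Q z$ to $\Q$, which extends entrywise to a ring homomorphism from $\matrices k k {\powerseries \Q z}$ to $\matrices k k \Q$. Applying it to the identity $M \cdot N = I_k$ yields $M(0) \cdot N(0) = I_k$ in $\matrices k k \Q$, showing that $M(0)$ is invertible with inverse $N(0)$.

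For the converse, assume $M(0) \in \matrices k k \Q$ is invertible. The idea is to write $M = M(0) \cdot (I_k + R)$, where $R := M(0)^{-1} \cdot (M - M(0)) \in \matrices k k {\powerseries \Q z}$ is a matrix whose entries are power series of order $\geq 1$, i.e., all vanishing at the origin. Since $M(0)$ is already invertible in $\matrices k k \Q \subseteq \matrices k k {\powerseries \Q z}$, it suffices to invert $I_k + R$. I would do so by the formal geometric series
\begin{align*}
    (I_k + R)^{-1} := \sum_{n = 0}^{\infty} (-1)^n R^n.
\end{align*}

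The key step, and the only one requiring a short argument, is that this series is summable in the natural topology on $\matrices k k {\powerseries \Q z}$ (induced entrywise by the order valuation on $\powerseries \Q z$). Since every entry of $R$ has order $\geq 1$, an easy induction using the definition of matrix multiplication shows that every entry of $R^n$ has order $\geq n$. Therefore, for each fixed monomial $z^m$, only finitely many terms $(-1)^n R^n$ contribute nonzero coefficients to $z^m$, so the sum defines a \emph{bona fide} element of $\matrices k k {\powerseries \Q z}$. Verifying that this element is a two-sided inverse of $I_k + R$ is then immediate from the telescoping identity $(I_k + R) \cdot \sum_{n=0}^{N}(-1)^n R^n = I_k + (-1)^N R^{N+1}$ and the fact that $R^{N+1} \to 0$ as $N \to \infty$. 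Composing with $M(0)^{-1}$ finally yields $M^{-1} = (I_k + R)^{-1} \cdot M(0)^{-1} \in \matrices k k {\powerseries \Q z}$, as required. No step here is substantively hard; the only subtlety worth flagging is that invertibility must hold in the power series matrix ring itself, which is precisely what summability of the geometric series delivers.
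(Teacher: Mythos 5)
Your proof is correct, but it takes a genuinely different route from the paper's. The paper argues via determinants: the crucial observation there is that $\det$ commutes with evaluation at the origin, i.e., $(\det M)(0) = \det(M(0))$, and then it chains together the standard facts that a square matrix over a commutative ring is invertible iff its determinant is a unit, and that a power series is a unit in $\powerseries \Q z$ iff its constant term is nonzero. You instead prove the nontrivial direction by a direct construction: factor $M = M(0) \cdot (I_k + R)$ with $R$ of entrywise order $\geq 1$, then invert $I_k + R$ via the formal geometric series $\sum_n (-1)^n R^n$, whose summability follows from the easy order bound $\ord(R^n) \geq n$. Both arguments are sound. The determinant route is shorter and leans on a general algebraic fact; yours is more elementary and constructive, avoiding the ``invertible iff determinant is a unit'' criterion entirely and instead exhibiting the inverse explicitly, at the modest cost of a short summability check. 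Your forward direction (evaluation is a ring homomorphism) is essentially the same observation the paper makes implicitly in the chain of equivalences.
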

\begin{proof}
    The crucial observation is that determinant and evaluation commute:
    \begin{align*}
        (\det M) (0) = \det (M(0)).
    \end{align*}
    We now have a chain of equivalences:
    $M(0)$ is invertible iff
    $\det (M(0)) \neq 0$ iff
    $(\det M) (0) \neq 0$ (the constant term of $\det M \in \powerseries \Q z$ is nonzero) iff
    $\det M$ is invertible in $\powerseries \Q z$ iff
    $M$ is invertible in $\matrices k k {\powerseries \Q z}$.
\end{proof}

\begin{lemma}
    \label{lem:nilpotent lemma}
    Consider a rational matrix $M \in \matrices k k \Q$.
    If $M$ is nilpotent, then $I - M$ is invertible.
\end{lemma}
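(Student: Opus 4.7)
The plan is to exhibit an explicit inverse for $I - M$ built from the finite geometric-type sum $\sum_{i=0}^{n-1} M^i$, exploiting nilpotency to truncate the otherwise formal series.

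First I would fix $n \in \N$ such that $M^n = 0$, which exists by the hypothesis that $M$ is nilpotent. Then I would define the candidate inverse
\begin{align*}
    N := I + M + M^2 + \cdots + M^{n-1} \in \matrices k k \Q.
\end{align*}
Next, I would verify by direct expansion that $(I - M) \cdot N = N - M \cdot N = I - M^n = I$, using the telescoping in the partial sum; symmetrically $N \cdot (I - M) = I$. Hence $N$ is a two-sided inverse of $I - M$ in $\matrices k k \Q$, proving the claim.

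There is no real obstacle here: the argument is the standard Neumann-series trick, and nilpotency is precisely what is needed to make the series terminate and thus stay inside $\matrices k k \Q$. Alternatively, one could give a one-line proof via eigenvalues (the eigenvalues of $M$ are all $0$, hence those of $I-M$ are all $1$, so $\det(I - M) = 1 \neq 0$), but the explicit-inverse approach is more self-contained and slightly more informative since it exhibits $N$ concretely, which may be useful downstream in the application to \cref{lem:CDF constructible nilpotent}.
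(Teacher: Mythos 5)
Your proof is correct and uses essentially the same argument as the paper: both construct the explicit inverse $I + M + \cdots + M^{m-1}$ (which the paper denotes $M^*$) and verify $(I-M)\cdot M^* = I$ via the telescoping sum truncated by nilpotency. The only cosmetic difference is that you additionally note the two-sided inverse and sketch the eigenvalue alternative, neither of which changes the substance.
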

\begin{proof}
    Since $M$ is nilpotent, there is $m \in \N$ \st~$M^m = 0$.
    Since the matrix series $M^* := \sum_{\ell = 0}^\infty M^\ell$ exists
    (in fact it equals $M^0 + M^1 + M^2 + \cdots + M^{m-1}$),
    and $(I-M) \cdot M^* = M^* - M \cdot M^* = I$,
    we conclude that $M^*$ is the inverse of $I - M$,
    and thus the latter is invertible.
\end{proof}


\constructiblePowerSeriesTheorem*
\begin{proof}
    Let $g \in \powerseries \Q x^k$ be a solution of a constructible system $y = f(x, y)$.
    The Jacobian of $\widetilde f := y - f$ is
    $\partial y \widetilde f = I - \partial y f$,
    where $I$ is the $k \times k$ identity matrix.
    By the well-posedness assumption, $\partial y f$ evaluated at the origin is nilpotent.
    By~\cref{lem:nilpotent lemma}, $\partial y \widetilde f$ evaluated at the origin is invertible.
    By~\cref{lem:power series matrix invertibility criterion},
    $\partial y \widetilde f$ has a power series matrix inverse $(\partial y \widetilde f)^{-1}$.
    The condition $g(0)=0$ guarantees $y$-composability of $g$ with $(\partial y \widetilde f)^{-1}$.
    We conclude by~\cref{lem:CDF implicit power series theorem} applied to $\widetilde f, g$.

    Existence and uniqueness of canonical solutions is classic (\cf, \eg,~\cite[Theorem 6]{Joyal:AM:1981}).
    We can also obtain uniqueness from the previous construction,
    since the \CDF~system for $g$ has a unique solution for every fixed initial condition $g(0)$.
\end{proof}

\subsection{Lie derivatives and exchange rule}

The purpose of this section is to show a formula for the computation of the coefficients of a \CDF~power series (\cref{lem:CDF exchange rule}).
This will be used in~\cref{app:computing CDF coefficients}
to obtain algorithms for computing coefficients and deciding zeroness of~\CDF~power series, and analyse their complexity. 
In order to do this, we recall the notion of Lie derivative.

For every $1 \leq j \leq d$,
define the \emph{$j$-th Lie derivative} of an autonomous \CDF~system of equations~\cref{eq:multivariate CDF - matrix form} as
\begin{align*}
    L_j := \sum_{h = 1}^k P_{hj} \cdot \partial {y_h} : \poly \Q y \to \poly \Q y.
\end{align*}
%
%
Since $L_j$ is a linear combination of derivations $\partial {y_h}$'s,
it is itself a derivation.
%
%
Extend $L$ to sequences of indices as $L_\e = 1$ (the identity operator)
and $L_{j \cdot w} = L_w L_j$ (operator composition),
for every $w \in \set{1, \dots, d}^*$.
%
%

We show a formula to express coefficients of $p \compose y f$
as a function of the initial condition $f(0)$.
This will be useful
in the zeroness algorithm (\cref{sec:CDF zeroness}).
\begin{restatable}{lemma}{CDFexchangeRule}
    \label{lem:CDF exchange rule}
    For any solution $f$ of~\cref{eq:multivariate CDF - matrix form},
    $w \in \set{1, \dots, d}^*$, $p \in \poly \Q y$:
    \begin{align*}
        \partial x^{\Parikh w} (p \compose y f)
            = L_w p \compose y f \text{ and }
        \coefficient {x^{\Parikh w}} (p \compose y f)
            = L_w p \compose y f(0).
    \end{align*}
\end{restatable}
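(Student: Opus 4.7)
The plan is to prove the first identity $\partial x^{\Parikh w} (p \compose y f) = L_w p \compose y f$ by induction on the length of $w$, and then deduce the second identity as an immediate corollary using the coefficient-derivation commuting rule~\cref{eq:coefficient extraction and derivation}.

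For the first identity, the base case $w = \e$ is trivial: both sides equal $p \compose y f$ since $\partial x^0$ is the identity operator and $L_\e = 1$. For the inductive step, suppose the identity holds for $w$ and all polynomials $p \in \poly \Q y$; I would prove it for $j \cdot w$. Since partial derivatives in the $x_i$'s commute, we can write $\partial x^{\Parikh{j \cdot w}} = \partial{x_j} \circ \partial x^{\Parikh w}$. Applying the induction hypothesis gives
\begin{align*}
    \partial x^{\Parikh{j \cdot w}} (p \compose y f) = \partial {x_j} (L_w p \compose y f).
\end{align*}
Now I would apply the chain rule for power series~(\cref{lem:chain rule for power series}) to the \rhs. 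Since $L_w p \in \poly \Q y$ does not depend on $x$ (we are in the autonomous setting, which is \wlg), the chain rule collapses to
\begin{align*}
    \partial{x_j} (L_w p \compose y f) = \sum_{h=1}^k \partial{x_j} f_h \cdot (\partial{y_h} L_w p \compose y f).
\end{align*}
Substituting $\partial{x_j} f_h = P_{hj} \compose y f$ from the \CDF~system~\cref{eq:multivariate CDF - matrix form} and pulling the sum inside the composition (using that composition is a ring homomorphism when the inner series is fixed), the \rhs~becomes
\begin{align*}
    \left(\sum_{h=1}^k P_{hj} \cdot \partial{y_h} L_w p\right) \compose y f = L_j L_w p \compose y f = L_{j \cdot w} p \compose y f,
\end{align*}
where the last equality uses the definition $L_{j \cdot w} = L_w L_j$ composed with the associativity/commutativity of operator composition in the right order (I would have to double-check this matches the stated convention $L_{j\cdot w} = L_w L_j$; if the convention is that $L$'s compose as $L_w L_j$ applied to $p$ means first $L_j$ then $L_w$, the step works since I have $L_w(L_j p)$).

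For the second identity, I would use~\cref{eq:coefficient extraction and derivation} with $m = 0, n = \Parikh w$ to write $\coefficient{x^{\Parikh w}}(p \compose y f) = \coefficient{x^0}(\partial x^{\Parikh w}(p \compose y f))$. By the first identity just proved, this equals $\coefficient{x^0}(L_w p \compose y f)$. Extracting the constant term of a power series obtained by composition $q \compose y f$ where $q \in \poly \Q y$ amounts to evaluating $y_h := f_h(0)$, i.e., $\coefficient{x^0}(L_w p \compose y f) = L_w p \compose y f(0)$, which is the desired identity. The only non-routine step is the inductive manipulation using the chain rule; everything else is bookkeeping.
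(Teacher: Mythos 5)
Your overall strategy matches the paper's: prove the first identity by induction on the length of $w$, with the inductive step carried by the chain rule (\cref{lem:chain rule for power series}) together with the \CDF\ equations, and then derive the second identity from the first via \cref{eq:coefficient extraction and derivation} and the fact that constant-term extraction is a homomorphism. The paper's proof is terser (it just invokes the one-step exchange $\partial{x_j}(p \compose y f) = L_j p \compose y f$ without spelling out the chain-rule computation), but the substance of the single-step argument is the one you wrote out.

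However, the inductive step as you set it up contains a genuine misstep. You peel off the \emph{outer} derivative, writing $\partial x^{\Parikh{j\cdot w}} = \partial{x_j} \circ \partial x^{\Parikh w}$, apply the induction hypothesis to the inner expression, and then run the chain rule on the outside. This correctly yields $L_j(L_w p) \compose y f$. But the stated convention is $L_{j\cdot w} = L_w L_j$, so $L_{j\cdot w} p = L_w(L_j p)$, which is \emph{not} the same polynomial as $L_j(L_w p)$: the Lie derivatives $L_i$ do not commute as operators on $\poly \Q y$ for a generic kernel $P$. Your parenthetical hedge is actually inconsistent with your own computation — you write "the step works since I have $L_w(L_j p)$", but what your derivation produces is $L_j(L_w p)$. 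So as written, the proof establishes $\partial x^{\Parikh w}(p \compose y f) = L_{w^R} p \compose y f$ (with $w^R$ the reverse of $w$), not the claimed $L_w p \compose y f$, and the final identification $L_j L_w p \compose y f \stackrel{?}{=} L_{j\cdot w} p \compose y f$ is unjustified (it is true, but only as a consequence of the lemma itself, so invoking it is circular). The clean fix — and the one the paper uses — is to peel the other way: write $\partial x^{\Parikh{j\cdot w}} = \partial x^{\Parikh w} \circ \partial{x_j}$, perform the one-step exchange $\partial{x_j}(p \compose y f) = L_j p \compose y f$ first, and only then apply the induction hypothesis with the polynomial $L_j p$ in place of $p$; this directly produces $L_w(L_j p) \compose y f = L_{j\cdot w} p \compose y f$. (An alternative repair is to note that what your induction proves is the Parikh-reversed statement, and since $\Parikh{w} = \Parikh{w^R}$ and reversal is a bijection on words, this is logically equivalent to the claim — but that requires an extra sentence you did not supply.) The second identity is handled the same way in both your proposal and the paper.
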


\begin{proof}
    %
    The first equality is proved by induction.
    The base case $w = \e$ is clear,
    since both $\partial x^0$ and $L_\e$ are the identity.
    For the inductive step,
    \begin{align*}
        \partial x^{\Parikh {j \cdot w}} (p \compose y f)
        &= \partial x^{\Parikh w} \partial {x_j} (p \compose y f)
        = \partial x^{\Parikh w} (L_j p \compose y f) = \\
        &= L_w L_j p \compose y f
        = L_{j \cdot w} p \compose y f.
    \end{align*}
    For the second equality, we compute:
    \begin{align*}
        &\coefficient {x^{\Parikh w}} (p \compose y f)
            &&\text{(by~\cref{eq:coefficient extraction and derivation})} \\
        &= \coefficient {x^0} (\partial x^{\Parikh w} (p \compose y f)) = 
            &&\text{(by the first equality)} \\
        &= \coefficient {x^0} (L_w p \compose y f) =
            &&\text{($\coefficient {x^0}$ homomorphism)} \\
        &= L_w p \compose y \coefficient {x^0} f = \\
        &= L_w p  \compose y f(0). \qedhere
    \end{align*}
\end{proof}


\subsection{\CDF~and commutative \WBPP}

In this section we provide a proof of the fact that \CDF~power series coincide with \WBPP~series which are commutative.

\commutativeShuffleFinite*
\begin{proof}
    For the first direction, let $f \in \series \Q \Sigma$ be a commutative shuffle-finite series.
    Thanks to~\cref{lem:shuffle-finite working characterisation}, we have
    \begin{align*}
        f = f^{(1)} \in \poly \Q {f^{(1)}, \dots, f^{(k)}},
    \end{align*}
    where the latter ring is closed under $\derive {a_j}$, for every $1 \leq j \leq d$.
    Let $g^{(i)} := \stops {f^{(i)}} = \sum_{n \in \N^d} f^{(i)}_{w_n} \cdot \frac {x^n} {n!}$
    be the corresponding commutative series, for every $1 \leq i \leq k$.
    Here, for every $n \in \N^d$, we have fixed a word $w_n \in \Sigma^*$ \st~$\Parikh {w_n} = n$.
    Consider the power series ring
    \begin{align*}
        R := \poly \Q {g^{(1)}, \dots, g^{(k)}}.
    \end{align*}
    Clearly $\stops {f^{(1)}} \in R$.
    It remains to show that $R$ is closed under $\partial {x_j}$, for every $1 \leq j \leq d$.
    
    We begin with a combinatorial equality which will be useful in the rest of the proof.
    \begin{claim*}
        For every $x, y \in \N^d$, we have
        \begin{align}
            \label{eq:shuffle combinatorial identity}
            \sum_{\Parikh u = x, \Parikh v = y} (u \shuffle v) = \sum_{\Parikh w = x + y} {\Parikh w \choose x, y} \cdot w,
        \end{align}
        where the \lhs~sum is over $u, v \in \Sigma^*$ and the \rhs~sum is over $w \in \Sigma^*$.
    \end{claim*}
    \begin{proof}[Proof of the claim.]
        Since the shuffle product enjoys bilinearity, associativity, commutativity, and distributivity over sum, we have
        \begin{align*}
            &\sum_{\Parikh u = x, \Parikh v = y} (u \shuffle v) = \\
            &= \left(\sum_{\Parikh u = x} u\right) \shuffle \left(\sum_{\Parikh v = y} v\right) = \\
            &= \left(a_1^{x_1} \shuffle \cdots \shuffle a_d^{x_d}\right) \shuffle \left(a_1^{y_1} \shuffle \cdots \shuffle a_d^{y_d}\right) = \\
            &= (a_1^{x_1} \shuffle a_1^{y_1}) \shuffle \cdots \shuffle (a_d^{x_d} \shuffle a_d^{y_d}) = \\
            &= \left({x_1 + y_1 \choose x_1, y_1} a_1^{x_1 + y_1}\right) \shuffle \cdots \shuffle \left({x_d + y_d \choose x_d, y_d} a_d^{x_d + y_d}\right) = \\
            &= {x_1 + y_1 \choose x_1, y_1} \cdots {x_d + y_d \choose x_d, y_d} \cdot \left(a_1^{x_1 + y_1} \shuffle \cdots \shuffle a_d^{x_d + y_d}\right) = \\
            &= {x + y \choose x, y} \cdot \left(a_1^{x_1 + y_1} \shuffle \cdots \shuffle a_d^{x_d + y_d}\right) = \\
            &= {x + y \choose x, y} \cdot \sum_{\Parikh w = x + y} w = \\
            &= \sum_{\Parikh w = x + y} {\Parikh w \choose x, y} \cdot w. \qedhere
        \end{align*}
    \end{proof}

    In the next claim we show a commutation rule for partial derivative $\partial {x_j}$
    \wrt~$\stops \_$ and shift $\derive {a_j}$.
    \begin{claim*}
        For every commutative series $f \in \series \Q \Sigma$ and $1 \leq j \leq d$,
        \begin{align}
            \label{eq:partial stops derive}
            \partial {x_j} (\stops f) = \stops {\derive {a_j} f}.
        \end{align}
    \end{claim*}
    \begin{proof}[Proof of the claim]
        \begin{align*}
            \partial {x_j} (\stops f)
            &= \partial {x_j} \sum_{n \in \N^d} f_{w_n} \cdot \frac {x^n} {n!} = \\
            &= \sum_{n \in \N^d} f_{w_n} \cdot \partial {x_j} \frac {x^n} {n!} = \\
            &= \sum_{n \in \N^d, n_j \geq 1} f_{w_n} \cdot \frac {x^{n - e_j}} {(n - e_j)!} = \\
            &= \sum_{n \in \N^d} f_{w_{n + e_j}} \cdot \frac {x^n} {n!} = (*) \\
            &= \sum_{n \in \N^d} f_{a_j \cdot w_n} \cdot \frac {x^n} {n!} = \\
            &= \sum_{n \in \N^d} (\derive {a_j} f)_{w_n} \cdot \frac {x^n} {n!} = \\
            &= \stops {\derive {a_j} f}.
        \end{align*}
        The crucial equality is $(*)$ where we have used commutativity to deduce
        $f_{w_{n + e_j}} = f_{a_j \cdot w_n}$.
    \end{proof}

    In the next claim we show that $\stops \_$ is a homomorphism
    from the ring of commutative shuffle series to the ring of power series.
    \begin{claim*}
        For every commutative series $f, g \in \series \Q \Sigma$,
        \begin{align}
            \label{eq:stops sum}
            \stops {f + g} = \stops f + \stops g, \\
            \label{eq:stops shuffle}
            \stops {f \shuffle g} = \stops f \cdot \stops g.
        \end{align}
    \end{claim*}
    \begin{proof}[Proof of the claim]
        Additivity follows immediately from the definitions.
        We now show~\cref{eq:stops shuffle}.
        For every word $w \in \Sigma^*$, write $f_{\Parikh w} := f_w$ and $g_{\Parikh w} := g_w$,
        which is well defined since $f, g$ are commutative.
        \begin{align*}
            \stops {f \shuffle g}
            &= \stops {\left(\sum_{u \in \Sigma^*} f_u \cdot u\right) \shuffle \left(\sum_{v \in \Sigma^*} g_v \cdot v\right)} = \\
            &= \stops {\left(\sum_{a \in \N^d} f_a \cdot \sum_{\Parikh u = a} u\right) \shuffle \left(\sum_{b \in \N^d} g_b \cdot \sum_{\Parikh v = b} v\right)} = \\
            &= \stops {\sum_{a, b \in \N^d} f_a f_b \cdot \sum_{\Parikh u = a, \Parikh v = b} u \shuffle v}
                = \text{(by~\cref{eq:shuffle combinatorial identity})} \\
            &= \stops {\sum_{a, b \in \N^d} f_a f_b \cdot \sum_{\Parikh w = a + b} {a+b \choose a, b} \cdot w} = \\
            &= \stops {\sum_{a, b \in \N^d} f_a f_b \cdot {a+b \choose a, b} \cdot \sum_{\Parikh w = a + b} w} = \\
            &= \sum_{a, b \in \N^d} f_a f_b \cdot {a+b \choose a, b} \cdot \frac {x^{a+b}} {(a+b)!} = \\
            &= \left(\sum_{a \in \N^d} f_a \cdot \frac {x^a} {a!}\right) \cdot \left(\sum_{b \in \N^d} f_b \cdot \frac {x^b} {b!}\right) = \\
            &= \stops f \cdot \stops g. \qedhere
        \end{align*}
    \end{proof}

    We can now apply the previous two claims and compute
    \begin{align*}
        \partial {x_j} g^{(i)}
            &= \partial {x_j} (\stops {f^{(i)}}) = 
                &&\text{(by~\cref{eq:partial stops derive})} \\
            &= \stops {\derive {a_j} f^{(i)}} = \\
            &= \stops {p_j^{(i)} \compose y \tuple{f^{(1)}, \dots, f^{(k)}}} =
                &&\text{(by~\cref{eq:stops sum,eq:stops shuffle})} \\
            &= p_j^{(i)}(\stops {f^{(1)}}, \dots, \stops {f^{(k)}}) = \\
            &= p_j^{(i)}(g^{(1)}, \dots, g^{(k)}) \in \poly \Q {g^{(1)}, \dots, g^{(k)}}.
    \end{align*}
    This shows that $\partial {x_j} g^{(i)}$ belongs to $\poly \Q {g^{(1)}, \dots, g^{(k)}}$, as required.

    For the other direction, let $f \in \powerseries \Q x$ be a \CDF~power series.
    Thanks to~\cref{lem:CDF characterisation}, we have
    \begin{align*}
        f = f^{(1)} \in \poly \Q {f^{(1)}, \dots, f^{(k)}},
    \end{align*}
    where the latter ring is closed under $\partial {x_j}$, for every $1 \leq j \leq d$.
    Write $f^{(i)} = \sum_{n \in \N^d} f^{(i)}_n \cdot \frac {x^n} {n!}$
    and let $g^{(i)} := \pstos {f^{(i)}} = \sum_{w \in \Sigma^*} f^{(i)}_{\Parikh w} \cdot w$
    be the corresponding commutative series, for every $1 \leq i \leq k$.
    We claim that
    \begin{align*}
        \poly \Q {g^{(1)}, \dots, g^{(k)}}
    \end{align*}
    is closed under $\derive {a_j}$, for every $1 \leq j \leq d$.
    It suffices to show this for the generators.

    The following two claims show that $\pstos \_$ is a homomorphism from the power series to the series ring
    which respects their differential structure.
    \begin{claim*}
        For every power series $f \in \powerseries \Q x$ and $1 \leq j \leq d$, we have
        \begin{align}
            \label{eq:derive pstos partial}
            \derive {a_j} (\pstos f) = \pstos {\partial {x_j} f}.
        \end{align}
    \end{claim*}
    \begin{proof}[Proof of the claim.]
        We compute:
        \begin{align*}
            \derive {a_j} (\pstos f)
            &= \derive {a_j} \sum_{w \in \Sigma^*} f_{\Parikh w} \cdot w = \\
            &= \sum_{w \in \Sigma^*} f_{\Parikh w} \cdot \derive {a_j} w = \\
            &= \sum_{w \in \Sigma^*} f_{\Parikh {a_j \cdot w}} \cdot \derive {a_j} (a_j \cdot w) = \\
            &= \sum_{w \in \Sigma^*} f_{\Parikh w + e_j} \cdot w = \\
            &= \sum_{w \in \Sigma^*} (\partial {x_j} f)_{\Parikh w} \cdot w = \\
            &= \pstos {\partial {x_j} f}. \qedhere
        \end{align*}
    \end{proof}
    \begin{claim*}
        For every two power series $f, g \in \powerseries \Q x$, we have
        \begin{align}
            \label{eq:pstos sum}
            \pstos {f + g} = \pstos f + \pstos g, \\
            \label{eq:pstos shuffle}
            \pstos {f \cdot g} = \pstos f \shuffle \pstos g.
        \end{align}
    \end{claim*}
    \begin{proof}[Proof of the claim.]
        Additivity follows immediately from the definitions.
        Regarding~\cref{eq:pstos shuffle}, we have
        \begin{align*}
            \pstos f \shuffle \pstos g
            &= \left(\sum_{u \in \Sigma^*} f_{\Parikh u} \cdot u\right) \shuffle \left(\sum_{v \in \Sigma^*} g_{\Parikh v} \cdot v\right) = \\
            &= \sum_{u, v \in \Sigma^* } f_{\Parikh u} g_{\Parikh v} \cdot (u \shuffle v) = \\
            &= \sum_{x, y \in \N^d} f_x g_y \sum_{\Parikh u = x, \Parikh v = y} (u \shuffle v)
                = \text{(by~\cref{eq:shuffle combinatorial identity})} \\
            &= \sum_{x, y \in \N^d} f_x g_y \sum_{\Parikh w = x + y} {\Parikh w \choose x, y} \cdot w = \\
            &= \sum_{w \in \Sigma^*} \sum_{x + y = \Parikh w} {\Parikh w \choose x, y} f_x g_y \cdot w = \\
            &= \pstos {\sum_{n \in \N^d} \sum_{x + y = n} {n \choose x, y} f_x g_y \cdot x^n} = \\
            &= \pstos {f \cdot g}. \qedhere
        \end{align*}
    \end{proof}
    Thanks to the last two claims we have
    \begin{align*}
        &\derive {a_j} g^{(i)} = \\
        &= \derive {a_j} (\pstos {f^{(i)}}) =
            && \text{(by~\cref{eq:derive pstos partial})}\\
        &= \pstos {\partial {x_j} f^{(i)}} = \\
        &= \pstos {p_j^{(i)}(f^{(1)}, \dots, f^{(k)})} = 
            && \text{(by~\cref{eq:pstos sum,eq:pstos shuffle})} \\
        &= p_j^{(i)} \compose y \tuple{\pstos{f^{(1)}}, \dots, \pstos{f^{(k)}}} = \\
        &= p_j^{(i)} \compose y \tuple{g^{(1)}, \dots, g^{(k)}} \in \poly \Q {g^{(1)}, \dots, g^{(k)}}.
    \end{align*}
    This shows that $\derive {a_j} g^{(i)}$ is in the ring $\poly \Q {g^{(1)}, \dots, g^{(k)}}$, as required.
\end{proof}

\subsection{Number sequences and binomial convolution}

In this short section we recall a classical connection between power series and numeric sequences.
Let the set of number sequences $\N^d \to \Q$ be denoted for brevity by $\sequences \Q d$.
Consider two number sequences $f, g \in \sequences \Q d$.
Their sum $f + g$ is the sequence obtained by pointwise sum,
and the sequence $c \cdot f$ for a scalar $c \in \Q$ is obtained by multiplying by $c$ the value of $f$.
This endows the set of number sequences with the structure of a vector space.
The \emph{binomial convolution} of two sequences $f, g$ is the number sequence $f \binconv g$
\st~for every $n \in \N^d$,
\begin{align}
    \label{eq:binomial convolution}
    (f \binconv g)_n := \sum_{m \leq n} \binom n m f_m g_{m - n},
        \ \text{with } \binom n m := \binom {n_1} {m_1} \cdots \binom {n_d} {m_d}.
\end{align}
Binomial convolution is associative, commutative,
with identity the sequence $f$ \st~$f_0 = 1$ and zero everywhere else.
In this way we have endowed number sequences with the structure of a commutative ring.
Since binomial convolution is bilinear, it also endows number sequences with the structure of an algebra over $\Q$.

The \emph{$j$-th shift} of a sequence $f \in \sequences \Q d$
is the sequence $\shift j f$ which maps $n$ to $f_{n + e_j}$,
where $n + e_j$ is obtained from $n$ by increasing the $j$-th coordinate by $1$.
The shifts $\shift j$'s are pairwise commutative
and endow the sequences ring / algebra with a differential structure since they obey Leibniz rule
\begin{align}
    \shift j (f \binconv g) = \shift j f \binconv g + f \binconv \shift j g.
\end{align}

\begin{definition}
    A sequence $f \in \sequences \Q d$ is \emph{binomial-convolution finite}
    if it belongs to a finitely generated subalgebra of $\tuple{\sequences \Q d; {+}, {\binconv}}$
    closed under shifts $\shift j$'s for every $1 \leq j \leq d$.
\end{definition}

In analogy with other composition operators,
define the composition of a polynomial $p \in \poly \Q k$
and a tuple of number sequences $f \in \sequences \Q d^k$
as the number sequence $p \circ f$ obtained by replacing the $i$-th variable in $p$ by $f^{(i)}$
and interpreting polynomial multiplication by binomial convolution.
The following is our working definition of binomial-convolution finite sequences.
\begin{lemma}
    \label{lem:binomial-convolution-finite working characterisation}
    A number sequence $f^{(1)} \in \sequences \Q d$ is binomial-convolution finite iff
    there exist number sequences $f^{(2)}, \dots, f^{(k)} \in \sequences \Q d$,
    and polynomials $p_j^{(i)} \in \poly \Q k$ for every $1 \leq j \leq d$ and $1 \leq i \leq k$,
    \st~for every $1 \leq j \leq d$:
    \begin{align}
        \label{eq:binomial-convolution-finite equations}
        \left\{
            \begin{array}{rcl}
                \shift j f^{(1)} &=& p^{(1)}_j \circ \tuple{f^{(1)}, \dots, f^{(k)}}, \\
                &\vdots& \\
                \shift j f^{(k)} &=& p^{(k)}_j \circ \tuple{f^{(1)}, \dots, f^{(k)}}.
            \end{array}
        \right.
    \end{align}
\end{lemma}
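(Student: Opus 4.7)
The plan is to mimic exactly the argument given in~\cref{lem:shuffle-finite working characterisation} for shuffle-finite series, replacing the shuffle product by binomial convolution and one-letter derivatives $\derive a$ by shifts $\shift j$. The key algebraic facts that make this transfer work are: (i) binomial convolution endows $\sequences \Q d$ with the structure of a commutative algebra over $\Q$; (ii) each shift $\shift j$ is a linear operator satisfying the Leibniz rule $\shift j(f \binconv g) = \shift j f \binconv g + f \binconv \shift j g$; and (iii) the shifts pairwise commute. In particular, we may instantiate~\cref{fact:closure lemma} with $R = \sequences \Q d$, $\sigma = \mathsf{id}$, and $\delta = \shift j$.

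For the \emph{if} direction, suppose number sequences $f^{(1)}, \dots, f^{(k)}$ satisfy the system~\cref{eq:binomial-convolution-finite equations}. Consider the finitely generated subalgebra
\[
S := \poly \Q {f^{(1)}, \dots, f^{(k)}} \subseteq \tuple{\sequences \Q d; +, \binconv}.
\]
By definition $f^{(1)} \in S$, and by~\cref{eq:binomial-convolution-finite equations} we have $\shift j f^{(i)} \in S$ for every generator $f^{(i)}$ and every $1 \leq j \leq d$. By~\cref{fact:closure lemma} applied to each shift, this implies that $S$ is closed under $\shift j$. Hence $f^{(1)}$ is binomial-convolution finite.

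For the \emph{only if} direction, assume $f = f^{(1)}$ is binomial-convolution finite. By definition, there exist additional sequences $f^{(2)}, \dots, f^{(k)}$ such that the finitely generated subalgebra $S := \poly \Q {f^{(1)}, \dots, f^{(k)}}$ contains $f$ and is closed under every $\shift j$. In particular, for every $1 \leq i \leq k$ and $1 \leq j \leq d$, the sequence $\shift j f^{(i)}$ lies in $S$, and so by definition of a finitely generated subalgebra it can be written as a polynomial (under binomial convolution) in the generators, \ie~there exists a polynomial $p_j^{(i)} \in \poly \Q k$ with $\shift j f^{(i)} = p_j^{(i)} \circ \tuple{f^{(1)}, \dots, f^{(k)}}$. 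Collecting these identities yields the system~\cref{eq:binomial-convolution-finite equations}.

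The argument is entirely symbolic and there is no genuine obstacle; the only point that deserves attention is ensuring that~\cref{fact:closure lemma} genuinely applies to shifts. This is immediate from the Leibniz rule for $\shift j$ on binomial convolution, which in turn follows from the Vandermonde identity for binomial coefficients underlying~\cref{eq:binomial convolution}.
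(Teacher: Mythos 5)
Your proof is correct, and it follows exactly the route the paper takes for the analogous \cref{lem:shuffle-finite working characterisation}: the ``if'' direction via \cref{fact:closure lemma} (with $\sigma = \mathsf{id}$, $\delta = \shift j$) and the Leibniz rule, and the ``only if'' direction by writing each $\shift j f^{(i)}$ as a polynomial combination of the generators. The paper states this lemma without proof, evidently because it is obtained from the shuffle-finite case by precisely the transliteration you carry out (shuffle $\mapsto$ binomial convolution, $\derive a \mapsto \shift j$), so no gap or divergence exists.
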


\subsubsection{Connection with \CDF~power series}

At this point, the analogy between binomial-convolution finite sequences and \CDF~power series should be evident.
The \emph{exponential generating power series} of a number sequence $f \in \sequences \Q d$
is the power series
\begin{align*}
    \EGS f := \sum_{n \in \N} f_n \cdot \frac {x^n} {n!}.
\end{align*}
In the other direction, we can transform a power series $f \in \powerseries \Q d$
to a number sequence $g \in \sequences \Q d$
with the coefficient extraction operation $g_n := \coefficient {x^n} f$.
Clearly, these two operations are one the inverse of the other.
There are a number of notable identities connecting the $\EGS \_$ transform
and operations on number sequences.
\begin{center}
    \begin{tabular}{ C C }
        \sequences \Q d & \powerseries \Q d \\
        \hline
        f & \EGS f \\
        c \cdot f & c \cdot \EGS f \\
        f + g & \EGS f + \EGS g \\
        f \binconv g & \EGS f \cdot \EGS g \\
        \shift j f & \partial {x_j} \EGS f \\
        p \circ \tuple{f^{(1)}, \dots, f^{(k)}} & p \circ \tuple{\EGS {f^{(1)}}, \dots, \EGS {f^{(k)}}}
    \end{tabular}
\end{center}
The connection between binomial-convolution finite sequences
and \CDF~power series stated below
follows immediately from the identities above.
\begin{lemma}
    A number sequence $f \in \sequences \Q d$ is binomial-convolution finite iff
    its exponential power series $\EGS f \in \powerseries \Q x$ is \CDF.
\end{lemma}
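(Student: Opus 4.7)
The plan is to exploit the fact that $\EGS{\_}$ is a bijection between $\sequences{\Q}{d}$ and $\powerseries{\Q}{x}$ which transports the algebraic and differential structure of the former onto that of the latter, as summarized in the table of identities immediately preceding the statement. These identities are routine: additivity and scalar homogeneity are immediate from the definition; the convolution-product identity $\EGS{f \binconv g} = \EGS{f} \cdot \EGS{g}$ follows by expanding both sides and applying $\binom{n}{m}/n! = 1/(m!(n-m)!)$; the shift-derivative identity $\EGS{\shift{j} f} = \partial{x_j} \EGS{f}$ is termwise differentiation of $x^n/n!$; and the composition identity follows by induction on polynomial structure from the previous two.

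With these identities in hand, both directions of the equivalence are direct translations. For the forward direction, assume $f = f^{(1)}$ is binomial-convolution finite and invoke \cref{lem:binomial-convolution-finite working characterisation} to obtain auxiliary sequences $f^{(2)}, \dots, f^{(k)} \in \sequences{\Q}{d}$ and polynomials $p^{(i)}_j \in \poly{\Q}{k}$ satisfying \eqref{eq:binomial-convolution-finite equations}. Applying $\EGS{\_}$ to both sides of each such equation and rewriting via the table yields
\begin{align*}
  \partial{x_j} \EGS{f^{(i)}} = p^{(i)}_j \compose{} \tuple{\EGS{f^{(1)}}, \dots, \EGS{f^{(k)}}},
\end{align*}
which is an autonomous \CDF{} system in the form of \eqref{eq:multivariate CDF - matrix form} with $\EGS{f} = \EGS{f^{(1)}}$ as its first component; hence $\EGS{f}$ is \CDF.

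For the reverse direction, the argument is symmetric. If $\EGS{f}$ is \CDF, then by \cref{lem:CDF characterisation} it belongs to a finitely generated differential subalgebra of $\powerseries{\Q}{x}$; picking generators $g^{(1)} = \EGS{f}, g^{(2)}, \dots, g^{(k)}$ and polynomial representations $\partial{x_j} g^{(i)} = p^{(i)}_j \compose{} \tuple{g^{(1)}, \dots, g^{(k)}}$, I pull back through coefficient extraction by setting $f^{(i)}_n := \coefficient{x^n}{g^{(i)}}$ so that $g^{(i)} = \EGS{f^{(i)}}$. Reading the same table now from right to left converts this \CDF{} system into an instance of \eqref{eq:binomial-convolution-finite equations}, witnessing that $f$ is binomial-convolution finite.

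The entire argument is essentially bookkeeping, so there is no substantive obstacle: all the content sits in the four identities of the table, whose verifications are elementary manipulations of exponential generating functions. The only subtlety worth flagging is checking that polynomial composition on the sequence side — where monomial multiplication is interpreted as binomial convolution — corresponds to ordinary polynomial composition on the power series side, but this follows by induction on the polynomial structure from bilinearity together with the convolution-product identity.
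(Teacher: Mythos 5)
Your proposal matches the paper's approach: the paper states that the lemma ``follows immediately from the identities above'' (the table relating $\EGS{\_}$ to sum, scalar product, binomial convolution, shift, and polynomial composition), and you simply spell out that bookkeeping, applying $\EGS{\_}$ forward and coefficient extraction backward to translate between the two working characterisations. The argument is correct; the only point worth being pedantic about, which you implicitly handle, is that in the reverse direction one may always include $\EGS f$ itself among the generators of the finitely generated differential subalgebra, so that the resulting system has $\EGS f$ as its first component.
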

This interpretation will be useful
when computing \CDF~power series coefficients in~\cref{app:computing CDF coefficients}.

\subsection{Computing coefficients of \CDF~power series}
\label{app:computing CDF coefficients}

In this section we provide more details
on the complexity of computing \CDF~power series coefficients.

\subsubsection{Coefficient growth}
\label{sec:CDF coefficient growth}

We show that coefficients and denominators of a \CDF~power series do not grow too fast.
We being with a bound on the growth of coefficients.
\begin{restatable}{lemma}{SmallCoefficientProperty}
    \label{lem:CDF small coefficient}
    Let $f \in \powerseries \Q d^k$ be a tuple of \CDF~power series
    satisfying a \CDF~system of degree $\leq D$, order $\leq k$, height $\leq H$,
    and with initial value $\inftynorm {f(0)} \leq H$.
    For every $n \in \N^d$,
    \begin{align*}
        \inftynorm {\coefficient {x^n} f} \leq (\onenorm nD + k)^{\bigO{\onenorm n \cdot k}} \cdot H^{\bigO {\onenorm n \cdot D}}.
    \end{align*}
\end{restatable}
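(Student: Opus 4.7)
The plan is to reduce the coefficient bound to the iterated-derivation bounds already established in Appendix~A.1. Fix any word $w \in \{1, \ldots, d\}^*$ with Parikh image $\Parikh w = n$; such a word exists and has length $\length w = \onenorm n$. By the CDF exchange rule (\cref{lem:CDF exchange rule}) applied to $p = y_i$,
\[
    \coefficient {x^n} f^{(i)} = L_w y_i \compose y f(0),
\]
reducing the problem to bounding the evaluation of an iterated Lie derivative, applied to the polynomial $y_i$ (of degree and height $1$), at the initial value $f(0) \in \Q^k$.

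Next, I observe that each Lie derivative $L_j = \sum_{h=1}^k P_{hj} \cdot \partial {y_h}$ is a derivation of the polynomial ring $\poly \Q y$ sending the generator $y_h$ to the kernel polynomial $P_{hj}$. Since the \CDF~system has degree $\leq D$ and height $\leq H$, each $L_j$ is itself a derivation of degree $\leq D$ and height $\leq H$ in the sense of Appendix~A.1. The operator $L_w$ is thus a composition of $\onenorm n$ such derivations.

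The iterated-derivation bounds of \cref{lem:iterated derivation bounds} are stated for iterating a \emph{single} derivation, but the inductive proof of \cref{lem:iterated derivation - complexity bounds} uses, at each step, only the degree and height of the derivation applied at that step. The statement therefore extends verbatim to a composition of $\onenorm n$ possibly distinct, non-commuting derivations, each of degree $\leq D$ and height $\leq H$. Applying this generalisation to $L_w y_i$ and evaluating at a point $f(0)$ whose entries have absolute value $\leq H$ by assumption yields
\[
    \abs{(L_w y_i)(f(0))} \leq (\onenorm n \cdot D + k)^{\bigO{\onenorm n \cdot k}} \cdot H^{\bigO{\onenorm n \cdot D}},
\]
which is exactly the desired bound on $\inftynorm{\coefficient {x^n} f}$.

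The only non-routine step is the extension of \cref{lem:iterated derivation bounds} from a single derivation to a product of different derivations; this is immediate from a line-by-line reading of the proof, since the inductive argument treats each application of a derivation independently and only uses the common upper bounds $D$ and $H$ on degree and height.
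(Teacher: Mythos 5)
Your proof is correct and takes essentially the same approach as the paper: apply the exchange rule (\cref{lem:CDF exchange rule}) to reduce the coefficient $\coefficient{x^n}f^{(i)}$ to the evaluation of the iterated Lie derivative $L_w y_i$ at $f(0)$, then invoke the bound of \cref{lem:iterated derivation bounds}. You are in fact slightly more careful than the paper's terse proof in flagging that \cref{lem:iterated derivation bounds} is stated for a single derivation iterated ($L^n$) whereas $L_w$ composes possibly distinct $L_j$'s, and you correctly note that the inductive proof carries over verbatim because at each step it uses only the common bounds $D$ and $H$ on the degree and height of the derivation applied.
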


\begin{remark}
    For univariate power series $f$ solving an algebraic differential equation,
    a bound $\abs {\coefficient {x^n} f / n!} \leq (n!)^{\bigO 1}$ 
    was previously known~\cite[Theorem 16]{Mahler:1976}.
    For univariate \CDF~power series,
    \cite[Theorem 3]{BergeronReutenauer:EJC:1990} shows the more precise bound $\abs {\coefficient {x^n}f} \leq \alpha^n \cdot n!$,
    later generalised to the multivariate case~\cite[Theorem 11]{BergeronSattler:TCS:1995}.
    None of these bounds provide details on the precise dependency of the exponent on the parameters of the system,
    such as degree $D$, height $H$, and order $k$,
    which we need in our complexity analysis.
\end{remark}
\begin{proof}
	By~\cref{lem:CDF exchange rule},
	the coefficient of $x^n$ of component $f^{(i)}$ of a solution $f$ 	of~\cref{eq:multivariate CDF - matrix form}
	can be expressed as
	$\coefficient {x^n} f^i = \coefficient {x^n} (y_i \compose y f) = L_w y_i (f(0))$,
	for any $w \in \Sigma^*$ \st~$\Parikh w = n$.
	The bound follows as a corollary of~\cref{eq:bound on evaluationing iterated derivations} in \cref{lem:iterated derivation bounds} regarding the complexity of $L_w y_i$.
\end{proof}

We now state a bound on the growth of denominators of \CDF~power series coefficients.
The following bound has appeared in~\cite[Theorem 3(iii)]{BergeronReutenauer:EJC:1990} in the univariate context,
later generalised to the multivariate context in~\cite[Theorem 14(iii)]{BergeronSattler:TCS:1995}.
Since no proof appeared in the multivariate context,
we provide one for completeness.
\begin{restatable}{lemma}{CDFdenominatorBound}
    \label{lem:CDF denominator bound}
    Let $f \in \powerseries \Q d^k$ with $f_0 = 0$ be a tuple of \CDF~power series
    solving~\cref{eq:multivariate CDF - matrix form}.
    Let $q \in \N$ be a common denominator of all rational constants
    appearing in the kernel $P_{ij} \in \poly \Q y$.
    We have
    \begin{align}
        \label{eq:CDF denominator bound}
        q^{\onenorm n} \cdot \coefficient {x^n} f \in \Z^k, \quad \forall n \in \N^d.
    \end{align}
\end{restatable}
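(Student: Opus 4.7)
The plan is to combine the exchange rule (\cref{lem:CDF exchange rule}) with a simple observation about how Lie derivatives act on the integral polynomial ring. Pick any word $w \in \set{1, \dots, d}^*$ with $\Parikh w = n$, so $\length w = \onenorm n$. By \cref{lem:CDF exchange rule} together with the hypothesis $f(0) = 0$, we have
\begin{align*}
    \coefficient {x^n} f^{(i)} \;=\; L_w y_i \compose y f(0) \;=\; (L_w y_i)(0), \qquad 1 \leq i \leq k.
\end{align*}
It therefore suffices to show that $q^{\length w} \cdot L_w y_i \in \poly \Z y$, because evaluating a polynomial with integer coefficients at the origin gives an integer.

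First I would verify the key closure property: for each $1 \leq j \leq d$, the scaled Lie derivative $q \cdot L_j$ maps $\poly \Z y$ into $\poly \Z y$. Indeed, by definition $L_j = \sum_{h=1}^k P_{hj} \cdot \partial {y_h}$, and each $q \cdot P_{hj}$ lies in $\poly \Z y$ by the choice of $q$ as a common denominator of the coefficients of the kernel. Since partial derivatives $\partial {y_h}$ already preserve $\poly \Z y$, we conclude that for every $p \in \poly \Z y$,
\begin{align*}
    q \cdot L_j(p) \;=\; \sum_{h=1}^k (q \cdot P_{hj}) \cdot \partial {y_h} p \;\in\; \poly \Z y.
\end{align*}

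By composing this observation $\length w$ times, writing $w = j_1 \cdots j_m$ with $m = \length w = \onenorm n$, we obtain
\begin{align*}
    q^m \cdot L_w \;=\; (q \cdot L_{j_m}) \circ \cdots \circ (q \cdot L_{j_1}) \;:\; \poly \Z y \,\longrightarrow\, \poly \Z y.
\end{align*}
Applying this to the monomial $y_i \in \poly \Z y$ and evaluating at the origin yields $q^{\onenorm n} \cdot (L_w y_i)(0) \in \Z$, which combined with the exchange-rule identity above gives $q^{\onenorm n} \cdot \coefficient {x^n} f^{(i)} \in \Z$ for every $1 \leq i \leq k$, as required. There is no real obstacle here beyond matching the exchange rule of \cref{lem:CDF exchange rule} to the integrality bookkeeping; the hypothesis $f(0) = 0$ is crucial, since otherwise evaluating $L_w y_i$ at $f(0)$ would reintroduce denominators coming from the initial condition.
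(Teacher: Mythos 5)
Your proof is correct and establishes the same bound as the paper, but it takes a genuinely different route. The paper's proof extracts the coefficient of $x^n$ directly from the equation $\partial x f = P \compose y f$ to obtain the recurrence $\coefficient{x^{n+e_j}} f^{(i)} = q^{-1}\coefficient{x^n}(Q_{ij}\compose y f)$ with $Q_{ij} \in \poly \Z y$, and then argues by induction on $\onenorm n$ that $q^{\onenorm n}\coefficient{x^n} f^{(i)} \in \Z$, using the fact that $\coefficient{x^n}(Q_{ij}\compose y f)$ unpacks as an integer linear combination of products $\coefficient{x^{n_1}}f^{(i_1)}\cdots\coefficient{x^{n_\ell}}f^{(i_\ell)}$ with $n_1+\cdots+n_\ell = n$. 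You instead route everything through the exchange rule of \cref{lem:CDF exchange rule}, collapse the coefficient to a single polynomial evaluation $\coefficient{x^n}f^{(i)} = (L_w y_i)(f(0)) = (L_w y_i)(0)$, and then track the denominator via the clean observation that $q\cdot L_j$ stabilizes $\poly \Z y$, iterated over the length of $w$. Both arguments have the same combinatorial content — your scaled Lie derivative argument is, in effect, performing the paper's induction inside the polynomial ring before evaluation — but yours avoids the binomial-convolution bookkeeping entirely and is arguably the cleaner route once \cref{lem:CDF exchange rule} is in hand; the paper's proof is more self-contained and does not rely on the Lie-derivative machinery. Your closing remark about the role of $f(0)=0$ is accurate and matches the paper's use of that hypothesis in its base case.
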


\begin{proof}
    There are integer polynomials $Q_{ij} \in \poly \Z y$ \st~$P_{ij} = q^{-1} \cdot Q_{ij}$.
    By extracting the coefficient of $x^n$ on both sides of~\cref{eq:multivariate CDF - matrix form}
    and using~\cref{eq:coefficient extraction and derivation}, we have
    \begin{align}
        \label{eq:coefficient extraction - denominator bound}
        \coefficient {x^{n+e_j}} f^{(i)} = q^{-1} \cdot \coefficient {x^n} (Q_{ij} \compose y f).
    \end{align}
    We show~\cref{eq:CDF denominator bound} by induction on $\onenorm n$.
    The base case $\onenorm n = 0$ holds by the assumption $f_0 = 0 \in \N$.
    For the inductive step, consider a coordinate $1 \leq j \leq d$ and write
    \begin{align*}
        &q^{\onenorm {n+e_j}} \cdot \coefficient {x^{n+e_j}} f^{(i)} \\
        &= q^{\onenorm n + 1} \cdot q^{-1} \cdot \coefficient {x^n} (Q_{ij} \compose y f) = \\
        &= q^{\onenorm n} \cdot \coefficient {x^n} (Q_{ij} \compose y f).
    \end{align*}
    By the correspondence between power series product
    and binomial convolution of number sequences,
    $\coefficient {x^n} (Q_{ij} \compose y f)$ is a sum with integer coefficients
    of products of the form $\coefficient {x^{n_1}}f^{(i_1)} \cdots \coefficient {x^{n_\ell}}f^{(i_\ell)}$
    where $n_1 + \cdots + n_\ell = n$.
    By inductive assumption, for each factor we have
    \begin{align*}
        q^{\onenorm {n_1}} \cdot \coefficient {x^{n_1}}f^{(i_1)}, \dots, q^{\onenorm {n_\ell}} \cdot \coefficient {x^{n_\ell}}f^{(i_\ell)} \in \Z,
    \end{align*}
    and thus for their product
    \begin{align*}
        q^{\onenorm n} \cdot \coefficient {x^{n_1}}f^{(i_1)} \cdots \coefficient {x^{n_\ell}}f^{(i_\ell)} \in \Z
    \end{align*}
    as well.
\end{proof}

\subsubsection{Complexity of computing \CDF~coefficients}

We now provide complexity bounds on computing coefficients of \CDF~power series
satisfying systems of \CDF~equations with integer coefficients.

\CDFtableLemma*
\begin{proof}
    %
    By the stars and bars argument in combinatorics
    the number of $d$-variate monomials of total degree equal to $N$ is $\binom {N + d - 1} {d - 1}$.
    The number of $d$-variate monomials of total degree $\leq N$
    can be counted by adding an extra variable and count the number of monomials of total degree equal to $N$,
    which consequently is $\binom {N + d} d$.
    
    We build a table of size $\leq k \cdot \binom {N + d} d$
    where we store all values $f_n := \coefficient {x^n} f \in \Z^k$ required during the computation,
    for all $\onenorm n \leq N$.
    Indeed, for each sequence $f^{(i)}$ (with $1 \leq i \leq k$)
    the number of coefficients $f^{(i)}_n = \coefficient {x^n} f^{(i)}$ with $\onenorm n \leq N$
    equals the number of $d$-variate monomials of total degree $\leq N$.
    %
    %
    Generalising~\cref{eq:binomial convolution},
    we can compute a specific entry $n \in \N^d$ of a convolution product
    $g := g^{(1)} \binconv \cdots \binconv g^{(e)}$ with the formula
    \begin{align}
        \label{eq:multinomial convolution}
        g_n = \sum_{m^{(1)} + \cdots + m^{(e)} = n} \binom n {m^{(1)}, \dots, m^{(e)}} g^{(1)}_{m^{(1)}} \cdots g^{(e)}_{m^{(e)}},
    \end{align}
    where the vectorial multinomial coefficient is defined as
    \begin{align}
        \label{eq:multinomial coefficient}
        \binom n {m^{(1)}, \dots, m^{(e)}} := \frac {n!} {m^{(1)}! \cdots m^{(e)}!} \in \N.
    \end{align}
    \begin{claim*}
        The multinomial coefficient~\cref{eq:multinomial coefficient} can be computed with $\bigO N$ arithmetic operations.
    \end{claim*}
    \begin{proof}
        We expand the vectorial notation in~\cref{eq:multinomial coefficient}
        by writing $n = \tuple{n_1, \dots, n_d}, m^{(\ell)} = \tuple{m^{(\ell)}_1, \dots, m^{(\ell)}_d} \in \N^d$
        and we have
        \begin{align*}
            \binom n {m^{(1)}, \dots, m^{(e)}} = \frac {n_1!} {m^{(1)}_1! \cdots m^{(e)}_1!} \cdots \frac {n_d!} {m^{(1)}_d! \cdots m^{(e)}_d!}
        \end{align*}
        where, for every $1 \leq j \leq d$,
        we have the balancing condition $m^{(1)}_j + \cdots + m^{(e)}_j = n_j$.
        Computing the numerator requires $n_1 + \cdots + n_d = \onenorm n$ products,
        and the same is true for the denominator, thanks to the balancing condition.
        Thus $2 \cdot \onenorm n$ products suffice and at the end we perform one integer division.
    \end{proof}
    \begin{claim*}
        For every $n \in \N^d$ and $e \in \N$ with $\onenorm n \leq N$ and $e \leq D$,
        we can compute~\cref{eq:multinomial convolution}
        with $\bigO {(N + D) \cdot \binom {N + d \cdot D - 1} {d \cdot D - 1}}$ arithmetic operations.
    \end{claim*}
    \begin{proof}[Proof of the claim]
        Each summand of~\cref{eq:multinomial convolution} requires $e - 1 \leq D - 1$ binary products to compute
        $g^{(1)}_{m^{(1)}} \cdots g^{(e)}_{m^{(e)}}$
        and $\bigO N$ arithmetic operations for the multinomial coefficient (by the previous claim).
        We now estimate the number of summands of~\cref{eq:multinomial convolution}.
        Every tuple of vectors $\tuplesmall{m^{(1)}, \dots, m^{(e)}} \in (\N^d)^e$
        corresponds to a monomial in $d \cdot e$ variables.
        Since these vectors sum up to $n \in \N^d$,
        the total degree of the monomial is $\onenorm n$.
        It follows that there are $\leq \binom {\onenorm n + d \cdot e - 1} {d \cdot e - 1}$ such summands.
        The latter quantity is monotone in $\onenorm n$ and $e$,
        and thus it is bounded by $\leq \binom {N + d \cdot D - 1} {d \cdot D - 1}$.
    \end{proof}

    Suppose a certain portion of the table has already been filled in,
    and consider an index $n \in \N^d$
    \st~$\onenorm n < N$, $f_m \in \Z^k$ is already in the table for every $m \leq n$,
    but $f^{(i)}_{n + e_j}$ is not in the table for some $1 \leq j \leq d$ and $1 \leq i \leq k$.
    Extract the term $n \in \N^d$ on both sides of~\cref{eq:binomial-convolution-finite equations}:
    \begin{align}
        \label{eq:binomial-convolution-finite recursive relation}
        f^{(i)}_{n + e_j} = \tuple{p^{(i)}_j \circ \tuple{f^{(1)}, \dots, f^{(k)}}}_n,
        \quad \forall 1 \leq i \leq k, 1 \leq j \leq d.
    \end{align}
    This means that $f^{(i)}_{n + e_j}$ is a polynomial function of previous values $f^{(h)}_m$'s
    with $1 \leq h \leq k$ and $m \leq n$.
    Polynomial $p^{(i)}_j \in \poly \Z k$ has total degree $\leq D$ and it is $k$-variate,
    therefore it contains at most $\binom {D + k} k$ monomials.
    Thanks to the previous claim, entry $n \in \N^d$ in the convolution product arising from each such monomial
    can be computed with $$\bigO {(N+D) \cdot \binom {N + d \cdot D - 1} {d \cdot D - 1}}$$ arithmetic operations.
    Combining these two facts yields the next claim.
    \begin{claim*}
        We can compute entry $f^{(i)}_n \in \Z$ of the table with a number of arithmetic operations at most
        \begin{align*}
            \bigO{\binom {D + k} k \cdot (N+D) \cdot \binom {N + d \cdot D - 1} {d \cdot D - 1}}.
        \end{align*}
    \end{claim*}
    By the previous claim and the fact that the table has $\leq k \cdot \binom {N + d} d$ entries,
    we can compute the whole table with a number of arithmetic operations bounded by
    \begin{align*}
        &\leq \bigO{k \cdot \binom {N + d} d \cdot \binom {D + k} k \cdot (N+D) \cdot \binom {N + d \cdot D - 1} {d \cdot D - 1}} \leq \\
        &\leq \bigO{k \cdot (N+d)^d (D+k)^k \cdot (N+D) \cdot (N + d \cdot D - 1)^{d \cdot D - 1}} \leq \\
        &\leq (N + d \cdot D + k)^{\bigO {d \cdot D + k}}.
    \end{align*}
    By~\cref{lem:CDF small coefficient}, numbers in the table are exponentially bounded in $N$, $k$, and $D$,
    and in fact each of them can be stored with a number of bits bounded by
    \begin{align*}  
        \bigO {N \cdot k \cdot \log (N \cdot D + k) + N \cdot D \cdot \log H}.
    \end{align*}
    Since an arithmetic operation (sum or multiplication) on two $b$-bits integer numbers
    can be performed in $\bigO {b^2}$ deterministic time,
    the total time necessary to build the table is
    \begin{align*}
        &(N + d \cdot D + k)^{\bigO {d \cdot D + k}} \cdot (N \cdot k \cdot \log (N \cdot D + k) + N \cdot D \cdot \log H)^{\bigO 1} \\
        &\leq (N + d \cdot D + k)^{\bigO{d \cdot D + k}} \cdot (\log H)^{\bigO 1}. \qedhere
    \end{align*}
\end{proof}

\subsubsection{Complexity of the \CDF~zeroness problem}

In the next lemma we argue that a \CDF~power series is zero
iff just a few initial coefficients are zero.

\CDFnonzeroWitnessBound*
\begin{proof}
    We construct the ideal chain~\cref{eq:WBPP ideal chain} as in the case of \WBPP,
    where $\Delta_w$ is replaced by $L_w$,
    and $\alpha$ by $p$.
    Let $M \in \N$ be the index when this chain stabilises, $I_M = I_{M+1} = I_{M+2} = \cdots$.
    Consider now an arbitrary word $w \in \Sigma^*$.
    Since $L_w p \in I_M$, we can write
    \begin{align*}
        L_w p = q_1 \cdot L_{w_1} p + \cdots + q_m \cdot L_{w_m} p,
    \end{align*}
    for some short words $w_1, \dots, w_m \in \Sigma^{\leq M}$
    and polynomial multipliers $q_1, \dots, q_m \in \poly \Q y$.
    We now compose with $f$ on the right and by~\cref{lem:CDF exchange rule} we have
    \begin{align*}
        \partial x^{\Parikh w} g
        &= L_w p \circ f = \\
        &= \underbrace {(q_1 \circ f)}_{=: \; g_1} \cdot (L_{w_1} p \circ f) + \cdots + \underbrace{(q_m \circ f)}_{=: \;g_m} \cdot (L_{w_m} p \circ f) = \\
        &= g_1 \cdot \partial x^{\Parikh {w_1}} g + \cdots + g_m \cdot \partial x^{\Parikh {w_m}} g.
    \end{align*}
    We now extract the constant term on both sides.
    This is a homomorphic operation $\coefficient {x^0} \_$, and thus we have
    \begin{align*}
        \coefficient {x^0} (\partial x^{\Parikh w} g)
            = \coefficient {x^0} g_1 \cdot \coefficient {x^0} (\partial x^{\Parikh {w_1}} g) + \cdots + \coefficient {x^0} g_m \cdot \coefficient {x^0} (\partial x^{\Parikh {w_m}} g).
    \end{align*}
    By the commutation rule~\cref{eq:coefficient extraction and derivation},
    \begin{align*}
        \coefficient {x^{\Parikh w}} g
            = \coefficient {x^0} g_1 \cdot \coefficient {x^{\Parikh {w_1}}} g + \cdots + \coefficient {x^0} g_m \cdot \coefficient {x^{\Parikh {w_m}}} g.
    \end{align*}
    We have shown that the coefficient $\coefficient {x^{{\Parikh w}}} g$ is a homogeneous linear combination
    of the coefficients $\coefficient {x^n} g$'s of total degree $\onenorm n \leq M$.
    Since $w$ was arbitrary, if all those coefficients $\coefficient {x^n} g$'s are zero,
    then $g$ is zero.
    We now invoke the upper bound on $M$ from~\cref{thm:chain length bound}.
\end{proof}

\CDFzeronessTWOEXPTIME*
\begin{proof}
    Without loss of generality the problem reduces to checking zeroness
    of the first component $g := f^{(1)}$ of a tuple of \CDF~power series $f \in \powerseries \Q x^k$.
    Moreover, we can assume that $f \in \powerseries \Z x^k$
    and that the corresponding \CDF~system of equations~\cref{eq:multivariate CDF - matrix form}
    contains only integer polynomials $P_{ij} \in \poly \Z y$.
    (Were this not the case, a simple scaling by multiplying each $P_{ij}$ with a common denominator of all rational constants of the system
    provides a new integral system where the answer to the zeroness problem is the same.)
    By \cref{lem:CDF nonzero witness degree bound}, it suffices to check $\coefficient {x^n} f^{(1)} = 0$
    for all monomials $x^n$ of degree $\onenorm n \leq N := D^{k^{\bigO {k^2}}}$.
    By~\cref{lem:CDF table} we can compute a table containing all those integer coefficients in time
    \begin{align*}
        (N + d \cdot D + k)^{\bigO{d \cdot D + k}} \cdot (\log H)^{\bigO 1}.
    \end{align*}
    %
    By the definition of $N$, this gives a zeroness algorithm running in deterministic time
    \begin{align*}
        &(D^{k^{\bigO {k^2}}} + d \cdot D + k)^{\bigO{d \cdot D + k}} \cdot (\log H)^{\bigO 1} \\
        &\leq (d \cdot D)^{d \cdot D \cdot k^{\bigO {k^2}}} \cdot (\log H)^{\bigO 1},
    \end{align*}
    which amounts to a \TWOEXPTIME~zeroness algorithm.
\end{proof}

\end{document}